\documentclass[12pt]{article}

\setlength{\topmargin}{-0.2in} \setlength{\oddsidemargin}{0.05 in}
\setlength{\textwidth}{6.5in} \setlength{\textheight}{8.5in}

\def\squarebox#1{\hbox to #1{\hfill\vbox to #1{\vfill}}}
\def\boxit#1{\vbox{\hrule\hbox{\vrule\kern6pt
          \vbox{\kern6pt#1\kern6pt}\kern6pt\vrule}\hrule}}

\usepackage{epsfig, adjustbox, caption,array}
\usepackage{amsmath,verbatim,amssymb}
\usepackage{verbatim,color,amsmath,amsthm}
\usepackage{hyperref}
\usepackage{lscape}
\usepackage{mathrsfs}
 \usepackage{graphicx,algorithm,float}
\usepackage{ulem}
\usepackage[absolute,overlay]{textpos}
\setlength{\TPHorizModule}{1cm}
\setlength{\TPVertModule}{1cm}
\usepackage{lipsum} % 用于生成示例文本
\newcommand{\wt}{\widetilde}
\renewcommand{\baselinestretch} {1.3}
\makeatletter \setcounter{page}{1}
\def\singlespace{\def\baselinestretch{1}\@normalsize}

\date{}

%\renewcommand{\baselinestretch} {1.35}
%\makeatletter \setcounter{page}{1}
%\def\singlespace{\def\baselinestretch{1}\@normalsize}
%\def\endsinglespace{}

\renewcommand{\theequation} {\arabic{section}.\arabic{equation}}
\@addtoreset{equation}{section}

\newtheorem{theorem}{Theorem}[section]
\newtheorem{proposition}{Proposition}[section]
\newtheorem{lemma}{Lemma}[section]
\newtheorem{remark}{Remark}[section]
\newtheorem{corollary}{Corollary}[section]
\newtheorem{assumption}{Assumption}
\newtheorem{condition}{Condition}

\usepackage{amsmath,amssymb,mathrsfs,graphicx, color}
\usepackage{booktabs}

\def\bSig\mathbf{\Sigma}

\newcommand{\nl}{\langle}
\newcommand{\nr}{\rangle}
\newcommand{\ch}{{\cal H}}

\newcommand{\bs}{\boldsymbol}

\newcommand{\ms}{\mathscr}
\def\mb#1{\mathbb{#1}}
\def\wh{\widehat}
\def\wt{\widetilde}

\newcommand{\bse}{\begin{eqnarray*}}
\newcommand{\ese}{\end{eqnarray*}}
\newcommand{\bsee}{\begin{eqnarray}}
\newcommand{\esee}{\end{eqnarray}}

\begin{document}
% 放置在标题前，左上角显示
\begin{textblock}{15}(1,1)
    \small
    The paper has been accepted by Biometrics
\end{textblock}

\begin{center}
{\Large\bf Statistical Inference for Heterogeneous Treatment Effect with Right-censored Data from Synthesizing Randomized Clinical Trials and Real-world Data}
\end{center}
\vspace{0.15in}
\begin{center}
{\bf Guangcai Mao} \\
Department of Biostatistics and Bioinformatics, Duke University\\
 Durham, North Carolina, 27710, U.S.A.\\
{{\it email}: maoguangcai@ccnu.edu.cn}
\end{center}
\begin{center}
{\bf Shu Yang}$^*$ \\
Department of Statistics, North Carolina State University\\
Raleigh, North Carolina 27695, U.S.A. \\
{{\it email}: syang24@ncsu.edu}
\end{center}
\begin{center}
{\bf Xiaofei Wang} \\
Department of Biostatistics and Bioinformatics, Duke University\\
 Durham, North Carolina, 27710, U.S.A.\\
{{\it email}: xiaofei.wang@duke.edu}
\end{center}
\vspace{0.15in}
\begin{center}
{\bf Abstract}
\end{center}
{
The heterogeneous treatment effect plays a crucial role in precision medicine.
There is evidence that real-world data,  even subject to biases,
can be employed as supplementary evidence for randomized clinical trials to improve the statistical efficiency of the heterogeneous treatment effect estimation.
In this paper, for survival data with right censoring, we consider estimating the heterogeneous treatment effect, defined as the difference of the treatment-specific conditional restricted mean survival times given covariates, by synthesizing evidence from randomized clinical trials and the real-world data with possible biases.
We define an omnibus bias function to characterize the effect of biases caused by  unmeasured confounders, censoring, and outcome heterogeneity,
and further, identify it by combining the trial and real-world data.
We propose a penalized sieve method to estimate the heterogeneous treatment effect and the bias function. We further study the theoretical properties of
the proposed integrative estimators based on the theory of reproducing kernel Hilbert space and empirical process.
The proposed methodology outperforms the approach solely based on the trial data through simulation studies and
an integrative analysis of the data from a randomized trial and a real-world registry on early-stage non-small-cell lung cancer.
}

\vspace{0.5cm}

\noindent{\bf KEY WORDS}:
Heterogeneous treatment effect;  
Inverse probability weighting; 
Nonparametric penalized estimation; 
Sieve approximation; 
Time-to-event endpoint.

\section{Introduction}\label{sec1}

The average treatment effect is commonly used to assess treatment effects, but it may not account for differences among individuals.
Precision medicine, considering individual characteristics, has spurred interest in heterogeneous treatment effect (HTE). 
Randomized clinical trials (RCTs) are reliable but limited in sample diversity and external validity.
Real-world data (RWD) can provide valuable complementary information but may be biased.
Addressing these biases and integrating the RWD with the RCT data enhances study efficiency and accuracy.

The integration of RCTs with the RWD is becoming increasingly common, particularly in oncology and rare-disease settings, where registry and electronic health record (EHR) systems can be readily linked to Phase III trials (e.g., FDA, 2021; Carrigan et al., 2022).
For example, several cancer RCTs have been successfully augmented with National Cancer Database data (Lee et al., 2024b) and with Flatiron Health EHR data to bolster subgroup analyses in older adults and smaller tumor-size strata (Ye et al., 2021). Other disease areas, such as diabetes, have likewise benefited from RCT–EHR integration (Kurki et al., 2024). 
Significant attention in the literature has been drawn to the integrative analysis method of the RCT data and the RWD.
Prentice et al. (2008) introduced joint analysis for pooled data, while
Soares et al. (2014) developed a hierarchical Bayesian model based on network meta-analysis.
Efthimiou et al. (2017) compared various approaches, including naive data synthesis, design-adjust synthesis, and a three-level hierarchical model using the RWD as prior information.
Verde and Ohmann (2015) summarized meta-analysis methods comprehensively, and
Wang and Rosner (2019) extended the propensity score adjustment to a multi-study setting, proposing a Bayesian nonparametric Dirichlet process mixture model.
Lee et al. (2023, 2024b)  introduced an integrative estimator of average treatment effect, and
Lee, Yang, and Wang (2022) and Lee et al. (2024a)  proposed doubly robust estimators for generalizing average treatment effects on survival outcomes from trial data to a target population. These methods, however, often assume no biases in the RWD, which is unlikely. Methods addressing biases include preliminary testing (e.g., Yang et al., 2023), instrumental variable methods (e.g., Angrist, Imbens, and Rubin, 1996), negative controls (e.g., Kuroki and Pearl, 2014), and sensitivity analysis (e.g., Robins, Rotnitzky, and Scharfstein, 1999).
Yang et al. (2025) introduced a confounding function approach to handle unmeasured confounder bias, leveraging transportability and trial treatment randomization to improve the HTE estimator.
Colnet et al. (2023) reviewed methods for combining the RCT data and the RWD for non-survival outcomes.

For censored survival data, the restricted mean survival time is an easily
interpretable, a clinically meaningful summary of survival function, which is defined as the
area under the survival curve up to a pre-specified time. A common way to measure the HTE is to define it as the difference in the conditional restricted mean survival time (CRMST)
between the treatment and control groups.
The existing estimations for the CRMST can be
roughly divided into two classes. One uses survival models for hazard rate. For example, Zucker (1998), Chen and Tsiatis (2001), and Zhang and Schaubel (2011) used the Cox
proportional hazards model (Cox, 1972) for hazard rate, deriving conditional survival function estimates from the relationship between hazard rate and survival function.
Another approach is to model the CRMST directly.
For example, Tian, Zhao, and Wei (2014) used a generalized linear regression model and estimated parameters through an inverse probability censoring weighted function, relying on the assumption that censoring time is independent of covariates.
Wang and Schaubel (2018) relaxed such an assumption and estimated the survival function of censoring time by the Cox model and further constructed the estimating equation.

This paper considers  statistical inference for the HTE with right-censored survival data by combining
the RCT data and the RWD  with possible biases.
The HTE is defined as the difference in the CRMST between the treatment and control groups.
Inspired by Wu and Yang (2022) and Yang et al. (2025), we define an omnibus bias function to summarize all sources of bias in the RWD.
This results in a flexible and generalizable framework that can be applied across a wide range of study designs and data sources.
As such, our approach is a valuable tool for researchers and practitioners working in various fields. Its capacity to handle biases in the RWD, without relying on unrealistic assumptions, makes it an essential tool for improving the accuracy and reliability of causal inference in real-world scenarios.
In our methodology,
the HTE and the bias function are modeled with fully nonparametric models and estimated by minimizing a proposed penalized loss function.
To implement the proposed estimators, a sieve method is utilized to approximate the HTE and the bias function.
Furthermore, we derive the convergence rates and local asymptotic normalities of the proposed estimators by reproducing kernel Hilbert space and empirical process theory.
Simulation studies demonstrate the excellent performance of the proposed method, and an illustrative application of the method to the real data reveals some intriguing findings.

The remainder of the paper is structured as follows:
Section \ref{sec2} introduces some preliminaries, such as notations and definitions of the HTE and  bias function.
Section \ref{sec3} introduces the penalized loss function and sieve method for estimating the unknown functional parameters.
Section \ref{sec4} presents the asymptotic properties of the proposed integrative estimators.
Section \ref{sec5} includes simulation studies and an application to a real non-small cell lung cancer dataset for finite sample performance evaluation of the proposed approach. Finally, in Section \ref{sec6}, we present some discussions, and Web Appendix E in the supplementary material, including the proofs of theoretical results, is also provided.

\section{Preliminaries}\label{sec2}

\subsection{Notations: HTE and data structure}\label{sub_sec2.1}
For two positive sequences $a_n$ and $b_n$, $a_n \asymp b_n$ means
$\lim_{n\rightarrow \infty}a_n / b_n = c $ for some constant $c > 0$.
For real numbers $a$ and $ b $, let $a\wedge b = \min\{ a , b \}$.
Let $T$ and $C$ denote the failure and censoring times, respectively.
Under right censoring, the observed variable is $(Y, \Delta)$, where $Y = T \wedge C $ is the observed time and $\Delta = I(T \le C )$ is the censoring indicator.
Let  $\bs X = (X_1, \ldots, X_p)^{\rm T}$ be the $p$-dimensional covariate and
$A \in\{ 0, 1\}$ be the binary treatment, where $A = 1$ and $A = 0 $ indicate the active and the control treatments, respectively.

We consider two independent data sources, the RCT data, and the RWD.
Let $S = 1$ denote the RCT participation and $S = 0 $ denote the RWD participation.
Therefore, the observed data structure for subject $i$ can be concluded as $(Y_i, \Delta_i, \bs X_i, A_i, S_i)$.
It is postulated that the data gathered from RCT comprises
${\cal V}_1 = \{(Y_i, \Delta_i, \bs X_{i}, A_i, S_i = 1): i = 1,\ldots, n_1\}$, with sample size $n_1$, which represents independent replications of $(Y, \Delta, \bs X , A , S = 1)$, while the data from RWD is represented by ${\cal V}_0 = \{(Y_i, \Delta_i, \bs X_i, A_i, S_i = 0): i = n_1 + 1,\ldots, n_1+ n_0 \}$ with a sample size of $n_0$,
which are independent copies of $(Y, \Delta, \bs X , A , S = 0)$.

This paper treats the RCT as the target population and utilizes
potential outcomes (Neyman, 1923; Rubin, 1974) as the framework to define causal effects. Specifically,
let $T(a)$ denote the potential outcome corresponding to the treatment $A = a$ for $a = 0 , 1$.
We make the causal consistency assumption that $T = A T(1) + (1 - A) T(0)$.
For a restricted time point $L$, the
HTE is defined as
\bse
\tau(\bs X) = E \left\{ T(1)\wedge L - T(0)\wedge L \mid \bs X, S = 1 \right\}.
\ese
Our research goal is to estimate $\tau(\bs X)$ based on the integrative dataset ${\cal V}_1 \cup {\cal V}_0$ containing $n = n_1 + n_0$ samples.

\subsection{Identifiability of the HTE}

We impose assumptions to explore the identiﬁability of the HTE from the observed data.
\begin{assumption}\label{A1}
   (i) $0 < c_1 \le P(A = 1 \mid \bs X , S = 1 ) \le c_2 < 1$, where $c_1$ and $c_2$ are some constants;
and (ii) $E\{T(a)\wedge L \mid  \bs X , A, S = 1\} = E\{T(a)\wedge L \mid \bs X , S = 1\}$ for $a = 0 ,1$.
\end{assumption}

\begin{assumption}\label{A2}
(i) $T(a) \perp C \mid (\bs X , A, S = 1 )$ for $a = 0 , 1 $;
and (ii) $P(Y \ge L \mid \bs X, A , S = 1 ) > 0$.  
\end{assumption}

Assumption \ref{A1}(i) implies that each subject in the RCT has a positive probability of receiving treatment. It is also considered a fundamental assumption.
Assumption \ref{A1}(ii) is satisfied by default for the RCT. Of note, this assumption is formally weaker than the strong  ignorability assumption on trial participation, i.e., $T(a) \perp A \mid (\bs X , S = 1)$ for $a = 0 ,1$, which is a traditional assumption in causal inference.
Assumption \ref{A2} is a standard assumption in survival analysis.
Assumption \ref{A2}(i) is also imposed in Zhang and Schaubel (2011).
Assumption \ref{A2}(ii) is a classical assumption in survival analysis and guarantees that the observed survival time can possess values within the vicinity of the restricted time point $L$.
Particularly, these Assumptions are exclusively imposed on the RCT and not on the RWD, thereby extensively broadening the scope of the proposed methodology.

Let
$T_L =  A\{ T(1) \wedge L \} + (1 - A)\{ T(0)\wedge L \}$,
$Y_L = Y \wedge L$,
$e(\bs X) = P(A = 1 \mid \bs X , S = 1)$ be the treatment propensity score, and
$\mu_a(\bs X) = E( T_L \mid \bs X , A = a, S = 1 ) $
with $a = 0 ,1$. We now deliberate on the  identifiability of the HTE from the RCT data.
If the failure time $T$ is precisely observed  for all subjects, to identify the HTE, we can use the method proposed by Lee, Okui, and Whang (2017), in which an augmented inverse probability weighting (AIPW) approach is proposed for complete data. To be specific, let
\bse
R_1 = \frac{A T_L}{e(\bs X)} - \frac{A - e(\bs X)}{e(\bs X)} \mu_1(\bs X), \quad
R_0 = \frac{(1 - A) T_L}{1 - e(\bs X)} + \frac{A - e(\bs X)}{1- e(\bs X)} \mu_0(\bs X), \quad R = R_1 - R_0.
\ese
Under Assumption 1,
Proposition S1 in Web Appendix C shows that
$
E( R \mid \bs X , S = 1 ) = \tau(\bs X)
$,
which  indicates  the identifiability of $\tau(\cdot)$ from the RCT data in the uncensored case.

To handle the right censoring,  an augmented inverse probability-of-censoring weighting (AIPCW) method (Zhao et al., 2015) is employed.
Let $G_T(t \mid \bs X , A, S = 1) = P(T \ge t \mid \bs X, A , S = 1)$ and $G_C(t \mid \bs X , A, S = 1) = P(C \ge t \mid \bs X, A , S = 1)$ be the conditional survival functions of $T$ and $C$ given $\bs X$, $A$ and $S = 1 $, respectively. Let
$\wt{\Delta} = I(T_L \le C)$, $N_C(t) = ( 1 - \wt{\Delta} ) I(Y_L \le t)$,
$Q_C(t) = \int_0^t I(Y_L \ge u ) G^{-1}_C(u \mid \bs X, A, S = 1){\rm d}G_C(u \mid \bs X, A, S = 1)$,
$M_C(t) = N_C(t) + Q_C(t)$,
$B(t) = E(T_L \mid T_L > t , \bs X, A, S = 1 ) = t + { \int_t^L G_T(u \mid \bs X, A , S = 1 ){\rm d}u } / {G_T(t \mid \bs X, A , S = 1 )}$,
and further
\bse
\wt{T}_L = {Y_L \wt{\Delta}}{G^{-1}_C( Y_L \mid \bs X , A, S = 1)} - \int_0^L B(t)G^{-1}_C(t \mid \bs X , A, S = 1){\rm d}M_C(t).
\ese
The first term of $\wt{T}_L$ is the familiar IPCW transformation, 
which is to convert the right-censored survival outcome into a complete outcome while preserving the conditional expectation of the outcome.
The second term of $\wt{T}_L$ is the augmented component, which enhances the efficiency and robustness of the transformation provided by the first term.
When implementing the transformation, one needs to use some survival models to estimate $G_C$ and $G_T$. Obviously, to ensure that the transformation is valid, the censoring time model must be correctly specified if there is no the second term. However, with the inclusion of the second half, either the failure time model or the censoring time model is correctly specified, not necessarily both. This is known as the double robustness property, which increases the tolerance for model misspecification.

Since $M_C(t)$ is a zero-mean martingale (Fleming and Harrington, 1991), the conditional mean of the second term of $\wt{T}_L$ is zero.
Intuitively, we have
$
E( \wt{T}_L \mid \bs X , A , S = 1 ) =
E( {T}_L \mid \bs X , A , S = 1 )
$. The detailed theoretical proof is provided in Proposition S2 in Web Appendix C. 
Consequently, defining $\wt{R}$ by replacing $T_L$ with $\wt{T}_L$ in the definition of $R$, we obtain
$
E( \wt{R} \mid \bs X , S = 1 ) = \tau(\bs X)
$.
We call $\wt{R}$ the pseudo-individual treatment effect (pseudo-ITE).

If $E(T(1)\wedge L \mid A = 1, \bs X, S = 0) - E(T(0)\wedge L \mid A = 0, \bs X, S = 0) = \tau(\bs X)$ and all trial assumptions are met for the RWD,
one can define the pseudo-ITE for the RWD in a similar manner to that for the RCT, denoted by $\wt{R}^*$.
In such a case, we have $E(\wt{R}^* \mid \bs X, S = 0 ) = \tau(\bs X)$.
Furthermore, let $D = S\,\wt{R} + (1 - S )\,\wt{R}^*$, then $E(D \mid \bs X, S) = \tau(\bs X)$,
which implies the identifiability of $\tau(\cdot)$ by combining the RCT data and the RWD.
However, these assumptions might not be true for the RWD due to the various biases in the RWD. Thus,
$\lambda(\bs X) = E( \wt{R}^* \mid \bs X , S = 0) - \tau(\bs X) = E( D \mid \bs X , S = 0) - \tau(\bs X) $, referred to as the bias function, captures the impact of any assumption violation in the RWD.
Therefore, we conclude that
\bsee\label{eq5}
E\left( D \mid \bs X , S \right) = \tau(\bs X) + (1 - S) \lambda(\bs X),
\esee
which implies the identifiability of $\tau(\cdot)$ and $\lambda(\cdot)$ from the integrative dataset ${\cal V}_1 \cup {\cal V}_0$.
In fact, the choice of the pseudo-ITE for the RWD is not unique; it can be any reasonable proxy.
In this paper, we opt to use $Y_L$ as a substitute for $\wt{R}^*$.

\section{Estimation methodology}\label{sec3}

Let $\sigma^2(\bs x , s) = \sigma^2_s(\bs x) = {\rm Var}(D \mid \bs X = \bs x, S = s)$.
Based on the integrative dataset ${\cal V}_1 \cup {\cal V}_0 = \{(Y_i, \Delta_i, \bs X_i, A_i, S_i), i = 1,\ldots, n  \}$,
equation (\ref{eq5}) enables us to construct the loss function for $\tau(\cdot)$ and $\lambda(\cdot)$ as
$
{\ell}_n(\tau, \lambda) = (2n)^{-1} \sum_{i = 1}^n \left\{\sigma^{2}(\bs X_i , S_i)\right\}^{-1} \left\{ {D}_i - \tau(\bs X_i) - (1 - S_i)\lambda(\bs X_i) \right\}^2
$.
Then, an estimated version of the loss function, $\wh{\ell}_n(\tau, \lambda)$, can be written as
$
\wh{\ell}_n(\tau, \lambda) = (2n)^{-1} \sum_{i = 1}^n \left\{\wh{\sigma}^{2}(\bs X_i , S_i)\right\}^{-1}\big\{ \wh{D}_i - \tau(\bs X_i) - (1 - S_i)\lambda(\bs X_i) \big\}^2
$,
where $\wh{D} = S\,\wh{R} + (1 - S)\,Y_L$, $\wh{R}$ is defined by replacing ${e}(\bs X)$, ${\mu}_a(\bs X)$, ${G}_T(t \mid \bs X , A, S = 1)$,
and ${G}_C(t \mid \bs X , A, S = 1)$ with the estimators $\wh{e}(\bs X)$, $\wh{\mu}_a(\bs X)$, $\wh{G}_T(t \mid \bs X , A, S = 1)$, and $\wh{G}_C(t \mid \bs X , A, S = 1)$ in the definition of $\wt{R}$, respectively, and $\wh{\sigma}^2(\bs X, S)$ is the estimator of ${\rm Var}({D} \mid \bs X, S)$.
In this paper, we propose
the penalized loss function for $\tau(\cdot)$ and $\lambda(\cdot)$ as
$
\wh{\ell}_{n, \gamma_1, \gamma_0}(\tau, \lambda) = \wh{\ell}_n(\tau, \lambda) + {\gamma_1}/{2}J_1(\tau, \tau) + {\gamma_0}/{2}J_0(\lambda, \lambda)
$,
where $\gamma_1$ and $\gamma_0$ are some positive penalized parameters and converge to zero as the sample size $n$ goes to infinity, 
$J_1(\cdot, \cdot)$ and
$J_0(\cdot, \cdot)$ are roughness penalties to avoid overfitting, encourage smoothness and information borrowing, and are defined in Web Appendix A.

Let $\tau_0(\cdot)$ and $\lambda_0(\cdot)$ be the true values of $\tau(\cdot)$ and $\lambda(\cdot)$. 
Throughout the paper, we assume that $\tau_0(\cdot)$ and $\lambda_0(\cdot)$ belong to the reproducing kernel Hilbert spaces ${\cal H}_1$ and ${\cal H}_0$, respectively, equipped with the norms $\| \cdot \|_{\ch_1}$ and $\| \cdot \|_{\ch_0}$. The spaces ${\cal H}_1$ and ${\cal H}_0$, as well as the norms $\| \cdot \|_{\ch_1}$ and $\| \cdot \|_{\ch_0}$,  are defined in Web Appendix A. 
For inferring $\tau_0(\cdot)$ and $\lambda_0(\cdot)$, we use sieve expansion to approximate $\tau(\cdot)$ and $\lambda(\cdot)$ in $\wh{\ell}_{n, \gamma_1, \gamma_0}(\tau, \lambda)$.
Specifically, let $\{\phi_1(\cdot), \ldots, \phi_{r_1}(\cdot)\}$ and $\{\psi_1(\cdot), \ldots, \psi_{r_0}(\cdot)\}$ denote two sets of sieve basis functions, and let $\Phi_n$ and  $\Psi_n$ be the corresponding spanned linear spaces,
respectively, where $r_1$  and $r_0$ are the numbers of basis functions and represent the complexities of the approximations.
Then, we propose the penalized nonparametric estimator of $(\tau_0, \lambda_0)$ as
$
(\wh{\tau}_{n}, \wh{\lambda}_{n}) =
\arg{\min}_{(\tau ,\lambda) \in \Phi_n \times \Psi_n}\wh{\ell}_{n, \gamma_1, \gamma_0}(\tau, \lambda)
$.

\section{Asymptotic properties}\label{sec4}

In this section, we delve into the asymptotic properties of the proposed estimators,
focusing on consistency, convergence rates, and asymptotic normality.

\begin{theorem}\label{th1}
Assuming that Assumptions 1--2 are met, as well as Conditions S1--S6 as outlined in Web Appendix B.
Then,
$\| \wh{\tau}_n - {\tau}_0 \|_{\ch_1} +  \|\wh{\lambda}_{n} - \lambda_0 \|_{\ch_0} = o_P(1)$.
\end{theorem}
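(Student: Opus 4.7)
The plan is to combine a basic inequality for the penalized M-estimator with a uniform law over the sieve space, and then to exploit the identifiability relation (\ref{eq5}) to separate $\tau_0$ from $\lambda_0$. Let $(\wt\tau_n, \wt\lambda_n) \in \Phi_n \times \Psi_n$ be sieve approximants chosen so that $\|\wt\tau_n - \tau_0\|_{\ch_1} + \|\wt\lambda_n - \lambda_0\|_{\ch_0} \to 0$ and $J_1(\wt\tau_n, \wt\tau_n) + J_0(\wt\lambda_n, \wt\lambda_n) = O(1)$; existence of such approximants should be a standard consequence of the smoothness of $\tau_0,\lambda_0$ in their respective RKHS encoded in Conditions S1--S2. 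The minimizing property of $(\wh\tau_n, \wh\lambda_n)$ gives the basic inequality $\wh\ell_{n,\gamma_1,\gamma_0}(\wh\tau_n, \wh\lambda_n) \le \wh\ell_{n,\gamma_1,\gamma_0}(\wt\tau_n, \wt\lambda_n)$, which bounds the penalized empirical risk of the estimator by the approximation error plus $O(\gamma_1 + \gamma_0)$, and in particular keeps the penalty $\gamma_1 J_1(\wh\tau_n,\wh\tau_n) + \gamma_0 J_0(\wh\lambda_n,\wh\lambda_n)$ bounded in probability.

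Next, I would replace the estimated pseudo-outcome $\wh D$ and inverse-variance weight $\wh\sigma^{-2}$ by their population counterparts. Conditions S3--S5 should supply convergence of the nuisance estimators $\wh e, \wh\mu_a, \wh G_T, \wh G_C, \wh\sigma^2$ in suitable norms, and the martingale structure of $M_C$ together with the double robustness of the AIPCW transformation imply $n^{-1}\sum_{i=1}^n \wh\sigma^{-2}(\bs X_i,S_i)(\wh D_i - D_i)^2 = o_P(1)$, as well as vanishing cross products with $\tau(\bs X_i) + (1-S_i)\lambda(\bs X_i)$ uniformly over sieve balls of bounded penalty. Combined with a uniform law of large numbers on those balls, justified by the entropy bound in Condition S6, this reduces $\wh\ell_{n,\gamma_1,\gamma_0}$ uniformly to the population loss
\begin{equation*}
\ell(\tau,\lambda) = \tfrac{1}{2}\, E\bigl\{\sigma^{-2}(\bs X,S)\bigl[D - \tau(\bs X) - (1-S)\lambda(\bs X)\bigr]^2\bigr\}.
\end{equation*}

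Third, I would invoke identifiability. Equation (\ref{eq5}) gives $E(D\mid \bs X,S) = \tau_0(\bs X) + (1-S)\lambda_0(\bs X)$, so
\begin{equation*}
\ell(\tau,\lambda) - \ell(\tau_0,\lambda_0) = \tfrac{1}{2}\, E\bigl\{\sigma^{-2}(\bs X,S)\bigl[(\tau-\tau_0)(\bs X) + (1-S)(\lambda-\lambda_0)(\bs X)\bigr]^2\bigr\}.
\end{equation*}
Conditioning separately on $S=1$ and $S=0$, and using the positivity of $P(S=s\mid \bs X)$ together with the boundedness of $\sigma^{-2}$ provided by Condition S3, this difference being $o_P(1)$ forces $\|\wh\tau_n - \tau_0\|_{L^2} = o_P(1)$ through the $S=1$ stratum and then $\|\wh\lambda_n - \lambda_0\|_{L^2} = o_P(1)$ through the $S=0$ stratum. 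The boundedness of the penalty from the first step, combined with the compact embedding of bounded-penalty balls into $\ch_1$ and $\ch_0$ implicit in the RKHS construction of Web Appendix A, promotes the $L^2$-convergence to convergence in $\ch_1$ and $\ch_0$ norms via a standard subsequence argument.

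The main obstacle will be the second step: uniformly controlling $\wh\ell_n - \ell_n$ over the growing sieve. The AIPCW pseudo-outcome aggregates four nuisance estimators together with an inverse-variance weight, each with its own rate, so a naive bound would demand overly stringent individual conditions. The right approach is to expose the cross-product (double-robust) structure of the censoring adjustment, so that only products of nuisance errors, rather than individual errors, enter the final bound, while the sieve radius is controlled via the penalty so that the associated empirical process remains $o_P(1)$.
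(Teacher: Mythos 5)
Your overall architecture (sieve approximants, basic inequality, reduction to the population risk, identifiability via the two strata of $S$, then lifting $L^2$-consistency to the $\ch_1,\ch_0$ norms using that these norms are $V+\gamma J$ with $\gamma\to 0$) is sound in outline and differs from the paper, which instead proves consistency by a directional-derivative/convexity argument: it perturbs the sieve approximant along directions $(\xi_{1n},\xi_{0n})$ confined to a \emph{fixed} class $\{c_1^{-1}\le V_1+V_0\le c_1,\ J_1+J_0\le c_2\}$, shows $t\dot H_n(t)>0$ for $t\neq 0$ there, and concludes. Your stratified identifiability step is essentially the paper's Lemma S3 in sequential form, and your insistence on exposing the product (doubly robust) structure of the censoring adjustment matches exactly how the paper uses Lemma S2 together with Condition S6.

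The genuine gap is in your second step. The basic inequality only yields $\gamma_1 J_1(\wh\tau_n,\wh\tau_n)+\gamma_0 J_0(\wh\lambda_n,\wh\lambda_n)=O_P(1)$, i.e.\ $J_1(\wh\tau_n,\wh\tau_n)=O_P(\gamma_1^{-1})$, not $O_P(1)$. So the ``sieve balls of bounded penalty'' over which you must run the uniform law are seminorm balls of radius $\gamma_1^{-1/2}\to\infty$, whose metric entropy (and sup-norm envelope, via the embedding $\|f\|_\infty\lesssim\gamma_1^{-p/(4m_1)}\|f\|_{\ch_1}$) grows with $n$. Controlling the resulting empirical process requires a rate condition coupling $n$ with $\gamma_1,\gamma_0$ of the type in Condition S8 — which Theorem 1 deliberately does not assume (only S1--S6, with $\gamma_s\to 0$ at an arbitrary rate). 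This is precisely why the paper works with the first-order condition along directions in a fixed, $n$-independent class: there the entropy is constant and only $O_P(n^{-1/2})$ fluctuations arise. Your argument would need either that convexity device or a peeling/extra rate assumption to close. A secondary inaccuracy: you assert $n^{-1}\sum_i\wh\sigma^{-2}(\bs X_i,S_i)(\wh D_i-D_i)^2=o_P(1)$, but the conditions do not give this and it is not needed — $\wh D$ need not be consistent for $D$ (the paper emphasizes this); only the conditional mean $E_r(\wh D-D\mid\bs X,S=1)=\Gamma_1+\Gamma_2$, a product of nuisance errors, is $o_P(1)$ in $L^2$ under Condition S6, and the quadratic term cancels anyway once you center the loss at $(\tau_0,\lambda_0)$, leaving only the cross terms you correctly identify.
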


Given that $\mu_a(\bs X) = E(T_L \mid \bs X, A = a, S = 1) = \int_0^L G_T(t \mid \bs X, A = a, S = 1 ){\rm d}t$, using $\wh{\mu}_a(\bs X) = \int_0^L \wh{G}_T(t \mid \bs X, A = a, S = 1 ){\rm d}t$ as an estimator for $\mu_a(\bs X)$ is appropriate.
In this context, Theorem \ref{th1} implies that if $\wh{G}_T(t \mid \bs X, A , S = 1)$ is consistent, or both $\wh{G}_C(t \mid \bs X, A , S = 1)$ and $\wh{e}(\bs X)$ are consistent, then $\wh{\tau}_n(\cdot)$ and $\wh{\lambda}_n(\cdot)$ are consistent estimators of $\tau_0(\cdot)$ and $\lambda_0(\cdot)$, respectively.
For the RCT, where the probability of receiving treatment is typically known, Theorem \ref{th1} establishes that if either $\wh{G}_T$ or $\wh{G}_C$ is consistent, $\wh{\tau}_n$ and $\wh{\lambda}_n$ remain consistent. In such a case,
Theorem \ref{th1} thus highlights the double robustness of the proposed HTE estimator, an important attribute that enhances its reliability.
Alternatively, regardless of the knowledge of $e(\bs X)$, one can use a nonparametric estimation approach for $\mu_a(\bs X)$. Employing such methods ensures consistency, thereby providing reliable estimators $\wh{\tau}_n$ and $\wh{\lambda}_n$.
Under stronger assumptions, Theorem \ref{th2} provides the convergence rates of the proposed estimators, making it a strengthened version of Theorem \ref{th1}.

\begin{theorem}\label{th2}
Assuming that Assumptions \ref{A1}--\ref{A2} are met, as well as Conditions S1--S5, and S7--S9 as outlined in Web Appendix B.
Then,
\bse
\|\wh{\tau}_{n} - \tau_0 \|_{\ch_1} + \|\wh{\lambda}_{n} - \lambda_0 \|_{\ch_0}  =  O_P(n^{-1/2}\gamma_1^{-p/(4m_1)} + n^{-1/2}\gamma_0^{-p/(4m_0)} + \gamma_1^{1/2} + \gamma_0^{1/2}).
\ese
\end{theorem}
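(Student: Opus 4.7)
The plan is to proceed via a basic inequality argument adapted to penalized sieve M-estimation in an RKHS. Let $\tau_0^{\dagger}\in\Phi_n$ and $\lambda_0^{\dagger}\in\Psi_n$ denote sieve approximations of $\tau_0$ and $\lambda_0$; under the smoothness conditions (S7--S9), standard sieve approximation theory on the RKHS balls yields $\|\tau_0^{\dagger}-\tau_0\|_{\ch_1}^2 = O(\gamma_1)$ and $\|\lambda_0^{\dagger}-\lambda_0\|_{\ch_0}^2 = O(\gamma_0)$, with $J_1(\tau_0^{\dagger},\tau_0^{\dagger})$ and $J_0(\lambda_0^{\dagger},\lambda_0^{\dagger})$ uniformly bounded. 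Combining the minimization property $\wh{\ell}_{n,\gamma_1,\gamma_0}(\wh\tau_n,\wh\lambda_n)\le\wh{\ell}_{n,\gamma_1,\gamma_0}(\tau_0^{\dagger},\lambda_0^{\dagger})$ with the identification equation (\ref{eq5}) expands the quadratic loss and gives, after rearrangement, a control inequality of the form
\begin{equation*}
\|\wh\tau_n-\tau_0\|_n^2 + \|\wh\lambda_n-\lambda_0\|_n^2 + \gamma_1 J_1(\wh\tau_n,\wh\tau_n) + \gamma_0 J_0(\wh\lambda_n,\wh\lambda_n) \le \mathrm{(I)} + \mathrm{(II)} + \mathrm{(III)},
\end{equation*}
where $\|\cdot\|_n$ is the empirical $L_2$-norm reweighted by $\sigma^{-2}(\bs X_i,S_i)$, (I) collects approximation-error contributions of order $\gamma_1 + \gamma_0$, (II) is the linear-in-residuals empirical process, and (III) collects nuisance-estimation discrepancies coming from replacing $D$ by $\wh D$.

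Next I would handle (II), the decisive term. It has the form $n^{-1}\sum_{i}\sigma^{-2}(\bs X_i,S_i)\{D_i-\tau_0(\bs X_i)-(1-S_i)\lambda_0(\bs X_i)\}\{(\wh\tau_n-\tau_0^{\dagger})(\bs X_i)+(1-S_i)(\wh\lambda_n-\lambda_0^{\dagger})(\bs X_i)\}$, centered at zero by the identification equation. Indexing over the class $\{(\tau,\lambda):J_1(\tau,\tau)\le B_1,\,J_0(\lambda,\lambda)\le B_0,\,\|\tau\|_{L_2}+\|\lambda\|_{L_2}\le\delta\}$, the Sobolev-type entropy bound $\log N(\epsilon,\{J_j\le1\},\|\cdot\|_{L_2})\lesssim\epsilon^{-p/m_j}$ from Conditions S7--S9 yields, via Dudley's entropy integral, a modulus of order $n^{-1/2}\delta^{1-p/(2m_j)}B_j^{p/(4m_j)}$ for each component. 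A peeling argument in the style of van de Geer balanced against the penalty $\gamma_j J_j$ forces $B_j\lesssim\gamma_j^{-1}\delta^2$, and solving the resulting fixed-point inequality in $\delta$ produces the stochastic term $n^{-1/2}\gamma_j^{-p/(4m_j)}$ for each $j\in\{0,1\}$. Because the $(1-S_i)$ factor separates the $\lambda$-contribution from the $\tau$-contribution under conditioning on $\bs X$, the two peeling arguments decouple and their rates add.

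Term (III) arises from the nuisance plug-in difference $\wh D_i - D_i$, which depends on $\wh e$, $\wh G_T$, and $\wh G_C$. Using the double-robustness algebra already exploited in Theorem~\ref{th1}, together with the rate assumptions in S7--S9, the cross-products with $\wh\tau_n-\tau_0$ and $\wh\lambda_n-\lambda_0$ are bounded by Cauchy--Schwarz and shown to be $o_P$ of the dominant stochastic term; in particular, nuisance rates faster than $n^{-1/4}$ make the cross-products negligible. Collecting (I), (II), (III) and converting the empirical-$L_2$ and Sobolev-seminorm control back to the $\ch_1$ and $\ch_0$ norms (via the equivalence between $\|\cdot\|_n$ and $\|\cdot\|_{L_2}$ uniformly over shrinking neighborhoods, a standard consequence of the entropy bound) gives the stated rate $O_P\bigl(n^{-1/2}\gamma_1^{-p/(4m_1)}+n^{-1/2}\gamma_0^{-p/(4m_0)}+\gamma_1^{1/2}+\gamma_0^{1/2}\bigr)$.

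The principal obstacle is the coupled empirical process step: because the loss mixes $\tau$ and $\lambda$ through the indicator $(1-S_i)$, and because $\wh D_i$ contains nonparametrically estimated nuisances, a naive application of a single Dudley bound is insufficient. The argument must separate the two penalties cleanly, run two simultaneous peeling recursions with the correct Sobolev entropy exponents $p/m_1$ and $p/m_0$, and absorb the nuisance remainders without disturbing this balance. Once this coordination is achieved, the rest of the argument reduces to routine bookkeeping between the RKHS norms and their empirical counterparts.
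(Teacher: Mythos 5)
Your overall strategy --- compare the penalized loss at the minimizer with its value at a sieve approximant, isolate a linear empirical-process term and a quadratic term, and balance against the penalty --- matches the paper's skeleton, although your technical route differs: you propose Dudley entropy integrals over Sobolev balls plus a van de Geer peeling/fixed-point recursion, whereas the paper linearizes every empirical-process term through the reproducing kernel, bounding $|(\mb P_{n_1}-\mb P_1)h(\wh\tau_n-\tau_n)| \le \|(\mb P_{n_1}-\mb P_1)hK_1\|_{\ch_1}\|\wh\tau_n-\tau_n\|_{\ch_1}$ and using $\|K_{1\bs x}\|_{\ch_1}\le \wt c_1\gamma_1^{-p/(4m_1)}$ (Lemmas S6 and S8) to read off the rate $n^{-1/2}\gamma_1^{-p/(4m_1)}$ directly, with no peeling. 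Both routes can deliver the stated rate, and your fixed-point computation does reproduce $n^{-1/2}\gamma_j^{-p/(4m_j)}$ correctly.

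However, there is a genuine gap in how you handle the coupling of $\tau$ and $\lambda$. Your claim that ``the $(1-S_i)$ factor separates the $\lambda$-contribution from the $\tau$-contribution'' and that ``the two peeling arguments decouple'' is wrong as stated: on the RWD ($S_i=0$) the loss sees only the sum $\tau+\lambda$, so the empirical quadratic form produced by the basic inequality is
$\varrho\,\mb P_{n_1}\pi_1(\wh\tau_n-\tau_0^{\dagger})^2+(1-\varrho)\,\mb P_{n_0}\pi_0\{(\wh\tau_n-\tau_0^{\dagger})+(\wh\lambda_n-\lambda_0^{\dagger})\}^2$,
not $\|\wh\tau_n-\tau_0\|_n^2+\|\wh\lambda_n-\lambda_0\|_n^2$. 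This form is degenerate along the direction $(f,-f)$ if one looks at the RWD part alone, so separate rates for the two components do not follow from decoupled recursions; one must first prove a coercivity inequality showing that the joint form dominates $c\{V_1(f,f)+V_0(g,g)\}$, using the RCT part to pin down $f$ and only then the RWD part to pin down $g$. This is exactly the paper's Lemma S3 (which extracts the explicit constant $(\sqrt5-1)/(\sqrt5+1)$ by minimizing over $g=-\varsigma f$ and requires the density-comparability Condition S5). Without this step your left-hand side is not what you claim it is, and the argument yields a rate only for $\wh\tau_n-\tau_0$ on the RCT support and for the sum $\wh\tau_n+\wh\lambda_n-\tau_0-\lambda_0$ on the RWD support, not for $\|\wh\lambda_n-\lambda_0\|_{\ch_0}$ separately. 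A secondary, smaller imprecision: the nuisance term need not be $o_P$ of the stochastic term as you suggest via an $n^{-1/4}$ heuristic; Condition S7 only requires the product-bias quantity to be $O_P(\delta_n)$, i.e., of the same order as the target rate, and the paper absorbs it into the $O_P(\delta_n)\|\cdot\|_{\ch}$ bound on the linear term rather than discarding it.
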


\begin{remark}\label{remark1}
If $m_1 = m_0 = m$ and $\gamma_1 \asymp \gamma_0 \asymp n^{-2m/(2m+p)}$,
then $\wh{\tau}_{n}$ and $\wh{\lambda}_n$ achieve the same optimal convergence rate of $O_P(n^{-m/(2m + p )})$, which is also the optimal rate in the most commonly used nonparametric methods.
\end{remark}

Suppose that the covariate $\bs X$ takes values in an open connect set $\Omega$ with $C^{\infty}$ boundary. The pointwise asymptotic normality, which is 
crucial for constructing confidence intervals, is investigated in Theorem \ref{th3} under additional conditions.

\begin{theorem}\label{th3}
Assuming that the assumptions in Theorem \ref{th2} are met, as well as Conditions S10--S14 as outlined in Web Appendix B.
Then, given $\bs x_0 \in \Omega$, we have
\bse
\left(n^{1/2}\gamma_1^{p/(4m_1)}\{\wh{\tau}_n(\bs x_0) - \tau^*_0(\bs x_0) \} ,
    n^{1/2}\gamma_0^{p/(4m_0)}\{\wh{\lambda}_n(\bs x_0) - \lambda^*_0(\bs x_0) \} \right)^{\rm T} \rightsquigarrow N(0, \bs \Sigma),
\ese
where $\rightsquigarrow$ denotes convergence in distribution, 
$\tau^*_0$ and $\lambda^*_0$ are the biased ``true values", and $\bs \Sigma$ is the covariance matrix. The definitions of $\tau^*_0$, $\lambda^*_0$, and $\bs \Sigma$ are provided in Web Appendix A.
\end{theorem}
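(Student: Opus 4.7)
The plan is to reduce the joint pointwise limit law to a multivariate central limit theorem for i.i.d.\ sums by linearising the first-order conditions for the penalised sieve minimiser in the product RKHS $\ch_1 \times \ch_0$ and extracting the value $\wh{\tau}_n(\bs x_0)$, $\wh{\lambda}_n(\bs x_0)$ via representers of point evaluation.

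First I would take Fr\'echet derivatives of $\wh{\ell}_{n,\gamma_1,\gamma_0}$ in directions $(h_1, h_0) \in \ch_1 \times \ch_0$, yielding a coupled system of normal equations. Introduce the population bilinear form on $\ch_1 \times \ch_0$ defined by the $L^2(\sigma^{-2})$ inner product plus the two penalties, and use a simultaneous Mercer-type diagonalisation of $J_j$ against $L^2(\sigma^{-2})$ to construct representers $K_1^{\bs x_0} \in \ch_1$ and $K_0^{\bs x_0} \in \ch_0$ of point evaluation at $\bs x_0$. Under the eigenvalue-decay Conditions S11--S14, standard RKHS smoothing-spline theory (in the spirit of Shang and Cheng, 2013) gives $\|K_j^{\bs x_0}\|_{L^2}^2 \asymp \gamma_j^{-p/(2m_j)}$, which drives the scaling appearing in the theorem.

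Substituting $h_j = K_j^{\bs x_0}$ in the linearised normal equations and invoking the reproducing identity produces a Bahadur-type representation
\begin{align*}
\wh{\tau}_n(\bs x_0) - \tau_0^*(\bs x_0) &= \frac{1}{n}\sum_{i=1}^n \frac{K_1^{\bs x_0}(\bs X_i)\{D_i - \tau_0(\bs X_i) - (1-S_i)\lambda_0(\bs X_i)\}}{\sigma^2(\bs X_i, S_i)} + r_{n,1},\\
\wh{\lambda}_n(\bs x_0) - \lambda_0^*(\bs x_0) &= \frac{1}{n}\sum_{i=1}^n \frac{(1-S_i)K_0^{\bs x_0}(\bs X_i)\{D_i - \tau_0(\bs X_i) - (1-S_i)\lambda_0(\bs X_i)\}}{\sigma^2(\bs X_i, S_i)} + r_{n,0},
\end{align*}
where $\tau_0^*$ and $\lambda_0^*$ are the population penalised minimisers. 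The remainders $r_{n,j}$ gather (a) the drift between the empirical and the population bilinear forms, (b) the plug-in error $\wh{D}_i - D_i$ transmitted through the AIPCW transformation $\wt R$, and (c) quadratic-in-estimation-error terms. Using the rates from Theorem \ref{th2}, an empirical-process chaining argument over the sieves $\Phi_n \times \Psi_n$, and the Neyman orthogonality of the AIPCW score with respect to $(\wh{e}, \wh{\mu}_a, \wh{G}_T, \wh{G}_C)$, one shows $r_{n,j} = o_P(n^{-1/2}\gamma_j^{-p/(4m_j)})$. The Cram\'er--Wold device and the Lindeberg--Feller CLT then deliver the joint limit, and $\bs\Sigma$ is identified as the limit of the covariance of the two leading i.i.d.\ terms, with off-diagonal entry generated by observations having $S_i = 0$.

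The step I expect to be the principal obstacle is establishing the remainder order for $r_{n,j}$. It is delicate for two interlocking reasons: because $\wh{\tau}_n$ and $\wh{\lambda}_n$ are solved jointly, the representers must be constructed with respect to the full $2\times 2$ penalised operator so that the cross-block (active only on $S = 0$ observations) is handled cleanly, which requires a careful block inversion of the Hessian at the penalised truth; and the plug-in nuisances enter through the AIPCW transformation in a non-trivial way, so achieving an $o_P(n^{-1/2}\gamma_j^{-p/(4m_j)})$ remainder hinges on double-robustness and product-of-rates bounds on the nuisance estimators as encoded by Conditions S7--S9.
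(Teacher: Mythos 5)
Your proposal follows essentially the same route as the paper: the paper's proof also linearises the penalised first-order conditions in the product RKHS, establishes Donsker-type weak convergence of the score process, inverts the joint penalised Hessian $\dot{\bs U}_{\bs\gamma}(\tau_0,\lambda_0)$, identifies $\tau_0^*,\lambda_0^*$ as the penalty-induced shift $\dot{\bs U}^{-1}_{\bs\gamma}[\bs W_{\bs\gamma}]$, controls the nuisance plug-in via the product-of-rates/double-robustness structure, and finishes by plugging the scaled representers $(\omega_1\gamma_1^{p/(4m_1)}K_{1\bs x_0},\omega_0\gamma_0^{p/(4m_0)}K_{0\bs x_0})$ into the Cram\'er--Wold device. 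The only caveat is that your displayed Bahadur representation uses the raw per-space representers, whereas the correct influence functions are $(K_1^*,K_0^*)=\dot{\bs U}^{-1}_{\bs\gamma}[(\cdot)]$ applied to the joint operator (which couples $\tau$ and $\lambda$ through the $S=0$ observations and produces the off-diagonal entry of $\bs\Sigma$) --- an issue you yourself flag and correctly describe how to resolve.
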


Theorem \ref{th3} further indicates that the penalized components bias the proposed integrative estimators (see the definitions of $\tau^*_0$ and $\lambda^*_0$ in Web Appendix A).
Analyzing the biases is challenging, however, the biases can be disregarded under specific undersmoothing conditions, as shown in Corollary \ref{coro1lary1}.

\begin{corollary}\label{coro1lary1}
Suppose that the assumptions in Theorem \ref{th3} hold; furthermore, $n(\gamma_1 + \gamma_0) = O(1)$.
Then, given $\bs x_0 \in \Omega$,
\bse
\left(n^{1/2}\gamma_1^{p/(4m_1)}\{\wh{\tau}_n(\bs x_0) - \tau_0(\bs x_0) \} ,
    n^{1/2}\gamma_0^{p/(4m_0)}\{\wh{\lambda}_n(\bs x_0) - \lambda_0(\bs x_0) \} \right)^{\rm T} \rightsquigarrow N(0, \bs \Sigma).
\ese
\end{corollary}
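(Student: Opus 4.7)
The plan is to obtain Corollary 1 from Theorem 3 via Slutsky's theorem, by showing that, under the undersmoothing condition $n(\gamma_1+\gamma_0)=O(1)$, the discrepancies between the biased ``true values'' $\tau^*_0,\lambda^*_0$ and the genuine targets $\tau_0,\lambda_0$ contribute asymptotically negligible bias at the normalization used in Theorem 3.

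The natural starting point is the pointwise decomposition
$$
\wh{\tau}_n(\bs x_0)-\tau_0(\bs x_0)=\{\wh{\tau}_n(\bs x_0)-\tau^*_0(\bs x_0)\}+\{\tau^*_0(\bs x_0)-\tau_0(\bs x_0)\},
$$
and analogously for $\wh{\lambda}_n(\bs x_0)-\lambda_0(\bs x_0)$. Theorem 3 already handles the stochastic pieces: the vector $(n^{1/2}\gamma_1^{p/(4m_1)}\{\wh{\tau}_n(\bs x_0)-\tau^*_0(\bs x_0)\},\ n^{1/2}\gamma_0^{p/(4m_0)}\{\wh{\lambda}_n(\bs x_0)-\lambda^*_0(\bs x_0)\})^{\rm T}$ converges in distribution to $N(0,\bs\Sigma)$. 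All that remains is to prove that the purely deterministic pieces $n^{1/2}\gamma_1^{p/(4m_1)}\{\tau^*_0(\bs x_0)-\tau_0(\bs x_0)\}$ and $n^{1/2}\gamma_0^{p/(4m_0)}\{\lambda^*_0(\bs x_0)-\lambda_0(\bs x_0)\}$ are $o(1)$, after which Slutsky's theorem concludes the argument.

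For the bias pieces I would exploit that $(\tau^*_0,\lambda^*_0)$ is, by definition, the minimizer of the population analogue of $\wh{\ell}_{n,\gamma_1,\gamma_0}$ obtained by replacing the empirical average with expectation. Writing the joint Fr\'echet optimality condition on $\ch_1\times\ch_0$ yields coupled operator equations for $(\tau^*_0-\tau_0,\lambda^*_0-\lambda_0)$ involving the self-adjoint smoothing operators associated with the penalties $J_1$ and $J_0$. A standard spectral expansion in the eigenbases of these operators gives the global bound $\|\tau^*_0-\tau_0\|_{\ch_1}=O(\gamma_1^{1/2})$ and $\|\lambda^*_0-\lambda_0\|_{\ch_0}=O(\gamma_0^{1/2})$; sharpening these to pointwise bounds at $\bs x_0$ via the reproducing property, together with the eigenvalue decay and boundary smoothness encoded in Conditions S10--S14, yields $|\tau^*_0(\bs x_0)-\tau_0(\bs x_0)|=O(\gamma_1)$ and $|\lambda^*_0(\bs x_0)-\lambda_0(\bs x_0)|=O(\gamma_0)$. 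Plugging in, the scaled biases are $O(n^{1/2}\gamma_1^{1+p/(4m_1)})$ and $O(n^{1/2}\gamma_0^{1+p/(4m_0)})$, which are $o(1)$ under $n(\gamma_1+\gamma_0)=O(1)$.

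The main obstacle, I expect, is precisely this sharpening from the $\ch_j$-norm bias rate $O(\gamma_j^{1/2})$ to the pointwise rate $O(\gamma_j)$: the crude reproducing-kernel bound would give only $n^{1/2}\gamma_j^{1/2+p/(4m_j)}=O(1)$ under the stated undersmoothing, which is not enough to kill the bias. Thus the technical core of the proof is a refined eigen-decomposition analysis of the bias along the RKHS basis, using the full strength of the smoothness assumed for $\tau_0$ and $\lambda_0$ beyond mere membership in $\ch_j$ and the precise eigenvalue decay of the kernels; the Slutsky and decomposition steps are routine once this refined pointwise bias rate is in hand.
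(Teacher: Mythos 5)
Your overall skeleton is the same as the paper's: decompose $\wh{\tau}_n(\bs x_0)-\tau_0(\bs x_0)$ into the stochastic piece covered by Theorem \ref{th3} plus the deterministic bias $\tau_0^*(\bs x_0)-\tau_0(\bs x_0)=-b_\tau(\bs x_0)$, show the scaled bias vanishes, and invoke Slutsky. You also correctly diagnose that a big-$O(\gamma_1^{1/2})$ bound on $\|b_\tau\|_{\ch_1}$ combined with the duality bound $|\gamma_1^{p/(4m_1)}b_\tau(\bs x_0)|\le\|b_\tau\|_{\ch_1}\,\|\gamma_1^{p/(4m_1)}K_{1\bs x_0}\|_{\ch_1}$ only yields $O\bigl(\{n\gamma_1\}^{1/2}\bigr)=O(1)$, which does not suffice. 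But the remedy you propose is where the argument breaks down.

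You plan to sharpen the \emph{pointwise} bias to $|\tau_0^*(\bs x_0)-\tau_0(\bs x_0)|=O(\gamma_1)$ ``using the full strength of the smoothness assumed for $\tau_0$.'' That rate is not available under the paper's hypotheses: the classical $O(\gamma_1)$ smoothing-spline bias requires roughly $2m_1$ derivatives of $\tau_0$ together with natural boundary conditions, whereas the paper only assumes $\tau_0\in\ch_1$, i.e.\ $J_1(\tau_0,\tau_0)<\infty$, and Conditions S10--S14 supply nothing of that extra strength. The paper's actual fix is much lighter and sits exactly at the spot you dismissed as ``crude'': Lemma S7 upgrades the norm bound from $O(\gamma_1^{1/2})$ to $\|W_{\gamma_1}\tau_0\|_{\ch_1}=o(\gamma_1^{1/2})$ by a dominated-convergence argument in the eigenbasis (using only $\sum_\mu|V_1(\tau_0,\varphi_{1\mu})|^2\rho_{1\mu}<\infty$), and boundedness of $\dot{\bs U}_{\bs\gamma}^{-1}$ transfers this to $\|b_\tau\|_{\ch_1}=o(\gamma_1^{1/2})$. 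Since $\|\gamma_1^{p/(4m_1)}K_{1\bs x_0}\|_{\ch_1}=O(1)$ by Lemma S6, the same duality bound you wrote down then gives $n^{1/2}\gamma_1^{p/(4m_1)}|b_\tau(\bs x_0)|=n^{1/2}\,o(\gamma_1^{1/2})=o\bigl(\{n\gamma_1\}^{1/2}\bigr)=o(1)$ under $n(\gamma_1+\gamma_0)=O(1)$, and likewise for $b_\lambda$. (Formally the paper runs this through the functional $\bs W_{\bs\gamma}(\bs\Upsilon_0)[\bs h_\vartheta]=o_P(n^{-1/2})$ uniformly over $\ms H_\vartheta$ and the representation in Step 5 of the proof of Theorem \ref{th3}, but the content is the small-$o$ in Lemma S7.) So the gap is concrete: the ``refined eigen-decomposition analysis'' you defer to as the technical core would need assumptions the paper does not make, and the step you actually need — replacing $O(\gamma_j^{1/2})$ by $o(\gamma_j^{1/2})$ in the $\ch_j$-norm — is missing from your plan.
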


Let $\wh{\tau}_{\rm rct}(\cdot)$ denote the estimator of $\tau_0(\cdot)$ derived solely from the trial data.
The last aim is to theoretically show that the proposed integrative method is more efficient than using only the trial data, specifically ${\rm Var}\{\wh{\tau}_n(\bs x_0) \} \le {\rm Var}\{ \wh{\tau}_{\rm rct}(\bs x_0) \}$.
However, the estimators involve estimating nuisance functions, which makes calculating the variances very complex. For simplification, under some stronger conditions and treating nuisance functions as known, we show the proposed method has the advantage of gaining efficiency in estimating $\tau(\cdot)$, which is summarized in Theorem \ref{th4}.

\begin{theorem}\label{th4}
Assuming that Assumptions \ref{A1}--\ref{A2} are met, as well as Conditions S1--S3, and S15--S17 as outlined in Web Appendix B.
Then, given $\bs x_0 \in \Omega$,
$
{\rm Var} \{\wh{\tau}_n(\bs x_0)\} \le {\rm Var} \{\wh{\tau}_{\rm rct}(\bs x_0)\}
$.
\end{theorem}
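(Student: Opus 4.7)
The plan is to exploit Conditions S15--S17, which treat the nuisance functions $e(\cdot)$, $G_T$, $G_C$, and $\sigma^2$ as known, so that $\wh D_i = D_i$ and both estimators arise as exact minimizers of quadratic penalized least-squares criteria. Each estimator is then a linear smoother of $\{D_i\}$, and the variance comparison reduces to an operator inequality in the reproducing kernel Hilbert space $\ch_1$.

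First, I would write the first-order conditions defining $(\wh\tau_n, \wh\lambda_n)$ in $\Phi_n\times\Psi_n$ as a $2\times 2$ block system and eliminate the $\lambda$-block by a Schur complement, obtaining an effective normal equation $\mathcal{T}_n\,\wh\tau_n = \mathcal{B}_n$ with $\mathcal{T}_n = \mathcal{A}_{\rm rct} + \mathcal{R}_n + \gamma_1\,\mathcal{J}_1$. Here $\mathcal{A}_{\rm rct}$ is the RCT design (information) operator, $\mathcal{J}_1$ is the penalty operator associated with $J_1$, and $\mathcal{R}_n$ is the Schur complement of the RWD block. Since $\mathcal{R}_n$ is the infimum over $\lambda\in\Psi_n$ of a non-negative quadratic form, it is positive semidefinite on $\ch_1$. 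The RCT-only estimator solves the analogous equation $\mathcal{T}_{\rm rct}\,\wh\tau_{\rm rct} = \mathcal{B}_{\rm rct}$ with $\mathcal{T}_{\rm rct} = \mathcal{A}_{\rm rct} + \gamma_1\,\mathcal{J}_1$, whence the operator inequality $\mathcal{T}_n \succeq \mathcal{T}_{\rm rct}$ holds on $\ch_1$.

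Next, Conditions S10--S14 guarantee that the pointwise evaluation functional $\delta_{\bs x_0}$ is bounded on $\ch_1$, so the pointwise variances admit sandwich representations
\[
{\rm Var}\{\wh\tau_n(\bs x_0)\} = \langle \delta_{\bs x_0}, \mathcal{T}_n^{-1} \mathcal{V}_n \mathcal{T}_n^{-1} \delta_{\bs x_0} \rangle_{\ch_1},
\]
and analogously with $\mathcal{T}_{\rm rct}, \mathcal{V}_{\rm rct}$ for the RCT-only estimator, where $\mathcal{V}_n$ and $\mathcal{V}_{\rm rct}$ are the respective score-variance operators. The crucial structural identity, for which Conditions S15--S17 are designed, is that after profiling $\lambda$ out, the RWD contribution to $\mathcal{V}_n$ coincides with its contribution $\mathcal{R}_n$ to $\mathcal{T}_n$, so the sandwich collapses to $\mathcal{T}_n^{-1}\mathcal{V}_n\mathcal{T}_n^{-1} = \mathcal{T}_n^{-1}$, and similarly for the RCT-only quantity. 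Operator monotonicity of inversion on positive self-adjoint operators then transforms $\mathcal{T}_n \succeq \mathcal{T}_{\rm rct}$ into $\mathcal{T}_n^{-1} \preceq \mathcal{T}_{\rm rct}^{-1}$, which yields the desired variance inequality at $\bs x_0$.

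The main obstacle is verifying this sandwich-collapse (information) identity in the sieve/RKHS framework with the nonparametric nuisance $\lambda$. Concretely, one must check that the effective score for $\tau$ after Schur-complementing $\lambda$ out is orthogonal, in the inner product induced by $\sigma^{-2}$, to the nuisance tangent space generated by $\Psi_n$, and that this orthogonality persists in the limit $r_0\to\infty$. This is the infinite-dimensional analogue of the classical profile-likelihood information identity and is precisely where the stronger Conditions S15--S17 intervene. Once the identity is in place, operator monotonicity of the inverse finishes the argument in a single line.
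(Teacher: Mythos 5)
Your proposal follows essentially the same route as the paper: both arguments profile out the bias function $\lambda$, observe that the resulting Schur-complement correction from the RWD block is positive semidefinite, identify the leading-order variance with the inverse of the profiled information, and finish by monotonicity of matrix/operator inversion. The paper executes this concretely in the finite-dimensional sieve parametrization: writing $\bs A=(\bs\Phi(\bs X^*),(1-\bs S)\bs\Psi(\bs X^*))$ and $\bs H$ the weight matrix, it shows ${\rm Var}_r\{\wh\tau_{\rm rct}(\bs x_0)\}=\bs\phi(\bs x_0)^{\rm T}E_r\{(\bs A_1^{\rm T}\bs H_1\bs A_1)^{-1}\}\bs\phi(\bs x_0)+o_P(n^{\kappa^*-1})\bs\phi(\bs x_0)^{\rm T}\bs\phi(\bs x_0)$ and the analogous identity for $\wh\tau_n(\bs x_0)$ with the $(1,1)$ block $\bs\Pi_1$ of $(\bs A^{\rm T}\bs H\bs A)^{-1}$, then proves $\bs\Pi_1^{-1}-\bs A_1^{\rm T}\bs H_1\bs A_1=\bs A_{21}^{\rm T}\bs H_2\bs A_{21}-\bs A_{21}^{\rm T}\bs H_2\bs A_2(\bs A_2^{\rm T}\bs H_2\bs A_2)^{-1}\bs A_2^{\rm T}\bs H_2\bs A_{21}\ge\bs 0$ by noting that $\bs H_2^{1/2}\bs A_2(\bs A_2^{\rm T}\bs H_2\bs A_2)^{-1}\bs A_2^{\rm T}\bs H_2^{1/2}$ is an idempotent projection --- which is exactly your ``infimum of a nonnegative quadratic form'' observation. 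Two points where your sketch diverges from what must actually be proved. First, the sandwich does \emph{not} collapse exactly to $\mathcal{T}_n^{-1}$ when $\mathcal{T}_n$ contains the penalty $\gamma_1\mathcal{J}_1$: Condition S15 only cancels the middle variance operator against one factor of $\bs H$, and the penalty blocks $n\gamma_1\bs P_1$, $n\gamma_0\bs P_2$ must be shown asymptotically negligible relative to the leading order $n^{\kappa^*-1}$ using the eigenvalue bounds of Conditions S16--S17; likewise the contribution of the conditional mean (the $\bs\Xi_2$ term, containing the nuisance-estimation biases $\Gamma_1+\Gamma_2$ and the sieve approximation error) must be bounded separately. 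This negligibility analysis is where most of the paper's proof resides, and your sketch asserts the identity rather than establishing it. Second, your worry about the orthogonality persisting ``in the limit $r_0\to\infty$'' is moot in the paper's framework: the estimator is defined as the minimizer over the fixed finite-dimensional space $\Phi_n\times\Psi_n$, so the entire comparison is a finite matrix computation and no passage to an infinite-dimensional nuisance tangent space is required.
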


\section{Numerical studies}\label{sec5}
\subsection{Simulation study}
In this section, we conduct some simulation studies to provide technical support for applying the proposed method. We consider various types of bias in the real-world study, including selection bias, censoring, outcome heterogeneity, and unmeasured confounding.
We first consider the case where $ p = 2$, that is $\bs X = (X_1, X_2)^{\rm T}$.
Tensor B-splines are employed to get the sieve basis.
In the simulation,
$\wh{G}_T$ and $\wh{G}_C$ is obtained by fitting a Cox proportional hazards model (Cox, 1972), furthermore, $\wh{\mu}_a(\bs X) = \int_0^L \wh{G}_T(t \mid \bs X, A = a, S = 1){\rm d}t $, where $a = 0 , 1$. $\wh{\sigma}^2_s(\bs x)$  is obtained by a kernel estimation method and the corresponding bandwidth is twice the optimal bandwidth by the generalized
cross-validation (GCV), where $s = 0 ,1$.
The penalized parameters $\gamma_1$ and $\gamma_0$ are selected by GCV.
For comparison,  we also consider the estimators based solely on the RCT data and those based solely on the RWD.

Throughout the simulation, $ e(\bs X) = P(A = 1 \mid \bs X, S = 1) = 0.5 $ is known and
$P(A = 1 \mid \bs X, S = 0 )$ is estimated via a generalized linear model.
We consider three cases, each representing different scenarios. Case 1 corresponds to the situation where the failure time model is correctly specified, and $ E(T_L \mid \bs X, A, S = 1) = E(T_L \mid \bs X, A, S = 0) $. Case 2 corresponds to the situation where the failure time model is correctly specified, but $ E(T_L \mid \bs X, A, S = 1) \neq E(T_L \mid \bs X, A, S = 0) $. Case 3 addresses the situation where the failure time model is incorrectly specified, and $ E(T_L \mid \bs X, A, S = 1) = E(T_L \mid \bs X, A, S = 0) $.
The specific settings are summarized in Table \ref{table1}.

The simulation results are summarized in Figures \ref{fig-1}--\ref{fig-2} and S1--S4 in Web Appendix F.
When presenting results graphically, we use different scales for the legends to ensure visibility and interpretability.
The RWD-based only method has a distinct scale due to its significantly larger bias and standard deviation values.
Using the same scale for all methods would make the RCT-based only and proposed methods indistinguishable.
Thus, we assign a separate scale to the RWD-based only method and use a shared scale for the RCT-based only and proposed integrative methods to highlight the superiority of the proposed method.

Our expectations are met. 
The estimator based solely on the RWD performs poorly. Even though its performance improves with increasing sample size, the results remain poor, characterized by large biases, standard deviations, and unfavorable coverage probabilities.
Both the proposed estimator and the estimator based solely on the RCT data perform well in estimation accuracy and coverage probability,
closely aligning with the nominal values.
The empirical standard
deviations of both the proposed estimator and the estimator based solely on the RCT data  decrease when the
sample size $(n_1, n_0)$ is increased from $(500, 1000)$ to $(1000, 2000)$.
Notably, in all configurations, the empirical standard deviations of
the proposed estimators are smaller than those obtained solely from the RCT data.
Considering estimation accuracy, coverage probability, and standard deviation together, it is evident that while both methods perform well in accuracy and coverage probability, the proposed method stands out due to its lower standard deviation.
The reduced variability of the proposed estimator enhances its reliability, making it a superior choice for estimating the HTE in real-world scenarios.
Overall, these findings confirm that the proposed integrative method performs well in finite-sample settings and significantly improves the HTE estimation compared to using the RCT data solely.

We also consider the case where $p = 4$, which includes two continuous covariates and two categorical covariates. The details are presented in Web Appendix F. 

\subsection{Real data application}
Due to advances in radiologic technology, the detection rate of early-stage non-small-cell lung cancer is rising.
The research community on the treatment of early-stage lung cancer with $\le$ 2cm tumor had a great interest in evaluating the effect of limited resection relative to lobectomy.
Lobectomy is a popular surgical resection in which the entire lobe of the lung where the tumor resides is removed.
Limited resection, including wedge and segmental resection, only removes a smaller section of the complicated lobe.
Limited resection is known for shorter hospital stays, fewer postoperative complications,  and better preservation of pulmonary function. CALGB 140503 is a multicenter non-inferiority randomized phase 3 trial in which 697 patients with
non-small-cell lung cancer clinically staged as stage 1A with $\leq$ 2cm tumors were randomly assigned to
undergo limited resection or lobectomy (Altorki et al., 2023). The results of this trial firmly established that for stage 1A non-small-cell lung cancer patients with a tumor size of 2cm or less, limited resection  was not inferior to lobectomy concerning overall survival with a hazard ratio = 0.95 (90\% confidence interval 0.72--1.26) and disease-free survival with a hazard ratio = 1.01 (90\% confidence interval 0.83--1.24).
Further subgroup analysis reveals that patients older than 70 years tended to have more prolonged disease-free survival and overall survival when receiving limited resection. At the same time,  patients with larger tumor size (1.5--2cm) tended to benefit more from lobectomy. This leads to a strong interest in exploring treatment effect heterogeneity over age and tumor size.
In  CALGB 140503, we removed two patients with overall survival times equal to 0, as survival time should logically be greater than 0, and these data points were considered invalid.
In the remaining 695 patients, we detected an outlier in tumor size by calculating the $3\sigma$ interval for tumor size, which was found to be $(0.413, 2.551)$. Since the tumor size of 3 exceeded this interval, we concluded it was an outlier.
Therefore, we excluded this patient's data to ensure the robustness of our analysis. After these exclusions, we used data from 694 patients in CALGB 140503.

The National Cancer Database (NCDB) is a clinical oncology database maintained by the American College of Surgeons, and it captured $72\%$ of all newly diagnosed lung cancers.  
From the NCDB database, we selected a cohort of 17,995 stage 1A NSCLC patients with $\leq$ 2cm tumor and who met all eligibility criteria of CALGB 140503.
The NCDB-only analysis based on multivariable Cox proportional hazards model and propensity score-based methods reveals a significant overall benefit of lobectomy over limited resection, which contradicts the findings of CALGB 140503.
Unobserved hidden confounders in the NCDB-only analysis could explain the beneﬁt of lobectomy over limited resection.
It has been well documented that surgeons and patients tend to choose limited resection over lobectomy if the patient has a bad health status and poor functional respiratory reserve and/or high comorbidity burden (Zhang et al, 2019; Lee and Altorki, 2023).
Unfortunately these confounders were not captured in the NCDB database, and these hidden confounders may have inevitably led to biased estimates of the treatment effects.
It is of a great interest to illustrate our proposed method to estimate the HTE. In particular, we  want to examine the precision of the proposed HTE estimator for the difference of treatment-specific CRMSTs, conditional on age and tumor size, that can be improved by synthesizing information from CALGB 140503 and NCDB cohorts with the latter subject to possible hidden confounders.

Our analysis considers the time to death as the survival time and takes age and tumor size as the covariates of interest in a two-dimensional context. The restricted time horizon is set at 3 years. Table \ref{table2} displays the descriptive statistics of age and tumor size  for the CALGB 140503 and NCDB cohorts.
Figure \ref{fig-3} summarizes the results from the trial data-based only and proposed integrative methods. Specifically, the first two panels show the point estimates of the HTE and highlight in blue the regions where the treatment effects are statistically significant, as indicated by $95\%$ confidence intervals that do not include zero.

In general, there is a trend where patients with large tumor size  benefit more from lobectomy compared to limited resection, while patients with small tumor size and older age benefit more from limited resection over lobectomy.
The blue regions are more extensive in the second panel, which represents the proposed integrative approach.
This indicates that the proposed method identifies a broader range of patient groups where the treatment effects are statistically significant.
Specifically, the blue regions highlight that patients with larger tumor sizes ($>$ 1.5cm) and younger ages ($<$ 65 years) show statistically significant benefits from lobectomy.
The third panel illustrates the percentage reduction in standard error achieved by the proposed integrative approach compared to the trial data-based only approach.
This reduction demonstrates that the proposed integrative method provides HTE estimates with lower standard errors and narrower confidence intervals.
These findings indicate that the proposed integrative method enhances estimation efficiency in real-world applications.

\section{Discussion}\label{sec6}

We proposed an integrative estimator of the HTE by combining evidence from the RCT and the RWD in the presence of right censoring. We avoided the assumption of no biases for the RWD and instead defined a  bias function to account for various biases in the RWD. The proposed method considered the HTE in a fully nonparametric form, making it flexible, model-free, and data-driven, and hence more practical for use in various applications. Our research aimed to increase the efficiency of the HTE estimation by leveraging supplemental information provided by biased RWD, as verified by our numerical studies.

In the introduction, we emphasized the crucial role that the HTE plays in precision medicine. Individualized treatment and precision medicine are frequently used interchangeably. Since the HTE provides guidance regarding which treatment strategy should be adopted, our proposed framework is closely related to the individualized treatment regime, which involves a decision rule that assigns treatments based on patients' characteristics. Recent studies, such as Chu, Lu, and Yang (2022) and Zhao, Josse, and Yang (2025), have explored using data from various sources for the individualized treatment regime. Our research sheds light on the potential application of treatment effects for survival outcomes in precision medicine or individualized treatment, making it a valuable contribution to the field.

Our work also has several limitations that warrant discussion. Firstly, the proposed nonparametric estimators may suffer from boundary effects, which is a common issue in nonparametric statistics. Specifically, point estimates of the HTE in boundary regions could be less accurate than those at interior points due to slower convergence rates of nonparametric estimators around the boundary. Secondly, when using the sieve method to approximate the HTE and the  bias function, the dimensionality of the covariate $\bs X$ should not be high. 
Also, we have thus far focused on non–time-varying treatments. Moving forward, marginal structural models (e.g., Yang, Tsiatis, and Blazing, 2018; Yang, Pieper, and Cools, 2020) offer considerable potential to enhance interpretability in these contexts and constitute a promising direction for future research on data integration.

\newpage
\begin{figure}[H]
\centering
\includegraphics[width = \textwidth]{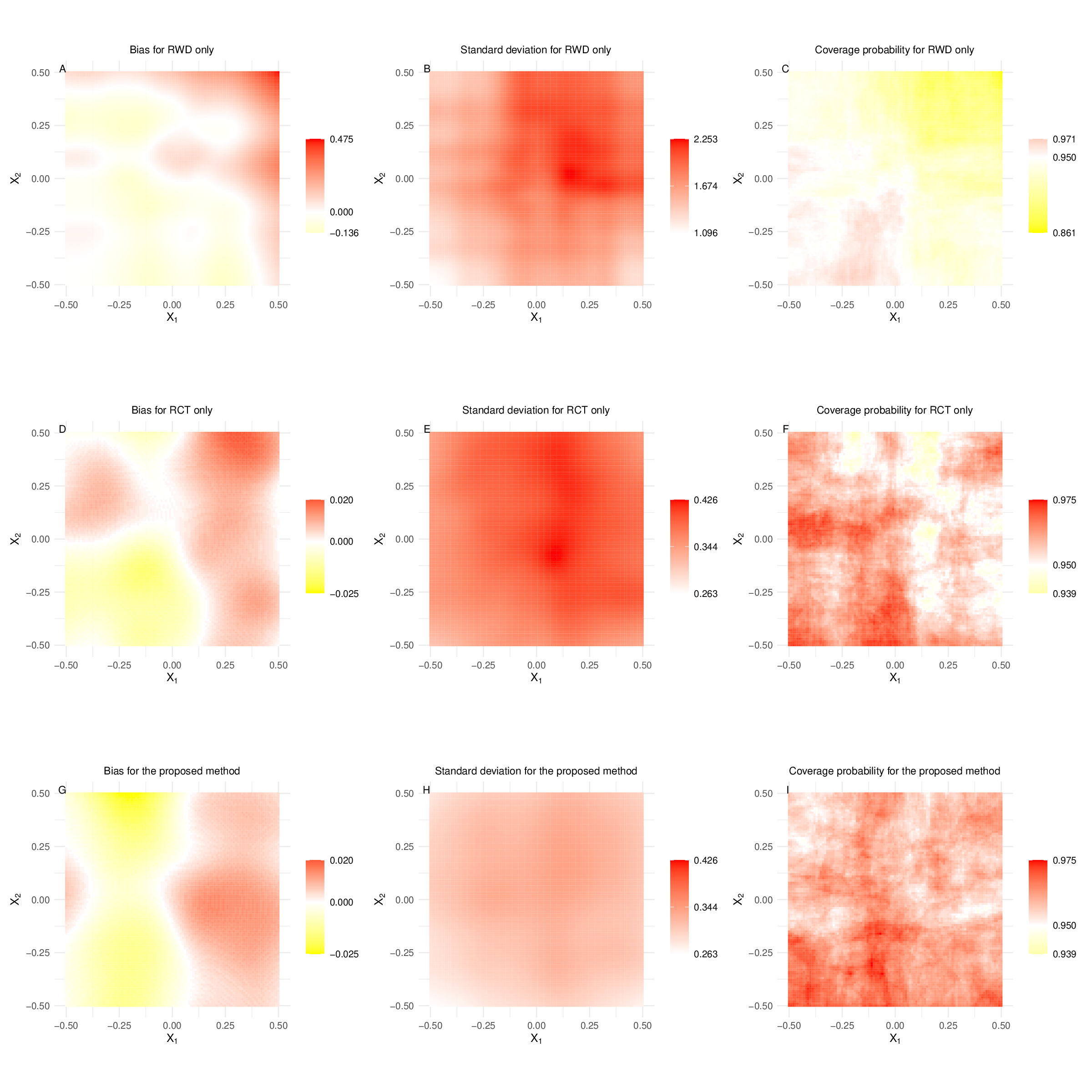}
\caption{The simulation results of Case 1 with $(n_1, n_0) = (500, 1000)$.}
\label{fig-1}
\end{figure}

\newpage
\begin{figure}[H]
\centering
\includegraphics[width = \textwidth]{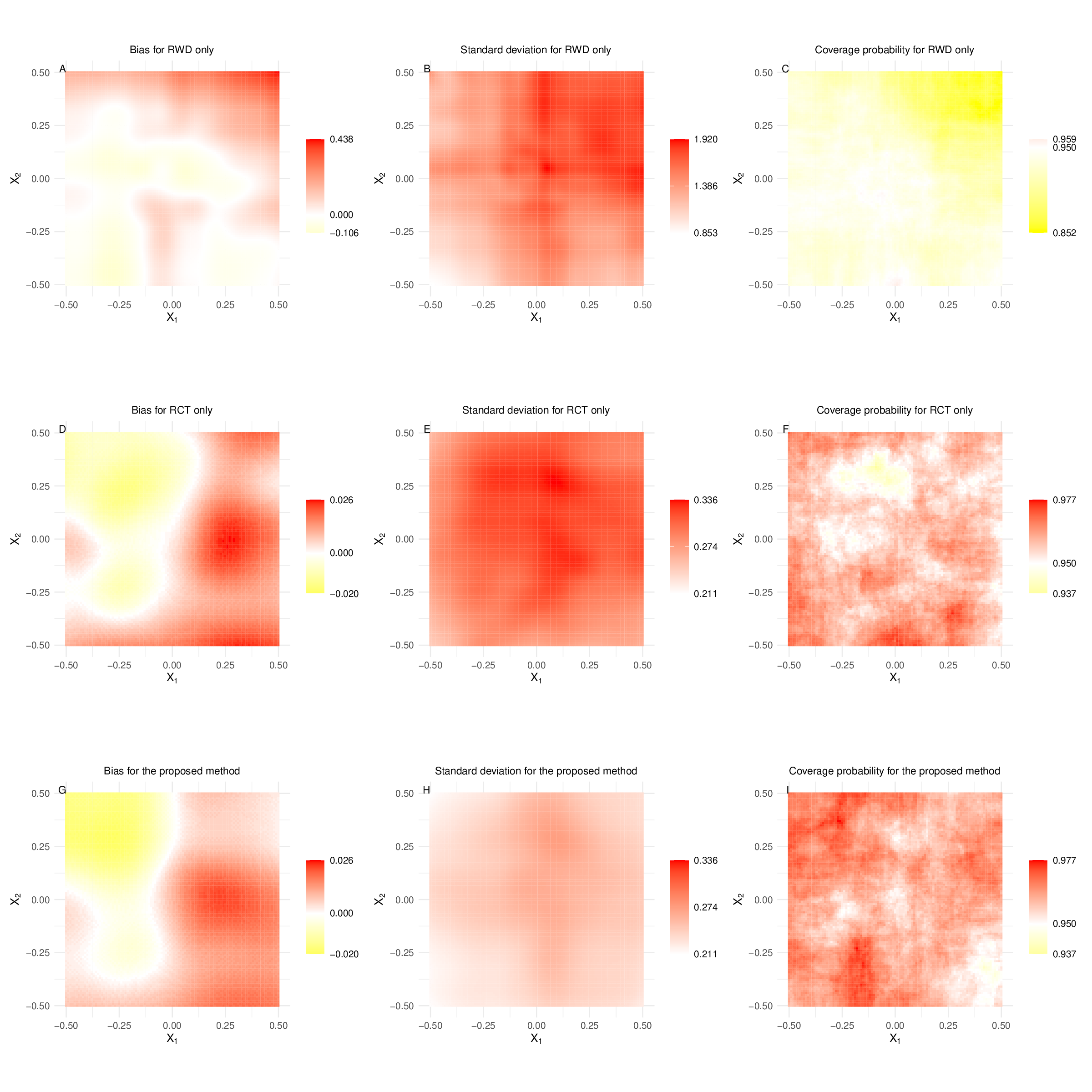}
\caption{The simulation results of Case 1 with $(n_1, n_0) = (1000, 2000)$.}
\label{fig-2}
\end{figure}

\newpage
\begin{figure}[H]
\centering
\includegraphics[width = \textwidth]{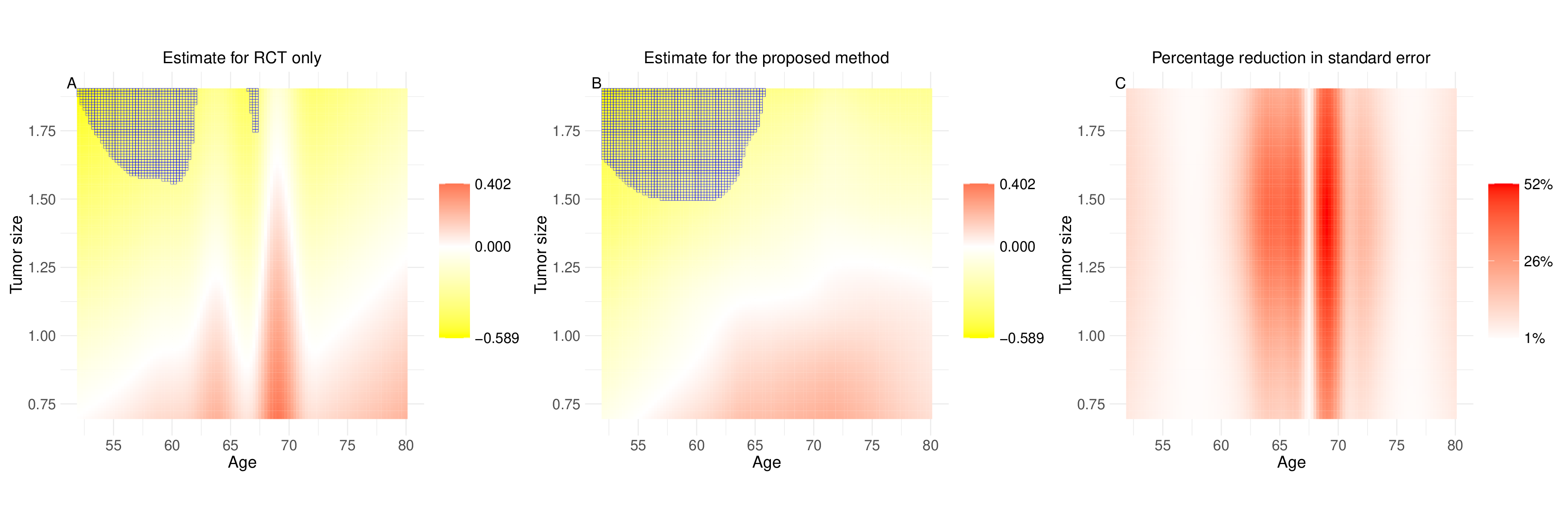}
\caption{The analysis results of real data.}
\label{fig-3}
\end{figure}

\newpage
\begin{table}[H]
\begin{center}
\renewcommand\arraystretch{1.3}
    \caption{Simulation Settings}
    \label{table1}
   \resizebox{\textwidth}{!}{%
    \begin{tabular}{lccc}
        \toprule
    \multicolumn{4}{l}{ Restricted time point $L = 3$; Sample size $(n_1, n_0) = (500, 1000)$ and $(1000, 2000)$; Replication times = 1000} \\

 \textbf{Case}  &     \textbf{Parameter} & \textbf{RCT} & \textbf{RWD}
\\
        \midrule
           Case 1  &  $X_1, X_2$ & $\mathcal{N}(0, 1)$ & $\mathcal{N}(0, 0.5)$
\\
   &    $X_u$ &  ${\rm Exp}(5)$ & ${\rm Exp}(5)$
\\
     &   $A$ & Bernoulli(0.5) & Bernoulli(${\rm expit}\{0.5(X_1 + X_2 - X_u + 1 )\}$)
\\
     & $T$ & \multicolumn{2}{c}{$ G_{T}(t \mid X_1, X_2, A , X_u )
 =  (1 + 0.02t)\exp\{ -0.1X_u t -0.2t \exp( -0.2X_1 -0.5 X_2 + 0.4AX_1 + 1.3AX_2 )\}
$}
\\
     & $C$ & \multicolumn{2}{c}{$ h_C(t \mid X_1, X_2) = h_{0C}(t) \exp( 0.5 X_1 + 0.5 X_2 )$}
\\
& & $ h_{0C}(t) = 1.47 \times 10^{-2} $ & $ h_{0C}(t) = 0.441 $
\\
& $\wt{L}$ & 4.9 & 4.5
\\
& CR & 40\% & 70\%
\\
        \midrule
           Case 2  &  $X_1, X_2$ & \multicolumn{2}{c}{Same as in Case 1}
\\
   &    $X_u$ &  $-$ & $\mathcal{N}(0, 1)$
\\
     &   $A$ & \multicolumn{2}{c}{Same as in Case 1}
\\
     & $T$ & \multicolumn{2}{c}{$h_{T}(t \mid X_1, X_2, A ) = $}
\\

 & &  $0.2 \exp( -0.2X_1 -0.5 X_2 + 0.4AX_1 + 1.3AX_2 )$ & $ 0.2 \exp( -0.2X_1 -0.5 X_2 + 0.4AX_1 + 1.3AX_2 + X_u )$
\\
     & $C$ & \multicolumn{2}{c}{$ h_C(t \mid X_1, X_2) = h_{0C}(t) \exp( 0.5 X_1 + 0.5 X_2 )$}
\\
& & $ h_{0C}(t) = 1.84 \times 10^{-2} $ & $ h_{0C}(t) = 0.552 $
\\
& $\wt{L}$ & 5 & 4.5
\\
& CR & \multicolumn{2}{c}{Same as in Case 1}
\\
        \midrule
           Case 3  &  $X_1, X_2$ & \multicolumn{2}{c}{Same as in Case 1}
\\
   &    $X_u$ &  $\mathcal{N}(0, 1)$ & $\mathcal{N}(0, 1)$
\\
     &   $A$ & \multicolumn{2}{c}{Same as in Case 1}
\\
     & $T$ & \multicolumn{2}{c}{$h_{T}(t \mid X_1, X_2, A ) = 0.2 \exp( -0.2X_1 -0.5 X_2 + 0.4AX_1 + 1.3AX_2 + X_u )$}
\\
     & $C$ & \multicolumn{2}{c}{$ h_C(t \mid X_1, X_2) = h_{0C}(t) \exp( 0.5 X_1 + 0.5 X_2 )$}
\\
& & $ h_{0C}(t) = 1.47 \times 10^{-2} $ & $ h_{0C}(t) = 0.552 $
\\
& $\wt{L}$ & 5.2 & 4.5
\\
& CR & \multicolumn{2}{c}{Same as in Case 1}
\\
        \bottomrule
    \end{tabular}%
}
\end{center}
\vspace{0.02in}
\footnotesize{
$X_1, X_2$: Baseline covariates; $X_u$: Unmeasured confounder in the RWD; $A$: Treatment assignment; $T$: Failure time;
$C$: Censoring time;
$\wt{L}$: Study duration; CR: Censoring rate; $\mathcal{N}$: Normal distribution; Exp: Exponential distribution;
expit: ${\rm expit}(x) = \exp(x) / \exp(1 + x)$;
$G_T$: Survival function of $T$; $h_T$: Hazard function of $T$; $h_C$: Hazard function of $C$;
$h_{0C}$: Baseline hazard function of $C$.
}
\end{table}

\newpage
\begin{table}[H]
  \centering
  \caption{Distribution of age and tumor size for CALGB 140503 and NCDB Cohorts}
    \label{table2}
    \vspace{0.05in}
\renewcommand\arraystretch{1.5}
{\setlength{\tabcolsep}{0.2mm}
 \begin{tabular}{lrrrr}
    \toprule
          & \multicolumn{2}{c}{\textbf{NCDB}} & \multicolumn{2}{c}{\textbf{CALGB 140503}} \\
\cmidrule{2-5}          & \textbf{Lobectomy} & \textbf{Limited Resection} & \textbf{Lobectomy} & \textbf{Limited Resection} \\
          & (N=14505) & (N=3490) & (N=355) & (N=339) \\
    \midrule
    \textbf{age (years)} &       &       &       &  \\
    Mean (SD) & 65.3 (9.64) & 67.6 (9.81) & 67.1 (8.67) & 67.2 (8.73) \\
    Median [Min, Max] & 66.0 [38.0, 89.0] & 68.0 [38.0,89.0] & 67.5 [43.2,88.9] &  68.3 [37.8,89.7] \\
    \textbf{tumor size (cm)} &       &       &       &  \\
    Mean (SD) & 1.52 (0.369) & 1.40 (0.389) & 1.48 (0.355) & 1.48 (0.350) \\
    Median [Min, Max] &  1.50 [0.400, 2.00] &  1.50 [0.400, 2.00] &  1.50 [0.600, 2.50] &  1.50 [0.400, 2.30] \\
    \bottomrule
    \end{tabular}
}
\end{table}

\newpage
\section*{Appendices}
\global\long\def\theequation{S\arabic{equation}}
\setcounter{equation}{0}
\global\long\def\thefigure{S\arabic{figure}}
\setcounter{figure}{0}
%\global\long\def\thesection{S\arabic{section}}
%\setcounter{section}{0}
%\newcommand{\thefigure}{S\arabic{figure}}
%\newcommand{\thesection}{S\arabic{section}}
\global\long\def\thetheorem{S\arabic{theorem}}
\setcounter{theorem}{0}
\global\long\def\thecondition{S\arabic{condition}}
\setcounter{condition}{0}
\global\long\def\theremark{S\arabic{remark}}
\setcounter{remark}{0}
\global\long\def\theproposition{S\arabic{proposition}}
\setcounter{proposition}{0}
\global\long\def\thelemma{S\arabic{lemma}}
\setcounter{lemma}{0}
\global\long\def\theassumption{S\arabic{assumption}}
\setcounter{assumption}{0}

The Appendices are organized as the following. 
Appendix A provides the technical details and definitions of the notations mentioned in the main paper.
Appendix B presents some additional assumptions for studying the theoretical results of the proposed method.
Appendix C provides some propositions mentioned in the main paper. 
Appendix D contains some lemmas for proving the  theoretical properties. Web Appendix E provides proof of the  theoretical properties of the paper.
In Appendix F, we present the simulation results of Case 2 and Case 3 in Section 5.1 of the main paper, as well as an additional simulation study.

Let $Z_1, \ldots, Z_{n'}$ be the i.i.d.\hspace{-0.05cm} random variables with probability distribution  $\mb Q$, and $\mb Q_{n'}$ be the empirical measure of these random variables.
For a function $f$, we agree on $
\mb Q_{n'}f = (n')^{-1}\sum_{i = 1}^{n'}f(Z_i)$ and $\mb Q f = \int f(z) {\rm d}\mb Q(z)
$.
Let $\| f \|_{2} = \{ \int f^2(z){\rm d}z \}^{1/2}$ and
$\| f \|_{\infty} = \sup_{z}|f(z)|$ denote the $L_2$ norm and supremum norm of $f$, respectively.
Furthermore, for a random function $\wh{f}(z) = \wh{f}(z; Z_1, \ldots, Z_{n'})$, which is a measurable function concerning $z$ given observations $Z_1, \ldots, Z_{n'}$, we agree on  $E_r\{ \wh{f}(Z) \} = \int \wh{f}(z){\rm d}\mb Q(z) = E\{ f(Z) \}|_{f = \wh{f}}$ and ${\rm Var}_r \{ \wh{f}(Z) \} = \int \wh{f}^2(z) {\rm d}\mb Q(z) - \{ \int \wh{f}(z) {\rm d}\mb Q(z) \}^2 = {\rm Var}\{f(Z) \} |_{f = \wh{f}}$.
Let $\mb P_1$ and $\mb P_0$ denote the probability distribution of $(Y, \Delta, \bs X, A, S = 1)$ and $(Y, \Delta, \bs X, A, S = 0)$, respectively.
In the following, we use $c_j$ to denote a generic positive constant with an appropriate
subscript $j$ for $j = 0 ,1,\ldots$, and $c_j$ may represent different values in different contexts or lines.

\subsection*{Appendix A: Technical details and notations}
\subsubsection*{A.1 Sobolev space and penalties}
For a multi-index vector $k = (k_1, \ldots, k_p)$ of non-negative integers and a differentiable function $f(\bs x)$ with $\bs x = (x_1, \ldots, x_p)^{\rm T}$, we define $|k| = k_1 + \ldots + k_p$ and the corresponding $k$-th order derivative of $f(\bs x)$ as
$
f^{(k)}( \bs x) = {\partial^k f(\bs x)}/(\partial x_1^{k_1}\ldots \partial x_p^{k_p})
$.
For $m'_1 > p/2$, we define the $m'_1$-th order Sobolev space ${\cal H}_1$ as
\bse
{\cal H}_1 &=& \bigg\{ f: \Omega \rightarrow \mb R\, \big|\, f^{(j)} \textrm{ is absolutely continuous for } |j| = 0, 1, \ldots, (m'_1 - 1),
\\
&& \textrm{ and } f^{(k)} \in L_2(\Omega) \textrm{ for } |k| = m'_1\bigg\},
\ese
where $L_2(\Omega)$ is the collection of all square-integrable functions defined on $\Omega$.
Similarly, we define the $m'_0$-th order Sobolev space ${\cal H}_0$ for $m'_0 > p/2$.
For $p / 2 < m_1 \le m_1'$, $p / 2 < m_0 \le m_0'$, $k = (k_1,\ldots, k_p)$, $k' = (k'_1,\ldots, k'_p)$, $f, \wt{f} \in {\cal H}_1$, and $g, \wt{g} \in {\cal H}_0$, define
\bse
J_1(f, \wt{f}) & = & \sum_{|k| = m_1 }\frac{m_1!}{\prod_{j = 1}^p k_j!}\int_\Omega f^{(k)}(\bs x)\wt{f}^{(k)}(\bs x){\rm d}\bs x,
\\
J_0(g, \wt{g}) & = & \sum_{|k'| = m_0 }\frac{m_0!}{\prod_{j = 1}^p k'_j!} \int_\Omega g^{(k')}(\bs x)\wt{g}^{(k')}(\bs x){\rm d}\bs x.
\ese
\subsubsection*{A.2 Reproducing kernel Hilbert space}
Suppose that $n_1 / n \rightarrow \varrho$ as $n\rightarrow \infty$, where $0 < \varrho < 1$.
Let $\sigma_{0s}^2(\bs x)$ denote the probability limit of $\wh{\sigma}^2_s(\bs x)$, and $q_1(\bs x )$ and $q_0(\bs x )$ denote the density functions of $\bs X$ for the RCT and RWD, respectively.
For $f, \wt{f} \in {\cal H}_1$ and $g, \wt{g} \in {\cal H}_0$, define
\bse
V_1(f, \wt{f}) & = & \int_\Omega \left[ \varrho \left\{\sigma^{2}_{01}(\bs x)\right\}^{-1}  q_1(\bs x )
+ (1 - \varrho) \left\{\sigma^{2}_{00}(\bs x) \right\}^{-1}  q_0(\bs x) \right]f(\bs x)\wt{f}(\bs x) {\rm d}\bs x,
\\
V_0(g, \wt{g}) & = & (1 - \varrho ) \int_\Omega \left\{\sigma^{2}_{00}(\bs x)\right\}^{-1} g(\bs x)\wt{g}(\bs x)  q_0(\bs x  ){\rm d} \bs x.
\ese
Then $\ch_1$ is a reproducing kernel Hilbert space (RKHS)
endowed with the inner product $
\langle f, \wt{f}\rangle_{\ch_1}=V_1(f,\wt{f})+ \gamma_1 J_1(f,\wt{f})
$
and the norm
$
\|f\|^2_{\ch_1}=\langle f,f \rangle_{\ch_1}
$, and
$\ch_0$ is a RKHS
endowed with the inner product $
\langle g, \wt{g}\rangle_{\ch_0}=V_0(g,\wt{g})+ \gamma_0 J_0(g, \wt{g})
$
and the norm
$
\|g\|^2_{\ch_0}=\langle g,g \rangle_{\ch_0}
$.
Define the product space $\ch = \ch_1 \times \ch_0$  endowed with the inner product
$\langle \bs h, \wt{\bs h} \rangle_{\ch} = \langle f, \wt{f}\rangle_{\ch_1} + \langle g, \wt{g}\rangle_{\ch_0}$ and the norm $\| \bs h\|_{\ch}^2 = \langle \bs h, {\bs h} \rangle_{\ch} = \|f \|_{\ch_1}^2 + \| g \|_{\ch_0}^2$ for $\bs h = (f, g) \in \ch$
and $\wt{\bs h} = (\wt{f}, \wt{g}) \in \ch$.

Let $K_1(\cdot, \cdot)$ and $K_0(\cdot, \cdot)$ denote the reproducing kernels for $\ch_1$ and $\ch_0$, respectively, then $K_1(\cdot, \cdot)$ and $K_0(\cdot, \cdot)$ are known to have the properties $K_{1 \bs x}(\cdot)\equiv K_1(\bs x,\cdot) = K_1(\cdot , \bs x)\in\ch_1 $ and $\langle K_{1\bs x},f \rangle_{\ch_1} = f(\bs x)$,
and $K_{0\bs x}(\cdot)\equiv K_0(\bs x,\cdot) = K_0(\cdot , \bs x)\in\ch_0$
and $\langle K_{0\bs x},g \rangle_{\ch_0} = g(\bs x)$.
We further define  two nonnegative
definite and self-adjoint linear operators $W_{\gamma_1}$  and
$W_{\gamma_0}$ respectively defined on $\ch_1$ and $\ch_0$ such that
$\langle W_{\gamma_1} f,\wt{f}\rangle_{\ch_1} = \gamma_1 J_1(f, \wt{f})$ for $f, \wt{f} \in \ch_1$,
and $\langle W_{\gamma_0} g, \wt{g}\rangle_{\ch_0} = \gamma_0 J_0(g, \wt{g})$
for $g, \wt{g}\in \ch_0$.
Then $\langle f , \wt{f} \rangle_{\ch_1} = V_1(f, \wt{f}) + \langle W_{\gamma_1} f,\wt{f}\rangle_{\ch_1}$ and $\langle g , \wt{g} \rangle_{\ch_0} = V_0(g, \wt{g}) + \langle W_{\gamma_0} g,\wt{g}\rangle_{\ch_0}$.

\subsubsection{A.3 Notations}
Let
$\dot{\wh{\bs U}}_n(\tau, \lambda)$ denote the second order Fr\'{e}chet derivative operator of $\wh{\ell}_{n}(\tau, \lambda)$
with respective to $(\tau, \lambda)$, $\dot{\bs U}(\tau, \lambda)$ denote the probability limit of $\dot{\wh{\bs U}}_n(\tau, \lambda)$, and $\dot{\bs W}_{\bs \gamma}$ denote the linear operator map $\ch$ to $\ch$ such that $\dot{\bs W}_{\bs \gamma}[\bs h] = (W_{\gamma_1}f, W_{\gamma_0}g )$ for $\bs h = (f, g) \in \ch$.
Let $\dot{\bs U}_{\bs \gamma}(\tau, \lambda) = \dot{\bs U}(\tau, \lambda) + \dot{\bs W}_{\bs \gamma}$
and $\dot{\bs U}_{\bs \gamma}^{-1}(\tau, \lambda)$ denote the inverse operator of $\dot{\bs U}_{\bs \gamma}(\tau, \lambda)$. Furthermore, let
$
\tau_0^* = \tau_0 - b_\tau, \lambda_0^* = \lambda_0 - b_\lambda
$ with
$(b_\tau, b_\lambda) = \dot{\bs U}^{-1}_{\bs \gamma}(\tau_0, \lambda_0) [ (W_{\gamma_1}\tau_0, W_{\gamma_0}\lambda_0)]$, and $\bs \Sigma$ be the covariance matrix such that, for every $\bs \omega = (\omega_1, \omega_0 )^{\rm T} \in \mb R^2$,
\bse
\bs \omega^{\rm T}\bs \Sigma \bs \omega =  \lim_{n\rightarrow \infty}
\left\{ V_1(K_1^*, K_1^*) + V_0( K_0^*, K_0^* ) + 2(1 - \varrho)
\int_{\Omega}\left\{\sigma^{2}_{00}(\bs x) \right\}^{-1} K_1^*(\bs x )K_0^*(\bs x )q_0(\bs x){\rm d}\bs x  \right\}
\ese
with $(K_1^*, K_0^*) = \dot{\bs U}^{-1}_{\bs \gamma}(\tau_0, \lambda_0)[(\omega_1 \gamma_1^{p/(4m_1)} K_{1\bs x_0}, \omega_0 \gamma_0^{p/(4m_0)} K_{0\bs x_0})]$.

\subsection*{Appendix B: Conditions}
\begin{condition}\label{con1}
$\phi_1(\cdot), \ldots, \phi_{r_1}(\cdot)$ belong to $\ch_1$, and $\psi_1(\cdot), \ldots, \psi_{r_0}(\cdot)$ belong to $\ch_0$.
\end{condition}

\begin{condition}\label{con2}
For $f\in\ch_1$, $g\in\ch_0$,
there exist some $f_n(\cdot) \in \Phi_n$, $g_n(\cdot) \in \Psi_n$, $\kappa_1 > \kappa'_1 \ge 0 $, and $\kappa_0 > \kappa'_0 \ge 0 $ such that
\bse
\left\|f_n - f \right\|_{\infty} & = & O(n^{-\kappa_1}), \quad \sup_{|k| = m_1}\left\|f^{(k)}_n - f^{(k)} \right\|_{\infty} = O(n^{-\kappa'_1}),
\\
\left\|g_n - g \right\|_{\infty} & = & O(n^{-\kappa_0}) , \quad \sup_{|k'| = m_0}\left\|g^{(k')}_n - g^{(k')} \right\|_{\infty} = O(n^{-\kappa'_0}).
\ese
\end{condition}

\begin{condition}\label{con3}
$\wh{D}$ is uniformly bounded and belongs to a class of functions $\ms D$, and
$\{\wh{\sigma}^2_s(\bs x) \}^{-1}$  is uniformly bounded away from zero and
belongs to a class of  functions $\ms A$, such that
$\max\{ \log N(\epsilon, {\ms D}, \|\cdot \|_{\infty}), \log N(\epsilon, {\ms A}, \|\cdot \|_{\infty})\} < c_0\epsilon^{-\upsilon}$ for some constants $c_0 > 0 $ and $0 < \upsilon < 2$, where
$N(\epsilon, {\ms D}, \| \cdot \|_\infty )$ and  $N(\epsilon, {\ms A}, \| \cdot \|_\infty )$ are covering numbers of ${\ms D}$ and ${\ms A}$ under supremum norm, and $s= 0 , 1$.
\end{condition}

\begin{condition}\label{con4}
There exist some constants $0 < c_1 < 1$ and $c_2 > 0 $ such that
$c_1 \le \wh{e}(X) \le 1- c_1$,
$\wh{G}_T(L \mid \bs X , A, S = 1) \ge c_2$,
and
$\wh{G}_C(L \mid \bs X , A, S = 1) \ge c_2$.
\end{condition}

\begin{condition}\label{con5}
$c_3^{-1} \le q_s(\bs x) \le c_3$, where $s = 0 ,1$.
\end{condition}

\begin{condition}\label{con6}
$\sum_{a = 0}^1 \| \Theta_a \|_{2} + \|\wh{e} - e \|_2 \times \sum_{a = 0 }^1 \| \wh{\mu}_a - \mu_a\|_2 = o_P(1)$,
where
\bse
\Theta_A( \bs X ) & = & \int_0^L \Bigg[G_T(t \mid \bs X, A , S = 1)
\int_0^t \frac{\wh{G}_T( u \mid \bs X, A , S = 1) - {G}_T( u \mid \bs X, A , S = 1)}{\wh{G}_T( u \mid \bs X, A , S = 1)}
\\
&& \times
{\rm d} \left\{\frac{{G}_C( u \mid \bs X, A , S = 1) - \wh{G}_C( u \mid \bs X, A , S = 1)}{\wh{G}_C( u \mid \bs X, A , S = 1)} \right\}\Bigg] {\rm d}t.
\ese
\end{condition}

\begin{condition}\label{con7}
$\sum_{a = 0}^1 \| \Theta_a \|_{2} + \|\wh{e} - e \|_2 \times \sum_{a = 0 }^1 \| \wh{\mu}_a - \mu_a\|_2
= O_P(\delta_n)$, where $\delta_n = n^{-1/2}\gamma_1^{-p/(4m_1)} + n^{-1/2}\gamma_0^{-p/(4m_0)} + \gamma_1^{1/2} + \gamma_0^{1/2}$.
\end{condition}

\begin{condition}\label{con8}
$n^{-1/2}\{\gamma_s^{(-2m_s\upsilon - 4p + p\upsilon)/(8m_s)} + \gamma_s^{-(6m_s - p)p/(8m_s^2)} \} = o(1)$, where $s = 0, 1$.
\end{condition}

\begin{condition}\label{con9}
$(n^{-\kappa_1} + n^{-\kappa_0} ) = O(\delta_n)$.
\end{condition}

\begin{condition}\label{con10}
$\sum_{a = 0}^1 \| \Theta_a \|_{2} + \|\wh{e} - e \|_2 \times \sum_{a = 0 }^1 \| \wh{\mu}_a - \mu_a\|_2
= o_P(n^{-1/2})$.
\end{condition}

\begin{condition}\label{con11}
$\delta_n(\gamma_1^{-p/(4m_1)} + \gamma_0^{-p / (4m_0)} ) = o(1)$.
\end{condition}

\begin{condition}\label{con12}
$n^{1/2} (n^{-\kappa_1}\gamma_1^{-1/2} + n^{-\kappa_0}\gamma_0^{-1/2} )\delta_n = o(1)$
and $ n^{1/2}(n^{-\kappa'_1}+ n^{-\kappa'_0}) \delta_n  = O(1)$.
\end{condition}

\begin{condition}\label{con13}
$\| \wh{\sigma}^2_s - \sigma^2_{0s} \|_{\infty} = o_P(1)$, where $s = 0 , 1$.
\end{condition}

\begin{condition}\label{con14}
$\| \wh{D} - D_0 \|_{\infty} = o_P(1)$, where $D_0$ is the probability limit of $\wh{D}$.
\end{condition}

\begin{condition}\label{con15}
$\wh{\sigma}^2(\bs X, S) - {\rm Var}_r(\wh{D} \mid \bs X, S) = o_P(1)$.
\end{condition}

\begin{condition}\label{con16}
There exist some positive constants $\kappa^*$ and $\kappa^\dag$ such that
$ \rho( \bs A^{\rm T}  \bs A / n ) \asymp n^{-\kappa^*}$ and $\rho_{\max}( \bs P) = O( n^{\kappa^\dag} ) $ almost surely, where $\bs A$ and $\bs P$ are defined in the first paragraph of Subsection \ref{subsec4}, $\rho( \bs A^{\rm T}  \bs A / n )$ is the eigenvalues of $\bs A^{\rm T}  \bs A / n$
and $\rho_{\max}( \bs P)$ is the maximum eigenvalue of $\bs P$.
\end{condition}

\begin{condition}\label{con17}
$(n^{-\kappa_1} + n^{-\kappa_0}) = o(n^{-1/2})$ and $n^{\kappa^* + \kappa^\dag}( \gamma_1 + \gamma_0 ) = o(n^{-1/2})$.
\end{condition}

Condition \ref{con1} ensures that the estimators $\wh{\tau}_n$ and $\wh{\lambda}_n$ are within the subspaces of $\ch_1$ and $\ch_0$, respectively.
Since the theoretical properties are established within the framework of reproducing kernel Hilbert space, this condition is indispensable.
However, this condition is mild and easily satisfied.
For example,
If B-splines or power series with appropriate orders, or reproducing kernels with appropriate smoothness, are selected as the sieve basis, Condition \ref{con1} would be met.
Condition \ref{con2} requires uniform approximation rates to the function
and its derivatives.  If one chooses the reproducing kernel approximation with some appropriate kernel such as Gaussian kernel, and chooses all the observation data as the knots, then Condition \ref{con2} will satisfied with $\kappa_1 = m_1' / p$, $\kappa_1' = (m_1' - m_1) / p$, $\kappa_0 = m_0' / p$ and $\kappa_0' = (m_0' - m_0) / p$.
Additionally,
If $f$ and $g$ have continuous $m_1'$-th and $m_0'$-th order derivatives, respectively, and further
choose B-splines or power series as the sieve basis, and in each dimension, the maximum length between the spline knots is dominated by the inverse of the number of spline knots, then
$\|f_n - f\|_{\infty} = O(r_1^{-m_1' / p})$, $\sup_{|k| = m_1}\|f_n^{(k)} - f^{(k)}\|_{\infty} = O(r_1^{-( m_1' - m_1 ) / p})$, $\|g_n - g\|_{\infty} = O(r_0^{-m_0' / p})$,
and $\sup_{|k'| = m_0}\|g_n^{(k')} - g^{(k')}\|_{\infty} = O(r_0^{-( m_0' - m_0 ) / p})$. Moreover, if $r_1 \asymp n^{\nu_1}$ and $r_0 \asymp n^{\nu_0}$, then this condition is satisfied with $\kappa_1 = m_1'\nu_1 / p $, $\kappa_0 = m_0'\nu_0 / p $,
$\kappa'_1 = ( m_1' - m_1 )\nu_1 / p $, and $\kappa'_0 = (m_0' - m_0 )\nu_0 / p $.
Such a condition appears in Newey (1997).
Conditions \ref{con3}--\ref{con5} are technical.
Condition \ref{con3} is a regularization condition and controls the complexities of $\wh{D}$ and $\wh{\sigma}^2_s(\bs x)$.
Especially, Condition \ref{con3} implies that $\wh{D}$ and $\wh{\sigma}^2_s(\bs x)$ belong to Donsker classes.
Condition \ref{con4} ensures $\wh{D}$ to be well defined. It can be easily satisfied  for commonly used estimation methods such as some survival models for ${G}_T(t \mid \bs X , A, S = 1)$ and ${G}_C(t \mid \bs X , A, S = 1)$.
Condition \ref{con5} constitutes a vital prerequisite for Lemma \ref{lemma4}.
Similar or equivalent conditions are also imposed in Cox (1984), Cox (1988), O'Sullivan (1993), and Gu (2013).
Under Condition \ref{con5}, the square of $L_2$ norm
is equivalent to $V_1$ and $V_0$, that is, $c_4^{-1}\| f \|_{2}^2 \le V_1(f,f) \le c_4 \| f\|_{2}^2$ and
$c_5^{-1}\| g \|_{2}^2 \le V_0(g,g) \le c_5 \| g\|_{2}^2$ for $f\in \ch_1$ and $g \in \ch_0$.

The remaining conditions are the specific technical conditions required for the theoretical properties in the main paper.
Condition \ref{con6} is used to derive the robustness of the proposed estimators, as shown in Theorem 1. Conditions \ref{con7}--\ref{con9} are used for studying the convergence rate of the proposed estimator.
The conditions in Remark 1 which presented in the main paper imply that
$
 n^{-1/2} \{ \gamma_s^{(-2m_s\upsilon - 4p + p\upsilon)/(8m_s)} + \gamma_s^{-(6m_s - p)p/(8m_s^2)} \}
 \asymp  \{n^{-{(2m - p )(2 - v)}/ (8m + 4p)} + n^{-{(2m - p)^2}/{(8m^2 + 4mp )}} \}
 =  o(1)
$
by $m > p / 2$ and $0 < \upsilon < 2$. Furthermore, if $r_1 \asymp r_0 \asymp n^{\nu}$, then Condition \ref{con9} can be satisfied by taking $\nu = p / (2m + p) < 1 / 2 $. Thus, Conditions \ref{con8}--\ref{con9} remain valid under the conditions described in Remark 1 which presented in the main paper.
Conditions \ref{con10}--\ref{con14} are imposed for deriving the asymptotic normality.
Condition \ref{con10} is imposed to ignore the error caused by the nuisance functions.
By Theorem 1, various combinations of the convergence rates of $\wh{G}_T$ and $\wh{G}_C$ fulfill this condition.
For instance, if parametric models are selected to fit $G_T$ and $G_C$, and they are correctly specified, then $\sum_{a = 0}^1 \| \Theta_a \|_{2} = o_P(n^{-1})$.
Alternatively, if fully nonparametric estimation is employed for $G_T$ and $G_C$, Condition \ref{con10} would be satisfied under certain smoothness conditions for $G_T$ and $G_C$.
For the conditions outlined in Remark 1 which is presented in the main paper,
Condition \ref{con11} is automatically satisfied since $m > p / 2$.
Furthermore, if $m_1' = m_0' = m'$, $r_1 \asymp r_0 \asymp n^\nu$, then
Condition \ref{con12} is fulfilled when $\max\{ p / (2 m'), p^2 / \{(m' - m )(4m + 2p)\} \} < \nu < 1$.
Conditions \ref{con13}--\ref{con14} enhance pointwise convergence to uniform convergence, but they are rather mild conditions and often easily satisfied by common estimation methods. Uniform convergence is a technical enhancement. It does not significantly limit the applicability of the estimators, rather, it is a reasonable and common condition in both theoretical and practical contexts, especially in the field of survival analysis. Interestingly, here, $\wh{\sigma}^2_s(\bs X)$ and $\wh{D}$, as estimators of ${\rm Var}({D} \mid \bs X, S = s)$ and $D$, do not need to be consistent.
Conditions \ref{con15}--\ref{con17} are used for studying the efficiency of the proposed estimator.
Choosing B-splines as the sieve basis, with the maximum length between spline knots in each dimension dominated by the inverse of the number of knots, ensures that Condition \ref{con16} holds with $n^{-\kappa^*} \asymp (r_1 + r_0)^{-1}$ and $ n^{\kappa^\dag} \asymp ( r_1^3 + r_0^3 ) $.
Furthermore, if $r_1 \asymp r_0 \asymp n^\nu$, $\gamma_1 \asymp \gamma_0 \asymp n^{-\varsigma}$, and $m_1' = m_0' = m'$, then Condition \ref{con17} holds under the condition $p/(2m') < \nu < (2\varsigma - 1) / 8 $. This condition implies that the true functions $\tau_0$ and $\lambda_0$ should exhibit high smoothness, the number of spline knots should be moderate, and the penalized parameters should be sufficiently small.

\subsection*{Appendix C: Propositions}

\begin{proposition}\label{proposition1}
Suppose that Assumptions 1--2 in the main paper hold, then
\bse
E( R \mid \bs X , S = 1 ) = \mu_1(\bs X) - \mu_0(\bs X) = \tau(\bs X).
\ese
\end{proposition}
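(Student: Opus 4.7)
The plan is to verify the identity by computing $E(R_1 \mid \bs X, S=1)=\mu_1(\bs X)$ and $E(R_0\mid \bs X, S=1)=\mu_0(\bs X)$ separately, then subtracting and matching to $\tau(\bs X)$. The main engine is the tower property together with causal consistency and Assumption 1(ii).

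First I would handle $R_1$. Split $E(R_1\mid \bs X, S=1)$ into an inverse-probability-weighted term and an augmentation term. For the IPW piece, use causal consistency to write $A T_L = A\,\{T(1)\wedge L\}$, then condition on $A$ inside the expectation given $(\bs X, S=1)$:
\begin{align*}
E\!\left[\frac{A T_L}{e(\bs X)}\,\Big|\,\bs X, S=1\right]
=\frac{1}{e(\bs X)}\,P(A=1\mid \bs X,S=1)\,E\{T(1)\wedge L\mid \bs X, A=1,S=1\},
\end{align*}
which collapses to $\mu_1(\bs X)$ since $P(A=1\mid \bs X,S=1)=e(\bs X)$ and, by Assumption 1(ii), $\mu_1(\bs X)=E\{T(1)\wedge L\mid \bs X,A=1,S=1\}$. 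For the augmentation piece, pull $\mu_1(\bs X)/e(\bs X)$ outside the conditional expectation and use $E(A\mid \bs X,S=1)=e(\bs X)$ to obtain zero. Thus $E(R_1\mid \bs X,S=1)=\mu_1(\bs X)$.

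The analogous computation for $R_0$ is symmetric: causal consistency gives $(1-A)T_L=(1-A)\{T(0)\wedge L\}$, and conditioning on $A$ together with Assumption 1(ii) produces $E\{(1-A)T_L/(1-e(\bs X))\mid \bs X,S=1\}=\mu_0(\bs X)$, while the augmentation contributes zero by the same mean-zero argument. Subtracting, $E(R\mid \bs X,S=1)=\mu_1(\bs X)-\mu_0(\bs X)$. Finally, applying Assumption 1(ii) again on both potential outcomes yields $\mu_1(\bs X)-\mu_0(\bs X)=E\{T(1)\wedge L-T(0)\wedge L\mid \bs X,S=1\}=\tau(\bs X)$, closing the identification.

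The argument is essentially bookkeeping, so there is no substantive obstacle; the only point requiring care is the use of Assumption 1(ii), which replaces the usual strong ignorability $T(a)\perp A\mid(\bs X,S=1)$ with the weaker mean-exchangeability condition. This weaker form is still sufficient because we only need $E\{T(a)\wedge L\mid \bs X,A,S=1\}$ not to depend on $A$, which is exactly what makes the IPW terms collapse to $\mu_a(\bs X)$ after conditioning on $A$. No use of Assumption 2 is needed here, as censoring plays no role in identifying $R$ (which presumes $T_L$ observed); censoring enters only in the AIPCW refinement discussed after this proposition.
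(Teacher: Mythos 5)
Your proof is correct and follows essentially the same route as the paper's: both reduce to showing $E(R_1\mid\bs X,S=1)=\mu_1(\bs X)$ and $E(R_0\mid\bs X,S=1)=\mu_0(\bs X)$ via the tower property over $A$ together with causal consistency, and then invoke Assumption 1(ii) only at the final step to identify $\mu_1-\mu_0$ with $\tau$ (the paper organizes the bookkeeping by conditioning on $A=1$ and $A=0$ separately rather than splitting into IPW and augmentation pieces, but the computation is the same). The only quibble is that the collapse of the IPW term to $\mu_a(\bs X)$ uses causal consistency and the definition of $\mu_a$, not Assumption 1(ii), which enters only in the last display.
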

\begin{proof}
It follows that
\bse
E(R_1 \mid A = 1, \bs X ,S = 1) &= & E\{T(1)\wedge L \mid A = 1, \bs X, S = 1 \}\frac{1}{e(\bs X)} - \frac{1 - e(\bs X)}{e(\bs X)}\mu_1(\bs X)
\\
&=&
\mu_1(\bs X),
\\
E({R}_1 \mid A = 0, \bs X ,S = 1) &= & E\{\mu_1(\bs X) \mid A = 0, \bs X, S = 1 \} =
\mu_1(\bs X),
\ese
which implies that
$
E({R}_1 \mid \bs X ,S = 1 ) = \mu_1(\bs X )
$.
Similarly, we have
$
E({R}_0 \mid \bs X , S = 1) = \mu_0(\bs X)
$.
Combining Assumption 1 in the main paper, we conclude that
\bse
E(R \mid \bs X , S = 1 ) &=& \mu_1(\bs X) - \mu_0(\bs X)
\\
&=& E\{T(1) \wedge L \mid A = 1, \bs X , S = 1\} - E\{T(0) \wedge L \mid A = 0, \bs X , S = 1\}
\\
& = & E\{T(1) \wedge L \mid \bs X , S = 1\} - E\{T(0) \wedge L \mid \bs X , S = 1\}
\\
&  = & E\{ T(1) \wedge L - T(0) \wedge L \mid \bs X , S = 1 \}
\\
&= & \tau(\bs X).
\ese
Hence, we prove Proposition \ref{proposition1}.
\end{proof}

\begin{proposition}\label{proposition2}
Suppose that Assumption 2 in the main paper holds, then
\bse
E\left( \wt{T}_L \mid \bs X , A , S = 1 \right) =
E\left( {T}_L \mid \bs X , A , S = 1 \right).
\ese
\end{proposition}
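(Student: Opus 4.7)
The plan is to split $\wt{T}_L$ into the inverse probability-of-censoring weighted (IPCW) term and the augmentation term, show that the conditional expectation of the IPCW term already equals $E(T_L \mid \bs X, A, S=1)$, and show that the augmentation term has conditional mean zero because it is an integral of a predictable process against the martingale $M_C$.

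First I would handle the IPCW term $Y_L \wt{\Delta}\,G_C^{-1}(Y_L\mid \bs X, A, S=1)$. The key observation is a case analysis on $\wt{\Delta}=I(T_L\le C)$: whenever $\wt{\Delta}=1$ one has $Y_L = Y\wedge L = (T\wedge C)\wedge L = T_L$, since either $T\le L$ and $C\ge T$ (so $Y=T=T_L$ and $Y_L=T$) or $T>L$ and $C\ge L$ (so $Y\ge L$ and $Y_L=L=T_L$). Substituting this identity, conditioning further on $T_L$, and applying Assumption~\ref{A2}(i) (which gives $T_L\perp C\mid (\bs X, A, S=1)$ since $T_L$ is a function of $T(A)$ and $A$) yields
\begin{eqnarray*}
E\!\left\{\frac{Y_L\wt\Delta}{G_C(Y_L\mid \bs X, A, S=1)}\,\Big|\,\bs X, A, S=1\right\}
&=& E\!\left\{\frac{T_L\,P(C\ge T_L\mid T_L,\bs X, A, S=1)}{G_C(T_L\mid \bs X, A, S=1)}\,\Big|\,\bs X, A, S=1\right\} \\
&=& E(T_L\mid \bs X, A, S=1).
\end{eqnarray*}
Assumption~\ref{A2}(ii) ensures that $G_C(T_L\mid\bs X, A, S=1)$ is bounded away from zero so that the ratio is well defined.

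Next I would show the augmentation term $\int_0^L B(t)\,G_C^{-1}(t\mid\bs X, A, S=1)\,dM_C(t)$ has conditional mean zero. The strategy is to recognize $M_C$ as a martingale, conditional on $(\bs X, A, S=1)$, with respect to the observed-data filtration $\mathcal{F}_t = \sigma\{N_C(u), I(Y_L\ge u):u\le t\}\vee\sigma(\bs X, A, S)$. Indeed, by Assumption~\ref{A2}(i) the conditional cumulative hazard of the censoring event in $N_C(t)=(1-\wt\Delta)I(Y_L\le t)$ is $-\int_0^t I(Y_L\ge u)\,dG_C(u\mid\bs X, A, S=1)/G_C(u\mid\bs X, A, S=1)=-Q_C(t)$, so that $M_C=N_C-(-Q_C)$ is precisely the standard counting-process martingale of Fleming and Harrington (1991). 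Since $B(t)\,G_C^{-1}(t\mid\bs X, A, S=1)$ is left-continuous and $\mathcal{F}_{t-}$-measurable (it depends only on $\bs X, A, S=1$ and $t$), it is predictable, and Assumption~\ref{A2}(ii) together with the boundedness of $B(t)\le L$ guarantee integrability on $[0,L]$. The optional stopping/martingale integral theorem then gives
\begin{eqnarray*}
E\!\left\{\int_0^L B(t)\,G_C^{-1}(t\mid\bs X, A, S=1)\,dM_C(t)\,\Big|\,\bs X, A, S=1\right\}=0.
\end{eqnarray*}
Combining the two pieces yields $E(\wt{T}_L\mid\bs X, A, S=1)=E(T_L\mid\bs X, A, S=1)$, as claimed.

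The routine calculations are genuinely routine; the only step that deserves care is the first one, namely verifying the pathwise identity $Y_L=T_L$ on $\{\wt\Delta=1\}$, because the two quantities differ in general and the substitution is what makes the IPCW weight unbiased. The martingale part is standard once one confirms that the filtration has $(\bs X, A, S)$ in $\mathcal{F}_0$ so that the conditional expectation commutes with the stochastic integral; this is immediate from the construction. No other subtleties arise.
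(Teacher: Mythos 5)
Your proof is correct, and the first half (the IPCW term) is essentially identical to the paper's argument: the same pathwise identity $Y_L=T_L$ on $\{\wt{\Delta}=1\}$, the same conditioning on $T_L$, and the same use of Assumption 2(i) to cancel the weight. Where you diverge is in handling the augmentation term. The paper's Web Appendix proof does not invoke martingale theory at the level of the formal argument; instead it computes $E\{\int_0^L B(t)G_C^{-1}(t)\,{\rm d}N_C(t)\mid \bs X,A,S=1\}$ and $E\{\int_0^L B(t)G_C^{-1}(t)\,{\rm d}Q_C(t)\mid \bs X,A,S=1\}$ separately --- each reduces, after conditioning and using $P(Y\ge t\mid \bs X,A,S=1)=G_T(t\mid\cdot)G_C(t\mid\cdot)$, to $\pm\int_0^L B(t)G_T(t\mid\cdot)G_C^{-1}(t\mid\cdot)\,{\rm d}G_C(t\mid\cdot)$ --- and observes that the two pieces cancel. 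You instead identify $-Q_C$ as the compensator of $N_C$ with respect to the observed-data filtration and appeal to the predictable-integrand martingale integral theorem of Fleming and Harrington. Both are valid; your route is shorter and matches the heuristic the authors themselves give in the main text ("Since $M_C(t)$ is a zero-mean martingale..."), but it imports the compensator identification as a black box, whereas the paper's elementary Fubini-type computation verifies the cancellation from first principles and requires nothing beyond Assumption 2. Your attention to integrability ($B(t)\le L$ and $G_C(L\mid\cdot)>0$ from Assumption 2(ii)) and to the predictability of the integrand is appropriate and closes the gaps that the martingale shortcut would otherwise leave open.
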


%%%%%%%%%%%%%%%%%%%%%%% the proof of this conclusion
\begin{proof}
By Assumption 2 in the main paper, 
\bsee\label{p2_eq0}
&&E\left\{ \left. {Y_L \wt{\Delta}}{G^{-1}_C( Y_L \mid \bs X , A, S = 1)} \,\right|\,  \bs X , A , S = 1 \right\}
\nonumber
\\
&= &
E\left[E\left\{ { Y_L \wt{\Delta}}{G^{-1}_C\left( Y_L \mid \bs X , A, S = 1 \right) } \,\bigg|\, T_L, \bs X , A , S = 1 \right\} \,\bigg|\, \bs X , A, S =1 \right]
\nonumber
\\
&=&
E\left[E\left\{ {T_L\wt{\Delta}}{G^{-1}_C\left(T_L \mid \bs X , A, S = 1 \right) } \,\bigg|\, T_L, \bs X , A , S = 1 \right\} \bigg| \bs X , A, S = 1 \right]
\nonumber
\\
& = & E\left\{ {T_L }{G^{-1}_C\left(T_L \mid \bs X , A, S = 1 \right)}P\left(\wt{\Delta} = 1 \mid T_L, \bs X , A , S = 1 \right) \,\bigg|\, \bs X , A, S = 1 \right\}
\nonumber
\\
& = & E\left\{ {T_L }{G^{-1}_C\left(T_L \mid \bs X , A , S = 1\right)}P(C \ge T_L \mid T_L, \bs X , A , S = 1 ) \,\bigg|\, \bs X , A, S = 1 \right\}
\nonumber
\\
& = & E(T_L \mid \bs X , A , S = 1).
\esee
Thus, it is enough to show that
\bsee\label{p2_eq1}
E \left\{ \left.\int_0^L  B(t)G^{-1}_C(t \mid \bs X , A, S = 1){\rm d}M_C(t)
 \,\right|\, \bs X, A, S = 1  \right\} =0.
\esee
Following Assumption 2 in the main paper, we have
\bse
E\{ N_C(t) \mid \bs X, A, S = 1 \} & = &E \big[ E\{ N_C(t) \mid C , \bs X, A, S = 1 \} \mid \bs X , A , S = 1 \big]
\\
& = & E_C \{ I(C \le t ) P(C \le T_L  \mid C , \bs X, A, S = 1 ) \mid \bs X, A, S = 1\}
\\
& = & -\int_0^t G_T(u \mid \bs X , A, S = 1){\rm d}G_C(u \mid \bs X , A, S = 1).
\ese
Thus
\bse
&& E \left\{ \left.\int_0^L B(t)G^{-1}_C(t \mid \bs X , A, S = 1) {\rm d} N_C(t) \,\right|\, \bs X, A, S = 1  \right\}
\\
& = & -\int_0^L B(t)G_T(t \mid \bs X , A, S = 1){G^{-1}_C(t \mid \bs X , A, S = 1)}{ {\rm d}G_C(t \mid \bs X , A, S = 1) } .
\ese
On the other hand,
\bse
&&E \left\{ \left.\int_0^L B(t) {G^{-1}_C(t \mid \bs X , A, S = 1)}{\rm d}Q_C(t) \,\right|\, \bs X, A, S = 1  \right\}
\\
& = &E \left\{ \left.\int_0^L B(t) {G^{-2}_C(t \mid \bs X , A, S = 1)} I(Y_L \ge t )  { {\rm d} G_C(t \mid \bs X , A, S = 1)} \,\right|\, \bs X, A, S = 1  \right\}
\\
& = & \int_0^L B(t) {G^{-2}_C(t \mid \bs X , A, S = 1)} P(Y \ge t \mid \bs X , A , S = 1) { {\rm d} G_C(t \mid \bs X , A, S = 1)}
\\
& = & \int_0^L B(t) G_T( t \mid \bs X , A , S = 1) {G^{-1}_C(t \mid \bs X , A, S = 1)} { {\rm d} G_C(t \mid \bs X , A, S = 1)},
\ese
which implies that equation (\ref{p2_eq1}) holds. Thus we prove Proposition \ref{proposition2}.
\end{proof}

\subsection*{Appendix D: Lemmas}

\begin{lemma}\label{lemma1}
Suppose that Assumptions 1--2 in the main paper hold, then
\bse
E_r \left(\wh{T}_L - \wt{T}_L \mid \bs X, A , S = 1 \right) = \Theta_A(\bs X ),
\ese
where
\bse
\wh{T}_L = {Y_L \wt{\Delta}}{\wh{G}^{-1}_C( Y_L \mid \bs X , A, S = 1)} -
\int_0^L \wh{B}(t)\wh{G}^{-1}_C(t \mid \bs X , A, S = 1){\rm d}\wh{M}_C(t),
\ese
is the corresponding estimator of $\wt{T}_L$,
$\wh{B}(t) = t + { \int_t^L \wh{G}_T(u \mid \bs X, A , S = 1 ){\rm d}u } / {\wh{G}_T(t \mid \bs X, A , S = 1 )}$ and
$\wh{M}_C(t) = N_C(t) + \wh{Q}_C(t)$ with $\wh{Q}_C(t) = \int_0^t I(Y_L \ge u ) \wh{G}^{-1}_C(u \mid \bs X, A, S = 1){\rm d}\wh{G}_C(u \mid \bs X, A, S = 1)$,
and
\bse
\Theta_A(\bs X ) & = & \int_0^L \Bigg[G_T(t \mid \bs X, A , S = 1)
\int_0^t \frac{\wh{G}_T( u \mid \bs X, A , S = 1) - {G}_T( u \mid \bs X, A , S = 1)}{\wh{G}_T( u \mid \bs X, A , S = 1)}
\\
&& \times
{\rm d} \left\{\frac{{G}_C( u \mid \bs X, A , S = 1) - \wh{G}_C( u \mid \bs X, A , S = 1)}{\wh{G}_C( u \mid \bs X, A , S = 1)} \right\}\Bigg] {\rm d}t.
\ese
\end{lemma}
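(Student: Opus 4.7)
The plan is to verify the identity by direct computation of the conditional expectation, in the same spirit as Proposition~\ref{proposition2} but keeping track of the discrepancy induced by plugging in $\wh G_T,\wh G_C$ for $G_T,G_C$. Write
$\wh T_L-\wt T_L=\mathrm{I}+\mathrm{II}$,
where
$\mathrm{I}=Y_L\wt\Delta\{\wh G_C^{-1}(Y_L)-G_C^{-1}(Y_L)\}$
is the IPCW discrepancy and $\mathrm{II}=-\int_0^L\{\wh B(t)\wh G_C^{-1}(t)\,{\rm d}\wh M_C(t)-B(t)G_C^{-1}(t)\,{\rm d}M_C(t)\}$
is the augmentation discrepancy. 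I will compute $E_r[\mathrm{I}\mid \bs X,A,S=1]$ and $E_r[\mathrm{II}\mid \bs X,A,S=1]$ separately and then assemble them.

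For $\mathrm{I}$, condition first on $T_L$. On $\{\wt\Delta=1\}$ one has $Y_L=T_L$, and under Assumption~\ref{A2}(i) the conditional independence $T\perp C\mid (\bs X,A,S=1)$ gives $P(\wt\Delta=1\mid T_L,\bs X,A,S=1)=G_C(T_L\mid \bs X,A,S=1)$, exactly as in the proof of Proposition~\ref{proposition2}. This collapses $E_r[\mathrm{I}\mid \bs X,A,S=1]$ to $E[T_L\{G_C(T_L)/\wh G_C(T_L)-1\}\mid \bs X,A,S=1]$. Using $b(u):=\{G_C(u)-\wh G_C(u)\}/\wh G_C(u)$, which satisfies $b(0)=0$, integration by parts against $-{\rm d}G_T$ rewrites this as a Stieltjes integral against ${\rm d}b$ weighted by $G_T$ and an auxiliary function of the form $\int_\cdot^L G_T(u){\rm d}u$.

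For $\mathrm{II}$, the key fact is already available: Proposition~\ref{proposition2} shows that the integral against the true martingale $M_C$ has mean zero, so only $E_r[\int_0^L\wh B(t)\wh G_C^{-1}(t)\,{\rm d}\wh M_C(t)\mid \bs X,A,S=1]$ remains. Split $\wh M_C=N_C+\wh Q_C$ and apply the two elementary identities used in Proposition~\ref{proposition2}: $E\{{\rm d}N_C(t)\mid \bs X,A,S=1\}=-G_T(t)\,{\rm d}G_C(t)$ (by conditioning on $C$ and using $T\perp C\mid \bs X,A,S=1$) and $E\{I(Y_L\ge t)\mid \bs X,A,S=1\}=G_T(t)G_C(t)$ (by the same independence together with Assumption~\ref{A2}(ii)). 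Adding the two contributions and using the purely algebraic identity
$-\wh G_C^{-1}(t)\,{\rm d}G_C(t)+G_C(t)\wh G_C^{-2}(t)\,{\rm d}\wh G_C(t)=-\,{\rm d}b(t)$
collapses everything to a single integral $-\int_0^L \wh B(t)G_T(t)\,{\rm d}b(t)$.

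To finish, I would add the two pieces, plug in the explicit formula $\wh B(t)=t+\int_t^L\wh G_T(u){\rm d}u/\wh G_T(t)$, and use Fubini to swap the order of the nested integral; the factor $t$ that appears after integration by parts in step~$\mathrm{I}$ and the factor $t$ inside $\wh B$ cancel, and the remaining combination of $\int_\cdot^L G_T$ and $\int_\cdot^L\wh G_T/\wh G_T(\cdot)$ reorganises into $\int_0^L G_T(t)\int_0^t\{\wh G_T(u)-G_T(u)\}/\wh G_T(u)\,{\rm d}b(u)\,{\rm d}t=\Theta_A(\bs X)$, as required. The main obstacle is this final bookkeeping step: one must verify that the boundary terms produced by the two integrations by parts cancel and that the factors of $G_T$ and $\wh G_T$ combine into the specific ratio $(\wh G_T-G_T)/\wh G_T$ appearing in $\Theta_A$; a useful sanity check is that the resulting formula vanishes whenever $\wh G_T=G_T$ or $\wh G_C=G_C$, reflecting the double robustness property advertised after the definition of $\wt T_L$.
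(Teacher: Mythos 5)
Your strategy coincides with the paper's: treat the IPCW piece by conditioning on $T_L$ and using $P(\wt\Delta=1\mid T_L,\bs X,A,S=1)=G_C(T_L\mid\cdot)$, treat the augmentation piece through the two identities $E\{{\rm d}N_C(t)\mid\cdot\}=-G_T(t\mid\cdot){\rm d}G_C(t\mid\cdot)$ and $E\{I(Y_L\ge t)\mid\cdot\}=G_T(t\mid\cdot)G_C(t\mid\cdot)$ together with the mean-zero property of the true martingale integral, and assemble by Fubini. All of the intermediate identities you state are correct.

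The genuine gap is exactly the step you set aside as ``bookkeeping'': with your own (correct) intermediate results the two pieces do not combine into $\Theta_A(\bs X)$. Writing $b(u)=\{G_C(u\mid\cdot)-\wh G_C(u\mid\cdot)\}/\wh G_C(u\mid\cdot)$ and $S(u)=\int_u^L G_T(s\mid\cdot){\rm d}s$, your piece $\mathrm{I}$ gives $E_r[\mathrm{I}\mid\cdot]=\int_0^L\{uG_T(u\mid\cdot)+S(u)\}{\rm d}b(u)=\int_0^L B(u)G_T(u\mid\cdot){\rm d}b(u)$, while your quotient-rule identity $-\wh G_C^{-1}{\rm d}G_C+G_C\wh G_C^{-2}{\rm d}\wh G_C=-{\rm d}b$ gives $E_r\{\int_0^L\wh B\,\wh G_C^{-1}{\rm d}\wh M_C\mid\cdot\}=-\int_0^L\wh B(u)G_T(u\mid\cdot){\rm d}b(u)$. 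Since the augmentation integral enters $\wh T_L$ with a \emph{minus} sign, this makes $E_r[\mathrm{II}\mid\cdot]=+\int_0^L\wh B(u)G_T(u\mid\cdot){\rm d}b(u)$, so the sum is $\int_0^L\{B(u)+\wh B(u)\}G_T(u\mid\cdot){\rm d}b(u)$: the two $uG_T\,{\rm d}b$ contributions reinforce rather than cancel, and the expression does not vanish when $\wh G_T=G_T$ --- it fails your own sanity check. The paper's displayed proof reaches $\Theta_A$ only because its equation for $E_r\{\int\wh B\,\wh G_C^{-1}{\rm d}\wh M_C\}$ records the \emph{opposite} sign to yours (it asserts $+\int\wh B\,G_T\,{\rm d}(G_C\wh G_C^{-1})$, whereas the quotient rule yields the negative), compounded by a $G_T$-versus-$\wh G_T$ swap in its Fubini step. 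You cannot finish the argument as sketched without first resolving this: if the augmentation term is taken with a plus sign (the standard AIPCW form), your two pieces combine to $\int_0^L\{B(u)-\wh B(u)\}G_T(u\mid\cdot){\rm d}b(u)$, which is a genuine product-of-errors remainder (it vanishes if either $\wh G_T=G_T$ or $\wh G_C=G_C$) but is still not literally the displayed $\Theta_A$. A one-dimensional numerical check of your sanity condition (e.g.\ exponential $T$ and $C$ with $\wh G_T=G_T$ and $\wh G_C\ne G_C$) settles the sign question immediately and should be your next step.
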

\begin{proof}
By the definition of $E_r$ and Proposition \ref{proposition2},
\bsee\label{lemma1_eq0}
E_r \left(\wt{T}_L \mid \bs X, A , S = 1 \right) =
E\left(\wt{T}_L \mid \bs X, A , S = 1 \right) & = & E\left(T_L \mid \bs X, A , S = 1 \right)
\nonumber
\\
& = & \int_0^L G_T(t \mid \bs X ,A , S = 1){\rm d}t.
\esee
Mimicking the proof of equation (\ref{p2_eq0}), we have
\bsee\label{lemma1_eq1}
&& E_r \left\{ \left. {Y_L \wt{\Delta}}{\wh{G}^{-1}_C( Y_L \mid \bs X , A, S = 1)} \,\right|\,  \bs X , A , S = 1 \right\}
\nonumber
\\
& = &E_r \left\{ \left. T_L{G_C( T_L \mid \bs X , A, S = 1)}{\wh{G}^{-1}_C( T_L \mid \bs X , A, S = 1)} \,\right|\,  \bs X , A , S = 1 \right\}
\nonumber
\\
&= &\int_0^L G_T(t \mid \bs X ,A , S = 1)G_C(t \mid \bs X ,A , S = 1)\wh{G}^{-1}_C(t \mid \bs X ,A , S = 1){\rm d}t
\nonumber
\\
&& + \int_0^L t G_T(t \mid \bs X ,A , S = 1)
{\rm d}\left\{ G_C(t \mid \bs X ,A , S = 1)\wh{G}^{-1}_C(t \mid \bs X ,A , S = 1) \right\}.
\esee
Furthermore,
\bsee\label{lemma1_eq2}
&& E_r \left\{ \int_0^L \wh{B}(t)\wh{G}^{-1}_C(t \mid \bs X , A, S = 1){\rm d}\wh{M}_C(t) \, \big | \, \bs X , A, S = 1 \right\}
\nonumber
\\
&= & - E_r \left\{ \int_0^L \wh{B}(t)\wh{G}^{-1}_C(t \mid \bs X , A, S = 1){\rm d}{Q}_C(t) \, \big | \, \bs X , A, S = 1 \right\}
\nonumber
\\
&& + E_r \left\{ \int_0^L \wh{B}(t)\wh{G}^{-1}_C(t \mid \bs X , A, S = 1){\rm d}\wh{Q}_C(t) \, \big | \, \bs X , A, S = 1 \right\}
\nonumber
\\
&= & -E_r \bigg\{ \int_0^L \wh{B}(t)\wh{G}^{-1}_C(t \mid \bs X , A, S = 1){G}^{-1}_C(t \mid \bs X , A, S = 1)
\nonumber
\\
&& \times I(Y_L \ge t){\rm d}{G}_C(t \mid \bs X , A, S = 1) \, \big | \, \bs X , A, S = 1 \bigg\}
\nonumber
\\
&& + E_r \bigg\{ \int_0^L \wh{B}(t)\wh{G}^{-1}_C(t \mid \bs X , A, S = 1)\wh{G}^{-1}_C(t \mid \bs X , A, S = 1)
\nonumber
\\
&& \times I(Y_L \ge t){\rm d}\wh{G}_C(t \mid \bs X , A, S = 1) \, \big | \, \bs X , A, S = 1 \bigg\}
\nonumber
\\
& = & \int_0^L \wh{B}(t)G_T(t \mid \bs X , A, S = 1){\rm d}\left\{{G}_C(t \mid \bs X , A, S = 1) \wh{G}^{-1}_C(t \mid \bs X , A, S = 1)\right\}
\nonumber
\\
& = & \int_0^L t G_T(t \mid \bs X ,A , S = 1)
{\rm d}\left\{ G_C(t \mid \bs X ,A , S = 1)\wh{G}^{-1}_C(t \mid \bs X ,A , S = 1) \right\}
\nonumber
\\
&& + \int_0^L G_T(t \mid \bs X, A , S = 1) \bigg[ \int_0^t G_T(u \mid \bs X, A , S = 1)
\nonumber
\\
&& \times \wh{G}^{-1}_T(u \mid \bs X, A , S = 1) {\rm d}\left\{ G_C(u \mid \bs X, A , S = 1)\wh{G}^{-1}_C(u \mid \bs X, A , S = 1) \right\}  \bigg] {\rm d} t,
\esee
by using  the Fubini Theorem.

Combining equations (\ref{lemma1_eq0})--(\ref{lemma1_eq2}), we have
\bse
&&E_r \left(\wh{T}_L - \wt{T}_L \mid \bs X, A , S = 1 \right)
\\
& = & \int_0^L G_T(t \mid \bs X, A , S = 1) \Bigg[ \frac{{G}_C( t \mid \bs X, A , S = 1) - \wh{G}_C( t \mid \bs X, A , S = 1)}{\wh{G}_C( t \mid \bs X, A , S = 1)}
\\
&& - \int_0^t \frac{G_T( u \mid \bs X, A , S = 1)}{\wh{G}_T( u \mid \bs X, A , S = 1)}
{\rm d}\left\{ \frac{G_C( u \mid \bs X, A , S = 1)}{\wh{G}_C( u \mid \bs X, A , S = 1)} \right\}
\Bigg]{\rm d}t
\\
&=  & \int_0^L G_T(t \mid \bs X, A , S = 1) \Bigg[
\int_0^t \frac{\wh{G}_T( u \mid \bs X, A , S = 1) - {G}_T( u \mid \bs X, A , S = 1)}{\wh{G}_T( u \mid \bs X, A , S = 1)}
\\
&& \times
{\rm d} \left\{\frac{{G}_C( u \mid \bs X, A , S = 1) - \wh{G}_C( u \mid \bs X, A , S = 1)}{\wh{G}_C( u \mid \bs X, A , S = 1)} \right\}\Bigg] {\rm d}t
\\
& = & \Theta_A(\bs X).
\ese
Thus, we complete the proof of Lemma \ref{lemma1}.
\end{proof}

\begin{lemma}\label{lemma2}
Suppose that Assumptions 1--2 in the main paper hold, then
\bse
E_r \left(\wh{D} - D \mid \bs X , S = 1 \right) = \Gamma_1(\bs X) + \Gamma_2(\bs X),
\ese
where
\bse
\Gamma_1(\bs X) & = & \sum_{a = 0}^1 \frac{\left\{ \wh{e}(\bs X ) - e(\bs X ) \right\}\left\{ \wh{\mu}_a(\bs X ) - \mu_a(\bs X) \right\}}{a\wh{e}(\bs X ) + (1 - a)\left\{ 1 - \wh{e}(\bs X )\right\}},
\\
\Gamma_2(\bs X) & = &
\sum_{a = 0}^1 \frac{a{e}(\bs X ) + (a - 1)\{ 1 - {e}(\bs X )\}}{a\wh{e}(\bs X ) + (1 - a)\left\{ 1 - \wh{e}(\bs X )\right\}}\Theta_a(\bs X).
\ese
\end{lemma}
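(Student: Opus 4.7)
The plan is to reduce the claim to a mechanical use of Lemma \ref{lemma1}, the definition of $\wt{R}$ and $\wh{R}$, and the law of iterated expectations. First, observe that on $\{S=1\}$ the RWD piece of $\wh D$ and $D$ is inactive: by the definitions $\wh D = S\,\wh R + (1-S)Y_L$ and $D = S\,\wt R + (1-S)\wt R^{*}$, so
$\wh D - D = \wh R - \wt R$ conditional on $S=1$. Thus the task becomes computing $E_r(\wh R_1-\wt R_1\mid\bs X,S=1)$ and $E_r(\wh R_0-\wt R_0\mid\bs X,S=1)$ separately, then subtracting.

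For each $a\in\{0,1\}$, I would first condition on $A$ and use $P(A=1\mid \bs X,S=1)=e(\bs X)$ while treating the hatted nuisance functions $\wh e,\wh\mu_a,\wh G_T,\wh G_C$ as fixed (this is precisely the content of the $E_r$ notation). Inside the conditional expectation of the IPCW term $A\wh T_L/\wh e(\bs X)$ (or its $A=0$ analogue), apply Lemma \ref{lemma1}, which gives $E_r(\wh T_L\mid\bs X,A,S=1)=E(\wt T_L\mid\bs X,A,S=1)+\Theta_A(\bs X)=\mu_A(\bs X)+\Theta_A(\bs X)$, the last identity being Proposition \ref{proposition2}. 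The augmentation term $(A-\wh e)\wh\mu_a/\wh e$ contributes $(e-\wh e)\wh\mu_a/\wh e$ after taking $E_r(\cdot\mid\bs X,S=1)$. Subtracting $E(\wt R_a\mid\bs X,S=1)=\mu_a(\bs X)$ (which follows from Proposition \ref{proposition1} with $\wt T_L$ in place of $T_L$, valid by Proposition \ref{proposition2}), the $a=1$ piece simplifies, after combining fractions over the common denominator $\wh e(\bs X)$, to
\[
\frac{(\wh e-e)(\wh\mu_1-\mu_1)}{\wh e(\bs X)}+\frac{e(\bs X)}{\wh e(\bs X)}\,\Theta_1(\bs X),
\]
which matches the $a=1$ summands of $\Gamma_1$ and $\Gamma_2$.

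Carrying out the analogous calculation for $\wh R_0-\wt R_0$ produces, over the denominator $1-\wh e(\bs X)$,
\[
\frac{(\wh e-e)(\mu_0-\wh\mu_0)}{1-\wh e(\bs X)}+\frac{1-e(\bs X)}{1-\wh e(\bs X)}\,\Theta_0(\bs X),
\]
and subtracting this from the $a=1$ expression flips both signs, yielding exactly the $a=0$ summands of $\Gamma_1$ and $\Gamma_2$ (note that the coefficient of $\Theta_0$ in $\Gamma_2$ is $(a-1)(1-e)/[(1-\wh e)]=-(1-e)/(1-\wh e)$ at $a=0$, which is how the minus sign from $-(\wh R_0-\wt R_0)$ is absorbed into the formula).

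The main (only) obstacle is bookkeeping: one must be careful with the signs arising from the asymmetry between the $A/e$ weight for $a=1$ and the $(1-A)/(1-e)$ weight for $a=0$, and with rewriting $(e-\wh e)(\mu_a-\wh\mu_a)$ as $(\wh e-e)(\wh\mu_a-\mu_a)$ to match the displayed form of $\Gamma_1$. No further probabilistic input beyond Lemma \ref{lemma1}, Proposition \ref{proposition2}, and Assumption \ref{A1}(i)/(ii) is needed, since the $E_r$ operator holds all hatted quantities fixed.
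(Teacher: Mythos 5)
Your plan is correct and follows essentially the same route as the paper's proof: reduce to $\wh D - D = \wh R - \wt R$ on $\{S=1\}$, invoke Lemma \ref{lemma1} and Proposition \ref{proposition2} for the conditional mean of $\wh T_L$, take iterated expectations over $A$, and subtract $E_r(\wt R\mid \bs X, S=1)=\mu_1-\mu_0$ from Proposition \ref{proposition1}; the only cosmetic difference is that the paper first rewrites $\wh R$ in the combined residual form $\frac{A-\wh e}{\wh e(1-\wh e)}\{\wh T_L - \wh\mu_A\}+\wh\mu_1-\wh\mu_0$ rather than treating $\wh R_1$ and $\wh R_0$ termwise. Your sign bookkeeping for the $a=0$ summands of $\Gamma_1$ and $\Gamma_2$ checks out against the displayed formulas.
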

\begin{proof}
For $S = 1$, we have $\wh{D} - D = \wh{R} - \wt{R}$.
Rewrite $\wh{R}$ as
\bse
\wh{R} = \frac{A - \wh{e}(\bs X)}{\wh{e}(\bs X )\{1 - \wh{e}(\bs X)\}}
\left\{\wh{T}_L - \wh{\mu}_A(\bs X) \right\} + \wh{\mu}_1(\bs X) - \wh{\mu}_0(\bs X),
\ese
where
$\wh{\mu}_A(\bs X) = A\,\wh{\mu}_1(\bs X) + (1 - A)\,\wh{\mu}_0(\bs X)$.
Using Proposition \ref{proposition2} and Lemma \ref{lemma1}, we have
\bse
&& E_r \left(\wh{R}  \mid \bs X , S = 1 \right)
\\
&=& E_r \left[\frac{A - \wh{e}(\bs X)}{\wh{e}(\bs X)\{1 - \wh{e}(\bs X)\}}
\left\{\wh{T}_L - \wh{\mu}_A(\bs X) \right\} + \wh{\mu}_1(\bs X ) - \wh{\mu}_0(\bs X) \,\bigg|\,\bs X , S = 1 \right]
\\
&& + E_r \left[\frac{A - \wh{e}(\bs X )}{\wh{e}(\bs X )\{1 - \wh{e}(\bs X )\}}
\left(\wh{T}_L - \wt{T}_L \right) \,\bigg|\, \bs X , S = 1
\right]
\\
& = & E_r \left[\frac{A - \wh{e}(\bs X )}{\wh{e}(\bs X)\{1 - \wh{e}(\bs X)\}}
\left\{\wt{T}_L - \wh{\mu}_A(\bs X) \right\} \,\bigg|\,\bs X , S = 1 \right] + \wh{\mu}_1(\bs X) - \wh{\mu}_0(\bs X)
\\
&& + E_r \left( E_r \left[\frac{A - \wh{e}(\bs X)}{\wh{e}(\bs X)\{1 - \wh{e}(\bs X)\}}
\left(\wh{T}_L - \wt{T}_L \right)\, \bigg|\, \bs X , A , S = 1
\right] \,\bigg|\, \bs X , S = 1  \right) ,
\\
& = & \frac{e(\bs X )}{\wh{e}(\bs X )}\left\{\mu_1(\bs X) - \wh{\mu}_1(\bs X)\right\} - \frac{1 - e(\bs X)}{1 - \wh{e}(\bs X )}\left\{\mu_0(\bs X) - \wh{\mu}_0(\bs X)\right\}
\\
&& + \wh{\mu}_1(\bs X) - \wh{\mu}_0(\bs X) + \Gamma_2(\bs X),
\ese
combined with Proposition \ref{proposition1},
\bse
&& E_r \left(\wh{R} - \wt{R} \mid \bs X, S = 1 \right)
\\
&=& \frac{\wh{e}(\bs X) - e(\bs X)}{\wh{e}(\bs X)}\left\{\wh{\mu}_1(\bs X) - \mu_1(\bs X) \right\}
+ \frac{\wh{e}(\bs X) - e(\bs X)}{1 - \wh{e}(\bs X)}\left\{\wh{\mu}_0(\bs X)-\mu_0(\bs X) \right\} + \Gamma_2(\bs X)
\\
&= & \Gamma_1(\bs X ) +  \Gamma_2(\bs X),
\ese
which proves Lemma \ref{lemma2}.
\end{proof}

\begin{lemma}\label{lemma3}
Let $\pi_1( \bs x)$ and $\pi_0(\bs x)$ be some functions which are uniformly bounded away from zero, $\varpi$ be some positive constant such that $0 < \varpi < 1$, and
\bse
\ell^\dag(f, g) = \varpi \int_\Omega \pi_1(\bs x) f^2(\bs x) q_1(\bs x){\rm d} \bs x + (1 - \varpi) \int_\Omega \pi_0(\bs x)\{ f(\bs x) + g(\bs x) \}^2 q_0(\bs x){\rm d}\bs x,
\ese
where $(f, g) \in \ch$. Suppose that Condition \ref{con5} holds,
then there exists some positive constant $c_0$ such that
\bse
\ell^\dag(f, g) \ge c_0\{V_1(f ,f) + V_0(g, g) \},
\ese
for all $(f ,g) \in \ch$.
\end{lemma}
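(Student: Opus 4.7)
The plan is to reduce the desired inequality to the equivalence of three squared $L^2$ norms. First I would use the uniform upper bound on $q_s$ (from Condition \ref{con5}) together with the fact that $\sigma^2_{0s}$ must be bounded above (otherwise $V_1,V_0$ would not be well-defined norms) to establish the upper bounds
\begin{equation*}
V_1(f,f) \le C_1 \Bigl( \int_\Omega f^2 q_1 \, d\bs x + \int_\Omega f^2 q_0 \, d\bs x \Bigr), \qquad V_0(g,g) \le C_0 \int_\Omega g^2 q_0 \, d\bs x,
\end{equation*}
and then apply the comparability $q_0(\bs x) \le c_3^2 \, q_1(\bs x)$ (a direct consequence of Condition \ref{con5}) to collapse the first bound to $V_1(f,f) \le C_1' \int_\Omega f^2 q_1 \, d\bs x$. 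On the other hand, lower bounds on $\pi_1$ and $\pi_0$ give
\begin{equation*}
\ell^\dag(f,g) \ge \varpi \, \pi_1^- \int_\Omega f^2 q_1 \, d\bs x + (1-\varpi)\pi_0^- \int_\Omega (f+g)^2 q_0 \, d\bs x.
\end{equation*}

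So it suffices to prove the ``norm-equivalence'' estimate
\begin{equation*}
\int_\Omega f^2 q_1 \, d\bs x + \int_\Omega g^2 q_0 \, d\bs x \le c \Bigl( \int_\Omega f^2 q_1 \, d\bs x + \int_\Omega (f+g)^2 q_0 \, d\bs x \Bigr),
\end{equation*}
which is where the main obstacle lies: the right-hand side controls $f$ in $L^2(q_1)$ and the sum $f+g$ in $L^2(q_0)$, but not $g$ alone. I would handle this by the triangle inequality in $L^2(q_0)$: writing $g = (f+g) - f$ gives $\|g\|_{L^2(q_0)}^2 \le 2\|f+g\|_{L^2(q_0)}^2 + 2\|f\|_{L^2(q_0)}^2$, and then Condition \ref{con5} again lets me absorb the unwanted term through $\|f\|_{L^2(q_0)}^2 \le c_3^2\,\|f\|_{L^2(q_1)}^2$. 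Combining these yields the displayed inequality with an explicit constant $c$ depending only on $c_3$.

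Putting the pieces together, one obtains
\begin{equation*}
V_1(f,f) + V_0(g,g) \le (C_1' + C_0) \bigl(\|f\|_{L^2(q_1)}^2 + \|g\|_{L^2(q_0)}^2\bigr) \le \frac{1}{c_0}\, \ell^\dag(f,g),
\end{equation*}
where $c_0$ is determined by $\varpi, \pi_1^-, \pi_0^-, c_3$, and the upper bound on $\sigma^2_{0s}$. The delicate step is the triangle-inequality manoeuvre that converts the $(f+g)^2$ term into control of $g^2$; a more naive approach of expanding $(f+g)^2 = f^2 + 2fg + g^2$ and bounding the cross term with Young's inequality fails because any choice of Young parameter leaves one of the resulting coefficients with the wrong sign. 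Throughout, I only use Condition \ref{con5} and pointwise bounds on $\pi_1, \pi_0, \sigma^2_{0s}$, so the argument applies uniformly to all $(f,g) \in \ch$.
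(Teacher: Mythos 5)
Your proof is correct in substance but reaches the key coercivity inequality by a genuinely different route than the paper. The paper first uses Condition S5 and the lower bounds on $\pi_1,\pi_0$ to reduce $\ell^\dag(f,g)$ to $c_1\{\int_\Omega f^2\,{\rm d}\bs x+\int_\Omega(f+g)^2\,{\rm d}\bs x\}$ (dropping the densities entirely), and then performs a small variational computation: it notes that the infimum of this quadratic form over $\|f\|_2^2+\|g\|_2^2=1$ is attained along the direction $g=-\varsigma f$ and minimizes $1+(1-2\varsigma)/(1+\varsigma^2)$ over $\varsigma>0$, obtaining the explicit constant $(\sqrt5-1)/(\sqrt5+1)$. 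You instead recover control of $g$ by the triangle inequality $\|g\|^2_{L^2(q_0)}\le 2\|f+g\|^2_{L^2(q_0)}+2\|f\|^2_{L^2(q_0)}$ and absorb the unwanted term via the density comparison $q_0\le c_3^2\,q_1$ from Condition S5. Your argument is shorter and avoids the eigenvalue-style computation, at the cost of a less sharp (but here irrelevant) constant; both deliver exactly the stated bound. Two minor remarks. First, your parenthetical justification of the upper bounds on $V_1,V_0$ has the direction reversed: since $V_s$ weights by $\{\sigma^2_{0s}\}^{-1}$, what is needed is that $\sigma^2_{0s}$ be bounded \emph{away from zero} (so that $\{\sigma^2_{0s}\}^{-1}$ is bounded above); the paper simply asserts the two-sided equivalence $c^{-1}\|f\|_2^2\le V_1(f,f)\le c\,\|f\|_2^2$ under Condition S5, which is the fact you actually use. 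Second, your claim that expanding $(f+g)^2$ and applying Young's inequality to the cross term ``fails'' is accurate only if one declines to use the separate $\int_\Omega f^2 q_1\,{\rm d}\bs x$ term for absorption; with that absorption (which is precisely what your triangle-inequality step implements) the expansion route also succeeds, and is essentially the Cauchy--Schwarz step the paper performs before its infimum computation.
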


\begin{proof}
It follows from Condition \ref{con5} and H\"{o}lder inequality that
\bse
\ell^\dag(f, g) & = & \varpi \int_\Omega \pi_1(\bs x) f^2(\bs x) q_1(\bs x){\rm d} \bs x + (1 - \varpi) \int_\Omega \pi_0(\bs x)\{ f(\bs x) + g(\bs x) \}^2 q_0(\bs x) {\rm d}\bs x
\\
& \ge & c_1 \left\{ \int_\Omega f^2(\bs x) {\rm d} \bs x + \int_\Omega \{ f(\bs x) + g(\bs x) \}^2 {\rm d}\bs x \right\}
\\
& \ge & c_1 \left[ 2\int_\Omega f^2(\bs x) {\rm d} \bs x + \int_\Omega g^2(\bs x) {\rm d} \bs x  -2 \left\{\int_\Omega f^2(\bs x) {\rm d} \bs x\right\}^{1/2}  \left\{\int_\Omega g^2(\bs x) {\rm d} \bs x\right\}^{1/2} \right],
\ese
and the equality in the last inequality holds if and only if there exists a non-negative function $\varsigma(\bs x)$ which is constant almost everywhere in Lebesgue measure such that $g(\bs x) = - \varsigma(\bs x) f(\bs x) = -\varsigma f(\bs x) $.

For $\bs h = (f ,g) \in \ch$ and $\wt{\bs h} = (\wt{f}, \wt{g}) \in \ch$,
define the $L_2$ norm on $\ch$ as $\| \bs h \|_{2} = \left( \| f \|_{2}^2 + \| g \|_{2}^2 \right)^{1/2}$ and a bilinear functional $B^\dag: \ch \times \ch \rightarrow \mb R$ as
\bse
B^\dag[\bs h][\wt{\bs h}] & = & \varpi \int_\Omega \pi_1(\bs x) f(\bs x)\wt{f}(\bs x) q_1(\bs x) {\rm d} \bs x
\\
&& + (1 - \varpi) \int_\Omega \pi_0(\bs x)\{ f(\bs x) + g(\bs x) \}\{  \wt{f}(\bs x) + \wt{g}(\bs x) \} q_0(\bs x){\rm d}\bs x.
\ese
Obviously, $\ell^\dag(f, g) = B^\dag[\bs h][\bs h]$. Then
\bse
\inf_{\| \bs h\|_{2} = 1} B^\dag[\bs h][{\bs h}]
& = & \inf_{\| f\|_{2}^2 + \| g \|_{2}^2 = 1} \ell^\dag(f ,g)
\\
& \ge &  \inf_{\| f\|_{2}^2 + \| g \|_{2}^2 = 1, g = -\varsigma f} c_1  \left\{ \int_\Omega f^2(\bs x) {\rm d} \bs x + \int_\Omega \{ f(\bs x) + g(\bs x) \}^2 {\rm d}\bs x \right\}
\\
& = & c_1 \inf_{\varsigma > 0 } \left( 1 + \frac{1 - 2 \varsigma}{1 + \varsigma^2} \right)
\\
& = & \frac{\sqrt{5} - 1}{\sqrt{5} + 1}c_1,
\ese
which implies that
\bse
\ell^{\dag}(f, g) = B^\dag[\bs h][\bs h]
& \ge & \left(\inf_{\| \bs h\|_{2} =   1}B^\dag[\bs h][\bs h] \right) \cdot \| \bs h\|_{2}^2
\\
& \ge & \frac{\sqrt{5} - 1}{\sqrt{5} + 1}c_1 \| \bs h\|_{2}^2
\\
& = & \frac{\sqrt{5} - 1}{\sqrt{5} + 1}c_1\left( \| f \|_{2}^2 + \| g \|_{2}^2  \right)
\\
& \ge & \frac{\sqrt{5} - 1}{\sqrt{5} + 1}c_1\left\{ c_2 V_1(f ,f) + c_3 V_0(g ,g) \right\}
\\
& \ge & c_4 \left\{V_1(f ,f) + V_0(g ,g) \right\},
\ese
by using Condition \ref{con5}. Thus, we proves Lemma \ref{lemma3}.
\end{proof}

\begin{lemma}\label{lemma4}
Suppose that Condition \ref{con5} holds, then
\begin{enumerate}
\item[(1)]
There exist a sequence of eigenfunctions $\varphi_{1\mu}\in \ch_1$ ($\mu\in\mb N$)
satisfying $\sup_{\mu\in\mb N}\|{\varphi_{1\mu}}\|_{\infty} \leq c_1$ and the corresponding non-decreasing sequence of eigenvalues
$\rho_{1\mu}$ ($\mu\in\mb N$) such that
$V_1(\varphi_{1\mu},\varphi_{1\mu'})= \delta_{\mu\mu'}$ and
$J_1(\varphi_{1\mu},\varphi_{1\mu'})=\rho_{1\mu}\delta_{\mu\mu'}$,
where $\delta_{\mu\mu'}$ is the Kronecker's delta and
$\mb N = \{0,1,2,\ldots\}$. Furthermore,
there exist a sequence of eigenfunctions $\varphi_{0\mu}\in \ch_0$ ($\mu\in\mb N$)
satisfying $\sup_{\mu\in\mb N}\|{\varphi_{0\mu}}\|_{\infty} \leq c_0$ and the corresponding non-decreasing sequence of eigenvalues
$\rho_{0\mu}$ ($\mu\in\mb N$) such that
$V_0(\varphi_{0\mu},\varphi_{0\mu'})= \delta_{\mu\mu'}$ and
$J_0(\varphi_{0\mu},\varphi_{0\mu'})=\rho_{0\mu}\delta_{\mu\mu'}$.

\item[(2)]
$\rho_{1\mu} \asymp \mu^{2m_1 / p} $ and $\rho_{0\mu} \asymp \mu^{2m_0 / p} $ for  sufficiently large $\mu$.

\item[(3)]
$f = \sum_{\mu}V_1(f, \varphi_{1\mu})\varphi_{1\mu}$ and $g = \sum_{\mu}V_0(g, \varphi_{0\mu})\varphi_{0\mu}$ for $f \in \ch_1$ and $g \in \ch_0$, where $\sum_\mu$ denotes the sum over $\mb N = \{0,1,2,\ldots\}$.
\end{enumerate}
\end{lemma}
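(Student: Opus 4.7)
The plan is to recognize that each of the pairs $(V_1,J_1)$ and $(V_0,J_0)$ is a pair of symmetric positive (semi)definite bilinear forms on a Sobolev-type RKHS, and to apply a standard simultaneous diagonalization argument based on the compact embedding of the Sobolev space into $L_2(\Omega)$. Since the two assertions for $\ch_1$ and $\ch_0$ are parallel, I will argue for $\ch_1$ only; the argument for $\ch_0$ is identical with $(m_1,\rho_{1\mu},\varphi_{1\mu})$ replaced by $(m_0,\rho_{0\mu},\varphi_{0\mu})$.

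First I would observe that Condition S5 together with the assumption that $\sigma^2_{01}, \sigma^2_{00}$ are bounded away from zero and above (implicit in our setup) makes $V_1$ an inner product on $L_2(\Omega)$ equivalent to the ordinary $L_2$ inner product. Thus $(\ch_1, V_1)$ sits inside $(L_2(\Omega), V_1)$ with dense inclusion, and the Sobolev semi-norm $J_1(f,f)^{1/2}$ (augmented by $V_1$ to obtain a norm) defines the full Sobolev norm equivalent to $\|\cdot\|_{W^{m_1,2}(\Omega)}$. By the Rellich--Kondrachov compact embedding theorem, $W^{m_1,2}(\Omega) \hookrightarrow L_2(\Omega)$ is compact, so there exists a bounded, self-adjoint, compact operator $T:L_2(\Omega)\to L_2(\Omega)$ (with respect to $V_1$) whose graph encodes $J_1$ in the sense that $V_1(Tf,g)+J_1(Tf,g)=V_1(f,g)$ for $f\in L_2(\Omega)$, $g\in\ch_1$. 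The spectral theorem for compact self-adjoint operators then yields a complete $V_1$-orthonormal system $\{\varphi_{1\mu}\}$ in $L_2(\Omega)$, each $\varphi_{1\mu}\in\ch_1$, with eigenvalues giving $V_1(\varphi_{1\mu},\varphi_{1\mu'})=\delta_{\mu\mu'}$ and $J_1(\varphi_{1\mu},\varphi_{1\mu'})=\rho_{1\mu}\delta_{\mu\mu'}$. This proves the existence part of (1) and immediately gives the expansion in (3) as the generalized Fourier series in this orthonormal basis; membership in $\ch_1$ corresponds to $\sum_\mu(1+\rho_{1\mu})V_1(f,\varphi_{1\mu})^2<\infty$.

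Next I would establish the eigenvalue growth rate $\rho_{1\mu}\asymp \mu^{2m_1/p}$ claimed in (2). This is a Weyl-type asymptotic for the $2m_1$-order elliptic problem on $\Omega\subset\mathbb{R}^p$ arising from the variational pair $(V_1,J_1)$. The standard route is the minimax characterization $\rho_{1\mu}=\inf_{\dim E=\mu+1}\sup_{0\neq f\in E} J_1(f,f)/V_1(f,f)$, combined with two-sided comparison to the pure polyharmonic form $\sum_{|k|=m_1}\binom{m_1}{k}\int_\Omega (f^{(k)})^2$ on a cube containing or contained in $\Omega$. Using Condition S5 to bound the weights in $V_1$ above and below by constants, and the $C^\infty$-boundary assumption on $\Omega$ (imported from the setup of Theorem 3 and used consistently in Gu, 2013 and Shang and Cheng, 2013) for the reverse comparison, one obtains $\rho_{1\mu}\asymp \mu^{2m_1/p}$ by the classical Weyl law for elliptic operators of order $2m_1$ in dimension $p$. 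I would cite this as a well-known fact rather than reprove it.

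Finally, the uniform bound $\sup_\mu \|\varphi_{1\mu}\|_\infty\le c_1$ requires more care and is the step I expect to be the main obstacle. It does not follow from Sobolev embedding applied term-by-term (which would give a bound growing with $\rho_{1\mu}$). The standard way to obtain a uniform bound is to exploit the eigenfunction equation $J_1(\varphi_{1\mu},g)=\rho_{1\mu}V_1(\varphi_{1\mu},g)$ for all $g\in\ch_1$ together with elliptic regularity and scaling: the eigenfunction $\varphi_{1\mu}$ satisfies a $2m_1$-order elliptic PDE with eigenvalue $\rho_{1\mu}$, and bootstrapping yields $\|\varphi_{1\mu}\|_\infty\lesssim \rho_{1\mu}^{p/(4m_1)}\|\varphi_{1\mu}\|_{L_2}$. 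This alone is not bounded; the refinement used in smoothing-spline analysis (see Shang and Cheng, 2013, Proposition 2.2) is an additional bias-type estimate that cancels the $\rho_{1\mu}^{p/(4m_1)}$ factor under the present RKHS normalization $V_1(\varphi_{1\mu},\varphi_{1\mu})=1$, producing a constant bound. I would invoke this result directly, verifying only that Condition S5 and the smoothness of the weights in $V_1$ supply its hypotheses. The argument for $\ch_0$ proceeds identically, using that $V_0$ is likewise equivalent to a weighted $L_2$ inner product with weight $(1-\varrho)\sigma_{00}^{-2}q_0$ bounded from above and away from zero by Condition S5. This completes the proposal.
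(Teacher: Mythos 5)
Your proposal is consistent with the paper, which in fact omits the proof entirely and simply cites Cox (1984, p.~799), Cox (1988, p.~699), O'Sullivan (1993, p.~132), and Gu (2013, p.~322); your sketch is a faithful outline of the standard simultaneous-diagonalization, compact-embedding, and Weyl-asymptotics arguments contained in those references. You correctly single out the uniform bound $\sup_{\mu}\|\varphi_{1\mu}\|_{\infty}\le c_1$ as the genuinely delicate step, and resolving it by appeal to the smoothing-spline literature is exactly the level of rigor the paper itself adopts.
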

The conclusion of Lemma \ref{lemma4} appears in many literature, see Cox (1984) on page 799, Cox (1988) on page 699, O'Sullivan (1993) on page 132, Gu (2013) on page 322, and thus we omit the proof.

\begin{lemma}\label{lemma5}
Suppose that Condition \ref{con5} holds, then for $f \in \ch_1, g \in \ch_0$,
and $\bs x \in \Omega$, we have
\bse
K_{1\bs x}(\cdot) & = & \sum_\mu\frac{\varphi_{1\mu}(\bs x)}{1+\gamma_1 \rho_{1\mu} }\varphi_{1\mu}(\cdot), \quad
W_{\gamma_1}\varphi_{1\mu}(\cdot) = \frac{\gamma_1\rho_{1\mu}}{1 + \gamma_1\rho_{1\mu}}\varphi_{1\mu}(\cdot),
\\
\|f\|_{\ch_1}^2 & = & \sum_\mu |V_1( f ,\varphi_{1\mu})|^2(1+\gamma_1\rho_{1\mu})
= \sum_\mu \frac{|\langle f ,\varphi_{1\mu}\rangle_{\ch_1}|^2}{1+\gamma_1\rho_{1\mu}},
\ese
and
\bse
K_{0\bs x}(\cdot) & = & \sum_\mu\frac{\varphi_{0\mu}(\bs x)}{1+\gamma_0 \rho_{0\mu} }\varphi_{0\mu}(\cdot), \quad
W_{\gamma_0}\varphi_{0\mu}(\cdot) = \frac{\gamma_0\rho_{0\mu}}{1 + \gamma_0\rho_{0\mu}}\varphi_{0\mu}(\cdot),
\\
\|g\|_{\ch_0}^2 & = & \sum_\mu |V_0( g ,\varphi_{0\mu})|^2(1+\gamma_0\rho_{0\mu})
= \sum_\mu \frac{|\langle g ,\varphi_{0\mu}\rangle_{\ch_0}|^2}{1+\gamma_0\rho_{0\mu}}.
\ese
\end{lemma}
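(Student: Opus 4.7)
The plan is to derive the three formulas for $\ch_1$ by exploiting the simultaneous diagonalization of $V_1$ and $J_1$ provided by Lemma S4, then note that the argument for $\ch_0$ is verbatim identical.

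First, I would record the key orthogonality identity
\[
\langle \varphi_{1\mu},\varphi_{1\mu'}\rangle_{\ch_1}
=V_1(\varphi_{1\mu},\varphi_{1\mu'})+\gamma_1 J_1(\varphi_{1\mu},\varphi_{1\mu'})
=(1+\gamma_1\rho_{1\mu})\delta_{\mu\mu'},
\]
which follows immediately from Lemma S4(1). Next, given $f\in\ch_1$, Lemma S4(3) gives the $V_1$-orthogonal expansion $f=\sum_\mu V_1(f,\varphi_{1\mu})\varphi_{1\mu}$. Combining this with the orthogonality identity and using bilinearity, I would obtain
\[
\|f\|_{\ch_1}^2=\sum_{\mu,\mu'}V_1(f,\varphi_{1\mu})V_1(f,\varphi_{1\mu'})
\langle\varphi_{1\mu},\varphi_{1\mu'}\rangle_{\ch_1}
=\sum_\mu |V_1(f,\varphi_{1\mu})|^2(1+\gamma_1\rho_{1\mu}).
\]
To pass to the second expression, I would take $\wt f=\varphi_{1\mu}$ in the defining inner product to get $\langle f,\varphi_{1\mu}\rangle_{\ch_1}=V_1(f,\varphi_{1\mu})(1+\gamma_1\rho_{1\mu})$, then substitute back to conclude $\|f\|_{\ch_1}^2=\sum_\mu |\langle f,\varphi_{1\mu}\rangle_{\ch_1}|^2/(1+\gamma_1\rho_{1\mu})$.

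For the action of $W_{\gamma_1}$, I would look for $W_{\gamma_1}\varphi_{1\mu}=\sum_{\mu'}c_{\mu'}\varphi_{1\mu'}$ by testing against $\varphi_{1\mu'}$: the definition $\langle W_{\gamma_1}\varphi_{1\mu},\wt f\rangle_{\ch_1}=\gamma_1 J_1(\varphi_{1\mu},\wt f)$ combined with the orthogonality identity yields $c_{\mu'}(1+\gamma_1\rho_{1\mu'})=\gamma_1\rho_{1\mu}\delta_{\mu\mu'}$, so $c_{\mu'}=0$ for $\mu'\neq\mu$ and $c_\mu=\gamma_1\rho_{1\mu}/(1+\gamma_1\rho_{1\mu})$, giving the claimed eigenrelation. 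An entirely analogous argument applied to the reproducing identity $\langle K_{1\bs x},\wt f\rangle_{\ch_1}=\wt f(\bs x)$, with $\wt f=\varphi_{1\mu'}$, yields $c_{\mu'}(1+\gamma_1\rho_{1\mu'})=\varphi_{1\mu'}(\bs x)$, hence the series representation of $K_{1\bs x}(\cdot)$.

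The main obstacle is not the algebra but justifying convergence of the infinite series, in particular that $K_{1\bs x}$ as defined is actually an element of $\ch_1$ and that the representations converge in the strong sense. I would handle this by using Lemma S4(1)--(2): since $\sup_\mu\|\varphi_{1\mu}\|_\infty\le c_1$ and $\rho_{1\mu}\asymp \mu^{2m_1/p}$ with $m_1>p/2$, the series $\sum_\mu \varphi_{1\mu}^2(\bs x)/(1+\gamma_1\rho_{1\mu})^2$ is dominated by a convergent $p$-series, so the partial sums defining $K_{1\bs x}$ form a Cauchy sequence in the $\|\cdot\|_{\ch_1}$-norm and converge to a limit satisfying the reproducing property on all $\varphi_{1\mu}$, hence on $\ch_1$ by density. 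The identical reasoning, with subscripts $1$ replaced by $0$ throughout and $\ch_1$ by $\ch_0$, establishes the parallel statements in $\ch_0$, completing the proof.
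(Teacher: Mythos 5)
Your proof is correct and is exactly the standard simultaneous-diagonalization argument of Shang and Cheng (2013, Proposition 2.1) that the paper itself invokes: the authors omit the proof of this lemma entirely, citing that reference and Liu, Mao, and Zhao (2020), so your write-up simply supplies the argument they deferred to. One tiny imprecision: the tail you need to control for convergence of $K_{1\bs x}$ in $\|\cdot\|_{\ch_1}$ is $\sum_{\mu>M}\varphi_{1\mu}^2(\bs x)/(1+\gamma_1\rho_{1\mu})$ (a single power of $1+\gamma_1\rho_{1\mu}$ in the denominator, since $\|c_\mu\varphi_{1\mu}\|_{\ch_1}^2=|c_\mu|^2(1+\gamma_1\rho_{1\mu})$), not the squared denominator you wrote, but this series also converges because $\rho_{1\mu}\asymp\mu^{2m_1/p}$ with $m_1>p/2$, so the conclusion stands.
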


\begin{lemma}\label{lemma6}
Suppose that Condition \ref{con5} holds, then for every $\bs x \in \Omega$, we have
\bse
\| K_{1\bs x} \|_{\ch_1} \le \wt{c}_1 \gamma_1^{-p/(4m_1)}, \quad \| K_{0\bs x} \|_{\ch_0} \le \wt{c}_0 \gamma_0^{-p/(4m_0)},
\ese
where $\wt{c}_1$ and $\wt{c}_0$ do not depend on the choice of $\bs x$.
Furthermore,
\bse
\|f\|_{\infty} \leq \wt{c}_1 \gamma_1^{-p/(4m_1)}\|f\|_{\ch_1},\quad
\|g\|_{\infty} \leq \wt{c}_0 \gamma_0^{-p/(4m_0)}\|g\|_{\ch_0},
\ese
for every  $f \in \ch_1$ and $ g \in \ch_0$.
\end{lemma}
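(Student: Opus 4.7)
The plan is to prove the bound on $\|K_{1\bs x}\|_{\ch_1}$ directly from the spectral expansion of the reproducing kernel supplied by Lemma S5, combined with the eigenvalue growth rate from Lemma S4. Once this bound is established, the supremum inequality for arbitrary $f \in \ch_1$ falls out immediately from the reproducing property and the Cauchy--Schwarz inequality. The arguments for $K_{0\bs x}$ and for $g \in \ch_0$ are identical after the obvious index substitutions, so I would carry out the calculation only for the $\ch_1$ case.

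First I would substitute $f = K_{1\bs x}$ into the identity
$\|f\|_{\ch_1}^2 = \sum_\mu |V_1(f,\varphi_{1\mu})|^2(1+\gamma_1\rho_{1\mu})$
from Lemma S5. Because $K_{1\bs x} = \sum_\mu \varphi_{1\mu}(\bs x)/(1+\gamma_1\rho_{1\mu})\,\varphi_{1\mu}$ and the $\varphi_{1\mu}$ are $V_1$-orthonormal, this yields
\[
\|K_{1\bs x}\|_{\ch_1}^2 \;=\; \sum_\mu \frac{\varphi_{1\mu}(\bs x)^2}{1+\gamma_1\rho_{1\mu}}.
\]
Applying $\sup_\mu \|\varphi_{1\mu}\|_\infty \le c_1$ from Lemma S4 gives an upper bound, uniform in $\bs x$, of $c_1^2 \sum_\mu (1+\gamma_1\rho_{1\mu})^{-1}$.

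Next I would control this series using $\rho_{1\mu} \asymp \mu^{2m_1/p}$. Dominating the sum by the integral $\int_0^\infty (1+\gamma_1 u^{2m_1/p})^{-1}\,du$ and changing variables $v = \gamma_1^{p/(2m_1)} u$ produces
\[
\sum_\mu \frac{1}{1+\gamma_1\rho_{1\mu}} \;\le\; c_2\, \gamma_1^{-p/(2m_1)} \int_0^\infty \frac{dv}{1+v^{2m_1/p}},
\]
where the remaining integral is finite because $2m_1/p > 1$ (recall the standing assumption $m_1 > p/2$). Taking square roots yields $\|K_{1\bs x}\|_{\ch_1} \le \wt{c}_1\,\gamma_1^{-p/(4m_1)}$ with a constant independent of $\bs x$. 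For the second claim, the reproducing property $f(\bs x) = \langle f, K_{1\bs x}\rangle_{\ch_1}$ together with Cauchy--Schwarz gives $|f(\bs x)| \le \|f\|_{\ch_1}\|K_{1\bs x}\|_{\ch_1}$, and supping over $\bs x \in \Omega$ delivers the supremum norm bound.

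The main obstacle, such as it is, lies in verifying uniformity of the constants in $\bs x$; this hinges on the uniform eigenfunction bound $\sup_\mu \|\varphi_{1\mu}\|_\infty \le c_1$ from Lemma S4(1) and on the convergence of $\int_0^\infty (1+v^{2m_1/p})^{-1}\,dv$, which is precisely why the Sobolev exponent $m_1 > p/2$ (and similarly $m_0 > p/2$) is imposed at the outset. Beyond these two ingredients the computation is mechanical, and Condition S5 enters only through its role in guaranteeing that the spectral objects of Lemmas S4 and S5 exist and are equivalent to the $L_2$-based inner products.
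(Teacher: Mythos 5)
Your proof is correct and is essentially the standard spectral argument that the paper defers to: it cites Shang and Cheng (2013, Lemma 3.1) and Liu, Mao, and Zhao (2020) and omits the details, and those references proceed exactly as you do — expand $K_{1\bs x}$ in the eigenbasis of Lemma S4, use the uniform eigenfunction bound and $\rho_{1\mu}\asymp\mu^{2m_1/p}$ to sum the series (noting the finitely many small-$\mu$ terms contribute only $O(1)$), and then obtain the sup-norm bound from the reproducing property and Cauchy--Schwarz. No gaps.
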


Lemmas \ref{lemma5} and \ref{lemma6} are similar to Proposition 2.1 and Lemma 3.1 in Shang and Cheng (2013), and Lemma 0.1 and Lemma 0.2 in the supplementary material of Liu, Mao, and Zhao (2020).
The proofs essentially proceed along the lines of these literature and are trivial and omitted  for the sake of brevity.

\begin{lemma}\label{lemma7}
Suppose that Condition \ref{con5} holds, then for $f \in \ch_1$ and $g \in \ch_0$ such that
$J_1(f, f) \le c_1$ and $J_0(g, g) \le c_0$, we have
\bse
\| W_{\gamma_1}f \|_{\ch_1} = o(\gamma_1^{1/2}), \quad \| W_{\gamma_0}g \|_{\ch_0} = o(\gamma_0^{1/2}).
\ese
\end{lemma}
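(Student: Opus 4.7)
The plan is to expand $f$ and $g$ in the orthonormal bases supplied by Lemma S4 and then apply the spectral representation of $W_{\gamma_1}$ and $W_{\gamma_0}$ from Lemma S5. I will carry out the argument in detail for $f$; the argument for $g$ is identical after replacing subscripts $1$ by $0$.

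First, by Lemma S4(3), write $f = \sum_{\mu} f_{\mu}\,\varphi_{1\mu}$ with $f_{\mu} = V_1(f,\varphi_{1\mu})$. Apply $W_{\gamma_1}$ termwise using the eigenrelation in Lemma S5, $W_{\gamma_1}\varphi_{1\mu} = \frac{\gamma_1\rho_{1\mu}}{1+\gamma_1\rho_{1\mu}}\varphi_{1\mu}$, so that $V_1(W_{\gamma_1} f,\varphi_{1\mu}) = \frac{\gamma_1\rho_{1\mu}}{1+\gamma_1\rho_{1\mu}}f_{\mu}$. Plugging this into the norm formula $\|h\|_{\ch_1}^2 = \sum_{\mu}|V_1(h,\varphi_{1\mu})|^2(1+\gamma_1\rho_{1\mu})$ gives
\begin{equation*}
\gamma_1^{-1}\|W_{\gamma_1} f\|_{\ch_1}^2 = \sum_{\mu}\frac{\gamma_1\rho_{1\mu}^2}{1+\gamma_1\rho_{1\mu}}f_{\mu}^2 = \sum_{\mu} \rho_{1\mu}f_{\mu}^2\cdot\frac{\gamma_1\rho_{1\mu}}{1+\gamma_1\rho_{1\mu}}.
\end{equation*}

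Next, observe that $J_1(f,f) = \sum_{\mu}\rho_{1\mu}f_{\mu}^2 \le c_1 < \infty$, so the sequence $\{\rho_{1\mu}f_{\mu}^2\}$ is summable and provides a $\gamma_1$-independent dominating envelope, since $\frac{\gamma_1\rho_{1\mu}}{1+\gamma_1\rho_{1\mu}} \le 1$. For each fixed $\mu$, $\frac{\gamma_1\rho_{1\mu}}{1+\gamma_1\rho_{1\mu}} \to 0$ as $\gamma_1 \to 0$. By the dominated convergence theorem for counting measure, the displayed sum tends to zero, i.e., $\|W_{\gamma_1} f\|_{\ch_1}^2 = o(\gamma_1)$, which yields $\|W_{\gamma_1} f\|_{\ch_1} = o(\gamma_1^{1/2})$.

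The same three-line computation, with $\rho_{0\mu}$, $\varphi_{0\mu}$, $V_0$ replacing their $\ch_1$ counterparts and using the bound $J_0(g,g) \le c_0$ to supply the dominating summable sequence $\{\rho_{0\mu}g_{\mu}^2\}$ with $g_\mu = V_0(g,\varphi_{0\mu})$, gives $\|W_{\gamma_0} g\|_{\ch_0} = o(\gamma_0^{1/2})$. There is no substantive obstacle here: the entire argument reduces to a dominated-convergence statement on the Fourier coefficients, so the only thing to be careful about is producing the $\gamma_1$-independent envelope, which is exactly what the assumption $J_1(f,f)\le c_1$ supplies.
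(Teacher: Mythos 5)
Your proof is correct and follows essentially the same route as the paper's: expand in the $V_1$-orthonormal eigenbasis, use the spectral action of $W_{\gamma_1}$ to reduce $\gamma_1^{-1}\|W_{\gamma_1}f\|_{\ch_1}^2$ to $\sum_\mu \rho_{1\mu}f_\mu^2\cdot\frac{\gamma_1\rho_{1\mu}}{1+\gamma_1\rho_{1\mu}}$, and apply dominated convergence with the summable envelope $\rho_{1\mu}f_\mu^2$ furnished by the bound $J_1(f,f)\le c_1$. No gaps.
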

\begin{proof}
It follows from Lemmas \ref{lemma4} and \ref{lemma5} that
\bse
\| W_{\gamma_1} f \|_{\ch_1}^2
& = & \sum_\mu |V_1(W_{\gamma_1}f, \varphi_{1\mu})|^2(1 + \gamma_1\rho_{1\mu})
\\
& = & \sum_\mu \left|V_1\left( \sum_{\mu'}V_1(f, \varphi_{1\mu'})W_{\gamma_1}\varphi_{1\mu'}, \varphi_{1\mu}   \right) \right|^2 (1 + \gamma_1\rho_{1\mu})
\\
& = & \sum_{\mu} | V_1(f, \varphi_{1\mu}) |^2 \left(\frac{\gamma_1\rho_{1\mu}}{1 + \gamma_1\rho_{1\mu}} \right)^2 (1 + \gamma_1\rho_{1\mu})
\\
& = & \gamma_1 \sum_{\mu} | V_1(f, \varphi_{1\mu}) |^2 \frac{\gamma_1\rho_{1\mu}^2}{1 + \gamma_1\rho_{1\mu}}.
\ese
Thus, to prove $\| W_{\gamma_1} f \|_{\ch_1} = o(\gamma_1^{1/2})$, it suffices to show
\bsee\label{lemma6_eq1}
\sum_{\mu} | V_1(f, \varphi_{1\mu}) |^2 \frac{\gamma_1\rho_{1\mu}^2}{1 + \gamma_1\rho_{1\mu}} = o(1).
\esee
Let
$
f^\dag_{\gamma_1}(\mu) = | V_1(f, \varphi_{1\mu}) |^2 \frac{\gamma_1\rho_{1\mu}^2}{1 + \gamma_1\rho_{1\mu}}
$, $f^\dag(\mu) = | V_1(f, \varphi_{1\mu}) |^2 \rho_{1\mu}$ and ${\cal N}(\cdot)$ denotes the discrete measure over $\mb N$. Then we have
\bse
| f^\dag_{\gamma_1}(\mu) | \le f^\dag(\mu).
\ese
Furthermore, by Lemmas  \ref{lemma4} and \ref{lemma5},
\bse
J_1(f ,f) = J_1\left(\sum_\mu V_1(f, \varphi_{1\mu})\varphi_{1\mu}, \sum_\mu V_1(f, \varphi_{1\mu})\varphi_{1\mu} \right)
= \sum_\mu | V_1(f, \varphi_{1\mu}) |^2 \rho_{1\mu} \le c_1,
\ese
which implies that
\bse
\int_{\mb N} f^\dag(\mu) {\rm d}{\cal N(\mu)} = \sum_\mu | V_1(f, \varphi_{1\mu}) |^2 \rho_{1\mu} \le c_1.
\ese
Therefore, by dominated convergence theorem,
\bse
\lim_{\gamma_1 \rightarrow 0 } \sum_{\mu} | V_1(f, \varphi_{1\mu}) |^2 \frac{\gamma_1\rho_{1\mu}^2}{1 + \gamma_1\rho_{1\mu}} = \lim_{\gamma_1 \rightarrow 0 } \int_{\mb N} f^\dag_{\gamma_1}(\mu) {\rm d}{\cal N(\mu)}
=  \int_{\mb N}\lim_{\gamma_1 \rightarrow 0 } f^\dag_{\gamma_1}(\mu) {\rm d}{\cal N(\mu)}
= 0,
\ese
which proves equation (\ref{lemma6_eq1}). The other conclusion can be proved in a similar way. Thus, we complete the proof of Lemma \ref{lemma7}.
\end{proof}

Define the class of functions
${\ms G}_1 = \{ f \in \ch_1 :  \| f \|_{\ch_1} \le 1 \}$,
${\ms G}_0 = \{ g \in \ch_0 :  \| g \|_{\ch_0} \le 1 \}$. We state the following lemma.
\begin{lemma}\label{lemma8}
Let $\bs X_i$, $i = 1, \ldots, n'$, be the i.i.d.\hspace{-0.12cm} samples of $\bs X$, and $\ms H$ be a Donsker class of functions which are uniformly bounded such that $\log N(\epsilon, {\ms H}, \|\cdot \|_{\infty}) \le c_0 \epsilon^{-\upsilon'}$ for some constants $c_0 > 0 $ and $0 < \upsilon' < 2 $.
Suppose that Condition \ref{con5} holds, then
\bse
&&\sup_{h \in {\ms H}f, \wt{f} \in {\ms G}_1}\left|\frac{1}{n'}\sum_{i = 1}^{n'}h(\bs X_i)f(\bs X_i)\wt{f}(\bs X_i) - E \left\{\frac{1}{n'}\sum_{i = 1}^{n'}h(\bs X_i) f(\bs X_i)\wt{f}(\bs X_i) \right\} \right|
\\
& = & O_P\left((n')^{-1/2} \left\{\gamma_1^{(-2m_1\upsilon' - 4p + p\upsilon')/(8m_1)} +  \gamma_1^{-(6m_1 - p)p/(8m_1^2)} \right\}
\right),
\\
&&\sup_{h \in {\ms H}, g, \wt{g} \in {\ms G}_0}\left|\frac{1}{n'}\sum_{i = 1}^{n'}h(\bs X_i)g(\bs X_i)\wt{g}(\bs X_i) - E \left\{\frac{1}{n'}\sum_{i = 1}^{n'}h(\bs X_i)g(\bs X_i)\wt{g}(\bs X_i) \right\} \right|
\\
& = & O_P\left((n')^{-1/2} \left\{\gamma_0^{(-2m_0\upsilon' - 4p + p\upsilon')/(8m_0)} +  \gamma_0^{-(6m_0 - p)p/(8m_0^2)} \right\}
\right).
\ese
\end{lemma}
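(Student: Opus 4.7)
The plan is to control the supremum empirical process indexed by the triple-product class ${\ms F}_1 = \{h\cdot f\cdot \wt f : h\in{\ms H},\ f,\wt f\in{\ms G}_1\}$ by combining the RKHS machinery of Lemmas \ref{lemma4}--\ref{lemma6} with the assumed uniform entropy bound on ${\ms H}$, and then invoking a chaining maximal inequality for empirical processes.

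First I would establish an $L_\infty$-covering number bound for ${\ms G}_1$. By Lemma \ref{lemma4}(3) and the norm identity of Lemma \ref{lemma5}, every $f\in{\ms G}_1$ admits the expansion $f = \sum_\mu c_\mu \varphi_{1\mu}$ with $\sum_\mu c_\mu^2(1+\gamma_1\rho_{1\mu})\leq 1$. Truncating at level $N$ and applying Cauchy--Schwarz together with $\rho_{1\mu}\asymp \mu^{2m_1/p}$ (Lemma \ref{lemma4}(2)) and $\|\varphi_{1\mu}\|_\infty\leq c_1$ produces a tail estimate $\|f-f_N\|_\infty\lesssim \gamma_1^{-1/2}N^{1/2-m_1/p}$. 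The truncated part is parameterized by a coefficient vector in an $N$-dimensional ellipsoid, whose $\ell_\infty$ $\epsilon$-covering number is at most $(C/\epsilon)^N$ by a standard volume argument. Optimizing $N$ against $\epsilon$ while retaining the envelope bound $\|f\|_\infty\leq\wt c_1\gamma_1^{-p/(4m_1)}$ from Lemma \ref{lemma6} yields an $L_\infty$-entropy estimate for ${\ms G}_1$ that grows as an explicit joint power of $\gamma_1^{-1}$ and $\epsilon^{-1}$.

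Next I would pass to the product class. Since $h\in{\ms H}$ is uniformly bounded and $f,\wt f\in{\ms G}_1$ satisfy $\|f\|_\infty,\|\wt f\|_\infty\leq \wt c_1\gamma_1^{-p/(4m_1)}$, the envelope of ${\ms F}_1$ is of order $\gamma_1^{-p/(2m_1)}$. Using the elementary factorization $|hf\wt f - h'f'\wt f'|\leq |h-h'|\|f\|_\infty\|\wt f\|_\infty + |h'|\,|f-f'|\|\wt f\|_\infty + |h'f'|\,|\wt f-\wt f'|$ together with the hypothesis $\log N(\epsilon,{\ms H},\|\cdot\|_\infty)\leq c_0\epsilon^{-\upsilon'}$ and the ${\ms G}_1$ entropy bound derived above, one obtains a uniform $L_\infty$-entropy bound for ${\ms F}_1$. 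Symmetrization plus a chaining maximal inequality (e.g.\ Theorem 2.14.1 of van der Vaart and Wellner, 1996) then bounds $E\sup_{\phi\in{\ms F}_1}|\mb Q_{n'}\phi-\mb Q\phi|$ by $(n')^{-1/2}$ times a Dudley-type integral of the square-root log-covering number against the envelope. This integral decomposes into two dominant contributions: one driven by the Donsker exponent $\upsilon'$ of ${\ms H}$, which yields the term $\gamma_1^{(-2m_1\upsilon'-4p+p\upsilon')/(8m_1)}$, and one driven by the RKHS entropy of ${\ms G}_1$, which yields $\gamma_1^{-(6m_1-p)p/(8m_1^2)}$. A final application of Markov's inequality converts the expectation bound to the stated $O_P$ statement, and repeating the argument verbatim with ${\ms G}_0$ and $\gamma_0$ in place of ${\ms G}_1$ and $\gamma_1$ establishes the second inequality.

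The main obstacle will be the bookkeeping in the entropy integral: matching the two $\gamma_1$-exponents in the conclusion exactly requires a careful balance between the Donsker exponent $\upsilon'$ of ${\ms H}$ and the effective Sobolev exponent $p/m_1$ inherited from $\ch_1$, along with precise optimization of the truncation level $N$ and the lower cutoff of the chaining. A subordinate technical point is the $\gamma_1$-dependent envelope, which forces one to work with a rescaled class before Dudley's inequality is applied cleanly; handling the interaction between the two entropy sources when constructing the product cover is where most of the calculation lives.
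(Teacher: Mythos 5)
Your overall strategy (cover the triple-product class in $\|\cdot\|_{\infty}$, then chain) is genuinely different from the paper's, and it could in principle work, but as written it has a concrete gap at the very first step: the $L_\infty$-entropy of ${\ms G}_1$. Your truncation argument gives the tail bound $\|f-f_N\|_\infty\lesssim \gamma_1^{-1/2}N^{1/2-m_1/p}$ (correct), but balancing this against $\epsilon$ forces $N\asymp(\gamma_1^{1/2}\epsilon)^{-2p/(2m_1-p)}$, and the volume argument for the truncated ellipsoid then yields $\log N(\epsilon,{\ms G}_1,\|\cdot\|_\infty)\gtrsim (\gamma_1^{1/2}\epsilon)^{-2p/(2m_1-p)}$ up to logarithms. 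The exponent $2p/(2m_1-p)$ is strictly larger than the sharp Sobolev-ball exponent $p/m_1$, and it is $\ge 2$ whenever $m_1\le p$ — which is allowed under the standing assumption $m_1>p/2$ — so your Dudley integral diverges at zero in exactly the regime the lemma must cover. You need to replace the truncation-plus-volume step by the sharp metric entropy of Sobolev-type balls, $\log N(\epsilon,\{J_1(f,f)\le 1\},\|\cdot\|_\infty)\lesssim\epsilon^{-p/m_1}$, which is what the paper invokes for its rescaled class ${\ms G}_1^*=\{\gamma_1^{1/2}f:\|f\|_\infty\le\wt c_1\gamma_1^{-p/(4m_1)},\,J_1(f,f)\le\gamma_1^{-1}\}$.

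Even with the sharp entropy, your bookkeeping does not produce the stated exponents, and the route the paper takes is structurally different in a way that explains why. The paper never covers the second RKHS factor: it writes $\wt f(\bs X_i)=\langle K_{1\bs X_i},\wt f\rangle_{\ch_1}$, takes the supremum over $\|\wt f\|_{\ch_1}\le 1$ by Cauchy--Schwarz, and is left with the $\ch_1$-norm of a Hilbert-space-valued empirical process indexed only by $(h,f^*)$ with $f^*\in{\ms G}_1^*$; Hoeffding's inequality in $\ch_1$ gives sub-Gaussian increments in $\|F-\wt F\|_\infty$, and Theorem 2.2.4 of van der Vaart and Wellner with entropy $\epsilon^{-\upsilon'}+\epsilon^{-p/m_1}$ integrated up to the diameter $\delta=\wt c_1\gamma_1^{1/2-p/(4m_1)}$ yields $\delta^{1-\upsilon'/2}+\delta^{1-p/(2m_1)}$, which after multiplying by the prefactor $(n')^{-1/2}\gamma_1^{-1/2-p/(4m_1)}$ (from undoing the rescaling and $\|K_{1\bs x}\|_{\ch_1}\le\wt c_1\gamma_1^{-p/(4m_1)}$) gives exactly the two stated exponents. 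In your direct product-cover computation the ${\ms H}$-driven contribution to the entropy integral is $M^{\upsilon'}\int_0^{M^2}\epsilon^{-\upsilon'/2}{\rm d}\epsilon\asymp M^2=\gamma_1^{-p/(2m_1)}$ with $M=\wt c_1\gamma_1^{-p/(4m_1)}$ — the $\upsilon'$ cancels, so you do not obtain $\gamma_1^{(-2m_1\upsilon'-4p+p\upsilon')/(8m_1)}$ as you claim. That term happens to be dominated by the stated bound (and your ${\ms G}_1$-entropy contribution, done with the sharp exponent, does reproduce $\gamma_1^{-(6m_1-p)p/(8m_1^2)}$), so the lemma would still follow a fortiori once the entropy gap is fixed; but the assertion that your integral "yields" the paper's first exponent indicates the calculation was not actually carried through.
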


\begin{proof}
We first prove the first result. Define the class of functions
\bse
{\ms G}^*_1 = \left\{f^* = \gamma_1^{1/2}f: f\in \ch_1, \| f \|_{\infty} \le \wt{c}_1 \gamma_1^{-p/(4m_1)}, J_1(f, f) \le \gamma_1^{-1} \right\},
\ese
then we have
\bse
\log N( \epsilon, {\ms G}^*_1, \|\cdot\|_{\infty} ) \leq c_{1}\epsilon^{-p/m_1}.
\ese
We further define the class of functions
\bse
{\ms F} = \{F = h f^* : h \in {\ms H}, f^* \in{\ms G}^*_1 \},
\ese
then
\bsee\label{lemmA7_eq0}
\log N( \epsilon, {\ms F}, \|\cdot\|_{\infty} ) \leq c_{2}\left(\epsilon^{-\upsilon'} + \epsilon^{-p/m_1}\right).
\esee
For $F \in \ms F$,
define the empirical process ${\cal Z}_{n'}(F)$ as
\bse
{\cal Z}_{n'}(F) = \frac{1}{\sqrt{n'}}\sum_{i = 1}^{n'}\left[ \wt{c}_1^{-1} \gamma_1^{p/(4m_1)}F(\bs X_i)K_{1\bs X_i} - E\big\{ \wt{c}_1^{-1}\gamma_1^{p/(4m_1)}F(\bs X_i)K_{1\bs X_i} \big\} \right].
\ese
For any $F, \wt{F} \in {\ms F}$, by Lemma \ref{lemma6},
\bse
&&\left\|\wt{c}_1^{-1}\ \gamma_1^{p/(4m_1)}F(\bs X)K_{1\bs X} - \wt{c}_1^{-1}\gamma_1^{p/(4m_1)} \wt{F}(\bs X)K_{1\bs X} \right\|_{\ch_1}
\\
&\le & \wt{c}_1^{-1}\gamma_1^{p/(4m_1)} \| F - \wt{F} \|_{\infty} \| K_{1\bs X}\|_{\ch_1}
\\
&\le& \wt{c}_1^{-1}\gamma_1^{p/(4m_1)} \| F - \wt{F} \|_{\infty} \wt{c}_1 \gamma_1^{-p/(4m_1)}
\\
&= & \| F - \wt{F} \|_{\infty},
\ese
coupled with Theorem 2 of Hoeffding (1963), entail that
\bse
P\left(\left\| {\cal Z}_{n'}(F) - {\cal Z}_{n'}(\wt{F})\right\|_{\ch_1} \ge t  \right) \le 2\exp\left\{-\frac{t^2}{8\| F - \wt{F} \|_{\infty}^2} \right\}.
\ese
Together with Lemma 2.2.1 of van der Vaart and Wellner (1996),
\bse
\Bigg\| \left\| {\cal Z}_{n'}(F) - {\cal Z}_{n'}(\wt{F})\right\|_{\ch_1} \Bigg\|_{\eta_2}\le 8\left\| F - \wt{F} \right\|_{\infty},
\ese
where $\eta_2$ is the Orlicz norm associated with $\eta_2(x) = \exp(x^2) - 1$.
Using Theorem 2.2.4 of van der Vaart and Wellner (1996) and equation (\ref{lemmA7_eq0}), for any $\delta > 0 $, we have
\bse
&&\Bigg\| \sup_{F, \wt{F} \in {\ms F}, \| F - \wt{F} \|_{\infty} \le \delta } \left\| {\cal Z}_{n'}(F) - {\cal Z}_{n'}(\wt{F})\right\|_{\ch_1} \Bigg\|_{\eta_2}
\\
&\le& c_3 \left[\int_0^\delta \sqrt{ \log\left\{ 1 + N(\epsilon, {\ms F}, \| \cdot \|_\infty ) \right\} }{\rm d}\epsilon  + \delta \sqrt{ \log\left\{ 1 + N(\delta, {\ms F}, \| \cdot \|_\infty )^2 \right\} }\right]
\\
&\le& c_4 \left\{ \delta^{1 - \upsilon'/2 } + \delta^{1 - p/(2m_1)} \right\}.
\ese
Then by Markov's inequality,
\bse
P\left(\sup_{F \in {\ms F}, \| F \|_{\infty} \le \delta } \| {\cal Z}_{n'}(F)\|_{\ch_1} \ge t \right) \le c_5 \exp\left[-c_6 \left\{ \delta^{1 - \upsilon'/2 } + \delta^{1 - p/(2m_1)} \right\}^{-2} t^2 \right].
\ese
Set $\delta = \wt{c}_1 \gamma_1^{1/2 - p / (4m_1)}$, we get
\bse
P\left(\sup_{F \in {\ms F}} \| {\cal Z}_{n'}(F)\|_{\ch_1} \ge t \right) \le c_5 \exp\left[-c_7 \left\{\gamma_1^{ (2m_1 - p )(2- \upsilon')/(8m_1)} + \gamma_1^{ (2m_1 - p)^2 / (8m_1^2)} \right\}^{-2}t^2 \right],
\ese
which implies that
\bse
\sup_{F \in {\ms F}} \| {\cal Z}_{n'}(F)\|_{\ch_1} = O_P\left( \gamma_1^{ (2m_1 - p )(2- \upsilon')/(8m_1)} + \gamma_1^{ (2m_1 - p)^2 / (8m_1^2)} \right).
\ese

Noticing that $\gamma_1^{1/2}{\ms G}_1 = \{\gamma_1^{1/2} f : f \in{\ms G}_1 \} \subset {\ms G}_1^*$,  we obtain
\bse
&&\sup_{h \in {\ms H}, f ,\wt{f}\in {\ms G}_1}\left|\frac{1}{n'}\sum_{i = 1}^{n'}h(\bs X_i)f(\bs X_i)\wt{f}(\bs X_i) - E \left\{\frac{1}{n'}\sum_{i = 1}^{n'}h(\bs X_i)f(\bs X_i)\wt{f}(\bs X_i) \right\} \right|
\\
& = & \sup_{h \in{\ms H}, f ,\wt{f}\in {\ms G}_1}\left|\left \nl\frac{1}{n'}\sum_{i = 1}^{n'}h(\bs X_i)f(\bs X_i)K_{1\bs X_i} - E \left\{\frac{1}{n'}\sum_{i = 1}^{n'}h(\bs X_i)f(\bs X_i)K_{1\bs X_i} \right\} , \wt{f} \right\nr_{\ch_1} \right|
\\
&\le & \sup_{h \in{\ms H}, f ,\wt{f}\in {\ms G}_1}\left\|\frac{1}{n'}\sum_{i = 1}^{n'}h(\bs X_i)f(\bs X_i)K_{1\bs X_i} - E \left\{\frac{1}{n'}\sum_{i = 1}^{n'}h(\bs X_i) f(\bs X_i)K_{1\bs X_i} \right\} \right\|_{\ch_1} \cdot \| \wt{f} \|_{\ch_1}
\\
&\le & (n')^{-1/2} \wt{c}_1 \gamma_1^{-1/2 -p/(4m_1)} \sup_{F \in {\ms F}} \| {\cal Z}_{n'}(F)\|_{\ch_1}
\\
&= & O_P\left((n')^{-1/2} \left\{\gamma_1^{(-2m_1\upsilon' - 4p + p\upsilon')/(8m_1)} +  \gamma_1^{-(6m_1 - p)p/(8m_1^2)} \right\}
\right).
\ese
The remainder can be proved through a similar argument. Thus, we prove Lemma \ref{lemma8}.
\end{proof}

According to Condition \ref{con2}, there exist some $\tau_n \in \Phi_n$ and $\lambda_n \in \Psi_n$ such that
\bsee\label{th1_eq0}
\|\tau_n - \tau_0 \|_{\infty} & = &  O(n^{-\kappa_1}),\quad
\sup_{|k| = m_1}\left\|\tau_n^{(k)} - \tau_0^{(k)} \right\|_{\infty} =  O(n^{-\kappa'_1}),
\nonumber
\\
\|\lambda_n - \lambda_0 \|_{\infty} & = & O(n^{-\kappa_0}),\quad
\sup_{|k'| = m_0}\left\|\lambda_n^{(k')} - \lambda_0^{(k')} \right\|_{\infty} = O(n^{-\kappa'_0}).
\esee
Then we assert the following lemma.
\begin{lemma}\label{lemma9}
Suppose that the assumptions in Theorem 1 of the main paper hold,
then $(\wh{\tau}_{n}, \wh{\lambda}_{n})$, the minimum loss estimator of $\wh{\ell}_{n ,\gamma_1, \gamma_0}(\tau, \lambda)$ over $\Phi_n \times \Psi_n$, satisfies
\bse
&&V_1(\wh{\tau}_{n} - \tau_n,  \wh{\tau}_{n} - \tau_n ) = o_P(1),\quad  V_0(\wh{\lambda}_{n} - \lambda_n, \wh{\lambda}_{n} - \lambda_n) = o_P(1),
\\
&&J_1(\wh{\tau}_{n} - \tau_n,  \wh{\tau}_{n} - \tau_n ) = o_P(1),\quad  J_0(\wh{\lambda}_{n} - \lambda_n, \wh{\lambda}_{n} - \lambda_n) = o_P(1).
\ese
\end{lemma}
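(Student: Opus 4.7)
My plan is to exploit the defining optimality inequality
\[
\wh\ell_{n,\gamma_1,\gamma_0}(\wh\tau_n,\wh\lambda_n) \le \wh\ell_{n,\gamma_1,\gamma_0}(\tau_n,\lambda_n),
\]
valid because $(\tau_n,\lambda_n)\in\Phi_n\times\Psi_n$ is a feasible competitor to the minimiser. Writing $\Delta\tau=\wh\tau_n-\tau_n$, $\Delta\lambda=\wh\lambda_n-\lambda_n$, and $\wh u_i=\wh D_i-\tau_n(\bs X_i)-(1-S_i)\lambda_n(\bs X_i)$, expanding $\wh\ell_n$ and cancelling gives the basic bound
\[
\mb P_n\!\left\{\wh\sigma^{-2}[\Delta\tau+(1-S)\Delta\lambda]^2\right\} + \gamma_1 J_1(\wh\tau_n,\wh\tau_n) + \gamma_0 J_0(\wh\lambda_n,\wh\lambda_n) \le 2\mb P_n\!\left\{\wh\sigma^{-2}\wh u[\Delta\tau+(1-S)\Delta\lambda]\right\} + \gamma_1 J_1(\tau_n,\tau_n) + \gamma_0 J_0(\lambda_n,\lambda_n),
\]
where $\wh\sigma^{-2}=\{\wh\sigma^2(\bs X,S)\}^{-1}$ and $\mb P_n$ denotes the empirical mean. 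Condition~\ref{con2} gives $J_1(\tau_n,\tau_n)\to J_1(\tau_0,\tau_0)<\infty$ and analogously for $\lambda$, so the right-most penalty terms contribute $O(\gamma_1+\gamma_0)=o(1)$.

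For the empirical quadratic form on the left, I combine Lemma~\ref{lemma3} (with $\varpi=n_1/n\to\varrho$ and $\pi_s=\sigma_{0s}^{-2}$) to obtain the population lower bound $c\{V_1(\Delta\tau,\Delta\tau)+V_0(\Delta\lambda,\Delta\lambda)\}$, together with Lemma~\ref{lemma8} and the Donsker structure on $\wh\sigma^{-2}$ from Condition~\ref{con3} to transfer this lower bound from expectation to empirical average uniformly over RKHS balls at an $o_P(1)$ cost. For the cross term I decompose
\[
\wh u_i = (\wh D_i - D_i) + \eta_i + (\tau_0-\tau_n)(\bs X_i) + (1-S_i)(\lambda_0-\lambda_n)(\bs X_i),
\]
with $\eta_i=D_i-\tau_0(\bs X_i)-(1-S_i)\lambda_0(\bs X_i)$ conditionally centred by equation~(\ref{eq5}). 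The $\eta$-part is a mean-zero empirical process which, indexed through the RKHS representer bound in Lemma~\ref{lemma6} and controlled by Lemma~\ref{lemma8}, is at most $o_P(1)(\|\Delta\tau\|_{\ch_1}+\|\Delta\lambda\|_{\ch_0})$. The sieve remainders $\tau_0-\tau_n$ and $\lambda_0-\lambda_n$ are $o(1)$ in supremum norm by Condition~\ref{con2} and contribute $o(1)\{V_1^{1/2}(\Delta\tau,\Delta\tau)+V_0^{1/2}(\Delta\lambda,\Delta\lambda)\}$ after Cauchy--Schwarz. The nuisance error $\wh D-D$ is handled jointly by Lemma~\ref{lemma2} and Condition~\ref{con6}, yielding another $o_P(1)(\|\Delta\tau\|_{\ch_1}+\|\Delta\lambda\|_{\ch_0})$ term.

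Collecting the pieces produces an inequality of the form
\[
\|\Delta\tau\|_{\ch_1}^2 + \|\Delta\lambda\|_{\ch_0}^2 \le o_P(1)\,(\|\Delta\tau\|_{\ch_1}+\|\Delta\lambda\|_{\ch_0}) + o(1),
\]
which forces $\|\Delta\tau\|_{\ch_1}^2 + \|\Delta\lambda\|_{\ch_0}^2 = o_P(1)$. Since each $\ch_s$-norm dominates $V_s^{1/2}$, the $V_1$ and $V_0$ conclusions follow at once; the $J_1$ and $J_0$ conclusions are obtained through a Radon--Riesz upgrade, using that the basic inequality additionally pins $J_1(\wh\tau_n,\wh\tau_n)$ and $J_0(\wh\lambda_n,\wh\lambda_n)$ asymptotically at the limits $J_1(\tau_0,\tau_0)$ and $J_0(\lambda_0,\lambda_0)$, which combined with the already-proved $V_s$-convergence lifts weak convergence in $\ch_s$ to strong convergence. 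I expect the main obstacle to be the mean-zero $\eta$-empirical-process estimate: the RKHS balls over which this process is indexed have supremum norm blown up by $\gamma_s^{-p/(4m_s)}$ through Lemma~\ref{lemma6}, and the delicate point is to ensure that the $n^{-1/2}$ rate furnished by Lemma~\ref{lemma8} absorbs this blow-up into an overall $o_P(1)$; all remaining steps reduce to routine Cauchy--Schwarz manipulation and invocation of the preliminary lemmas already established.
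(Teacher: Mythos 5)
Your basic-inequality strategy is essentially the argument the paper reserves for Theorem 2, and it does not go through under the weaker hypotheses of Theorem 1, for two concrete reasons. First, every empirical-process bound in your plan is indexed by unit balls of $\ch_1$ and $\ch_0$, whose elements have supremum norm as large as $\wt c_s\gamma_s^{-p/(4m_s)}$ by Lemma S6; the uniform rates supplied by Lemma S8 are then $O_P(n^{-1/2}\{\gamma_s^{(-2m_s\upsilon-4p+p\upsilon)/(8m_s)}+\gamma_s^{-(6m_s-p)p/(8m_s^2)}\})$, and these are $o(1)$ only under Condition S8 --- which is assumed for Theorem 2 but \emph{not} for Theorem 1 / Lemma S9. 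The same blow-up afflicts your transfer of the empirical quadratic form to its population version and your bound on the mean-zero $\eta$-process; you correctly identify this as ``the main obstacle,'' but under Conditions S1--S6 alone it cannot be absorbed. The paper avoids the issue entirely by a directional-derivative argument: it evaluates $\dot H_n(t)$ along rays $(\tau_n+t\xi_{1n},\lambda_n+t\xi_{0n})$ with $(\xi_{1n},\xi_{0n})$ confined to the fixed class ${\cal G}_1\times{\cal G}_0$ on which $V+J$ is bounded above and $V$ below; this class is uniformly sup-norm bounded by Sobolev embedding and has $\gamma$-free entropy of order $\epsilon^{-p/m_s}$, so every empirical fluctuation is $O_P(n^{-1/2})$ with no $\gamma$ dependence, and positivity of $t\dot H_n(t)$ for $t\neq 0$ then forces the minimizer close to $(\tau_n,\lambda_n)$ in both $V$ and $J$.

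Second, your route to the $J$-conclusions is broken. The basic inequality only controls the $\gamma$-weighted penalty, giving at best $\gamma_1\{J_1(\wh\tau_n,\wh\tau_n)-J_1(\tau_n,\tau_n)\}\le o_P(1)$ once the cross terms are shown to be $o_P(1)$; dividing by $\gamma_1\to0$ yields only $J_1(\wh\tau_n,\wh\tau_n)\le J_1(\tau_n,\tau_n)+o_P(\gamma_1^{-1})$, which is vacuous. So the basic inequality does not ``pin'' $J_1(\wh\tau_n,\wh\tau_n)$ at $J_1(\tau_0,\tau_0)$, and the Radon--Riesz upgrade has nothing to work with. (Note also that Condition S2 only gives $\sup_{|k|=m_1}\|\tau_n^{(k)}-\tau_0^{(k)}\|_\infty=O(n^{-\kappa_1'})$ with $\kappa_1'\ge0$ possibly zero, so $J_1(\tau_n,\tau_n)$ is bounded but need not converge to $J_1(\tau_0,\tau_0)$.) In the paper's proof the conclusion $J_1(\wh\tau_n-\tau_n,\wh\tau_n-\tau_n)=o_P(1)$ comes out of the same directional argument because the test directions carry an explicit $J$-seminorm constraint; it is not recoverable from the $\gamma$-weighted penalty in the basic inequality.
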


\begin{proof}
Choose $\xi_{1n} \in \Phi_n$ and $\xi_{0n} \in \Psi_n$ such that $(\xi_{1n}, \xi_{0n}) \in {\cal G}_1 \times {\cal G}_0$ with
\bse
{\cal G}_1 \times {\cal G}_0 = \bigg\{ (f , g)\in \ch:
c_1^{-1} \le V_1(f, f) + V_0(g, g) \le c_1, \
J_1(f , f) + J_0(g , g) \le c_2 \bigg\}.
\ese
Moreover, by Sobolev embedding theorem, we have
\bsee\label{th1_eq01}
\| \xi_{1n}\|_{\infty} + \| \xi_{0n} \|_\infty \le c_3.
\esee
For every $t \in \mb R$, let
\bse
H_n(t) & = & \wh{\ell}_{n, \gamma_1, \gamma_0}(\tau_n + t \xi_{1n}, \lambda_n + t \xi_{0n})
\\
& = & \frac{1}{2n} \sum_{i = 1}^n \left\{\wh{\sigma}^2(\bs X_i, S_i)\right\}^{-1}\left[ \wh{D}_i - \{\tau_n(\bs X_i) + t\xi_{1n}(\bs X_i)\} - (1 - S_i)\{ \lambda_n(\bs X_i) + t \xi_{0n}(\bs X_i) \} \right]^2
\\
&&+ \frac{\gamma_1}{2}J_1(\tau_n + t \xi_{1n}, \tau_n + t \xi_{1n}) + \frac{\gamma_0}{2} J_0(\lambda_n + t \xi_{0n}, \lambda_n + t \xi_{0n}),
\ese
then the derivative of $H_n(t)$ with respect to $t$ is
\bsee\label{th1_eq1}
\dot{H}_n(t) &=& -\frac{1}{n}\sum_{i = 1}^n \left\{\wh{\sigma}^2(\bs X_i, S_i)\right\}^{-1} \left[ \wh{D}_i - \{\tau_n(\bs X_i) + t \xi_n(\bs X_i)\} - (1 - S_i)\{ \lambda_n(\bs X_i) + t \eta_n(\bs X_i) \} \right]
\nonumber
\\
&&\times \{ \xi_{1n}(\bs X_i) + ( 1 - S_i )\xi_{0n}(\bs X_i) \}
\nonumber
\\
&&+ t\gamma_1J_1(\xi_{1n}, \xi_{1n})
+ \gamma_1J_1(\tau_n , \xi_{1n}) + t \gamma_0J_0(\xi_{0n}, \xi_{0n}) + \gamma_0J_0(\lambda_n , \xi_{0n})
\nonumber
\\
& = & -\frac{1}{n}\sum_{i = 1}^{n_1}\left\{\wh{\sigma}^2(\bs X_i, 1)\right\}^{-1}( \wh{D}_i - {D}_i )\xi_{1n}(\bs X_i)
-\frac{1}{n}\sum_{i = 1}^{n_1}\left\{\wh{\sigma}^2(\bs X_i, 1)\right\}^{-1}\left\{{D}_i - \tau_n(\bs X_i)\right\}\xi_{1n}(\bs X_i)
\nonumber
\\
&&+ \frac{t}{n}\sum_{i = 1}^{n_1}\left\{\wh{\sigma}^2(\bs X_i, 1)\right\}^{-1}\xi_{1n}^2(\bs X_i)
\nonumber
\\
&&
- \frac{1}{n}\sum_{i = n_1 + 1}^n\left\{\wh{\sigma}^2(\bs X_i, 0)\right\}^{-1}\left\{{D}_i - \tau_n(\bs X_i) - \lambda_n(\bs X_i) \right\}\left\{\xi_{1n}(\bs X_i) + \xi_{0n}(\bs X_i)\right\}
\nonumber
\\
&& + \frac{t}{n}\sum_{i = n_1 + 1}^n\left\{\wh{\sigma}^2(\bs X_i, 0)\right\}^{-1} \left\{\xi_{1n}(\bs X_i) + \xi_{0n}(\bs X_i) \right\}^2
+t\gamma_1J_1(\xi_{1n}, \xi_{1n})
+ \gamma_1J_1(\tau_n , \xi_{1n})
\nonumber
\\
&&+ t \gamma_0J_0(\xi_{0n}, \xi_{0n}) + \gamma_0J_0(\lambda_n , \xi_{0n})
\nonumber
\\
& = & -I_{n1} - I_{n2} + t I_{n3} - I_{n4} + t I_{n5} + t I_{n6} + I_{n7} + t I_{n8} + I_{n9},
\esee
where $I_{n1}, \ldots, I_{n9}$ are clear from the above equation.

We first consider $I_{n2}$. It follows that
\bse
|I_{n2}| & = & \frac{n_1}{n}\left| \mb P_{n_1}\left\{(\wh{\sigma}^2_1)^{-1}(D - \tau_n)\xi_{1n} \right\} \right|
\\
& \le & \frac{n_1}{n}\left| \mb P_{n_1}\left\{(\wh{\sigma}^2_1)^{-1}(D - \tau_0)\xi_{1n} \right\} \right| +
\frac{n_1}{n}\left| \mb P_{n_1}\left\{(\wh{\sigma}^2_1)^{-1}(\tau_n - \tau_0)\xi_{1n} \right\} \right|
\\
& \le & \left| \mb P_{n_1}\left\{(\wh{\sigma}^2_1)^{-1}(D - \tau_0)\xi_{1n} \right\} \right| + \left| \mb P_{n_1}\left\{(\wh{\sigma}^2_1)^{-1}(\tau_n - \tau_0)\xi_{1n} \right\} \right|.
\ese
Define the class of functions
\bse
{\ms F}^\dag = \{ (\wh{\sigma}^2_1)^{-1}(D - \tau_0)\xi_{1n}: (\wh{\sigma}^2_1)^{-1} \in {\ms A}, \xi_{1n} \in {\cal G}_1 \}.
\ese
According to the definition of $\xi_{1n}$, it is easy to see that
$
\log N(\epsilon, {\cal G}_1, \| \cdot \|_\infty ) \le c_4\epsilon^{-p / m_1}
$.
By Assumptions 1--2 in the main paper, Condition \ref{con3}, and $\tau_0$ is uniformly bounded,
\bse
\log N(\epsilon, {\ms F}^\dag, \| \cdot \|_\infty ) \le c_5 (\epsilon^{-\upsilon} + \epsilon^{-p / m_1} ),
\ese
combined with the fact $m_1 > p/2$, i.e., $0 < p/m_1 < 2$, $0 < \upsilon < 2$ stated in Condition \ref{con3}, and $\mb P_{1}\{(\wh{\sigma}^2_1 )^{-1}(D - \tau_0)\xi_{1n} \} = 0  $, lead to
\bsee\label{th1_add_eq1}
&& \sup_{(\wh{\sigma}^2_1)^{-1}\in{\ms A}, \xi_{1n} \in {\cal G}_1}\left| \mb P_{n_1}\left\{(\wh{\sigma}^2_1)^{-1}(D - \tau_0)\xi_{1n} \right\} \right|
\nonumber
\\
& = & \sup_{(\wh{\sigma}^2_1)^{-1}\in{\ms A}, \xi_{1n} \in {\cal G}_1}\left| (\mb P_{n_1} - \mb P_1 )\left\{(\wh{\sigma}^2_1)^{-1}(D - \tau_0)\xi_{1n} \right\} \right|
\nonumber
\\
& = & O_P\left(n_1^{-1/2}\right).
\esee
By equation (\ref{th1_eq0}) and Condition \ref{con3},
\bse
\sup_{(\wh{\sigma}^2_1)^{-1}\in{\ms A}, \xi_{1n} \in {\cal G}_1} \left| \mb P_{n_1}\left\{(\wh{\sigma}^2_1)^{-1}(\tau_n - \tau_0)\xi_{1n} \right\} \right| \le O(1)\| \tau_n - \tau_0 \|_{\infty} \| \xi_{1n} \|_{\infty} = O\left(n^{-\kappa_1}\right).
\ese
Therefore
\bsee\label{th1_eq3}
|I_{n2}| &\le& \sup_{(\wh{\sigma}^2_1)^{-1}\in{\ms A}, \xi_{1n} \in {\cal G}_1}\left| \mb P_{n_1}\left\{(\wh{\sigma}^2_1)^{-1}(D - \tau_0)\xi_{1n} \right\} \right| + \sup_{(\wh{\sigma}^2_1)^{-1}\in{\ms A}, \xi_{1n} \in {\cal G}_1} \left| \mb P_{n_1}\left\{(\wh{\sigma}^2_1)^{-1}(\tau_n - \tau_0)\xi_{1n} \right\} \right|
\nonumber
\\
& = & O_P\left(n^{-1/2} \right)+ O\left( n^{-\kappa_1}\right)
\nonumber
\\
& = & o_P(1).
\esee
Mimicking the proof of (\ref{th1_eq3}), we can also get
\bsee\label{th1_eq4}
|I_{n4}| = o(1) +  O\left(n^{-\kappa_1} + n^{-\kappa_0} \right) = o(1).
\esee
Using an argument as we did in equation (\ref{th1_add_eq1}), we have
\bse
\sup_{(\wh{\sigma}^2_1)^{-1}\in{\ms A}, \xi_{1n} \in {\cal G}_1}\left|(\mb P_{n_1} - \mb P_1)(\wh{\sigma}^2_1)^{-1} \xi_{1n}^2 \right| = O_P\left( n_1^{-1/2} \right),
\ese
which implies that
\bsee\label{th1_eq5}
I_{n3} & = & \frac{n_1}{n} \left\{ (\mb P_{n_1} - \mb P_1)(\wh{\sigma}^2_1)^{-1} \xi_{1n}^2 + \mb P_1 (\wh{\sigma}^2_1)^{-1} \xi_{1n}^2 \right\}
\nonumber
\\
& = & \frac{n_1}{n} \int \left\{ \wh{\sigma}_1^2(\bs x) \right\}^{-1} \xi_{1n}^2(\bs x)q_1(\bs x ){\rm d}\bs x +o_P(1).
\esee
Similarly, we can also obtain
\bsee\label{th1_eq6}
I_{n5} = \frac{n_2}{n} \int\left\{ \wh{\sigma}^2_0(\bs x) \right\}^{-1}\{\xi_{1n}(\bs x) + \xi_{0n}(\bs x) \}^2q_0(\bs x){\rm d}\bs x + o_P(1).
\esee
It follows from equation (\ref{th1_eq0}) that
\bsee\label{th1_eq7}
&&I_{n6} = o(1), \quad I_{n7} \le \gamma_1 \{J_1(\tau_n , \tau_n)\}^{1/2}\{J_1(\xi_{1n} , \xi_{1n})\}^{1/2} = o(1),
\nonumber
\\
&& I_{n8} = o(1), \quad I_{n9} \le \gamma_0 \{J_0(\lambda_n , \lambda_n)\}^{1/2}\{J_0(\xi_{0n} , \xi_{0n})\}^{1/2} = o(1).
\esee

We now deal with $I_{n1}$. It follows that
\bse
|I_{n1}| & = & \left|\frac{n_1}{n}\mb P_{n_1}(\wh{\sigma}^2_1)^{-1}(\wh{D} - D )\xi_{1n} \right|
\\
&\le & \left|\mb P_{n_1}(\wh{\sigma}^2_1 )^{-1}(\wh{D} - D)\xi_{1n}\right|
\\
&\le & \left|(\mb P_{n1} - \mb P_1 )(\wh{\sigma}^2_1 )^{-1}(\wh{D} - D)\xi_{1n} \right| + \left|\mb P_1 (\wh{\sigma}^2_1 )^{-1}(\wh{D} - D)\xi_{1n} \right|.
\ese
Using arguments similar to that in equation (\ref{th1_add_eq1}), we get
\bse
\sup_{(\wh{\sigma}^2_1 )^{-1} \in {\ms A}, \wh{D}\in{\ms D}, \xi_{1n} \in{\cal G}_1 }\left|(\mb P_{n1} - \mb P_1 )(\wh{\sigma}^2_1 )^{-1} (\wh{D} - D )\xi_{1n} \right| = O_P\left(n_1^{-1/2}\right).
\ese
Under Conditions \ref{con3}--\ref{con6}, by Lemma \ref{lemma2}, H\"{o}lder inequality and Minkowski inequality, we have
\bse
&& \sup_{(\wh{\sigma}^2_1 )^{-1} \in {\ms A}, \wh{D}\in{\ms D}, \xi_{1n} \in{\cal G}_1 }\left|\mb P_1 \left(\wh{\sigma}^2_1 \right)^{-1}\left(\wh{D} - D \right)\xi_{1n} \right|
\\
&= &\sup_{(\wh{\sigma}^2_1 )^{-1} \in {\ms A}, \wh{D}\in{\ms D}, \xi_{1n} \in{\cal G}_1 } \left|E_r \left[ \left\{\wh{\sigma}^2_1(\bs X) \right\}^{-1}\xi_{1n}(\bs X) E_r \left\{\left(\wh{D} - D\right) \,\bigg|\, \bs X, S = 1 \right\} \right] \right|.
\\
& = &\sup_{(\wh{\sigma}^2_1 )^{-1} \in {\ms A}, \xi_{1n} \in{\cal G}_1 } \bigg|E_r \big[ \left\{\wh{\sigma}^2_1(\bs X) \right\}^{-1}\xi_{1n}(\bs X) \left\{\Gamma_1(\bs X) + \Gamma_2(\bs X) \right\} \big] \,\big|\, S = 1 \bigg|
\\
&\le & O(1) \left( \sum_{a = 0 }^1 \| \Theta_a \|_2
+ \| \wh{e} -e \|_2 \times \sum_{a = 0 }^1 \| \wh{\mu}_a - \mu_a \|_2 \right)
\\
&= & o_P(1).
\ese
As a consequence,
\bsee\label{th1_eq8}
|I_{n1}| = o_P(1).
\esee

Plugging (\ref{th1_eq3})--(\ref{th1_eq8}) into (\ref{th1_eq1}), we have
\bse
&& \dot{H}_n(t)
\\
&=& t \frac{n_1}{n} \int_\Omega \left\{\wh{\sigma}^2_1(\bs x) \right\}^{-1}\xi_{1n}^2(\bs x)q_1(\bs x){\rm d}\bs x
\\
&& + t \frac{n_2}{n} \int_\Omega \left\{\wh{\sigma}^2_0(\bs x) \right\}^{-1}\{\xi_{1n}(\bs x) + \xi_{0n}(\bs x) \}^2q_0(\bs x){\rm d}\bs x + o_P(1)
\\
& = & t \bigg\{\varrho \int_\Omega \left\{\wh{\sigma}^2_1(\bs x) \right\}^{-1}\xi_{1n}^2(\bs x)q_1(\bs x ){\rm d}\bs x
\\
&& + (1 - \varrho) \int_\Omega\left\{\wh{\sigma}^2_0(\bs x) \right\}^{-1}\{\xi_{1n}(\bs x) + \xi_{0n}(\bs x) \}^2q_0(\bs x){\rm d}\bs x \bigg\}+ o_P(1).
\ese
By Lemma \ref{lemma3}, Conditions \ref{con3} and \ref{con5},
\bse
&&\varrho \int \left\{ \wh{\sigma}^2_{1}(\bs x) \right\}^{-1}\xi_{1n}^2(\bs x)q_1(\bs x ){\rm d}\bs x + (1 - \varrho) \int \left\{ \wh{\sigma}^2_{0}(\bs x) \right\}^{-1}\{\xi_{1n}(\bs x) + \xi_{0n}(\bs x) \}^2q_0(\bs x){\rm d}\bs x
\\
&\ge & c_6 \left\{ V_1(\xi_{1n} , \xi_{1n}) + V_0(\xi_{0n} ,\xi_{0n} ) \right\}.
\ese
Immediately, with probability tending to one,
\bse
\dot{H}_n(t)  \ge  t c_6 \left\{ V_1(\xi_{1n} , \xi_{1n}) + V_0(\xi_{0n} ,\xi_{0n} ) \right\} \ge  c_6c_1^{-1}t >  0
\ese
for $t > 0 $. Similarly, we can also obtain that $\dot{H}_n(t) < 0$ for $t < 0 $. In conclusion,
$
t \dot{H}_n(t) > 0$ for $t \neq 0$ with probability tending to one.
By the arbitrariness of $t$, we conclude that
\bse
&&V_1(\wh{\tau}_{n} - \tau_n, \wh{\tau}_{n} - \tau_n ) = o_P(1), \quad V_0(\wh{\lambda}_{n} - \lambda_n, \wh{\lambda}_{n} - \lambda_n ) = o_P(1),
\\
&&J_1(\wh{\tau}_{n} - \tau_n, \wh{\tau}_{n} - \tau_n ) = o_P(1), \quad J_0(\wh{\lambda}_{n} - \lambda_n, \wh{\lambda}_{n} - \lambda_n ) = o_P(1),
\ese
which proves Lemma \ref{lemma9}.
\end{proof}

\subsection*{Appendix E: Proofs}
\subsubsection*{E.1 Proof of Theorem 1}
\begin{proof}
  It follows from Lemma \ref{lemma9} and equation (\ref{th1_eq0}) that
\bse
V_1(\wh{\tau}_{n} - \tau_0,  \wh{\tau}_{n} - \tau_0 )
&\le & 2\big\{V_1(\wh{\tau}_{n} - \tau_n, \wh{\tau}_{n} - \tau_n ) +
V_1({\tau}_{n} - \tau_0, {\tau}_{n} - \tau_0 ) \big\}
\\
& = & o_P(1) + O(n^{-2\kappa_1})
\\
& = & o_P(1),
\\
J_1(\wh{\tau}_{n} - \tau_0,  \wh{\tau}_{n} - \tau_0 )
&\le & 2\big\{J_1(\wh{\tau}_{n} - \tau_n, \wh{\tau}_{n} - \tau_n ) +
J_1({\tau}_{n} - \tau_0, {\tau}_{n} - \tau_0 ) \big\}
\\
& = & o_P(1) + O(n^{-2\kappa'_1}).
%\\
%& = & O(1).
\ese
Similarly, we can also get
\bse
V_0(\wh{\lambda}_{n} - \lambda_0,  \wh{\lambda}_{n} - \lambda_0 ) = o_P(1), \quad
J_0(\wh{\lambda}_{n} - \lambda_0,  \wh{\lambda}_{n} - \lambda_0 ) = o_P(1) + O(n^{-2\kappa'_0}).
\ese
Immediately,
\bse
\| \wh{\tau}_n - \tau_0 \|_{\ch_1} &= &\bigg\{V_1(\wh{\tau}_{n} - \tau_0,  \wh{\tau}_{n} - \tau_0 ) + \gamma_1 J_1(\wh{\tau}_{n} - \tau_0,  \wh{\tau}_{n} - \tau_0 ) \bigg\}^{1/2} =  o_P(1),
\\
\left\| \wh{\lambda}_n - \lambda_0 \right\|_{\ch_0} & = & \bigg\{V_0(\wh{\lambda}_{n} - \lambda_0,  \wh{\lambda}_{n} - \lambda_0 ) + \gamma_0 J_0(\wh{\lambda}_{n} - \lambda_0,  \wh{\lambda}_{n} - \lambda_0 ) \bigg\}^{1/2} =  o_P(1).
\ese
Thus, we conclude Theorem 1.
\end{proof}

\subsubsection*{E.2 Proof of Theorem 2}
\begin{proof}
  Taylor's expansion for $\wh{\ell}_{n, \gamma_1, \gamma_0}(\wh{\tau}_{n}, \wh{\lambda}_{n})$ around $(\tau_n, \lambda_n)$ yields
\bse
&&\wh{\ell}_{n, \gamma_1, \gamma_0}(\wh{\tau}_{n}, \wh{\lambda}_{n}) - \wh{\ell}_{n,\gamma_1, \gamma_0}(\tau_n, {\lambda}_{n})
\\
&&=
\dot{\wh{\ell}}_{n, \gamma_1, \gamma_0}(\tau_n, {\lambda}_{n})[(\wh{\tau}_{n} - \tau_n, \wh{\lambda}_{n} - \lambda_n)]
\\
&&+ \frac{1}{2}\ddot{\wh{\ell}}_{n, \gamma_1, \gamma_0}(\tau_n, {\lambda}_n)[(\wh{\tau}_{n} - \tau_n, \wh{\lambda}_{n} - \lambda_n)][(\wh{\tau}_{n} - \tau_n, \wh{\lambda}_{n} - \lambda_n)]
\\
&& = I^*_{n1} + I^*_{n2},
\ese
where
\bse
I^*_{n1} & = & \dot{\wh{\ell}}_{n, \gamma_1, \gamma_0}(\tau_n, {\lambda}_{n})[(\wh{\tau}_{n} - \tau_n, \wh{\lambda}_{n} - \lambda_n)]
\\
& = &
-\frac{1}{n} \sum_{i = 1}^n \left\{\wh{\sigma}^2(\bs X_i, S_i)\right\}^{-1} \left\{ \wh{D}_i - \tau_n(\bs X_i) - (1 - S_i){\lambda}_{n}(\bs X_i) \right\}
\\
&&\times \left[ \{\wh{\tau}_{n}(\bs X_i) - \tau_n(\bs X_i)\} + (1 - S_i)
\left\{ \wh{\lambda}_{n}(\bs X_i) - \lambda_n(\bs X_i) \right\} \right]
\\
&&+ \gamma_1 J_1(\tau_n, \wh{\tau}_{n} - \tau_n)
+ \gamma_0 J_0(\lambda_n, \wh{\lambda}_{n} - \lambda_n),
\ese
and
\bse
I^*_{n2} & = & \frac{1}{2}\ddot{\wh{\ell}}_{n, \gamma_1, \gamma_0}(\tau_n, {\lambda}_n)[(\wh{\tau}_{n} - \tau_n, \wh{\lambda}_{n} - \lambda_n)][(\wh{\tau}_{n} - \tau_n, \wh{\lambda}_{n} - \lambda_n)]
\\
& = &
\frac{1}{2n} \sum_{i = 1}^n \left\{\wh{\sigma}^2(\bs X_i, S_i)\right\}^{-1}\left[ \{\wh{\tau}_{n}(\bs X_i) - \tau_n(\bs X_i)\} + (1 - S_i)
\left\{ \wh{\lambda}_{n}(\bs X_i) - \lambda_n(\bs X_i) \right\} \right]^2
\\
&& + \frac{\gamma_1}{2} J_1(\wh{\tau}_{n} - \tau_n, \wh{\tau}_{n} - \tau_n)
+ \frac{\gamma_0}{2} J_0(\wh{\lambda}_{n} - \lambda_n, \wh{\lambda}_{n} - \lambda_n).
\ese
We first consider $I^*_{n1}$. It follows that
\bsee\label{An1}
-I^*_{n1} &=& \frac{n_1}{n}\mb P_{n_1}(\wh{\sigma}_1^2)^{-1}(\wh{D} - D)(\wh{\tau}_n - \tau_n)
+
\frac{n_1}{n}\mb P_{n_1}(\wh{\sigma}_1^2)^{-1}({D} - \tau_0)(\wh{\tau}_n - \tau_n)
\nonumber
\\
&& + \frac{n_0}{n}\mb P_{n_0}(\wh{\sigma}_0^2)^{-1}({D} - \tau_0 - \lambda_0)(\wh{\tau}_n - \tau_n)
+ \frac{n_0}{n}\mb P_{n_0}(\wh{\sigma}_0^2)^{-1}({D} - \tau_0 - \lambda_0)(\wh{\lambda}_n - \lambda_n)
\nonumber
\\
&&+ \frac{n_1}{n}\mb P_{n_1}(\wh{\sigma}_1^2)^{-1}(\tau_0 - \tau_n)(\wh{\tau}_n - \tau_n)
+ \frac{n_0}{n}\mb P_{n_0}(\wh{\sigma}_0^2)^{-1}(\tau_0 - \tau_n)(\wh{\tau}_n - \tau_n)
\nonumber
\\
&& + \frac{n_0}{n}\mb P_{n_0}(\wh{\sigma}_0^2)^{-1}(\tau_0 - \tau_n)(\wh{\lambda}_n - \lambda_n) + \frac{n_0}{n}\mb P_{n_0}(\wh{\sigma}_0^2)^{-1}(\lambda_0 - \lambda_n)(\wh{\tau}_n - \tau_n)
\nonumber
\\
&&
+\frac{n_0}{n}\mb P_{n_0}(\wh{\sigma}_0^2)^{-1}(\lambda_0 - \lambda_n)(\wh{\lambda}_n - \lambda_n)
- \gamma_1J_1(\tau_n, \wh{\tau}_{n} - \tau_n) - \gamma_0J_0(\lambda_n, \wh{\lambda}_{n} - \lambda_n)
 \nonumber
\\
& = & I^*_{n1,1} + I^*_{n1,2} +  I^*_{n1,3} + I^*_{n1,4} + I^*_{n1,5} + I^*_{n1,6}
+ I^*_{n1,7} + I^*_{n1,8} + I^*_{n1,9}
\nonumber
 \\
&&- \gamma_1J_1(\tau_n, \wh{\tau}_{n} - \tau_n) - \gamma_0J_0(\lambda_n, \wh{\lambda}_{n} - \lambda_n),
\esee
where $I^*_{n1, 1}, \ldots, I^*_{n1, 9}$ are clear from the expression.
Following equation (\ref{th1_eq0}), it is clear that $
J_1(\tau_n, \tau_n) \le O(1)$, combined with Lemma \ref{lemma7}, deduce that
$
\|W_{\gamma_1}\tau_n\|_{\ch_1} = o(\gamma_1^{1/2})
$.
Consequently,
\bsee\label{lemma9_eq1}
|\gamma_1J_1(\tau_n, \wh{\tau}_{n} - \tau_n)| = \big|\nl W_{\gamma_1}\tau_n, \wh{\tau}_{n} - \tau_n \nr_{\ch_1}\big|
&\le&
\| W_{\gamma_1}\tau_n \|_{\ch_1}\|  \wh{\tau}_{n} - \tau_n\|_{\ch_1}
\nonumber
\\
&= & o(\gamma_1^{1/2})\| \wh{\tau}_{n} - \tau_n\|_{\ch_1}.
\esee
Similarly,
\bsee\label{lemma9_eq1a}
\left|\gamma_0J_0(\lambda_n, \wh{\lambda}_{n} - \lambda_n) \right| = o(\gamma_0^{1/2})\left\|\wh{\lambda}_{n} - \lambda_n\right\|_{\ch_0}.
\esee
For $I^*_{n1, 1}$,
\bse
|I^*_{n1, 1}|
&\le& \left| \mb P_{n_1}(\wh{\sigma}_1^2)^{-1}(\wh{D} - D)(\wh{\tau}_n - \tau_n) \right|
\\
& \le &  \left|(\mb P_{n_1} - \mb P_1)(\wh{\sigma}_1^2)^{-1}(\wh{D} - D)(\wh{\tau}_n - \tau_n)\right| + \left|\mb P_{1}(\wh{\sigma}_1^2)^{-1}(\wh{D} - D)(\wh{\tau}_n - \tau_n)\right|.
\ese
Using a similar argument as we did in the proof of Lemma \ref{lemma8}, we have
\bse
\left|(\mb P_{n_1} - \mb P_1)(\wh{\sigma}_1^2)^{-1}(\wh{D} - D)(\wh{\tau}_n - \tau_n)\right|
& = & \left| \left\nl (\mb P_{n_1} - \mb P_1)(\wh{\sigma}_1^2)^{-1}(\wh{D} - D)K_1, \wh{\tau}_{n} - \tau_n \right\nr_{\ch_1} \right|
\\
& \le & \left\| (\mb P_{n_1} - \mb P_1)(\wh{\sigma}_1^2)^{-1}(\wh{D} - D)K_1 \right\|_{\ch_1} \cdot \| \wh{\tau}_{n} - \tau_n \|_{\ch_1}
\\
& = & O_P\left(n_1^{-1/2}\gamma_1^{-p/(4m_1)}\right)
\|\wh{\tau}_{n} - \tau_n \|_{\ch_1}
\\
& = & O_P\left(n^{-1/2}\gamma_1^{-p/(4m_1)}\right)
\|\wh{\tau}_{n} - \tau_n \|_{\ch_1}.
\ese
In the following, we calculate $\mb P_{1}(\wh{\sigma}_1^2)^{-1}(\wh{D} - D)(\wh{\tau}_n - \tau_n)$.
Under Assumptions 1--2 in the main paper, Conditions \ref{con3}--\ref{con5} ans \ref{con7},
by Lemma \ref{lemma2}, H\"{o}lder inequality and Minkowski inequality, we have
\bse
&&\left|\mb P_1 (\wh{\sigma}_1^2)^{-1} (\wh{D} - D)(\wh{\tau}_n - \tau_n) \right|
\\
&= &\left|E_r \left[ \{\wh{\sigma}_1^2(\bs X )\}^{-1} \{\wh{\tau}_n(\bs X) - \tau_n(\bs X) \} E_r \left\{(\wh{D} - D) \,\big|\, \bs X, S = 1 \right\} \right] \right|.
\\
& = &\bigg|E_r \big[ \{\wh{\sigma}_1^2(\bs X)\}^{-1} \{\wh{\tau}_n(\bs X) - \tau_n(\bs X) \} \left\{\Gamma_1(\bs X ) + \Gamma_2(\bs X) \right\} \mid S = 1 \big] \bigg|
\\
&\le & O(1) \big( E_r \big[ \{\wh{\tau}_n(\bs X) - \tau_n(\bs X)\}^2 \mid S = 1 \big]\big)^{1/2} \left(
\big[ E_r \{\Gamma_1(\bs X)\}^2 \big]^{1/2} + \big[ E_r \{\Gamma_2(\bs X)\}^2 \big]^{1/2}
\right)
\\
&\le & O(1) \| \wh{\tau}_n - \tau_n\|_{\ch_1} \left( \sum_{a = 0 }^1 \| \Theta_a \|_2
+ \| \wh{e} -e \|_2 \times \sum_{a = 0 }^1 \| \wh{\mu}_a - \mu_a \|_2 \right)
\\
&= & O_P\left(\delta_n \right)\| \wh{\tau}_n - \tau_n\|_{\ch_1},
\ese
where $\delta_n = n^{-1/2}\gamma_1^{-p/(4m_1)} + n^{-1/2}\gamma_0^{-p/(4m_0)} + \gamma_1^{1/2} + \gamma_0^{1/2}$.
Therefore,
\bsee\label{An10}
I^*_{n1,1} & = & \left\{O_P\left(n^{-1/2}\gamma_1^{-p/(4m_1)}\right) + O_P\left(\delta_n  \right)\right\} \|\wh{\tau}_{n} - \tau_n \|_{\ch_1}
\nonumber
\\
& = &O_P\left(\delta_n \right)\|\wh{\tau}_{n} - \tau_n \|_{\ch_1}.
\esee
A similar argument as in  (\ref{An10}) is used for $I^*_{n1, 2}, I^*_{n1, 3}$, and $I^*_{n1,4}$, we can get the similar results as follows.
\bsee\label{An11}
|I^*_{n1, 2}|  & = &    O_P\left(\delta_n \right)\| \wh{\tau}_{n} - \tau_n \|_{\ch_1},\quad
|I^*_{n1, 3}|  =  O_P\left(\delta_n \right)\| \wh{\tau}_{n} - \tau_n \|_{\ch_1},
\nonumber
\\
|I^*_{n1, 4}|  &= & O_P\left(\delta_n \right)\| \wh{\lambda}_{n} - \lambda_n \|_{\ch_0}.
\esee

We now consider $I^*_{n1,5}$. Mimicking the proof of Lemma \ref{lemma8}, we have
\bse
\frac{n}{n_1}\left|I^*_{n1, 5}- E_r( I^*_{n1,5} ) \right| & = & \left|(\mb P_{n_1} - \mb P_1)(\wh{\sigma}_1^2 )^{-1}(\tau_n - \tau_0)(\wh{\tau}_n - \tau_n)\right|
\\
& = & | \left\nl (\mb P_{n_1} - \mb P_1)(\wh{\sigma}_1^2)^{-1}(\tau_n - \tau_0)K_1, \wh{\tau}_{n} - \tau_n \right\nr_{\ch_1} |
\\
& \le & \| (\mb P_{n_1} - \mb P_1)(\wh{\sigma}_1^2)^{-1}(\tau_n - \tau_0)K_1 \|_{\ch_1} \cdot \| \wh{\tau}_{n} - \tau_n \|_{\ch_1}
\\
& \le & O_P\left(n_1^{-1/2}\gamma_1^{-p/(4m_1)} \| \tau_n - \tau_0 \|_{\infty}^{1 - p/(2m_1)} \right)
\|\wh{\tau}_{n} - \tau_n \|_{\ch_1}
\\
&\le& O_P\left(n^{-(1/2 +  \kappa_1(2m_1 - p )/(2m_1))}\gamma_1^{-p/(4m_1)} \right)
\|\wh{\tau}_{n} - \tau_n \|_{\ch_1}.
\ese
By equation (\ref{th1_eq0}) and Condition \ref{con3},
\bse
\frac{n}{n_1}|E_r (I^*_{n1, 5})| &=& \left|\int_\Omega \{\wh{\sigma}_1^2(\bs x)\}^{-1}\{\tau_n(\bs x) - \tau_0(\bs x) \}\left\{ \wh{\tau}_{n}(\bs x) - \tau_n(\bs x)\right\}q_1(\bs x) {\rm d}\bs x \right|
\\
&\le & O(1) \| \tau_n - \tau_0 \|_{\infty} \| \wh{\tau}_{n} - \tau_n \|_{\ch_1}
\\
&\le &O\left(n^{-\kappa_1} \right)\| \wh{\tau}_{n} - \tau_n \|_{\ch_1}.
\ese
Therefore,
\bsee\label{An15}
| I^*_{n1, 5} | &\le& |I^*_{n1, 5} - E_r (I^*_{n1,5})| + |E_r (I^*_{n1,5})|
\nonumber
\\
& = &O_P\left(n^{-(1/2  + \kappa_1(2m_1 - p )/(2m_1))}\gamma_1^{-p/(4m_1)} + n^{-\kappa_1} \right)\| \wh{\tau}_{n} - \tau_n \|_{\ch_1}
\nonumber
\\
& = & \left\{ o_P\left( n^{-1/2}\gamma_1^{-p/(4m_1)} \right) + O_P\left( n^{-\kappa_1} \right) \right\}\| \wh{\tau}_{n} - \tau_n \|_{\ch_1}.
\esee
Using an argument  similar to that in (\ref{An15}) for $I^*_{n1,6},\ldots ,I^*_{n1,9}$, we get similar results as follows.
\bsee\label{An16}
I^*_{n1, 6} &= & \left\{ o_P\left( n^{-1/2}\gamma_0^{-p/(4m_0)} \right) + O_P\left( n^{-\kappa_1} \right) \right\}\| \wh{\tau}_{n} - \tau_n \|_{\ch_1},
\nonumber
\\
I^*_{n1, 7} &=& \left\{ o_P\left( n^{-1/2}\gamma_0^{-p/(4m_0)} \right) + O_P\left( n^{-\kappa_1} \right) \right\}\left\| \wh{\lambda}_{n} - \lambda_n \right\|_{\ch_0},
\nonumber
\\
I^*_{n1, 8} &= & \left\{ o_P\left( n^{-1/2}\gamma_0^{-p/(4m_0)} \right) + O_P\left( n^{-\kappa_0} \right) \right\}\| \wh{\tau}_{n} - \tau_n \|_{\ch_1},
\nonumber
\\
I^*_{n1, 9} &= & \left\{ o_P\left( n^{-1/2}\gamma_0^{-p/(4m_0)} \right) + O_P\left( n^{-\kappa_0} \right) \right\}\left\| \wh{\lambda}_{n} - \lambda_n \right\|_{\ch_0}.
\esee

Submitting (\ref{An10})--(\ref{An16}) into (\ref{An1}) and using Condition \ref{con9}, we have
\bsee\label{th2_eq1}
|I^*_{n1}| & \le &  O_P\left(\delta_n + n^{-\kappa_1} + n^{-\kappa_0}\right)
  \left(\| \wh{\tau}_{n} - \tau_n \|_{\ch_1} + \left\| \wh{\lambda}_{n} - \lambda_n \right\|_{\ch_0} \right)
  \nonumber
\\
&\le & O_P\left(\delta_n \right) \left(\| \wh{\tau}_{n} - \tau_n \|_{\ch_1} + \left\| \wh{\lambda}_{n} - \lambda_n \right\|_{\ch_0} \right).
\esee

We now deal with $I^*_{n2}$.
\bsee\label{th2_eq2}
&&I^*_{n2}
\nonumber
\\
& = & \frac{n_1}{n} \mb P_{n_1}(\wh{\sigma}^2_1)^{-1}(\wh{\tau}_{n} - \tau_n)^2
+ \frac{n_0}{n} \mb P_{n_0}(\wh{\sigma}^2_0)^{-1}(\wh{\tau}_{n} - \tau_n)^2
+ \frac{n_0}{n} \mb P_{n_0}(\wh{\sigma}^2_0)^{-1}(\wh{\lambda}_{n} - \lambda_n)^2
\nonumber
\\
&& + 2\frac{n_0}{n} \mb P_{n_0}(\wh{\sigma}^2_0)^{-1}(\wh{\tau}_{n} - \tau_n)(\wh{\lambda}_{n} - \lambda_n)
+
\gamma_1 J_1(\wh{\tau}_{n} - \tau_n, \wh{\tau}_{n} - \tau_n)
\nonumber
\\
&& + \gamma_0 J_0(\wh{\lambda}_{n} - \lambda_n, \wh{\lambda}_{n} - \lambda_n).
\esee
Following Lemma \ref{lemma8} and Condition \ref{con8}, we have
\bse
&&\left|(\mb P_{n_1} - \mb P_1)(\wh{\sigma}^2_1)^{-1}(\wh{\tau}_{n} - \tau_n)^2\right|
\\
& \le & \sup_{(\wh{\sigma}^2_1)^{-1} \in {\ms A}, f,\wt{f}\in{\ms G}_1} \left| (\mb P_{n_1} - \mb P_1)(\wh{\sigma}^2_1)^{-1} f\wt{f}\right|
\| \wh{\tau}_{n} - \tau_n \|_{\ch_1}^2.
\\
& = &  O_P\left(n^{-1/2}\left\{\gamma_1^{(-2m_1\upsilon - 4p + p\upsilon)/(8m_1)} + \gamma_1^{-(6m_1 - p)p/(8m_1^2)} \right\} \right)\| \wh{\tau}_{n} - \tau_n \|_{\ch_1}^2
\\
& = & o_P(1) \| \wh{\tau}_{n} - \tau_n \|_{\ch_1}^2,
\ese
which implies that
\bsee\label{th2_eq2a}
\mb P_{n_1}(\wh{\sigma}^2_1)^{-1}(\wh{\tau}_{n} - \tau_n)^2 & = &
\mb P_1(\wh{\sigma}^2_1)^{-1}(\wh{\tau}_{n} - \tau_n)^2 + o_P(1) \| \wh{\tau}_{n} - \tau_n \|_{\ch_1}^2.
\esee
Employing arguments analogous to equation (\ref{th2_eq2a}), we conclude
\bsee\label{th2_eq2b}
\mb P_{n_0}(\wh{\sigma}^2_{0})^{-1}(\wh{\tau}_{n} - \tau_n)^2 & = & \mb P_0(\wh{\sigma}^2_{0})^{-1}(\wh{\tau}_{n} - \tau_n)^2 + o_P(1) \| \wh{\tau}_{n} - \tau_n \|_{\ch_1}^2,
\nonumber
\\
\mb P_{n_0}(\wh{\sigma}^2_{0})^{-1}(\wh{\lambda}_{n} - \lambda_n)^2 & = &
\mb P_0(\wh{\sigma}^2_{0})^{-1}(\wh{\lambda}_{n} - \lambda_n)^2 + o_P(1)\| \wh{\lambda}_{n} - \lambda_n\|_{\ch_0}^2.
\esee

Now we consider $\mb P_{n_0}(\wh{\sigma}^2_{0})^{-1}(\wh{\tau}_{n} - \tau_n)(\wh{\lambda}_{n} - \lambda_n)$.
Define the set ${\cal A } = \{ V_0(\wh{\lambda}_{n} - \lambda_n, \wh{\lambda}_{n} - \lambda_n) + J_0(\wh{\lambda}_{n} - \lambda_n, \wh{\lambda}_{n} - \lambda_n) \le 1 \}$.
For every $\epsilon > 0$, according to Lemma \ref{lemma9}, there exists some $N(\epsilon)$ such that $P({\cal A}) > 1 - \epsilon / 2$ for all $n > N(\epsilon)$.
Mimicking the proof of (\ref{An15}), we further have
\bse
&& P \left( \sup_{(\wh{\sigma}^2_{0})^{-1} \in {\ms A}, V_0(\wh{\lambda}_{n} - \lambda_n, \wh{\lambda}_{n} - \lambda_n) + J_0(\wh{\lambda}_{n} - \lambda_n, \wh{\lambda}_{n} - \lambda_n) \le 1} \left\| (\mb P_{n_0} - \mb P_0 )(\wh{\sigma}^2_{0})^{-1}(\wh{\lambda}_{n} - \lambda_n)K_1 \right\|_{\ch_1} > t \right)
\\
& \le & c_1 \exp(-c_2 n_0\gamma_1^{p / (2m_1)} t^2 ),
\ese
which implies that
\bse
&& P \left( \sup_{(\wh{\sigma}^2_{0})^{-1} \in {\ms A}, \wh{\lambda}_n, \lambda_n} \left\| (\mb P_{n_0} - \mb P_0 )(\wh{\sigma}^2_{0})^{-1}(\wh{\lambda}_{n} - \lambda_n)K_1 \right\|_{\ch_1} > t \right)
\\
&\le& P \left( \sup_{(\wh{\sigma}^2_{0})^{-1} \in {\ms A}, \wh{\lambda}_n, \lambda_n } \left\| (\mb P_{n_0} - \mb P_0 )(\wh{\sigma}^2_{0})^{-1}(\wh{\lambda}_{n} - \lambda_n)K_1 \right\|_{\ch_1} > t \, \bigg  | \, {\cal A} \right) + P({\cal A}^C)
\\
& \le & P \left( \sup_{(\wh{\sigma}^2_{0})^{-1} \in {\ms A}, V_0(\wh{\lambda}_{n} - \lambda_n, \wh{\lambda}_{n} - \lambda_n) + J_0(\wh{\lambda}_{n} - \lambda_n, \wh{\lambda}_{n} - \lambda_n) \le 1} \left\| (\mb P_{n_0} - \mb P_0 )(\wh{\sigma}^2_{0})^{-1}(\wh{\lambda}_{n} - \lambda_n)K_1 \right\|_{\ch_1} > t \right)
\\
&& + P({\cal A}^C)
\\
&\le & c_1 \exp(-c_2 n_0\gamma_1^{p / (2m_1)} t^2 ) + \epsilon / 2
\\
&\le & \epsilon
\ese
by taking $t = \{ c_2^{-1} \log (2c_1 / \epsilon) n_0^{-1} \gamma_1^{-p/(2m_1)} \}^{1/2} $.
Therefore,
\bse
\sup_{(\wh{\sigma}^2_{0})^{-1} \in {\ms A}, \wh{\lambda}_n, \lambda_n} \left\| (\mb P_{n_0} - \mb P_0 )(\wh{\sigma}^2_{0})^{-1}(\wh{\lambda}_{n} - \lambda_n)K_1 \right\|_{\ch_1} = O_P\left(n_0^{-1/2}\gamma_1^{-p / (4m_1)} \right) = O_P\left(n^{-1/2}\gamma_1^{-p / (4m_1)} \right),
\ese
furthermore,
\bse
\left| (\mb P_{n_0} - \mb P_0 )(\wh{\sigma}^2_{0})^{-1}(\wh{\tau}_{n} - \tau_n)(\wh{\lambda}_{n} - \lambda_n) \right|
&\le &
\left\| (\mb P_{n_0} - \mb P_0 )(\wh{\sigma}^2_{0})^{-1}(\wh{\lambda}_{n} - \lambda_n)K_1 \right\|_{\ch_1} \| \wh{\tau}_{n} - \tau_n \|_{\ch_1}
\\
& = & O_P\left(n^{-1/2}\gamma_1^{-p / (4m_1)} \right)\| \wh{\tau}_{n} - \tau_n \|_{\ch_1}.
\ese
Consequently,
\bsee\label{th2_eq2c}
&& \mb P_{n_0}(\wh{\sigma}^2_{0})^{-1}(\wh{\tau}_{n} - \tau_n)(\wh{\lambda}_{n} - \lambda_n)
\nonumber
\\
& = & \mb P_0(\wh{\sigma}^2_{0})^{-1}(\wh{\tau}_{n} - \tau_n)(\wh{\lambda}_{n} - \lambda_n)
+ O_P\left(n^{-1/2}\gamma_1^{-p/(4m_1)}\right)\|\wh{\tau}_{n} - \tau_n \|_{\ch_1}.
\esee
Plugging equations (\ref{th2_eq2a})--(\ref{th2_eq2c}) into equation (\ref{th2_eq2}), we conclude that
\bse
I^*_{n2} & = & \varrho \mb P_1(\wh{\sigma}^2_{1})^{-1}(\wh{\tau}_{n} - \tau_n)^2 + (1 - \varrho)
\mb P_0(\wh{\sigma}^2_{0})^{-1}(\wh{\tau}_{n} - \tau_n)^2 + (1 - \varrho)\mb P_0(\wh{\sigma}^2_{0})^{-1}(\wh{\lambda}_{n} - \lambda_n)^2
\\
&&+ 2(1 - \varrho) \mb P_0(\wh{\sigma}^2_{0})^{-1}(\wh{\tau}_{n} - \tau_n)(\wh{\lambda}_{n} - \lambda_n)
+\gamma_1 J_1(\wh{\tau}_{n} - \tau_n, \wh{\tau}_{n} - \tau_n)
+ \gamma_0 J_0(\wh{\lambda}_{n} - \lambda_n, \wh{\lambda}_{n} - \lambda_n)
\\
&&+ o_P(1) \| \wh{\tau}_{n} - \tau_n \|_{\ch_1}^2 +
o_P(1)\| \wh{\lambda}_{n} - \lambda_n\|_{\ch_0}^2
+ O_P\left(n^{-1/2}\gamma_1^{-p/(4m_1)}\right)\|\wh{\tau}_{n} - \tau_n \|_{\ch_1}.
\ese
It follows from Lemma \ref{lemma3}, Conditions \ref{con3} and \ref{con5} that
\bse
&& \varrho \mb P_1(\wh{\sigma}^2_{1})^{-1}(\wh{\tau}_{n} - \tau_n)^2 + (1 - \varrho)\mb P_0(\wh{\sigma}^2_{0})^{-1}(\wh{\tau}_{n} - \tau_n)^2 + (1 - \varrho)\mb P_0(\wh{\sigma}^2_{0})^{-1}(\wh{\lambda}_{n} - \lambda_n)^2
\\
&&+ 2(1 - \varrho)\mb P_0(\wh{\sigma}^2_{0})^{-1}(\wh{\tau}_{n} - \tau_n)(\wh{\lambda}_{n} - \lambda_n)
+\gamma_1 J_1(\wh{\tau}_{n} - \tau_n, \wh{\tau}_{n} - \tau_n)
+ \gamma_0 J_0(\wh{\lambda}_{n} - \lambda_n, \wh{\lambda}_{n} - \lambda_n)
\\
& = & \varrho \mb P_1(\wh{\sigma}^2_{1})^{-1}(\wh{\tau}_{n} - \tau_n)^2
+ (1 - \varrho)  \mb P_0 (\wh{\sigma}^2_{0})^{-1}(\wh{\tau}_{n} - \tau_n + \wh{\lambda}_{n} - \lambda_n )^2
\\
&& + \gamma_1 J_1(\wh{\tau}_{n} - \tau_n, \wh{\tau}_{n} - \tau_n)
+ \gamma_0 J_0(\wh{\lambda}_{n} - \lambda_n, \wh{\lambda}_{n} - \lambda_n)
\\
&\ge & c_3\left\{V_1( \wh{\tau}_{n} - \tau_n , \wh{\tau}_{n} - \tau_n ) +  V_0( \wh{\lambda}_{n} - \lambda_n ,  \wh{\lambda}_{n} - \lambda_n ) \right\}
\\
&& + \gamma_1 J_1(\wh{\tau}_{n} - \tau_n, \wh{\tau}_{n} - \tau_n)
+ \gamma_0 J_0(\wh{\lambda}_{n} - \lambda_n, \wh{\lambda}_{n} - \lambda_n)
\\
& \ge & c_4 \bigg\{V_1( \wh{\tau}_{n} - \tau_n , \wh{\tau}_{n} - \tau_n ) +  V_0( \wh{\lambda}_{n} - \lambda_n ,  \wh{\lambda}_{n} - \lambda_n )
\\
&& + \gamma_1 J_1(\wh{\tau}_{n} - \tau_n, \wh{\tau}_{n} - \tau_n)
+ \gamma_0 J_0(\wh{\lambda}_{n} - \lambda_n, \wh{\lambda}_{n} - \lambda_n) \bigg\}
\\
& = & c_4 \left\{ \|\wh{\tau}_{n} - \tau_n \|_{\ch_1}^2 +  \| \wh{\lambda}_{n} - \lambda_n \|_{\ch_0}^2\right\}
\ese
for some $0 < c_4 < 1$.
Therefore,
\bse
I^*_{n2} \ge c_5 \left\{ \|\wh{\tau}_{n} - \tau_n \|_{\ch_1}^2 +  \| \wh{\lambda}_{n} - \lambda_n \|_{\ch_0}^2\right\} + O_P\left(n^{-1/2}\gamma_1^{-p/(4m_1)}\right)\|\wh{\tau}_{n} - \tau_n \|_{\ch_1}.
\ese
Recall that $(\wh{\tau}_{n}, \wh{\lambda}_{n})$ is the minimizer of $\wh{\ell}_{n,\gamma_1, \gamma_0}(\tau, \lambda)$ over $\Phi_n \times \Psi_n$, thus $\wh{\ell}_{n, \gamma_1, \gamma_0}(\wh{\tau}_{n}, \wh{\lambda}_{n}) - \wh{\ell}_{n,\gamma_1, \gamma_0}(\tau_n, {\lambda}_n) \le 0 $, which implies that $I^*_{n2} \le |I^*_{n1}|$, that is,
\bse
&&c_5 \left\{ \|\wh{\tau}_{n } - \tau_n \|_{\ch_1}^2 +  \| \wh{\lambda}_{n} - \lambda_n \|_{\ch_0}^2\right\} + O_P\left(n^{-1/2}\gamma_1^{-p/(4m_1)}\right)\|\wh{\tau}_{n} - \tau_n \|_{\ch_1}
\nonumber
\\
&\le & O_P\left(\delta_n \right)
\left(\| \wh{\tau}_{n} - \tau_n \|_{\ch_1} + \| \wh{\lambda}_{n} - \lambda_n \|_{\ch_0} \right),
\ese
which leads to
\bse
\|\wh{\tau}_{n} - \tau_n \|_{\ch_1}  =  O_P\left(\delta_n \right), \quad
\| \wh{\lambda}_{n} - \lambda_n \|_{\ch_0}  =  O_P\left(\delta_n\right).
\ese
Equation (\ref{th1_eq0}) and some routine calculations entail that
\bse
\| \tau_n - \tau_0 \|_{\ch_1} = O(n^{-\kappa_1} + \gamma_1^{1/2}n^{-\kappa'_1}), \quad
\| \lambda_n - \lambda_0 \|_{\ch_0} = O(n^{-\kappa_0} + \gamma_0^{1/2}n^{-\kappa'_0}).
\ese
Therefore, the conclusion of Theorem 2 is deduced by
\bse
\|\wh{\tau}_{n} - \tau_0 \|_{\ch_1} &\le& \|\wh{\tau}_{n} - \tau_n \|_{\ch_1} + \| \tau_n - \tau_0 \|_{\ch_1}, \\
\| \wh{\lambda}_{n} - \lambda_0 \|_{\ch_0} &\le&
\| \wh{\lambda}_{n} - \lambda_n \|_{\ch_0} + \| \lambda_n - \lambda_0 \|_{\ch_0},
\ese
and Condition \ref{con9}.
\end{proof}

\subsubsection*{E.3 Proof of Theorem 3}
\begin{proof}
  For $\bs h, \wt{\bs h} \in \ch$,
define the linear functional $\bs h : \ch \rightarrow \mb R$ as
$\bs h[\wt{\bs h}] = \langle \bs h , \wt{\bs h} \rangle_{\ch}$.
Set $\bs \Upsilon = (\tau, \lambda), \bs \Upsilon_0 = (\tau_0, \lambda_0),
\wh{\bs \Upsilon}_n = (\wh{\tau}_n, \wh{\lambda}_n)$.
Let
$\wh{\bs U}_n(\bs \Upsilon)$ and $\dot{\wh{\bs U}}_n(\bs \Upsilon)$ denote the first and second order Fr\'{e}chet derivative operates of $\wh{\ell}_{n}(\bs \Upsilon)$
with respective to $\bs \Upsilon$,
$\wh{\bs U}_{n, \gamma_1, \gamma_0}(\bs \Upsilon)$ and $\dot{\wh{\bs U}}_{n, \gamma_1, \gamma_0}(\bs \Upsilon)$ denote the first and second order Fr\'{e}chet derivative operates of $\wh{\ell}_{n, \gamma_1, \gamma_0}(\bs \Upsilon)$
with respective to $\bs \Upsilon$,
and $\bs U(\bs \Upsilon)$ and $\dot{\bs U}(\bs \Upsilon)$ denote the probability limit of $\wh{\bs U}_n(\bs \Upsilon)$ and $\dot{\wh{\bs U}}_n(\bs \Upsilon)$,
respectively.
Then, for $\bs h = (f, g) \in \ch, \wt{\bs h} = (\wt{f}, \wt{g}) \in \ch$, we have
\bse
\wh{\bs U}_n(\bs \Upsilon)[\bs h] &= &
-\frac{1}{n} \sum_{i = 1}^n \left\{\wh{\sigma}^2(\bs X_i, S_i)\right\}^{-1} \left\{ \wh{D}_i - \tau(\bs X_i) - (1 - S_i){\lambda}(\bs X_i) \right\}
\\
&& \times \{f(\bs X_i) + (1 - S_i)g(\bs X_i) \}
\\
\dot{\wh{\bs U}}_n(\bs \Upsilon)[\bs h][\wt{\bs h}] &=&
\frac{1}{n} \sum_{i = 1}^n \left\{\wh{\sigma}^2(\bs X_i, S_i)\right\}^{-1}
\\
&& \times
\{f(\bs X_i) + (1 - S_i)g(\bs X_i) \}\left\{\wt{f}(\bs X_i) + (1 - S_i)\wt{g}(\bs X_i) \right\},
\\
\wh{\bs U}_{n, \gamma_1, \gamma_0}(\bs \Upsilon)[\bs h] &=& \wh{\bs U}_n(\bs \Upsilon)[\bs h] + \bs W_{\bs \gamma}(\bs \Upsilon)[\bs h],
\\
\dot{\wh{\bs U}}_{n, \gamma_1, \gamma_0}(\bs \Upsilon)[\bs h][\wt{\bs h}] &=&
\dot{\wh{\bs U}}_n(\bs \Upsilon)[\bs h][\wt{\bs h}] + \dot{\bs W}_{\bs \gamma}(\bs \Upsilon)[\bs h][\wt{\bs h}],
\ese
where $\bs \gamma = (\gamma_1, \gamma_0)$, $\bs W_{\bs \gamma}(\bs \Upsilon) = (W_{\gamma_1} \tau, W_{\gamma_0}\lambda)$ and
$
\dot{\bs W}_{\bs \gamma}(\bs \Upsilon)[\bs h][\wt{\bs h}] = \gamma_1 J_1(f_\vartheta, \wt{f}_\vartheta) + \gamma_0 J_0(g_\vartheta, \wt{g}_\vartheta)
$.
Given $\bs x_0 \in \Omega$ and $\vartheta > 0 $, define the class of functions
\bse
\ms H_\vartheta  =  \ms H_{1\vartheta} \times \ms H_{0\vartheta}
=\left\{(f, g) \in \ch : \left\|(f, g) - (\gamma_1^{p / (4m_1)}K_{1\bs x_0}, \gamma_0^{p / (4m_0)}K_{0\bs x_0} ) \right\|_{\ch} \le \vartheta \right\}
\ese
By Lemma \ref{lemma6}, $\mb P_1 f_\vartheta^2 \le \| f_\vartheta \|^2_{\ch_1} / \varrho \le ( \vartheta + \wt{c}_1 )^2 / \varrho $ for all $f_\vartheta \in \ms H_{1\vartheta}$,
thus $\ms H_{1\vartheta}$ is $\mb P_1$-Donsker. Similarly, $\ms H_{1\vartheta} + \ms H_{0\vartheta}$ is $\mb P_0$-Donsker.

The proof mainly consists of six steps.
\begin{enumerate}
\item[Step 1.]
Show that
\bsee\label{th3_step1}
\sqrt{n} \wh{\bs U}_n(\bs \Upsilon_0) \rightsquigarrow \bs G \quad \textrm{in } \ell^\infty(\ms H_\vartheta),
\esee
where
$\bs G[\bs h_{\vartheta}]$ is some zero-mean gaussian process with the covariance process
\bse
{\rm Cov}(\bs G[\bs h_{\vartheta}], \bs G[\bs h^\dag_{\vartheta}] ) =
\varrho \mb P_1 \left(\sigma^2_{01}\right)^{-1}f_\vartheta \wt{f}_\vartheta +
(1 - \varrho) \mb P_0 \left(\sigma^2_{00}\right)^{-1}(f_\vartheta + g_\vartheta ) (\wt{f}_\vartheta + \wt{g}_\vartheta ),
\ese
and $\bs h_{\vartheta} = (f_\vartheta, g_\vartheta) \in \ms H_{\vartheta},
\wt{\bs h}_{\vartheta} = (\wt{f}_\vartheta, \wt{g}_\vartheta) \in \ms H_{\vartheta}$.

\item[Step 2.]
Show that
\bsee\label{th3_step2}
\sqrt{n} \dot{\wh{\bs U}}_n(\bs \Upsilon_0)[\wh{\bs \Upsilon}_n - \bs \Upsilon_0] =
\sqrt{n} \dot{\bs U}(\bs \Upsilon_0)[\wh{\bs \Upsilon}_n - \bs \Upsilon_0]
+o_P\left(1 + \sqrt{n}\left\| \wh{\bs \Upsilon}_n - \bs \Upsilon_0 \right\|_\ch \right).
\esee

\item[Step 3.]
Show the invertibility of $\dot{\bs U}_{\bs \gamma}(\bs \Upsilon_0)$,
where $\dot{\bs U}_{\bs \gamma}(\bs \Upsilon_0) = \dot{\bs U}(\bs \Upsilon_0)
+ \dot{\bs W}_{\bs \gamma}(\bs \Upsilon_0)$.

\item[Step 4.]
Show that
\bsee\label{th3_step4}
\wh{\bs U}_{n,\gamma_1, \gamma_0}(\wh{\bs \Upsilon}_n)[\bs h_\vartheta] = o_P(n^{-1/2}),
\esee
holds uniformly for $\bs h_\vartheta = (f_\vartheta, g_\vartheta) \in \ms H_\vartheta$.

\item[Step 5.]
Steps 2--4 yield
\bse
&&\sqrt{n}\dot{\bs U}_{\bs \gamma}({\bs \Upsilon}_0)\left[\wh{\bs \Upsilon}_n - {\bs \Upsilon}_0 + \{{\bs U}_{\bs \gamma}'({\bs \Upsilon}_0)\}^{-1} [ \bs W_{\bs \gamma}(\bs \Upsilon_0)]\right]
\\
& = & -\sqrt{n} \wh{\bs U}_n(\bs \Upsilon_0)
+ o_P\left(1 + \sqrt{n} \left\| \wh{\bs \Upsilon}_n - \bs \Upsilon_0 \right\|_{\ch} \right),
\ese
and further by Step 1,
\bsee\label{th3_eq11a}
\sqrt{n} \left(\wh{\bs \Upsilon}_n - {\bs \Upsilon}_0 + \dot{\bs U}^{-1}_{\bs \gamma}({\bs \Upsilon}_0) [ \bs W_{\bs \gamma}(\bs \Upsilon_0)]\right)
\rightsquigarrow  \dot{\bs U}^{-1}_{\bs \gamma}({\bs \Upsilon}_0) [\bs G]
\quad \textrm{ in } \ell^\infty(\ms H_\vartheta).
\esee

\item[Step 6.]
For $\omega_1 \in \mb R$ and $\omega_0 \in \mb R$,
let $\bs h_{\bs \omega} = (\omega_1 \gamma_1^{p/(4m_1)} K_{1\bs x_0}, \omega_0 \gamma_0^{p/(4m_0)} K_{0\bs x_0})$ and plug it into Step 5, we finally conclude Theorem 4.3.
\end{enumerate}

We first prove Step 1.
According to Condition \ref{con10}, we have
\bse
E_r\left(\wh{D} \mid \bs X , S = 1 \right) - E(D \mid \bs X , S = 1) = o_P(n^{-1/2}) = o_P(1).
\ese
Additionally, by Condition \ref{con14}, $E_r(\wh{D} \mid \bs X , S = 1) - E(D_0 \mid \bs X , S = 1) = o_P(1)$.
Thus, with probability one,
\bsee\label{th3_eq0a}
E({D}_0 \mid \bs X , S = 1) = E( D \mid \bs X , S = 1) = \tau_0(\bs X).
\esee
It follows that
\bsee\label{th3_eq1}
&&-\sqrt{n}\wh{\bs U}_n(\bs \Upsilon_0)[\bs h_{\vartheta}]
\nonumber
\\
& = &
\frac{n_1}{n}\sqrt{n} \mb P_{n_1}({\sigma}_{01}^2)^{-1}( D_0 - \tau_0 )f_\vartheta
+ \frac{n_0}{n}\sqrt{n} \mb P_{n_0}({\sigma}_{00}^2)^{-1}( D - \tau_0 -\lambda_0 )
( f_\vartheta + g_\vartheta )
\nonumber
\\
&& + \frac{n_1}{n}\sqrt{n} \mb P_{n_1}(\wh{\sigma}_1^2)^{-1}(\wh{D} - D_0 )f_\vartheta
+ \frac{n_1}{n} \sqrt{n} \mb P_{n_1}\left\{(\wh{\sigma}_1^2)^{-1} - ({\sigma}_{01}^2)^{-1} \right\}( D - \tau_0 )f_\vartheta
\nonumber
\\
&& + \frac{n_0}{n} \sqrt{n} \mb P_{n_0}\left\{(\wh{\sigma}_0^2)^{-1} - ({\sigma}_{00}^2)^{-1} \right\}( D - \tau_0 -\lambda_0 )
( f_\vartheta + g_\vartheta ).
\esee
By Conditions \ref{con3} and \ref{con13},
\bse
 \mb P_1\left[\left\{(\wh{\sigma}_1^2)^{-1} - ({\sigma}_{01}^2)^{-1} \right\}( D_0 - \tau_0 )f_\vartheta \right]^2
\le  O(1) \left\| (\wh{\sigma}_1^2)^{-1} - ({\sigma}_{01}^2)^{-1}  \right\|^2_{\infty} \mb P_1 f_\vartheta ^2
 =  o_P(1),
\ese
combined with lemma 19.24 in van der Vaart (1998),
\bse
\sqrt{n_1} (\mb P_{n_1} - \mb P_1 )\left\{(\wh{\sigma}_1^2)^{-1} - ({\sigma}_{01}^2)^{-1} \right\}( D_0 - \tau_0 )f_\vartheta = o_P(1).
\ese
Obviously,
$
\mb P_1\left\{(\wh{\sigma}_1^2)^{-1} - ({\sigma}_{01}^2)^{-1} \right\}( D_0 - \tau_0 )f_\vartheta = 0
%\\
%\mb P_0\left\{(\wh{\sigma}_0^2)^{-1} - ({\sigma}_{00}^2)^{-1} \right\}( D - \tau_0 -\lambda_0 )
%( f_\vartheta + g_\vartheta ) &= & 0,
$ by equation (\ref{th3_eq0a}), then
\bsee\label{th3_eq2}
&& \frac{n_1}{n} \sqrt{n} \mb P_{n_1}\left\{(\wh{\sigma}_1^2)^{-1} - ({\sigma}_{01}^2)^{-1} \right\}( D_0 - \tau_0 )f_\vartheta
\nonumber
\\
& = &
\sqrt{\frac{n_1}{n}} \sqrt{n_1} (\mb P_{n_1} - \mb P_1 )\left\{(\wh{\sigma}_1^2)^{-1} - ({\sigma}_{01}^2)^{-1} \right\}( D - \tau_0 )f_\vartheta
\nonumber
\\
& = & o_P(1).
\esee
Using an argument as we did in equation (\ref{th3_eq2}), we can also get
\bsee\label{th3_eq2a}
&&\frac{n_0}{n} \sqrt{n} \mb P_{n_0}\left\{(\wh{\sigma}_0^2)^{-1} - ({\sigma}_{00}^2)^{-1} \right\}( D - \tau_0 -\lambda_0 )
( f_\vartheta + g_\vartheta )
\nonumber
\\
& = & \sqrt{\frac{n_0}{n}} \sqrt{n_0} ( \mb P_{n_0} - \mb P_0 )\left\{(\wh{\sigma}_0^2)^{-1} - ({\sigma}_{00}^2)^{-1} \right\}( D - \tau_0 -\lambda_0 )
( f_\vartheta + g_\vartheta )
\nonumber
\\
& = & o_P(1).
\esee

Now, we deal with the term $\sqrt{n} \mb P_{n_1} (\wh{\sigma}_1^2)^{-1}(\wh{D} - D_0 )f_\vartheta$. Following Conditions \ref{con3} and \ref{con14}, we have
\bse
\mb P_{1} \left\{(\wh{\sigma}_1^2)^{-1}(\wh{D} - D_0 )f_\vartheta \right\}^2
\le
O(1) \| \wh{D} - D_0 \|_{\infty}^2   \mb P_{1}f_\vartheta^2
 \le  o_P(1) \mb P_1 f_\vartheta^2
 =  o_P(1),
\ese
coupled with lemma 19.24 in van der Vaart (1998), lead to
\bsee\label{th3_eq2b}
\sqrt{n_1} (\mb P_{n_1} - \mb P_1 ) (\wh{\sigma}_1^2)^{-1}(\wh{D} - D_0 )f_\vartheta
= o_P(1).
\esee
Furthermore, by equation (\ref{th3_eq0a}) and Lemma \ref{lemma2},
\bse
E_r \left(\wh{D} - D_0 \mid \bs X , S = 1 \right) = E_r \left(\wh{D} - D \mid \bs X , S = 1 \right) =
\Gamma_{1}(\bs X) + \Gamma_{2}(\bs X),
\ese
which implies that
\bse
\left| \mb P_1 (\wh{\sigma}_1^2)^{-1}(\wh{D} - D_0 )f_\vartheta \right|
& = & \left| \int_\Omega \left\{\wh{\sigma}_1^2(\bs x) \right\}^{-1} \{\Gamma_1(\bs x) + \Gamma_2(\bs x)\} f_\vartheta(\bs x)q_1(\bs x){\rm d}\bs x \right|
\\
& \le & O(1) \left[\int_\Omega \{\Gamma_1(\bs x) + \Gamma_2(\bs x)\}^2 q_1(\bs x) {\rm d}x \right]^{1/2} \left[\int_\Omega \{f_\vartheta(\bs x) \}^2 q_1(\bs x) {\rm d}x \right]^{1/2}
\\
&\le & O(1) \left( \sum_{a = 0}^1 \| \Theta_a \|_{2} + \|\wh{e} - e \|_2 \times \sum_{a = 0 }^1 \| \wh{\mu}_a - \mu_a\|_2 \right)
\\
& = & o_P(n^{-1/2}),
\ese
by using Conditions \ref{con3} and \ref{con10}. Therefore,
\bsee\label{th3_eq3}
\sqrt{n} \mb P_{n_1} (\wh{\sigma}_1^2)^{-1}(\wh{D} - D_0 )f_\vartheta
& = & \sqrt{n} \mb P_1 (\wh{\sigma}_1^2)^{-1}(\wh{D} - D_0 )f_\vartheta + o_P(1)
\nonumber
\\
& = &\sqrt{n} o_P(n^{-1/2}) + o_P(1)
\nonumber
\\
& = & o_P(1).
\esee
Plugging equations (\ref{th3_eq2})--(\ref{th3_eq3}) into equation (\ref{th3_eq1}), we obtain
\bse
&&-\sqrt{n} \wh{\bs U}_n(\bs \Upsilon_0)[\bs h_{\vartheta}]
\\
& = & \frac{n_1}{n}\sqrt{n} \mb P_{n_1}({\sigma}_{01}^2)^{-1}( D_0 - \tau_0 )f_\vartheta
+ \frac{n_0}{n}\sqrt{n} \mb P_{n_0}({\sigma}_{00}^2)^{-1}( D - \tau_0 -\lambda_0 )
( f_\vartheta + g_\vartheta ) + o_P(1)
\\
& = & \sqrt{ \frac{n_1}{n} } \sqrt{n_1}(\mb P_{n_1} - \mb P_1 )({\sigma}_{01}^2)^{-1}( D_0 - \tau_0 )f_\vartheta
\\
&& + \sqrt{\frac{n_0}{n}} \sqrt{n_0} (\mb P_{n_0} - \mb P_0)({\sigma}_{00}^2)^{-1}( D - \tau_0 -\lambda_0 )
( f_\vartheta + g_\vartheta ) + o_P(1).
\ese
According to Assumptions 1--2 in the main paper and Condition \ref{con3}, the class of functions
$\{({\sigma}_{01}^2)^{-1}( D_0 - \tau_0 )f_\vartheta :  f_\vartheta \in \ms H_{1\vartheta} \}$ is $\mb P_1$-Donsker,
and
$\{({\sigma}_{00}^2)^{-1}( D - \tau_0 - \lambda_0 ) ( f_\vartheta + g_\vartheta )  : \bs h_\vartheta = (f_\vartheta, g_\vartheta) \in \ms H_\vartheta \}$
is $\mb P_0$-Donsker.
Consequently,
$\sqrt{n} \wh{\bs U}_n(\bs \Upsilon_0)[\bs h_{\vartheta}]$ converges to
$\bs G[\bs h_{\vartheta}]$ in distribution. Therefore,
\bse
\sqrt{n}\wh{\bs U}_n(\bs \Upsilon_0) \rightsquigarrow \bs G \quad \textrm{ in } \ell^\infty(\ms H_\vartheta),
\ese
which proves Step 1.

We next prove Step 2.
For $\bs h_\vartheta = (f_\vartheta, g_\vartheta) \in \ms H_\vartheta$,
\bsee\label{th3_eq4}
\dot{\wh{\bs U}}_n(\bs \Upsilon_0)[\wh{\bs \Upsilon}_n - \bs \Upsilon_0][\bs h_\vartheta]
& = & \frac{n_1}{n}\mb P_{n_1} \left(\wh{\sigma}^2_1\right)^{-1}(\wh{\tau}_n - \tau_0)f_\vartheta
\nonumber
\\
&& + \frac{n_0}{n}\mb P_{n_0} \left(\wh{\sigma}^2_0\right)^{-1}(\wh{\tau}_n - \tau_0 + \wh{\lambda}_n - \lambda_0)(f_\vartheta + g_\vartheta),
\nonumber
\\
\dot{\bs U}(\bs \Upsilon_0)[\wh{\bs \Upsilon}_n - \bs \Upsilon_0][\bs h_\vartheta]
& = & \varrho \mb P_{1} \left({\sigma}^2_{01}\right)^{-1}(\wh{\tau}_n - \tau_0)f_\vartheta
\nonumber
\\
&& + (1 - \varrho)\mb P_{0} \left({\sigma}^2_{00}\right)^{-1}(\wh{\tau}_n - \tau_0 + \wh{\lambda}_n - \lambda_0)(f_\vartheta + g_\vartheta).
\esee
For every $\epsilon_1 > 0$, define the set ${\cal B}_1 = \{ \| \wh{\tau}_n - \tau_0 \|_{\infty} \le \epsilon_1^{1/2} \}$ and ${\cal B}_2 = \{ V_1( \tau_n - \tau_0 , \tau_n - \tau_0 ) + J_1(\tau_n - \tau_0 , \tau_n - \tau_0 ) \le 1 \}$.
According to Theorem 4.1, Lemma \ref{lemma6} and Condition \ref{con11}, we have
$\| \wh{\tau}_n - \tau_0 \|_{\infty} \le \wt{c}_1 \gamma_1^{-p / (4m_1)} \| \wh{\tau}_n - \tau_0 \|_{\ch_1} = O_P(\gamma_1^{-p/(4m_1)}\delta_n ) = o_P(1)$, which implies that, for every $\epsilon_2 > 0 $, there exists some $N_1(\epsilon_2)$ such that
$P( {\cal B}_1 ) > 1 - \epsilon_2 / 3$ for all $n > N_1(\epsilon_2)$.
According to Lemma \ref{lemma9}, equation (\ref{th1_eq0}) and Condition \ref{con12}, there exists some $N_2(\epsilon_1, \epsilon_2)$ such that
$P({\cal B}_2) > 1 - \epsilon_2 / 3$ for all $n > N_2(\epsilon_1, \epsilon_2)$. Thus, for every $\epsilon_1 > 0 $ and $\epsilon_2 >0 $, there exists some $N_3(\epsilon_1, \epsilon_2) = \max\{N_1(\epsilon_2) ,N_2(\epsilon_1, \epsilon_2)\}$ such that
\bse
P\left(({\cal B}_1{\cal B}_2 )^C\right) \le P({\cal B}_1^C) + P({\cal B}_2^C) \le 2\epsilon_2 / 3
\ese
for all $n > N_3(\epsilon_1, \epsilon_2)$.
Additionally, on ${\cal B}_1{\cal B}_2$, by Condition \ref{con3},
\bse
\mb P_1 \left\{ \left(\wh{\sigma}^2_1 \right)^{-1}(\wh{\tau}_n - \tau_0)f_\vartheta \right\}^2 \le c_1 \| \wh{\tau_n} - \tau_0 \|_{\infty}^2 \mb P_1 f_\vartheta^2
\le c_2 \| \wh{\tau_n} - \tau_0 \|_{\infty}^2 \le c_2 \epsilon_1,
\ese
combined with lemma 19.24 in van der Vaart (1998), entails that, there exists some $N_4(\epsilon_1, \epsilon_2)$ such that
\bse
P\left(\sqrt{n_1}\left| (\mb P_{n_1} - \mb P_1)\left(\wh{\sigma}^2_1 \right)^{-1}(\wh{\tau}_n - \tau_0)f_\vartheta \right| > \epsilon_1 \right) \le \epsilon_2 / 3.
\ese
for all $n > N_4(\epsilon_1, \epsilon_2)$.
Consequently, for every $\epsilon_1 > 0 $ and $\epsilon_2 > 0 $, there exists some $N(\epsilon_1, \epsilon_2) = \max\{N_3(\epsilon_1, \epsilon_2), N_4(\epsilon_1, \epsilon_2) \}$ such that
\bse
&& P\left(\sqrt{n_1}\left| (\mb P_{n_1} - \mb P_1)\left(\wh{\sigma}^2_1 \right)^{-1}(\wh{\tau}_n - \tau_0)f_\vartheta \right| > \epsilon_1 \right)
\\
& \le &
P\left(\sqrt{n_1}\left| (\mb P_{n_1} - \mb P_1)\left(\wh{\sigma}^2_1 \right)^{-1}(\wh{\tau}_n - \tau_0)f_\vartheta \right| > \epsilon_1 \big | {\cal B}_1{\cal B}_2\right)
 + P\left(({\cal B}_1{\cal B}_2 )^C\right)
\\
&\le &\epsilon_2.
\ese
Immediately,
\bsee\label{th3_eq4a}
\sqrt{n_1}\left| (\mb P_{n_1} - \mb P_1)\left(\wh{\sigma}^2_1 \right)^{-1}(\wh{\tau}_n - \tau_0)f_\vartheta \right| =  o_P(1).
\esee
Next, by Condition \ref{con13} and H\"{o}lder inequality,
\bsee\label{th3_eq7}
\left| \mb P_{1}\left\{\left(\wh{\sigma}^2_1 \right)^{-1} - \left({\sigma}^2_{01} \right)^{-1} \right\}(\wh{\tau}_n - \tau_0)f_\vartheta \right|
&\le & \left\| \left(\wh{\sigma}^2_1 \right)^{-1} - \left({\sigma}^2_{01} \right)^{-1} \right\|_{\infty} \mb P_{1} \left|(\wh{\tau}_n - \tau_0)f_\vartheta \right|
\nonumber
\\
& \le & o_P(1) \left\{\mb P_{1}(\wh{\tau}_n - \tau_0)^2 \right\}^{1/2} \left(\mb P_1 f^2_\vartheta \right)^{1/2}
\nonumber
\\
& \le & o_P(1) \| \wh{\tau}_n - \tau_0 \|_{\ch_1}
\nonumber
\\
& \le & o_P(1) \left\| \wh{\bs \Upsilon}_n - \bs \Upsilon_0 \right\|_{\ch},
\nonumber
\\
\left| \mb P_{1} \left({\sigma}^2_{01} \right)^{-1} (\wh{\tau}_n - \tau_0)f_\vartheta \right|
& \le & O(1) \left\| \wh{\bs \Upsilon}_n - \bs \Upsilon_0 \right\|_{\ch}.
\esee
Thus
\bsee\label{th3_eq5}
&&\sqrt{n}\mb P_{n_1}\left(\wh{\sigma}^2_1 \right)^{-1}(\wh{\tau}_n - \tau_0)f_\vartheta
\nonumber
\\
& = & \sqrt{n}\mb P_{1}\left(\wh{\sigma}^2_1 \right)^{-1}(\wh{\tau}_n - \tau_0)f_\vartheta + o_P(1)
\nonumber
\\
& = &\sqrt{n}\mb P_{1}\left({\sigma}^2_{01} \right)^{-1}(\wh{\tau}_n - \tau_0)f_\vartheta + \sqrt{n}\mb P_{1}\left\{\left(\wh{\sigma}^2_1 \right)^{-1} - \left({\sigma}^2_{01} \right)^{-1} \right\}(\wh{\tau}_n - \tau_0)f_\vartheta + o_P(1)
\nonumber
\\
& = & \sqrt{n}\mb P_{1}\left({\sigma}^2_{01} \right)^{-1}(\wh{\tau}_n - \tau_0)f_\vartheta + \sqrt{n} o_P(1) \left\| \wh{\bs \Upsilon}_n - \bs \Upsilon_0 \right\|_{\ch} + o_P(1)
\nonumber
\\
& = & \sqrt{n}\mb P_{1}\left({\sigma}^2_{01} \right)^{-1}(\wh{\tau}_n - \tau_0)f_\vartheta + o_P\left(1 + \sqrt{n}\left\| \wh{\bs \Upsilon}_n - \bs \Upsilon_0 \right\|_\ch \right).
\esee
By similar arguments that used in equation (\ref{th3_eq5}),
\bsee\label{th3_eq6}
&& \sqrt{n} \mb P_{n_0} \left(\wh{\sigma}^2_0\right)^{-1}(\wh{\tau}_n - \tau_0 + \wh{\lambda}_n - \lambda_0)(f_\vartheta + g_\vartheta)
\nonumber
\\
& = & \sqrt{n}\mb P_{0}\left({\sigma}^2_{00} \right)^{-1}(\wh{\tau}_n - \tau_0 + \wh{\lambda}_n - \lambda_0 )f_\vartheta + o_P\left(1 + \sqrt{n}\left\| \wh{\bs \Upsilon}_n - \bs \Upsilon_0 \right\|_\ch \right).
\esee
Combining equations (\ref{th3_eq4})--(\ref{th3_eq6}), we have
\bse
&& \left|\sqrt{n} \dot{\wh{\bs U}}_n(\bs \Upsilon_0)[\wh{\bs \Upsilon}_n - \bs \Upsilon_0][\bs h_\vartheta] -
\sqrt{n} \dot{\bs U}(\bs \Upsilon_0)[\wh{\bs \Upsilon}_n - \bs \Upsilon_0][\bs h_\vartheta] \right|
\\
& \le & \frac{n_1}{n} \left|\sqrt{n}\mb P_{n_1}\left(\wh{\sigma}^2_1 \right)^{-1}(\wh{\tau}_n - \tau_0)f_\vartheta -  \sqrt{n}\mb P_{1}\left({\sigma}^2_{01} \right)^{-1}(\wh{\tau}_n - \tau_0)f_\vartheta \right\|
\\
&& + \frac{n_0}{n} \left| \sqrt{n} \mb P_{n_0} \left(\wh{\sigma}^2_0\right)^{-1}(\wh{\tau}_n - \tau_0 + \wh{\lambda}_n - \lambda_0)(f_\vartheta + g_\vartheta) - \sqrt{n}\mb P_{0}\left({\sigma}^2_{00} \right)^{-1}(\wh{\tau}_n - \tau_0 + \wh{\lambda}_n - \lambda_0 )f_\vartheta \right|
\\
&& + \left|\frac{n_1}{n} - \rho \right|  \left| \sqrt{n} \mb P_{1}\left({\sigma}^2_{01} \right)^{-1}(\wh{\tau}_n - \tau_0)f_\vartheta \right|
\\
&& + \left|\frac{n_0}{n} - (1 - \rho) \right| \left| \sqrt{n} \mb P_{0}\left({\sigma}^2_{00} \right)^{-1}(\wh{\tau}_n - \tau_0 + \wh{\lambda} - \lambda_0)( f_\vartheta + g_\vartheta)\right|
\\
&\le & o_P\left(1 + \sqrt{n}\left\| \wh{\bs \Upsilon}_n - \bs \Upsilon_0 \right\|_\ch \right),
\ese
holds uniformly for $\bs h_\vartheta \in \ms H_\vartheta$. Immediately,
\bse
\sqrt{n} \dot{\wh{\bs U}}_n(\bs \Upsilon_0)[\wh{\bs \Upsilon}_n - \bs \Upsilon_0]
=
\sqrt{n} \dot{\bs U}(\bs \Upsilon_0)[\wh{\bs \Upsilon}_n - \bs \Upsilon_0]
+ o_P\left(1 + \sqrt{n}\left\| \wh{\bs \Upsilon}_n - \bs \Upsilon_0 \right\|_\ch \right),
\ese
which proves Step 2.

We now show Step 3.
It follows from Lemma \ref{lemma3} that
\bse
&&\dot{\bs U}_{\bs \gamma}(\bs \Upsilon_0)[\bs h_\vartheta][\bs h_\vartheta]
\\
& = & \varrho \mb P_{1} \left({\sigma}^2_{01}\right)^{-1}f^2_\vartheta
+ (1 - \varrho)\mb P_{0} \left({\sigma}^2_{00}\right)^{-1}(f_\vartheta + g_\vartheta)^2
+ \gamma_1 J_1(f_\vartheta, f_\vartheta) + \gamma_0 J_0(g_\vartheta, g_\vartheta)
\\
& \ge & c_0 \{ V_1(f_\vartheta, f_\vartheta) + V_0(g_\vartheta, g_\vartheta) \}
+ \gamma_1 J_1(f_\vartheta, f_\vartheta) + \gamma_0 J_0(g_\vartheta, g_\vartheta)
\\
&\ge & c_1 \left\{ \| f_\vartheta \|_{\ch_1}^2 +  \| g_\vartheta \|_{\ch_0}^2 \right\}
\\
& = &  c_1 \| \bs h_\vartheta \|_{\ch}^2,
\ese
for every $\bs h_\vartheta \in \ms H_\vartheta$ and $\bs h_\vartheta \neq (0 , 0)$,
which shows the invertibility of $\dot{\bs U}_{\bs \gamma}(\bs \Upsilon_0)$.

We further prove Step 4.
According to Lemma \ref{lemma6}, we have $J_1(\gamma_1^{1/2}f_\vartheta, \gamma_1^{1/2}f_\vartheta ) \le O(1)$, $\|\gamma_1^{1/2}f_\vartheta \|_{\infty} \le O(1) $, $J_0(\gamma_0^{1/2}g_\vartheta, \gamma_0^{1/2}g_\vartheta ) \le O(1)$, and $\|\gamma_0^{1/2}g_\vartheta \|_{\infty} \le O(1) $, combined with Condition \ref{con2}, there exists some
$\bs h_{n\vartheta} = (f_{n\vartheta}, g_{n\vartheta}) \in \Phi_n \times \Psi_n$ such that
\bsee\label{th3_eq8}
\left\|\gamma_1^{1/2}f_{n\vartheta} -\gamma_1^{1/2} f_\vartheta \right\|_{\infty} = O(n^{-\kappa_1}), \quad
\sup_{|k| = m_1}\left\|\gamma_1^{1/2}f^{(k)}_{n\vartheta} -\gamma_1^{1/2} f_\vartheta^{(k)} \right\|_{\infty} = O(n^{-\kappa'_1}),
\nonumber
\\
\left\|\gamma_0^{1/2}g_{n\vartheta} - \gamma_0^{1/2}g_\vartheta \right\|_{\infty} = O(n^{-\kappa_0}), \quad
\sup_{|k'| = m_0}\left\|\gamma_0^{1/2}g^{(k')}_{n\vartheta} - \gamma_0^{1/2}g_\vartheta^{(k')} \right\|_{\infty} = O(n^{-\kappa'_0}).
\esee
Note that  $\wh{\bs \Upsilon}_{n}$ is the minimizer of $\wh{\ell}_{n,\gamma_1,\gamma_0}(\bs \Upsilon)$ over $\Psi_n \times \Phi_n$. Thus, for any $\bs h_n = (f_n ,g_n ) \in \Phi_n \times \Psi_n$,
$\wh{\bs U}_{n,\gamma_1, \gamma_0}(\wh{\bs \Upsilon}_n)[\bs h_n] = 0$,
which implies that
\bse
\wh{\bs U}_{n,\gamma_1, \gamma_0}(\wh{\bs \Upsilon}_n)[\bs h_{n\vartheta}] = 0.
\ese
Thus
\bsee\label{th3_eq9}
&& \wh{\bs U}_{n,\gamma_1, \gamma_0}(\wh{\bs \Upsilon}_n)[\bs h_{\vartheta}]
\nonumber
\\
& = & \wh{\bs U}_{n,\gamma_1, \gamma_0}(\wh{\bs \Upsilon}_n)[\bs h_{\vartheta}]
- \wh{\bs U}_{n,\gamma_1, \gamma_0}(\wh{\bs \Upsilon}_n)[\bs h_{n\vartheta}]
\nonumber
\\
& = &
\frac{n_1}{n} \mb P_{n_1} \left( \wh{\sigma}^2_1 \right)^{-1}(\wh{D} - \wh{\tau}_n)(f_{n\vartheta} - f_\vartheta)
+ \frac{n_0}{n} \mb P_{n_0} \left( \wh{\sigma}^2_0 \right)^{-1}({D} - \wh{\tau}_n - \wh{\lambda}_n)(f_{n\vartheta} - f_\vartheta + g_{n\vartheta} - g_\vartheta)
\nonumber
\\
&& + \gamma_1 J_1(\wh{\tau}_n, f_{\vartheta} - f_{n\vartheta})
+ \gamma_0 J_0(\wh{\lambda}_n, g_{\vartheta} - g_{n\vartheta})
\nonumber
\\
& = &
\frac{n_1}{n} \mb P_{n_1} \left( \wh{\sigma}^2_1 \right)^{-1}(\wh{D} - D_0)(f_{n\vartheta} - f_\vartheta)
+ \frac{n_1}{n} \mb P_{n_1} \left( \wh{\sigma}^2_1 \right)^{-1}({D}_0 - \wh{\tau}_n)(f_{n\vartheta} - f_\vartheta)
\nonumber
\\
&&+ \frac{n_0}{n} \mb P_{n_0} \left( \wh{\sigma}^2_0 \right)^{-1}({D} - \wh{\tau}_n - \wh{\lambda}_n)(f_{n\vartheta} - f_\vartheta + g_{n\vartheta} - g_\vartheta)
\nonumber
\\
&& + \gamma_1 J_1(\wh{\tau}_n, f_{\vartheta} - f_{n\vartheta})
+ \gamma_0 J_0(\wh{\lambda}_n, g_{\vartheta} - g_{n\vartheta})
\nonumber
\\
& = &\frac{n_1}{n} I^\dag_{n1} + \frac{n_1}{n} I^\dag_{n2} + \frac{n_0}{n} I^\dag_{n3}
+ I^\dag_{n4} + I^\dag_{n5},
\esee
where $I^\dag_{n1}, \ldots, I^\dag_{n5}$ are self-explained from the above equation.
By Conditions \ref{con3} and \ref{con14}, and equation (\ref{th3_eq8}),
\bse
\mb P_1 \left\{\left( \wh{\sigma}^2_1 \right)^{-1}(\wh{D} - D_0)(f_{n\vartheta} - f_\vartheta) \right\}^2 \le O(1) \|\wh{D} - D_0 \|_{\infty}^2 \| f_{n\vartheta} - f_\vartheta \|^2_{\infty} = o(1),
\ese
combined with lemma 19.24 in van der Vaart (1998), yield
\bsee\label{th3_eq9a1}
\sqrt{n_1}(\mb P_{n_1} - \mb P_1 ) \left( \wh{\sigma}^2_1 \right)^{-1}(\wh{D} - D_0)(f_{n\vartheta} - f_\vartheta)  =  o_P(1).
\esee
Using analogous arguments that used in equations (\ref{th3_eq9a1}) and (\ref{th3_eq4a}), we have
\bse
\sqrt{n_1}(\mb P_{n_1} - \mb P_1 ) \left( \wh{\sigma}^2_1 \right)^{-1}({D}_0 - {\tau}_0)(f_{n\vartheta} - f_\vartheta)  & = &  o_P(1),
\\
\sqrt{n_1} (\mb P_{n_1} - \mb P_1 ) \left( \wh{\sigma}^2_1 \right)^{-1}(\wh{\tau}_n - \tau_0)(f_{n\vartheta} - f_\vartheta) & = & o_P(1).
\ese
Therefore,
\bsee\label{th3_eq9a2}
&& \sqrt{n_1}\left| (\mb P_{n_1} - \mb P_1 ) \left( \wh{\sigma}^2_1 \right)^{-1}({D}_0 - \wh{\tau}_n)(f_{n\vartheta} - f_\vartheta) \right|
\nonumber
\\
& \le & \sqrt{n_1}\left| (\mb P_{n_1} - \mb P_1 ) \left( \wh{\sigma}^2_1 \right)^{-1}({D}_0 - {\tau}_0)(f_{n\vartheta} - f_\vartheta) \right|
\nonumber
\\
&& + \sqrt{n_1} \left| (\mb P_{n_1} - \mb P_1 ) \left( \wh{\sigma}^2_1 \right)^{-1}(\wh{\tau}_n - \tau_0)(f_{n\vartheta} - f_\vartheta) \right|
\nonumber
\\
& = & o_P(1).
\esee
Similarly, we can also derive
\bsee\label{th3_eq9a3}
\sqrt{n_0}(\mb P_{n_0} - \mb P_0) \left( \wh{\sigma}^2_0 \right)^{-1}({D} - \wh{\tau}_n - \wh{\lambda}_n)(f_{n\vartheta} - f_\vartheta + g_{n\vartheta} - g_\vartheta) =  o_P(1).
\esee
Under Conditions \ref{con3} and \ref{con12}, combining equations (\ref{th3_eq9a1})-- (\ref{th3_eq9a3}) and (\ref{th3_eq8}), we have
\bsee\label{th3_eq9a}
| I^\dag_{n1} | &=& \left| (\mb P_{n_1} - \mb P_1 ) \left( \wh{\sigma}^2_1 \right)^{-1}(\wh{D} - D_0)(f_{n\vartheta} - f_\vartheta) + \mb P_1 \left( \wh{\sigma}^2_1 \right)^{-1}(\wh{D} - D_0)(f_{n\vartheta} - f_\vartheta) \right|
\nonumber
\\
& \le & \left|\mb P_1 \left( \wh{\sigma}^2_1 \right)^{-1}(\wh{D} - D_0)(f_{n\vartheta} - f_\vartheta) \right| + o_P(n^{-1/2})
\nonumber
\\
& \le & \| f_{n\vartheta} - f_\vartheta \|_{\infty}  \mb P_1 |\Gamma_{1} + \Gamma_{2}| + o_P(n^{-1/2})
\nonumber
\\
& = & o_P(n^{-1/2}) + O\left(n^{-\kappa_1}\gamma_1^{-1/2} \right) \left( \sum_{a = 0 }^1 \| \Theta_a \|_2
+ \| \wh{e} -e \|_2 \times \sum_{a = 0 }^1 \| \wh{\mu}_a - \mu_a \|_2 \right)
\nonumber
\\
& = & o_P(n^{-1/2}) + O_P( n^{-\kappa_1}\gamma_1^{-1/2} n^{-1/2} )
\nonumber
\\
& = & o_P(n^{-1/2})
\esee
by Condition \ref{con10}, and
\bsee\label{th3_eq9b}
| I^\dag_{n2} | & = &\left| (\mb P_{n_1} - \mb P_1 ) \left( \wh{\sigma}^2_1 \right)^{-1}({D}_0 - \wh{\tau}_n)(f_{n\vartheta} - f_\vartheta) + \mb P_1 \left( \wh{\sigma}^2_1 \right)^{-1}({D}_0 - \wh{\tau}_n)(f_{n\vartheta} - f_\vartheta) \right|
\nonumber
\\
& \le & \left|\mb P_1 \left( \wh{\sigma}^2_1 \right)^{-1}({D}_0 - \wh{\tau}_n)(f_{n\vartheta} - f_\vartheta) \right| + o_P(n^{-1/2})
\nonumber
\\
& = & \left|E_r \left[ \left\{ \wh{\sigma}^2_1(\bs X) \right\}^{-1} E_r \{({D}_0 - \wh{\tau}_n) \mid \bs X, S = 1 \}\{f_{n\vartheta}(\bs X) - f_\vartheta(\bs X)\} \right] \right| + o_P(n^{-1/2})
\nonumber
\\
& \le & O(1) \{\mb P_1(\tau_0 - \wh{\tau}_n)^2 \}^{1/2} \{\mb P_1(f_{n\vartheta} - f_\vartheta)^2 \}^{1/2} + o_P(n^{-1/2})
\nonumber
\\
& \le & O\left(n^{-\kappa_1} \gamma_1^{-1/2} \right) \| \wh{\tau}_n - \tau_0 \|_{\ch_1} + o_P(n^{-1/2})
\nonumber
\\
& = & O_P\left(n^{-\kappa_1} \gamma_1^{-1/2} \delta_n \right) + o_P(n^{-1/2})
\nonumber
\\
& = & o_P(n^{-1/2}),
\nonumber
\\
| I^\dag_{n3} | & = & \bigg| (\mb P_{n_0} - \mb P_0) \left( \wh{\sigma}^2_0 \right)^{-1}({D} - \wh{\tau}_n - \wh{\lambda}_n)(f_{n\vartheta} - f_\vartheta + g_{n\vartheta} - g_\vartheta)
\nonumber
\\
&& + \mb P_0 \left( \wh{\sigma}^2_0 \right)^{-1}({D} - \wh{\tau}_n - \wh{\lambda}_n)(f_{n\vartheta} - f_\vartheta + g_{n\vartheta} - g_\vartheta)
\bigg|
\nonumber
\\
& \le & \left|\mb P_0 \left( \wh{\sigma}^2_0 \right)^{-1}({D} - \wh{\tau}_n - \wh{\lambda}_n)(f_{n\vartheta} - f_\vartheta + g_{n\vartheta} - g_\vartheta)
 \right| + o_P(n^{-1/2})
 \nonumber
\\
& \le & O(1) \left\{\mb P_0(\tau_0 - \wh{\tau}_n)^2 + \mb P_0(\lambda_0 - \wh{\lambda}_n)^2 \right\}^{1/2} \{\mb P_0(f_{n\vartheta} - f_\vartheta)^2 +  \mb P_0(g_{n\vartheta} - g_\vartheta)^2\}^{1/2}
\nonumber
\\
&& + o_P(n^{-1/2})
\nonumber
\\
& = & O_P\left( ( n^{-\kappa_1}\gamma_1^{-1/2} + n^{-\kappa_0}\gamma_0^{-1/2} )\delta_n \right)+ o_P(n^{-1/2})
\nonumber
\\
& = & o_P(n^{-1/2}),
\esee
by Theorem 4.2.

Next, we deal with the terms $I^\dag_{n4}$ and $I^\dag_{n5}$.
By Lemma \ref{lemma7}, equation (\ref{th3_eq8}), Theorem 4.2 and Condition \ref{con12},
\bsee\label{th3_eq9c}
|I^\dag_{n4}| & = & \left|\langle W_{\gamma_1}\wh{\tau}_n, f_{\vartheta} - f_{n\vartheta} \rangle_{\ch_1} \right|
\nonumber
\\
& \le & \left|\langle W_{\gamma_1}(\wh{\tau}_n - \tau_0), f_{\vartheta} - f_{n\vartheta} \rangle_{\ch_1}\right|
 + \left|\langle W_{\gamma_1} \tau_0, f_{\vartheta} - f_{n\vartheta} \rangle_{\ch_1}\right|
\nonumber
\\
& \le &  \left|\langle W_{\gamma_1}(f_{\vartheta} - f_{n\vartheta}), \wh{\tau}_n - \tau_0 \rangle_{\ch_1}\right|
+ \| W_{\gamma_1} \tau_0 \|_{\ch_1} \| f_{\vartheta} - f_{n\vartheta} \|_{\ch_1}
\nonumber
\\
& \le & n^{-\kappa'_1}\gamma_1^{-1/2} \| W_{\gamma_1}n^{\kappa'_1}\gamma_1^{1/2}(f_{\vartheta} - f_{n\vartheta}) \|_{\ch_1} \|\wh{\tau}_n - \tau_0 \|_{\ch_1} + o\left(n^{-\kappa_1} + n^{-\kappa'_1}\gamma_1^{1/2} \right)
\nonumber
\\
&\le& n^{-\kappa'_1}\gamma_1^{-1/2} o(\gamma_1^{1/2}) \delta_n + o\left(n^{-\kappa_1} + n^{-\kappa'_1}\gamma_1^{1/2} \right)
\nonumber
\\
& = & o\left( n^{-\kappa'_1}\delta_n +  n^{-\kappa_1} \right)
\nonumber
\\
&=& o_P(n^{-1/2}).
\esee
In a manner similar to equation (\ref{th3_eq9c}), we can also conclude
\bsee\label{th3_eq9d}
|I^\dag_{n5}| = o_P(n^{-1/2}).
\esee
Plugging equations (\ref{th3_eq9a})--(\ref{th3_eq9d}) into equation (\ref{th3_eq9}), we conclude
\bse
\wh{\bs U}_{n,\gamma_1, \gamma_0}(\wh{\bs \Upsilon}_n)[\bs h_{\vartheta}] = o_P(n^{-1/2}),
\ese
which proves Step 4.

Next, we move on to Step 5.
It follows that
\bse
&& \wh{\bs U}_{n,\gamma_1, \gamma_0}(\wh{\bs \Upsilon}_n)[\bs h_{\vartheta}] - \wh{\bs U}_{n,\gamma_1, \gamma_0}({\bs \Upsilon}_0)[\bs h_{\vartheta}]
\\
& = & \dot{ \wh{\bs U} }_{n,\gamma_1, \gamma_0}({\bs \Upsilon}_0)[\wh{\bs \Upsilon}_n - {\bs \Upsilon}_0][\bs h_{\vartheta}]
\\
& = &  \dot{\wh{\bs U}}_{n}({\bs \Upsilon}_0)[\wh{\bs \Upsilon}_n - {\bs \Upsilon}_0][\bs h_{\vartheta}] + \gamma_1 J_1(\wh{\tau}_n - \tau_0, f_\vartheta) + \gamma_0 J_0(\wh{\lambda}_n - \lambda_0, g_\vartheta),
\ese
combined with equations (\ref{th3_step2}) and (\ref{th3_step4}), we have
\bse
\sqrt{n}\dot{\bs U}_{\bs \gamma }({\bs \Upsilon}_0)[\wh{\bs \Upsilon}_n - {\bs \Upsilon}_0]
&= &-\sqrt{n}\wh{\bs U}_n(\bs \Upsilon_0)  -
\sqrt{n} \bs W_{\bs \gamma}(\bs \Upsilon_0)
\\
&& + o_P\left(1 + \sqrt{n} \left\| \wh{\bs \Upsilon}_n - \bs \Upsilon_0 \right\|_{\ch} \right)
\ese
Noting that the operate $\dot{\bs U}_{\bs \gamma}({\bs \Upsilon}_0)$ is reversible as stated in Step 3, we obtain
\bse
&&\sqrt{n}\dot{\bs U}_{\bs \gamma}({\bs \Upsilon}_0)\left[\wh{\bs \Upsilon}_n - {\bs \Upsilon}_0 + \dot{\bs U}^{-1}_{\bs \gamma}({\bs \Upsilon}_0)[  \bs W_{\bs \gamma}(\bs \Upsilon_0)]\right]
\\
& = & -\sqrt{n}\wh{\bs U}_n(\bs \Upsilon_0)
+ o_P\left(1 + \sqrt{n} \left\| \wh{\bs \Upsilon}_n - \bs \Upsilon_0 \right\|_{\ch} \right).
\ese
By equation (\ref{th3_step1}) and theorem 3.1 in van der Vaart and Wellner (1996), we finally conclude that
\bse
\sqrt{n} \left(\wh{\bs \Upsilon}_n - {\bs \Upsilon}_0 + \dot{\bs U}^{-1}_{\bs \gamma}({\bs \Upsilon}_0) [ \bs W_{\bs \gamma}(\bs \Upsilon_0)]\right) \rightsquigarrow \dot{\bs U}^{-1}_{\bs \gamma}({\bs \Upsilon}_0)[\bs G] \quad \textrm{ in } \ell^\infty(\ms H_\vartheta).
\ese
Thus, we prove Step 5.

Finally, we proceed to the last step.
For $\omega_1 \in \mb R$ and $\omega_0 \in \mb R$, choose appropriate $\vartheta$ such that
$
\bs h_{\bs \omega} = (\omega_1 \gamma_1^{p/(4m_1)} K_{1\bs x_0}, \omega_0 \gamma_0^{p/(4m_0)} K_{0\bs x_0}) \in \ms H_\vartheta
$
and $\dot{\bs U}^{-1}_{\bs \gamma}({\bs \Upsilon}_0)[\bs h_{\bs \omega}] \in \ms H_\vartheta$.
Of note, for $\bs h \in \ch$, we have $\dot{\bs U}_{\bs \gamma}(\bs \Upsilon_0)[\bs h] \ge c_1 \| \bs h \|_{\ch}$ as shown in Step 3, thus $\dot{\bs U}^{-1}_{\bs \gamma}({\bs \Upsilon}_0)[\bs h] \le c_2\| \bs h \|_{\ch}$, and further such an $\vartheta$ satisfies $\dot{\bs U}^{-1}_{\bs \gamma}(\bs \Upsilon_0)[\bs h_{\bs \omega}] \in \ms H_\vartheta$ always exists.
Let
$
(b_{\tau}, b_{\lambda}) = \dot{\bs U}^{-1}_{\bs \gamma}({\bs \Upsilon}_0) [\bs W_{\bs \gamma}(\bs \Upsilon_0)] \in \ch
$,
$\tau_0^*(\bs x_0) = \tau_0(\bs x_0) - b_{\tau}(\bs x_0)$, and
$\lambda_0^*(\bs x_0) = \lambda_0(\bs x_0) - b_{\lambda}(\bs x_0)$.
Then, following equation (\ref{th3_eq11a}), we have
\bse
&& \omega_1 \left[ n^{1/2}\gamma_1^{p/(4m_1)}\{\wh{\tau}_n(\bs x_0) - \tau^*_0(\bs x_0) \} \right] + \omega_0 \left[ n^{1/2}\gamma_0^{p/(4m_0)}\{\wh{\lambda}_n(\bs x_0) - \lambda^*_0(\bs x_0) \} \right]
\\
& = & \sqrt{n} \left(\wh{\bs \Upsilon}_n - {\bs \Upsilon}_0 + \dot{\bs U}^{-1}_{\bs \gamma}({\bs \Upsilon}_0) [ \bs W_{\bs \gamma}(\bs \Upsilon_0)]\right)[\bs h_{\bs \omega}].
\\
&\rightsquigarrow& \dot{\bs U}^{-1}_{\bs \gamma}({\bs \Upsilon}_0)[\bs G][\bs h_\omega]
\\
& = & \bs G[\{ \dot{\bs U}^{-1}_{\bs \gamma}({\bs \Upsilon}_0) \}^* [\bs h_\omega]]
\\
& = &\bs G[\dot{\bs U}^{-1}_{\bs \gamma}({\bs \Upsilon}_0)[\bs h_\omega] ]
\ese
by noting that $\dot{\bs U}_{\bs \gamma}({\bs \Upsilon}_0)$ is the self-adjoint operate, where $\{\dot{\bs U}^{-1}_{\bs \gamma}({\bs \Upsilon}_0)\}^*$ denote the adjoint operate of $\dot{\bs U}^{-1}_{\bs \gamma}({\bs \Upsilon}_0)$.
That is, for every $\omega_1 \in \mb R$ and $\omega_0 \in \mb R$,
$\omega_1 [ n^{1/2}\gamma_1^{p/(4m_1)}\{\wh{\tau}_n(\bs x_0) - \tau^*_0(\bs x_0) \} ] + \omega_0 [ n^{1/2}\gamma_0^{p/(4m_0)}\{\wh{\lambda}_n(\bs x_0) - \lambda^*_0(\bs x_0) \} ]$ converges to a zero-mean Gaussian distribution
with the variance
\bse
&& {\rm Cov}\left(\bs G[\dot{\bs U}^{-1}_{\bs \gamma}({\bs \Upsilon}_0)[\bs h_\omega] ], \bs G[\dot{\bs U}^{-1}_{\bs \gamma}({\bs \Upsilon}_0)[\bs h_\omega] ]
\right)
\\
& = & \varrho \mb P_1 \left(\sigma^2_{01}\right)^{-1}( K_1^* )^2  +
(1 - \varrho) \mb P_0 \left(\sigma^2_{00}\right)^{-1}(K_1^* + K_0^* )^2
\\
& = & \varrho \mb P_1 \left(\sigma^2_{01}\right)^{-1}( K_1^* )^2 +
(1 - \varrho ) \mb P_0 \left(\sigma^2_{00}\right)^{-1}( K_1^*)^2 +
2 (1 - \varrho )\mb P_0 \left(\sigma^2_{00}\right)^{-1}K_1^*K_0^*
\\
&& + (1 - \varrho )\mb P_0 \left(\sigma^2_{00}\right)^{-1}(K_0^* )^2
\\
& = & V_1(K_1^*, K_1^*) + V_0( K_0^*, K_0^* ) + 2(1 - \varrho)
\mb P_0 \left(\sigma^2_{00}\right)^{-1}K_1^*K_0^*,
\ese
where $(K_1^*, K_0^*) = \dot{\bs U}^{-1}_{\bs \gamma}(\bs \Upsilon_0)[(\omega_1 \gamma_1^{p/(4m_1)} K_{1\bs x_0}, \omega_0 \gamma_0^{p/(4m_0)} K_{0\bs x_0})]$.
Therefore,
\bse
\left(
    n^{1/2}\gamma_1^{p/(4m_1)}\{\wh{\tau}_n(\bs x_0) - \tau^*_0(\bs x_0) \} ,
    n^{1/2}\gamma_0^{p/(4m_0)}\{\wh{\lambda}_n(\bs x_0) - \lambda^*_0(\bs x_0) \}
\right)^{\rm T}
\ese
converges to a zero-mean bivariate Gaussian distribution. Thus, we complete the proof of Theorem 3.
\end{proof}

\subsubsection*{E.4 Proof of Corollary 1}
\begin{proof}
Here, we follow the notations that used in the proof of Theorem 3.
Obviously, it is enough to show that
\bsee\label{coro_eq2}
\bs W_{\bs \gamma}(\bs \Upsilon_0)[\bs h_\vartheta] = o_P(n^{-1/2}).
\esee
By Lemma \ref{lemma7} and $n(\gamma_1 + \gamma_0) = O(1)$ presented in Corollary 4.1,
\bse
n^{1/2} \left|\bs W_{\bs \gamma}(\bs \Upsilon_0)[\bs h_\vartheta] \right|
& = & n^{1/2} \left|\langle W_{\gamma_1}\tau_0, h_\vartheta \rangle_{\ch_1} +
\langle W_{\gamma_0}\lambda_0, g_\vartheta \rangle_{\ch_0} \right|
\\
& \le & n^{1/2} \left(\|W_{\gamma_1}\tau_0\|_{\ch_1}\| f_\vartheta \|_{\ch_1} + \|W_{\gamma_0}\lambda_0\|_{\ch_0} \| g_\vartheta \|_{\ch_0} \right)
\\
& \le &  n^{1/2}\left(o(\gamma_1^{1/2}) O(1)
+ o( \gamma_0^{1/2} ) O(1) \right)
\\
& = & o\left( \{n(\gamma_1 + \gamma_0) \}^{1/2} \right)
\\
& = & o(1),
\ese
which implies equation (\ref{coro_eq2}). Thus, we complete the proof of Corollary 1.
\end{proof}

\subsubsection*{E.5 Proof of Theorem 4}\label{subsec4}
\begin{proof}
For two symmetric matrixes $\bs B_1$ and $\bs B_2$, we denote $\bs B_1 \ge \bs B_2$ or $\bs B_2 \le \bs B_1$ if $\bs B_1 - \bs B_2$ is positive semi-definite.
Let $\rho(\bs B_1)$, $\rho_{\min}(\bs B_1)$, and
$\rho_{\max}(\bs B_1)$ be the eigenvalues, minimum, and maximum eigenvalues of $\bs B_1$, respectively.
Let $\bs H$ be an $n \times n $ diagonal matrix with the main diagonal elements being
$( \{\wh{\sigma}^2(\bs X_1, S_1)\}^{-1}, \ldots, \{\wh{\sigma}^2(\bs X_{n}, S_{n})\}^{-1} )^{\rm T}$,
$\bs H_1$ be the first $n_1 \times n_1 $ matrix of $\bs H$,
and $\bs H_2$ be the last $n_0 \times n_0$ matrix of $\bs H$.
Let
$\bs X^* = ( \bs X_1, \ldots, \bs X_n)^{\rm T}$,
$\bs S = (S_1, \ldots, S_n)^{\rm T}$,
$\bs \phi(\bs x) = (\phi_1(\bs x), \ldots, \phi_{r_1}(\bs x))^{\rm T}$,
$\bs \psi(\bs x) = (\psi_1(\bs x), \ldots, \psi_{r_0}(\bs x))^{\rm T} $,
$\bs \Phi(\bs X^*) = (\bs \phi(\bs X_1), \ldots, \bs \phi(\bs X_n))^{\rm T}$,
$\bs \Psi(\bs X^*) = (\bs \psi(\bs X_1), \ldots, \bs \psi(\bs X_n))^{\rm T}$,
and
$ \bs A = (\bs \Phi(\bs X^*), (1 - \bs S)\bs \Psi(\bs X^*))$.
Decompose $\bs A$ into an $2 \times 2$ block matrix form with the $(1,1)$-th elements being $\bs A_1$, $(1,2)$-th element being $\bs 0$, $(2,1)$-th element being $\bs A_{21}$, and
$(2,2)$-th element being $\bs A_2$,
where $\bs A_1$ is an $n_1 \times r_1$ matrix, $\bs A_{21}$is an $n_0 \times r_1$ matrix, and $\bs A_2$ is an $n_0 \times r_0$ matrix.
Let $ \bs P_1$ be an $r_1 \times r_1$ matrix with $(i,j)$-th element being
$J_1(\phi_i, \phi_j)$, where $1 \le i, j \le r_1$,
and $\bs P_2$ be an $r_0 \times r_0$ matrix with $(i',j')$-th element being
$J_0(\psi_{i'}, \psi_{j'})$, where $1 \le i', j' \le r_0$, $\bs P$ be an $2 \times 2$ diagonal block matrix with $(1,1)$-th element being $\bs P_1$ and $(2,2)$-th element being $\bs P_2$, and $\bs P_{\bs \gamma}$ be an $2 \times 2$ diagonal block matrix with $(1,1)$-th element being $\gamma_1 \bs P_1$ and $(2,2)$-th element being $\gamma_0 \bs P_2$.
Let
$\wh{\bs D}_1 = (\wh{D}_1, \ldots, \wh{D}_{n_1})^{\rm T}$,
$\bs D_2 = (D_{n_1 + 1}, \ldots, D_n)^{\rm T}$,
$\wh{\bs D} = (\wh{\bs D}_1^{\rm T}, \bs D_2^{\rm T} )^{\rm T}$.

Note that $\tau(\bs X)$ and $\lambda(\bs X)$ can be  approximated  by the corresponding parameterized functions $\bs \phi(\bs X)^{\rm T}\bs \alpha$ and $\bs \psi(\bs X)^{\rm T}\bs \beta$, respectively, where $\bs \alpha = (\alpha_1, \ldots, \alpha_{r_1})^{\rm T}$ and
$ \bs \beta = (\beta_1, \ldots, \beta_{r_0})^{\rm T}$.
Then minimizing $\wh{\ell}_{n,\gamma_1, \gamma_0}(\tau , \lambda)$ over $\Phi_n \times \Psi_n$ is equivalent to minimizing
\bse
&& \wh{l}_{n,\gamma_1, \gamma_0}(\bs \alpha, \bs \beta )
\\
& = & \left\{\wh{\bs D} - \bs \Phi(\bs X^*)\bs \alpha  - (1 - \bs S)\bs \Psi(\bs X^*)\bs \beta\right\}^{\rm T} \bs H \left\{\wh{\bs D} - \bs \Phi(\bs X^*)\bs \alpha  - (1 - \bs S)\bs \Psi(\bs X^*)\bs \beta \right\}
\\
&& + n \gamma_1 \bs \alpha^{\rm T} \bs P_1 \bs \alpha
+ n \gamma_0 \bs \beta^{\rm T} \bs P_2 \bs \beta.
\ese
Let
$\bs \theta = (\bs \alpha^{\rm T}, \bs \beta^{\rm T})^{\rm T}$, then $\wh{l}_{n,\gamma_1, \gamma_0}(\bs \alpha, \bs \beta )$ can be rewritten as
\bse
\wh{l}_{n,\gamma_1, \gamma_0}(\bs \theta) = \left(\wh{\bs D} - \bs A \bs \theta \right)^{\rm T} \bs H \left(\wh{\bs D} - \bs A \bs \theta\right) + n \bs \theta^{\rm T} \bs P_{\bs \gamma} \bs \theta,
\ese
and the corresponding minimizer can be deduced as
\bse
\wh{\bs \theta}_n = \left(\wh{\bs \alpha}^{\rm T}_n, \wh{\bs \beta}^{\rm T}_n \right)^{\rm T}
 = (\bs A^{\rm T} \bs H \bs A + n \bs P_{\bs \gamma})^{-1}\bs A^{\rm T} \bs H \wh{\bs D}.
\ese
Therefore, for a given point $\bs x_0 \in \Omega$, the estimator of $\tau_0(\bs x_0)$ can be expressed by
\bse
\wh{\tau}_n(\bs x_0) = \bs \phi(\bs x_0)^{\rm T}\wh{\bs \alpha}_n.
\ese
Using an similar argument, $\wh{\tau}_{\rm rct}(\bs x_0)$ can be expressed by
\bse
\wh{\tau}_{\rm rct}(\bs x_0) = \bs \phi(\bs x_0)^{\rm T}\wh{\bs \alpha}_{\rm rct},
\ese
where $\wh{\bs \alpha}_{\rm rct} = (\bs A_1^{\rm T} \bs H_1 \bs A_1 + n \gamma_1\bs P_1)^{-1}\bs A_1^{\rm T} \bs H_1 \wh{\bs D}_1$.
It follows that
\bse
{\rm Var}_r(\wh{\tau}_{\rm rct}(\bs x_0)) = \bs \phi(\bs x_0)^{\rm T}{\rm Var}_r(\wh{\bs \alpha}_{\rm rct} ) \bs \phi(\bs x_0), \
{\rm Var}_r(\wh{\tau}_n(\bs x_0)) = \bs \phi(\bs x_0)^{\rm T}{\rm Var}_r(\wh{\bs \alpha}_n ) \bs \phi(\bs x_0).
\ese

We first consider ${\rm Var}_r(\wh{\tau}_{\rm rct}(\bs x_0))$.
\bse
{\rm Var}_r(\wh{\bs \alpha}_{\rm rct}) & = & E_r \left\{(\bs A_1^{\rm T} \bs H_1 \bs A_1 + n \gamma_1\bs P_1)^{-1}\bs A_1^{\rm T} \bs H_1 {\rm Var}_r( \wh{\bs D}_1 \mid \bs X^* ) \bs H_1 \bs A_1 (\bs A_1^{\rm T} \bs H_1 \bs A_1 + n \gamma_1\bs P_1)^{-1}  \right\}
\\
&& + {\rm Var}_r\left\{ (\bs A_1^{\rm T} \bs H_1 \bs A_1 + n \gamma_1\bs P_1)^{-1}\bs A_1^{\rm T} \bs H_1 E_r( \wh{\bs D}_1 \mid \bs X^* )  \right\}
\\
& = & \bs \Xi_1 + \bs \Xi_2,
\ese
where $\bs \Xi$ and $\bs \Xi_2$ are clear from the above equation, furthermore,
\bse
{\rm Var}_r\{\wh{\tau}_{\rm rct}(\bs x_0)\} = \bs \phi(\bs x_0)^{\rm T} \bs \Xi_1\bs \phi(\bs x_0) + \bs \phi(\bs x_0)^{\rm T} \bs \Xi_2 \bs \phi(\bs x_0).
\ese
Next, we show that $\bs \phi(\bs x_0)^{\rm T} \bs \Xi_2 \bs \phi(\bs x_0)$ is dominated by $\bs \phi(\bs x_0)^{\rm T} \bs \Xi_1 \bs \phi(\bs x_0)$.
Let
$
\bs \zeta^\dag(\bs X^*) = (\bs A_1^{\rm T} \bs H_1 \bs A_1 + n \gamma_1\bs P_1)^{-1}\bs A_1^{\rm T} \bs H_1 \bs A_1 (\bs A_1^{\rm T} \bs H_1 \bs A_1 + n \gamma_1\bs P_1)^{-1}
$. By Condition \ref{con15},
\bsee\label{th4_eqa}
\bs \phi(\bs x_0)^{\rm T} \bs \Xi_1\bs \phi(\bs x_0)  =
E_r\left\{ \bs \phi(\bs x_0)^{\rm T} \bs \zeta^\dag(\bs X^*) \bs \phi(\bs x_0) \right\}
 + o_P(1) E_r\left\{ \bs \phi(\bs x_0)^{\rm T} \bs \zeta^\dag(\bs X^*) \bs \phi(\bs x_0) \right\}.
\esee
Under Conditions \ref{con3} and \ref{con16}, we have
\bse
\rho_{\min}(\bs A_1^{\rm T}\bs H_1 \bs A_1)
 \ge  \rho_{\min}(\bs H_1) \rho_{\min}(\bs A_1^{\rm T}\bs A_1)
& \ge & c_1 \rho_{\min}(\bs A^{\rm T}\bs A)
\\
& \ge & c_2 n^{1 - \kappa^*},
\\
\rho_{\max}(\bs A_1^{\rm T}\bs H_1 \bs A_1)
 \le  \rho_{\max}(\bs H_1 ) \rho_{\max}(\bs A_1^{\rm T}\bs A_1)
& \le & c_3 \rho_{\max}(\bs A^{\rm T}\bs A)
\\
& \le & c_4 n^{1 - \kappa^*},
\ese
that is
\bsee\label{th4_eq1}
\rho(\bs A_1^{\rm T}\bs H_1 \bs A_1) \asymp n^{1 - \kappa^*}.
\esee
Under Conditions \ref{con3}, \ref{con16}, and \ref{con17}, it follow from equation (\ref{th4_eq1}) that
\bse
\rho_{\min}\left\{ (\bs A_1^{\rm T}\bs H_1  \bs A_1 + n \gamma_1 \bs P_1 )^{-1} \right\}
& = & \left\{ \rho_{\max}(\bs A_1^{\rm T}\bs H_1 \bs A_1 + n \gamma_1 \bs P_1 ) \right\}^{-1}
\\
& \ge & \left\{ \rho_{\max}(\bs A_1^{\rm T}\bs H_1 \bs A_1) + n \gamma_1 \rho_{\max}(\bs P_1) \right\}^{-1}
\\
& \ge & \left\{ c_5 n^{1 - \kappa^*} + c_6 \gamma_1 n^{1 + \kappa_1^\dag} \right\}^{-1}
\\
& \ge & c_7 n^{\kappa^* - 1 }
\\
\rho_{\max}\left\{ (\bs A_1^{\rm T} \bs H_1 \bs A_1 + n \gamma_1 \bs P_1 )^{-1} \right\}
& = & \left\{ \rho_{\min}(\bs A_1^{\rm T} \bs H_1 \bs A_1 + n \gamma_1 \bs P_1 ) \right\}^{-1}
\\
& \le & \left\{ c_8 n^{1 - \kappa^*} + n \gamma_1 \rho_{\min}(\bs P_1) \right\}^{-1}
\\
&\le& c_9 n^{\kappa^* - 1 },
\ese
that is,
\bsee\label{th4_eq2}
\rho\left\{ (\bs A_1^{\rm T} \bs H_1 \bs A_1 + n \gamma_1 \bs P_1 )^{-1} \right\}
\asymp n^{\kappa^* - 1 }.
\esee
Thus
\bse
\bs \phi(\bs x_0)^{\rm T} \bs \zeta^\dag(\bs X^*) \bs \phi(\bs x_0)
& \le & \bs \phi(\bs x_0)^{\rm T} (\bs A_1^{\rm T} \bs H_1 \bs A_1)^{-1} \bs \phi(\bs x_0)
\\
&\le & \rho_{\max}\left\{ (\bs A_1^{\rm T} \bs H_1 \bs A_1)^{-1} \right\} \bs \phi(\bs x_0)^{\rm T}\bs \phi(\bs x_0)
\\
& = & \left\{\rho_{\min}(\bs A_1^{\rm T} \bs H_1 \bs A_1)\right\}^{-1} \bs \phi(\bs x_0)^{\rm T}\bs \phi(\bs x_0)
\\
& \le & c_{10} n^{\kappa^* - 1} \bs \phi(\bs x_0)^{\rm T}\bs \phi(\bs x_0),
\\
\bs \phi(\bs x_0)^{\rm T} \bs \zeta^\dag(\bs X^*) \bs \phi(\bs x_0)
& \ge & \bs \phi(\bs x_0)^{\rm T} (\bs A_1^{\rm T}\bs H_1 \bs A_1 + n \gamma_1 \bs P_1)^{-1} \bs \phi(\bs x_0)
\\
&\ge & \rho_{\min}\left\{ (\bs A_1^{\rm T} \bs H_1 \bs A_1 + n \gamma_1 \bs P_1 )^{-1} \right\} \bs \phi(\bs x_0)^{\rm T}\bs \phi(\bs x_0)
\\
& \ge &c_{11} n^{\kappa^* - 1} \bs \phi(\bs x_0)^{\rm T}\bs \phi(\bs x_0).
\ese
As a consequence,
\bsee\label{th4_eq4}
\bs \phi(\bs x_0)^{\rm T} \bs \Xi_1\bs \phi(\bs x_0) \asymp n^{\kappa^* - 1} \bs \phi(\bs x_0)^{\rm T}\bs \phi(\bs x_0).
\esee

We next consider $\bs \phi(\bs x_0)^{\rm T} \bs \Xi_2 \bs \phi(\bs x_0)$.
Let $\Gamma(\bs X_i) = \Gamma_1(\bs X_i) + \Gamma_2(\bs X_i)$, $i = 1, \ldots, n_1$, $\bs \Gamma(\bs X^*_1) = (\Gamma(\bs X_1), \ldots, \Gamma(\bs X_{n_1}))^{\rm T}$, and $\bs \tau_0(\bs X^*_1) = (\tau_0(\bs X_1), \ldots, \tau_0(\bs X_{n_1}))$,
then by Lemma \ref{lemma2},
\bsee\label{th4_eq5}
&&\bs \phi(\bs x_0)^{\rm T} \bs \Xi_2 \bs \phi(\bs x_0)
\nonumber
\\
& = & \bs \phi(\bs x_0)^{\rm T} {\rm Var}_r\left[ (\bs A_1^{\rm T} \bs H_1 \bs A_1 + n \gamma_1\bs P_1)^{-1}\bs A_1^{\rm T} \bs H_1 \left\{E_r( \wh{\bs D}_1 - \bs D_1  \mid \bs X^* ) + E_r( \bs D_1 \mid \bs X^* ) \right\}  \right] \bs \phi(\bs x_0)
\nonumber
\\
& = & \bs \phi(\bs x_0)^{\rm T} {\rm Var}_r\left[ (\bs A_1^{\rm T} \bs H_1 \bs A_1 + n \gamma_1\bs P_1)^{-1}\bs A_1^{\rm T} \bs H_1 \left\{\bs \Gamma(\bs X^*_1) + \bs \tau_0(\bs X_1^*) \right\}  \right] \bs \phi(\bs x_0)
\nonumber
\\
& \le & 2 \bs \phi(\bs x_0)^{\rm T} {\rm Var}_r\left[ (\bs A_1^{\rm T} \bs H_1 \bs A_1 + n \gamma_1\bs P_1)^{-1}\bs A_1^{\rm T} \bs H_1 \bs \Gamma(\bs X^*_1)  \right] \bs \phi(\bs x_0)
\nonumber
\\
&& + 2 \bs \phi(\bs x_0)^{\rm T} {\rm Var}_r\left\{ (\bs A_1^{\rm T} \bs H_1 \bs A_1 + n \gamma_1\bs P_1)^{-1}\bs A_1^{\rm T} \bs H_1  \bs \tau_0(\bs X_1^*)  \right\} \bs \phi(\bs x_0)
\nonumber
\\
& = & 2 I^\ddag_{n1} + 2 I_{n2}^\ddag,
\esee
where $I_{n1}^\ddag$ and $I_{n2}^\ddag$ are clear from the above equation.
Let
$\bs w_j$ denote the $j$-th column of $(\bs A_1^{\rm T} \bs H_1 \bs A_1 + n \gamma_1\bs P_1)^{-1}\bs A_1^{\rm T} \bs H_1$, then
$(\bs w_1, \ldots, \bs w_{n_1}) = (\bs A_1^{\rm T} \bs H_1 \bs A_1 + n \gamma_1\bs P_1)^{-1}\bs A_1^{\rm T} \bs H_1$, then by equation (\ref{th4_eq2}), Conditions \ref{con3}, \ref{con10}, and \ref{con16},
\bsee\label{th4_eq5a}
I^\ddag_{n1} & = & {\rm Var}_r \left\{\bs \phi(\bs x_0)^{\rm T} \sum_{j = 1}^{n_1}\bs w_j \Gamma(\bs X_j) \right\}
\nonumber
\\
& = & {\rm Var}_r \left\{ \sum_{j = 1}^{n_1}\Gamma(\bs X_j)\bs \phi(\bs x_0)^{\rm T} \bs w_j  \right\}
\nonumber
\\
& \le & E_r \left\{ \sum_{j = 1}^{n_1}\Gamma(\bs X_j)\bs \phi(\bs x_0)^{\rm T} \bs w_j  \right\}^2
\nonumber
\\
& \le & E_r \left[ \sum_{j = 1}^{n_1} (\Gamma(\bs X_j))^2  \sum_{j = 1}^{n_1} \{ \bs \phi(\bs x_0)^{\rm T} \bs w_j \}^2 \right]
\nonumber
\\
& = & E_r \left[ \sum_{j = 1}^{n_1} (\Gamma(\bs X_j))^2  \sum_{j = 1}^{n_1} \{ \bs \phi(\bs x_0)^{\rm T} \bs w_j \bs w_j^{\rm T} \bs \phi(\bs x_0)  \right]
\nonumber
\\
& = & E_r \bigg[ \left\{\sum_{j = 1}^{n_1} (\Gamma(\bs X_j))^2 \right\} \bs \phi(\bs x_0)^{\rm T} (\bs A_1^{\rm T} \bs H_1 \bs A_1 + n \gamma_1\bs P_1)^{-1}\bs A_1^{\rm T} \bs H_1^2 \bs A_1
\nonumber
\\
&& \times (\bs A_1^{\rm T} \bs H_1 \bs A_1 + n \gamma_1\bs P_1)^{-1}
\bs \phi(\bs x_0)  \bigg]
\nonumber
\\
& \le & \left[\rho_{\max}\left\{ (\bs A_1^{\rm T} \bs H_1 \bs A_1 + n \gamma_1\bs P_1)^{-1} \right\} \right]^2 \rho_{\max}(\bs A_1^{\rm T} \bs A_1 ) \rho_{\max}(\bs H_1^2) \bs \phi(\bs x_0)^{\rm T} \bs \phi(\bs x_0)
\nonumber
\\
&& \times n_1 E_r \{ \Gamma_1(\bs X) + \Gamma_2(\bs X) \}^2
\nonumber
\\
& \le & O(1) n^{2 ( \kappa^* - 1 } ) n^{1 - \kappa^*} \bs \phi(\bs x_0)^{\rm T} \bs \phi(\bs x_0) n_1 o_P(n^{-1})
\nonumber
\\
& = & o_P\left( n^{\kappa^* - 1}\right)  \bs \phi(\bs x_0)^{\rm T} \bs \phi(\bs x_0).
\esee

For $I^\ddag_{n2}$,
\bsee\label{th4_eq5b_a}
I^\ddag_{n2} & = & \bs \phi(\bs x_0)^{\rm T} {\rm Var}_r\left[ (\bs A_1^{\rm T} \bs H_1 \bs A_1 + n \gamma_1\bs P_1)^{-1}\bs A_1^{\rm T} \bs H_1  \{\bs \tau_0(\bs X_1^*) - \bs \tau_n(\bs X_1^*) + \bs \tau_n(\bs X_1^*) \}  \right] \bs \phi(\bs x_0)
\nonumber
\\
& \le & 2\bs \phi(\bs x_0)^{\rm T} {\rm Var}_r\left[ (\bs A_1^{\rm T} \bs H_1 \bs A_1 + n \gamma_1\bs P_1)^{-1}\bs A_1^{\rm T} \bs H_1  \{\bs \tau_0(\bs X_1^*) - \bs \tau_n(\bs X_1^*) \}  \right] \bs \phi(\bs x_0)
\nonumber
\\
&& + 2 \bs \phi(\bs x_0)^{\rm T} {\rm Var}_r\left\{ (\bs A_1^{\rm T} \bs H_1 \bs A_1 + n \gamma_1\bs P_1)^{-1}\bs A_1^{\rm T} \bs H_1  \bs \tau_n(\bs X_1^*)  \right\} \bs \phi(\bs x_0)
\nonumber
\\
& = & 2 I^\ddag_{n2, 1} + 2 I^\ddag_{n2, 2},
\esee
where $I^\ddag_{n2, 1}$ and $I^\ddag_{n2, 2}$ are self-explained from the above equation and
\bse
\bs \tau_n(\bs X_1^*) = (\tau_n(\bs X_1), \ldots, \tau_n(\bs X_{n_1}))^{\rm T}.
\ese
Employing  equations (\ref{th1_eq0}), (\ref{th4_eq2}), and Conditions \ref{con3}, \ref{con16}--\ref{con17}, and
analogous arguments as we did in equation (\ref{th4_eq5a}), we conclude
\bsee\label{th4_eq5b_b}
I^\ddag_{n2, 1} = O_P\left(n^{1 - 2 \kappa_1} \right)n^{\kappa^* - 1 }\bs \phi(\bs x_0)^{\rm T} \bs \phi(\bs x_0)
= o_P\left(n^{\kappa^* - 1 } \right)  \bs \phi(\bs x_0)^{\rm T} \bs \phi(\bs x_0).
\esee
We now deal with $I^\ddag_{n2, 2}$.
Noting that $\tau_n \in \Phi_n$, then $\tau_n(\bs x) $ can be written as $\tau_n(\bs x) = \bs \phi(\bs x)^{\rm T} \bs \alpha_n$ for some $\bs \alpha_n = (\alpha_{n1}, \ldots, \alpha_{nr_1})^{\rm T}$. Thus $\bs \tau_n(\bs X^*_1) = \bs A_1 \bs \alpha_n$, and further by equation (\ref{th1_eq0}) and (\ref{th4_eq2}), Conditions \ref{con16}--\ref{con17},
\bsee\label{th4_eq5b_c}
I^\ddag_{n2, 2} & = & \bs \phi(\bs x_0)^{\rm T} {\rm Var}_r\left\{ (\bs A_1^{\rm T} \bs H_1 \bs A_1 + n \gamma_1\bs P_1)^{-1}\bs A_1^{\rm T} \bs H_1  \bs A_1 \bs \alpha_n  \right\} \bs \phi(\bs x_0)
\nonumber
\\
& = & \bs \phi(\bs x_0)^{\rm T} {\rm Var}_r\big[ \left\{(\bs A_1^{\rm T} \bs H_1 \bs A_1)^{-1} - (\bs A_1^{\rm T} \bs H_1 \bs A_1)^{-1} n \gamma_1 \bs P_1 (\bs A_1^{\rm T} \bs H_1 \bs A_1 + n \gamma_1\bs P_1)^{-1} \right\}
\nonumber
\\
&& \times \bs A_1^{\rm T} \bs H_1  \bs A_1 \bs \alpha_n  \big] \bs \phi(\bs x_0)
\nonumber
\\
& = & \bs \phi(\bs x_0)^{\rm T} {\rm Var}_r\left\{\bs \alpha_n - n \gamma_1 (\bs A_1^{\rm T} \bs H_1 \bs A_1)^{-1} \bs P_1 (\bs A_1^{\rm T} \bs H_1 \bs A_1 + n \gamma_1\bs P_1)^{-1} \bs A_1^{\rm T} \bs H_1  \bs A_1 \bs \alpha_n \right\}\bs \phi(\bs x_0)
\nonumber
\\
& = & n^2 \gamma_1^2 \bs \phi(\bs x_0)^{\rm T}{\rm Var}_r\left\{(\bs A_1^{\rm T} \bs H_1 \bs A_1)^{-1} \bs P_1 (\bs A_1^{\rm T} \bs H_1 \bs A_1 + n \gamma_1\bs P_1)^{-1} \bs A_1^{\rm T} \bs H_1  \bs A_1 \bs \alpha_n \right\}\bs \phi(\bs x_0)
\nonumber
\\
& \le & n^2 \gamma_1^2 \bs \phi(\bs x_0)^{\rm T}E_r \bigg\{(\bs A_1^{\rm T} \bs H_1 \bs A_1)^{-1} \bs P_1 (\bs A_1^{\rm T} \bs H_1 \bs A_1 + n \gamma_1\bs P_1)^{-1} \bs A_1^{\rm T} \bs H_1  ( \bs A_1 \bs \alpha_n )
\nonumber
\\
&& \times (\bs \alpha^{\rm T}\bs A_1^{\rm T}) \bs H_1 \bs A_1 (\bs A_1^{\rm T} \bs H_1 \bs A_1 + n \gamma_1\bs P_1)^{-1} \bs P_1 (\bs A_1^{\rm T} \bs H_1 \bs A_1)^{-1} \bigg\}\bs \phi(\bs x_0)
\nonumber
\\
& \le & O(1) n^2 \gamma_1^2 \rho_{\max}\left\{ \bs \tau_n(\bs X^*) \bs \tau_n(\bs X^*)^{\rm T} \right\}  \left[\rho_{\max}\left\{(\bs A_1^{\rm T} \bs H_1 \bs A_1)^{-1} \right\} \right]^2
\left\{\rho_{\max}(\bs P_1) \right\}^2
\nonumber
\\
&& \times \left[\rho_{\max}\left\{(\bs A_1^{\rm T} \bs H_1 \bs A_1 + n\gamma_1\bs P_1)^{-1} \right\} \right]^2
\rho_{\max}\left(\bs A_1^{\rm T} \bs H_1 \bs A_1 \right)
\bs \phi(\bs x_0)^{\rm T} \bs \phi(\bs x_0)
\nonumber
\\
& \le & O(1) n^2 \gamma_1^2 n n^{2(\kappa^* - 1)} n^{2\kappa^\dag} n^{2(\kappa^* - 1)} n^{1 - \kappa^*}
\bs \phi(\bs x_0)^{\rm T} \bs \phi(\bs x_0)
\nonumber
\\
& = & O\left(n^{1 + 2(\kappa^* + \kappa^\dag)}\gamma_1^2 \right) n^{\kappa^* - 1 } \bs \phi(\bs x_0)^{\rm T} \bs \phi(\bs x_0)
\nonumber
\\
& = & o\left( n^{\kappa^* - 1 } \right) \bs \phi(\bs x_0)^{\rm T} \bs \phi(\bs x_0).
\esee
Plugging equations (\ref{th4_eq5b_b}) and (\ref{th4_eq5b_c}) into equation (\ref{th4_eq5b_a}), we have
\bsee\label{th4_eq5b}
I^\ddag_{n2} = o_P\left( n^{\kappa^* - 1 } \right) \bs \phi(\bs x_0)^{\rm T} \bs \phi(\bs x_0).
\esee
By equations (\ref{th4_eq5}), (\ref{th4_eq5a}), and (\ref{th4_eq5b}),
\bse
\bs \phi(\bs x_0)^{\rm T} \bs \Xi_2 \bs \phi(\bs x_0) = o_P\left( n^{\kappa^* - 1 } \right) \bs \phi(\bs x_0)^{\rm T} \bs \phi(\bs x_0),
\ese
combined with equation (\ref{th4_eq4}), show that $\bs \phi(\bs x_0)^{\rm T} \bs \Xi_2 \bs \phi(\bs x_0)$ is dominated by $\bs \phi(\bs x_0)^{\rm T} \bs \Xi_1 \bs \phi(\bs x_0)$,
moreover, ${\rm Var}_r\{\wh{\tau}_{\rm rct}(\bs x_0) \} \asymp n^{\kappa^* - 1 } \bs \phi(\bs x_0)^{\rm T} \bs \phi(\bs x_0)$ and
\bsee\label{th4_eq6}
&& {\rm Var}_r\left\{\wh{\tau}_{\rm rct}(\bs x_0) \right\}
\nonumber
\\
& = & \bs \phi(\bs x_0)^{\rm T} E_r \left\{ (\bs A_1^{\rm T} \bs H_1 \bs A_1 + n \gamma_1\bs P_1)^{-1}\bs A_1^{\rm T} \bs H_1 \bs A_1 (\bs A_1^{\rm T} \bs H_1 \bs A_1 + n \gamma_1\bs P_1)^{-1} \right\} \bs \phi(\bs x_0)
\nonumber
\\
&& +  o_P\left(n^{\kappa^* - 1 } \right) \bs \phi(\bs x_0)^{\rm T}\bs \phi(\bs x_0).
\esee
By Conditions \ref{con15}--\ref{con16}, and equation (\ref{th4_eq2}),
\bse
&& n \gamma_1 \bs \phi(\bs x_0)^{\rm T} (\bs A_1^{\rm T} \bs H_1 \bs A_1 + n \gamma_1\bs P_1)^{-1} \bs P_1 (\bs A_1^{\rm T} \bs H_1 \bs A_1 + n \gamma_1\bs P_1)^{-1} \bs \phi(\bs x_0)
\\
& \le & n \gamma_1 \rho_{\max}(\bs P_1) \rho_{\max}\left\{ (\bs A_1^{\rm T} \bs H_1 \bs A_1 + n \gamma_1\bs P_1)^{-1} \right\}^2 \bs \phi(\bs x_0)^{\rm T} \bs \phi(\bs x_0)
\\
& \le & O(1) n\gamma_1 n^{\kappa^\dag}  n^{2(\kappa^* - 1 )} \bs \phi(\bs x_0)^{\rm T} \bs \phi(\bs x_0)
\\
& = & o( n^{\kappa^*  - 1 } ) \bs \phi(\bs x_0)^{\rm T} \bs \phi(\bs x_0),
\\
&& n \gamma_1 \bs \phi(\bs x_0)^{\rm T} (\bs A_1^{\rm T} \bs H_1 \bs A_1 )^{-1} \bs P_1 (\bs A_1^{\rm T} \bs H_1 \bs A_1 + n \gamma_1\bs P_1)^{-1} \bs \phi(\bs x_0)
\\
& \le & n \gamma_1 \rho_{\max}(\bs P_1)
\rho_{\max}\left\{ (\bs A_1^{\rm T} \bs H_1 \bs A_1 )^{-1} \right\}
\rho_{\max}\left\{ (\bs A_1^{\rm T} \bs H_1 \bs A_1 + n \gamma_1\bs P_1)^{-1} \right\} \bs \phi(\bs x_0)^{\rm T} \bs \phi(\bs x_0)
\\
& = & o( n^{\kappa^*  - 1 } ) \bs \phi(\bs x_0)^{\rm T} \bs \phi(\bs x_0),
\ese
then equation (\ref{th4_eq6}) is  deduced by
\bsee\label{th4_eq6a}
&& {\rm Var}_r \left\{\wh{\tau}_{\rm rct}(\bs x_0) \right\}
\nonumber
\\
& = & \bs \phi(\bs x_0)^{\rm T} E_r \big\{ (\bs A_1^{\rm T} \bs H_1 \bs A_1 + n \gamma_1\bs P_1)^{-1}\bs A_1^{\rm T} \bs H_1 (\bs A_1 + n\gamma_1 \bs P_1 - n \gamma_1 \bs P_1 )
\nonumber
\\
&& \times (\bs A_1^{\rm T} \bs H_1 \bs A_1 + n \gamma_1\bs P_1)^{-1} \big\} \bs \phi(\bs x_0) +  o_P\left(n^{\kappa^* - 1 } \right) \bs \phi(\bs x_0)^{\rm T}\bs \phi(\bs x_0)
\nonumber
\\
& = & \bs \phi(\bs x_0)^{\rm T} E_r \big \{ (\bs A_1^{\rm T} \bs H_1 \bs A_1 + n \gamma_1\bs P_1)^{-1} \big\} \bs \phi(\bs x_0)
\nonumber
\\
&& - n \gamma_1 \bs \phi(\bs x_0)^{\rm T} E_r \big \{ (\bs A_1^{\rm T} \bs H_1 \bs A_1 + n \gamma_1\bs P_1)^{-1} \bs P_1 (\bs A_1^{\rm T} \bs H_1 \bs A_1 + n \gamma_1\bs P_1)^{-1} \big\} \bs \phi(\bs x_0)
\nonumber
\\
&&+ o_P\left(n^{\kappa^* - 1 } \right) \bs \phi(\bs x_0)^{\rm T}\bs \phi(\bs x_0)
\nonumber
\\
& = & \bs \phi(\bs x_0)^{\rm T} E_r \big \{ (\bs A_1^{\rm T} \bs H_1 \bs A_1 + n \gamma_1\bs P_1)^{-1} \big\} \bs \phi(\bs x_0)+ o_P\left(n^{\kappa^* - 1 } \right) \bs \phi(\bs x_0)^{\rm T}\bs \phi(\bs x_0)
\nonumber
\\
& = & \bs \phi(\bs x_0)^{\rm T} E_r \big \{ (\bs A_1^{\rm T} \bs H_1 \bs A_1)^{-1} \big\} \bs \phi(\bs x_0) + o_P\left(n^{\kappa^* - 1 } \right) \bs \phi(\bs x_0)^{\rm T}\bs \phi(\bs x_0)
\nonumber
\\
&& - n \gamma_1 \bs \phi(\bs x_0)^{\rm T} E_r \big \{ (\bs A_1^{\rm T} \bs H_1 \bs A_1)^{-1} \bs P_1 (\bs A_1^{\rm T} \bs H_1 \bs A_1 + n \gamma_1\bs P_1)^{-1} \big\} \bs \phi(\bs x_0)
\nonumber
\\
& = &\bs \phi(\bs x_0)^{\rm T} E_r \big \{ (\bs A_1^{\rm T} \bs H_1 \bs A_1)^{-1} \big\} \bs \phi(\bs x_0)+ o_P\left(n^{\kappa^* - 1 } \right) \bs \phi(\bs x_0)^{\rm T}\bs \phi(\bs x_0).
\esee

Mimicking the discussion of equation (\ref{th4_eq6a}), we can also conclude that
\bse
&& {\rm Var}_r\left\{\omega_1 \wh{\tau}_n(\bs x_0) + \omega_2 \wh{\lambda}_n(\bs x_0) \right\}
\\
& = & (\omega_1 \bs \phi(\bs x_0)^{\rm T}, \omega_2 \bs \psi(\bs x_0)^{\rm T} ) E_r \left\{(\bs A^{\rm T} \bs H \bs A)^{-1}  \right\}(\omega_1 \bs \phi(\bs x_0)^{\rm T}, \omega_2 \bs \psi(\bs x_0)^{\rm T} )^{\rm T}
\\
&& +  o_P\left(n^{\kappa^* - 1 } \right) (\omega_1 \bs \phi(\bs x_0)^{\rm T}, \omega_2 \bs \psi(\bs x_0)^{\rm T} )(\omega_1 \bs \phi(\bs x_0)^{\rm T}, \omega_2 \bs \psi(\bs x_0)^{\rm T} )^{\rm T}
\ese
for every $\omega_1, \omega_2 \in \mb R$.
Let
\bsee\label{th4_eq7}
 \left(
  \begin{array}{cc}
    \bs \Pi_1 &  \bs \Pi_{12}\\
    \bs \Pi_{12}^{\rm T} & \bs \Pi_{2} \\
  \end{array}
\right)
 =  (\bs A^{\rm T} \bs H \bs A)^{-1} =
\left(
  \begin{array}{cc}
    \bs A_1^{\rm T} \bs H_1 \bs A_1 + \bs A_{21}^{\rm T} \bs H_2 \bs A_{21}
     &  \bs A_{21}^{\rm T} \bs H_2 \bs A_{2} \\
    \bs A_2^{\rm T}\bs H_2 \bs A_{21} & \bs A_2^{\rm T}\bs H_2 \bs A_2 \\
  \end{array}
\right)^{-1}.
\esee
Taking $\omega_1 = 1$ and $\omega_2 = 0 $, we have
\bsee\label{th4_eq9}
{\rm Var}_r \{ \wh{\tau}_n(\bs x_0 ) \}
 =  \bs \phi(\bs x_0)^{\rm T} E_r \left( \bs \Pi_1 \right)\bs \phi(\bs x_0)
 + o_P\left(n^{\kappa^* - 1 } \right) \bs \phi(\bs x_0)^{\rm T}\bs \phi(\bs x_0).
\esee
By equation (\ref{th4_eq7}) and some routine calculation,
\bse
\bs \Pi_1  =  \big\{\bs A_1^{\rm T} \bs H_1\bs A_1 + \bs A_{21}^{\rm T}\bs H_2 \bs A_{21}
- \bs A_{21}^{\rm T}\bs H_2 \bs A_2 ( \bs A_2^{\rm T}\bs H_2 \bs A_2 )^{-1}\bs A_2^{\rm T}\bs H_2\bs A_{21}\big\}^{-1}.
\ese
To prove Theorem 4.4, it is enough to show that
$
\bs \Pi_1 \le (\bs A_1^{\rm T} \bs H_1 \bs A_1 )^{-1}
$,
that is
\bsee\label{th4_eq10}
&& \bs \Pi_1^{-1} - \bs A_1^{\rm T}\bs H_1 \bs A_1
\nonumber
\\
& = &  \bs A_{21}^{\rm T}\bs H_2 \bs A_{21} - \bs A_{21}^{\rm T}\bs H_2 \bs A_2 ( \bs A_2^{\rm T}\bs H_2 \bs A_2)^{-1}\bs A_2^{\rm T}\bs H_2 \bs A_{21}
\nonumber
\\
& \ge &  \bs 0.
\esee
Let $\bs A_2^* = \bs H_2^{1/2} \bs A_2 ( \bs A_2^{\rm T}\bs H_2 \bs A_2)^{-1}\bs A_2^{\rm T}\bs H_2^{1/2}$, then we have $ ( \bs A_2^* )^2 = \bs A_2^* $, which implies that
$\rho_{\max} (\bs A_2^*) = 1$. Therefore, for  every $ \bs \omega = (\omega_1, \ldots, \omega_{r_1})^{\rm T} $,
\bse
&& \bs \omega^{\rm T } \bs A_{21}^{\rm T}\bs H_2 \bs A_2 ( \bs A_2^{\rm T}\bs H_2 \bs A_2)^{-1}\bs A_2^{\rm T}\bs H_2 \bs A_{21} \bs \omega
\\
& = & ( \bs H_2^{1/2} \bs A_{21} \bs \omega )^{\rm T} \bs A_2^* \bs H_2^{1/2} \bs A_{21} \bs \omega
\\
& \le & \rho_{\max} (\bs A_2^*)( \bs H_2^{1/2} \bs A_{21} \bs \omega )^{\rm T}\bs H_2^{1/2} \bs A_{21} \bs \omega
\\
& = & \bs \omega^{\rm T} \bs A_{21}^{\rm T}\bs H_2 \bs A_{21} \bs \omega,
\ese
which proves equation (\ref{th4_eq10}).
Consequently,
\bse
{\rm Var}_r \{ \wh{\tau}_{\rm rct}(\bs x_0) \} \ge {\rm Var}_r \{ \wh{\tau}_n(\bs x_0) \}
+ o_P\left( n^{\kappa^* - 1} \right) \bs \phi(\bs x_0)^{\rm T} \bs \phi(\bs x_0).
\ese
Specifically,
\bse
{\rm Var}_r \{ \wh{\tau}_{\rm rct}(\bs x_0) \} & = & \bs \phi(\bs x_0)^{\rm T} E_r (\bs \Sigma_{\rm rct}) \bs \phi(\bs x_0) + o_P\left( n^{\kappa^* - 1} \right) \bs \phi(\bs x_0)^{\rm T} \bs \phi(\bs x_0),
\\
{\rm Var}_r \{ \wh{\tau}_{n}(\bs x_0) \} & = & \bs \phi(\bs x_0)^{\rm T} E_r(\bs \Sigma_{\rm int}) \bs \phi(\bs x_0) + o_P\left( n^{\kappa^* - 1} \right) \bs \phi(\bs x_0)^{\rm T} \bs \phi(\bs x_0),
\ese
where
\bse
\bs \Sigma_{\rm rct} & = & \left(\bs A_1^{\rm T} \bs H_1 \bs A_1 \right)^{-1},
\\
\bs \Sigma_{\rm int} & = & \left(\bs A_1^{\rm T} \bs H_1\bs A_1 + \bs A_{21}^{\rm T}\bs H_2 \bs A_{21}
- \bs A_{21}^{\rm T}\bs H_2 \bs A_2 ( \bs A_2^{\rm T}\bs H_2 \bs A_2 )^{-1}\bs A_2^{\rm T}\bs H_2\bs A_{21}\right)^{-1},
\ese
and
\bse
\bs \Sigma_{\rm int}^{-1} - \bs \Sigma_{\rm rct}^{-1} = \bs A_{21}^{\rm T}\bs H_2 \bs A_{21} - \bs A_{21}^{\rm T}\bs H_2 \bs A_2 ( \bs A_2^{\rm T}\bs H_2 \bs A_2)^{-1}\bs A_2^{\rm T}\bs H_2 \bs A_{21} \ge \bs 0.
\ese
Thus, we conclude Theorem 4.
\end{proof}

\subsection{Appendix F: Additional simulation results}
\subsubsection{F.1 Simulation results of Case 2 and Case 3 in Section 5.1 of the main paper}
\begin{figure}[H]
\centering
\includegraphics[width = \textwidth]{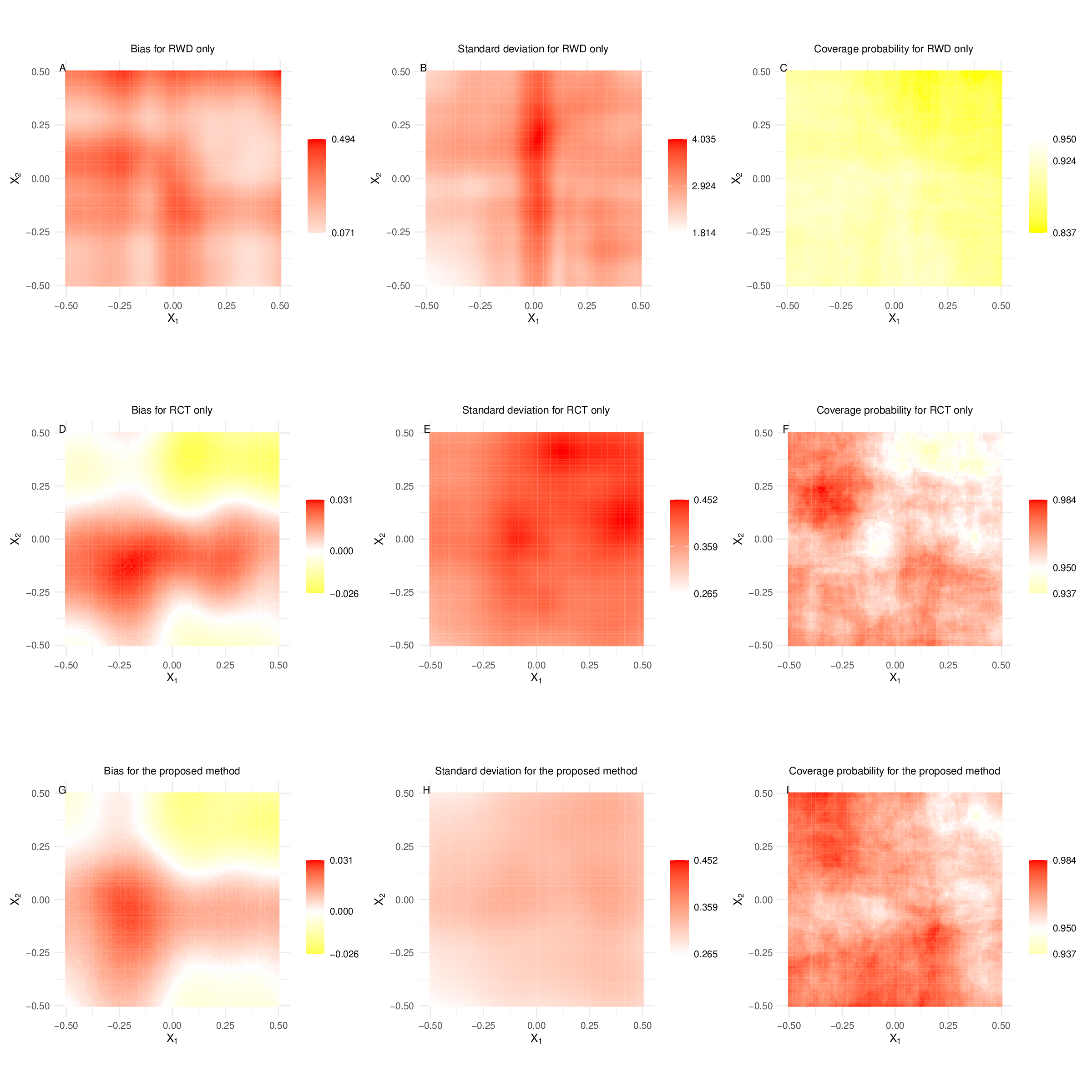}
\caption{The simulation results of Case 2 with $(n_1, n_0) = (500, 1000)$.}
\label{fig-s1}
\end{figure}
\begin{figure}[H]
\centering
\includegraphics[width = \textwidth]{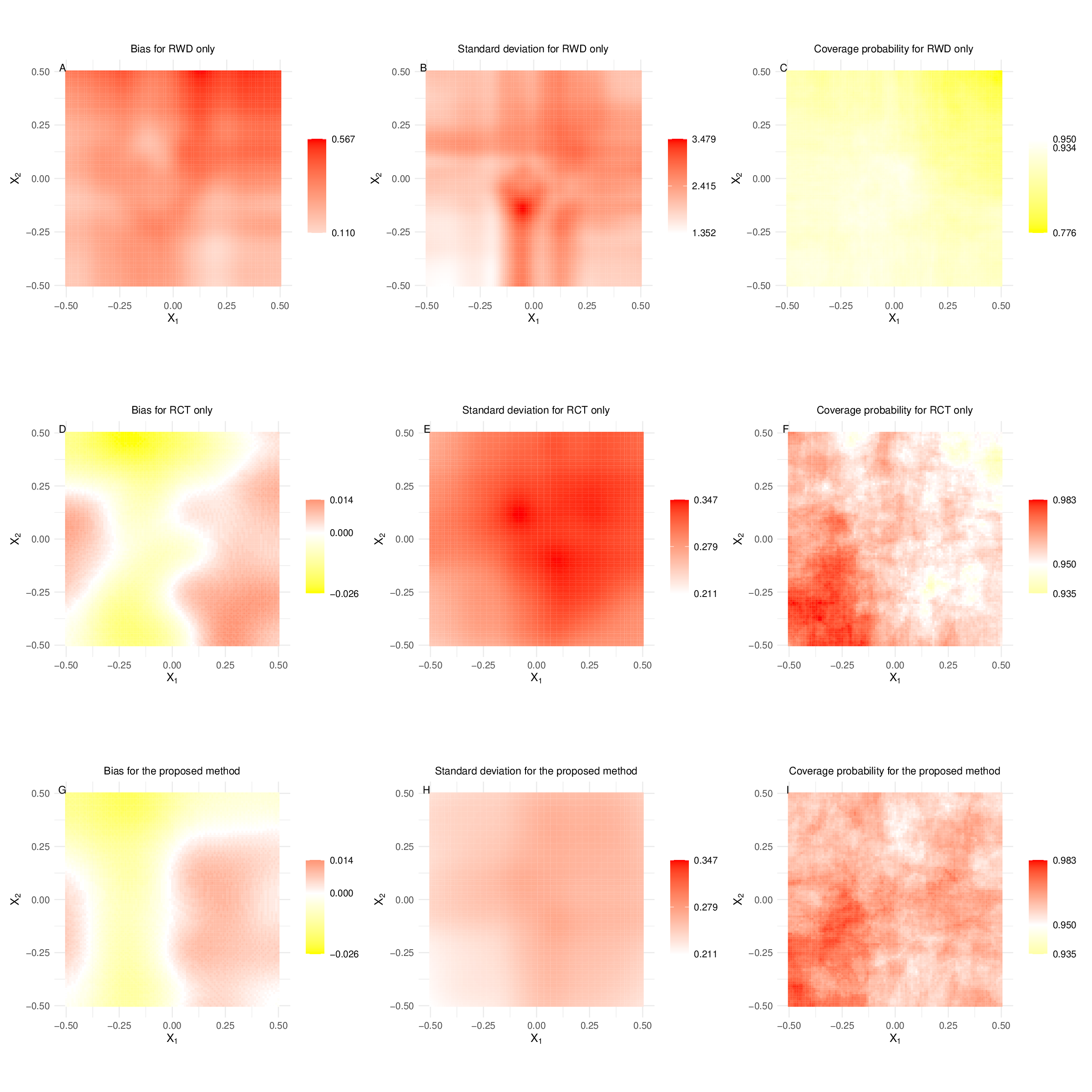}
\caption{The simulation results of Case 2 with $(n_1, n_0) = (1000, 2000)$.}
\label{fig-s2}
\end{figure}
\begin{figure}[H]
\centering
\includegraphics[width = \textwidth]{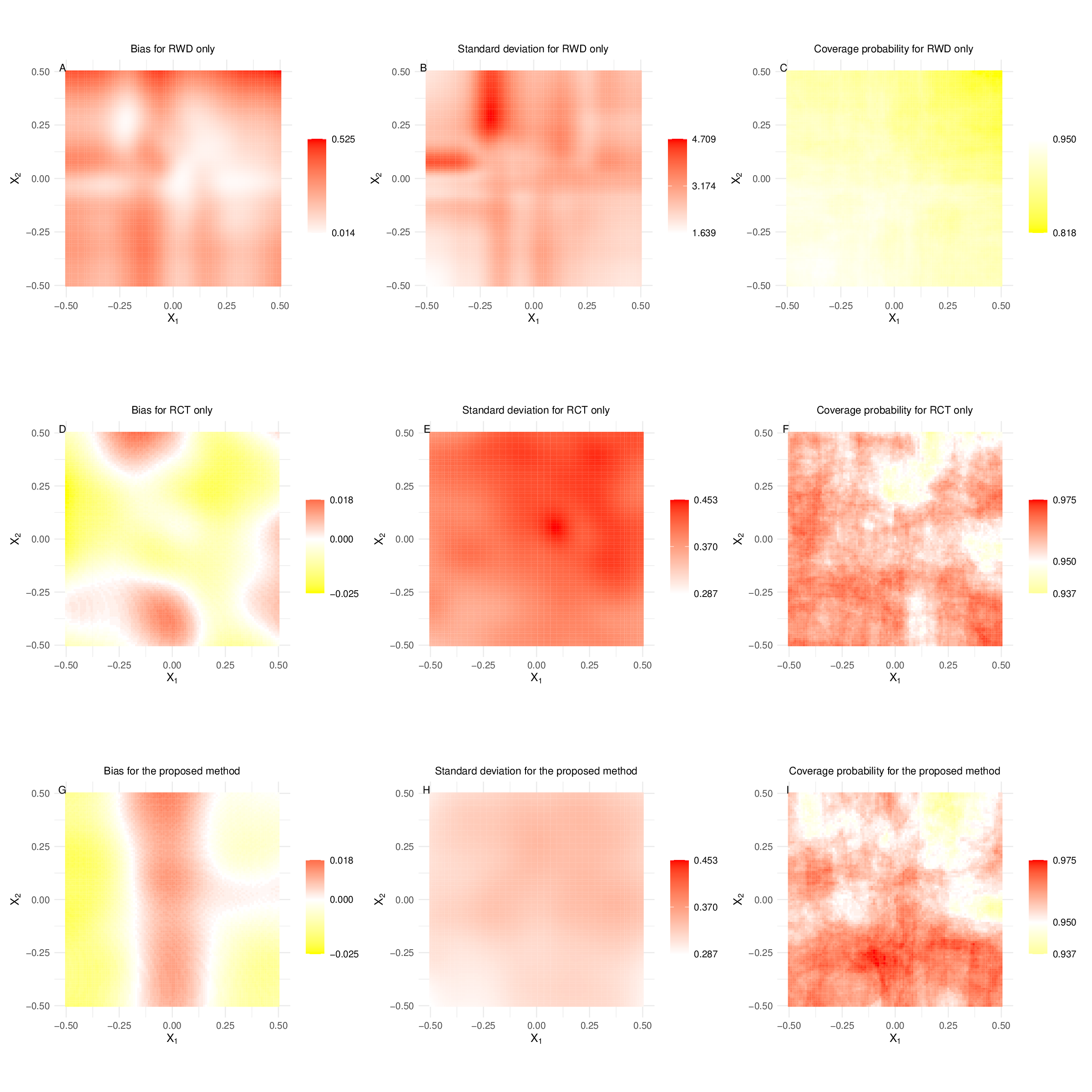}
\caption{The simulation results of Case 3 with $(n_1, n_0) = (500, 1000)$.}
\label{fig-s3}
\end{figure}
\begin{figure}[H]
\centering
\includegraphics[width = \textwidth]{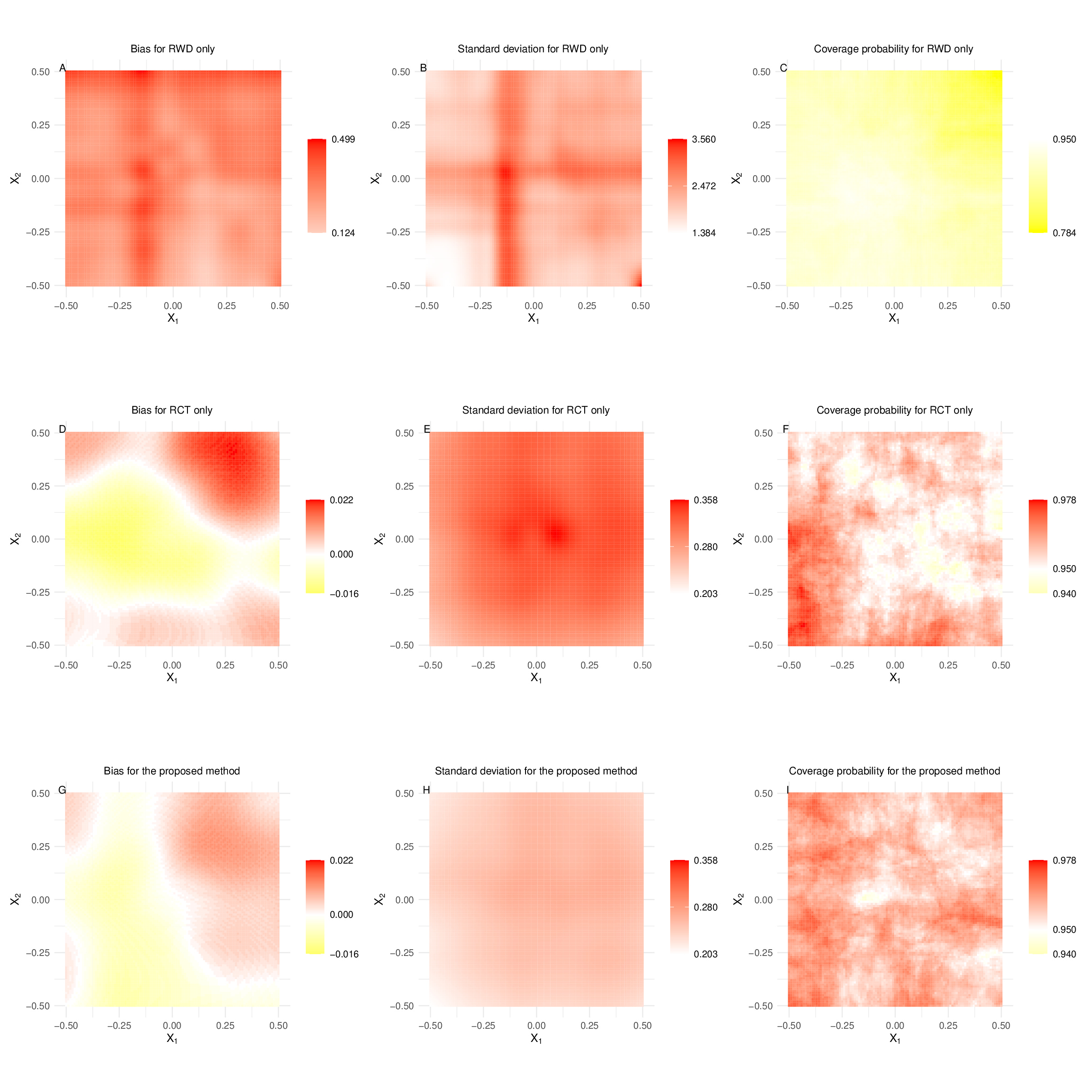}
\caption{The simulation results of Case 3 with $(n_1, n_0) = (1000, 2000)$.}
\label{fig-s4}
\end{figure}

\newpage
\subsubsection*{F.2 Additional simulation study}
Here, we consider the case where $p = 4$, i.e., $\bs X = (X_1, X_2, X_3, X_4)^{\rm T}$, where
$X_1$ and $X_2$ are continuous, and $X_3$ and $X_4$ are binary.
The estimation methodology employed in this additional simulation is the same as the approach detailed in the main paper. Specifically, we stratify the data into four distinct groups based on the values of the covariates $X_3$ and $X_4$.
Given that both $X_3$ and $X_4$ are binary, this stratification results in four possible combinations: $(X_3 = 0, X_4 = 0)$, $(X_3 = 1, X_4 = 0)$, $(X_3 = 0, X_4 = 1)$, and $(X_3 = 1, X_4 = 1)$.
For each of these four groups, applying the same method as described in the main paper,
we can get the results of RCT data-based only method and those of the RWD-based only method.
However, it is important to note that in the implementation of the proposed integrative method, we stratify the RCT data into four groups based on the values of $X_3$ and $X_4$.  In contrast, the RWD is not stratified, instead, the entire dataset is used for combining each of four groups from the RCT data.
Specifically, the information from $X_3$ and $X_4$ in the RWD is not utilized.
For the RWD,  the covariates information we used is only from the continuous covariates $X_1$ and $X_2$. This further demonstrates that the proposed method allows for inconsistency in data structure between the RCT data and the RWD.

Throughout the simulation, $ e(\bs X) = P(A = 1 \mid \bs X, S = 1) = 0.5 $ is known.
The censoring rate is set around $20\%$ for the RCT data and $70\%$ for the RWD.
Corresponding to the scenarios in the simulation study of the main paper, we also consider three cases, and the specific settings are as follows.
\begin{description}
\item[Case S1.]
The generations of $X_u$, $X_1$, and $X_2$ are the same as in Case 1 of Section 5.1 in the main paper.
For the RCT data, $X_3$ and $X_4$ are generated from a Bernoulli distribution with a success probability of $0.5$. In contrast, for the RWD, $X_3$ is generated with a success probability of $0.6$, and $X_4$ with a success probability of $0.4$.
For both the RCT data and the RWD,  $T$ is generated from the survival function
\bse
&&G_{T}(t \mid X_1, X_2, X_3, X_4, X_u, A )\\
& = & (1 + 0.02t)\exp\bigg[ -0.1X_u t -0.845t \exp\big\{ -0.5A + (0.6A - 0.3)\exp(1.5X_1)
\\
&&+ (0.3 - 0.6A)\exp(1.5X_2) + 0.5X_3 - 0.5X_4 \big\}\bigg],
\ese
and the censoring time $C$ is generated from a Cox proportional hazards model with the conditional hazard function taking the form
$
h_C(t \mid X_1, X_2, X_3, X_4) = h_{0C}(t) \exp( 0.5 X_1 + 0.5 X_2 - 0.5X_3 -0.5X_4)
$,
where $h_{0C}(t)$ is the baseline hazard function.
We set $ h_{0C}(t) = 0.052 $ with a study duration of $4.5$ for the RCT data,
and $ h_{0C}(t) = 3.164 $ with a study duration of $3.5$ for the RWD to achieve the preset censoring rates. We consider the restricted time point $L = 2$, the sample size $(n_1, n_0) = (500, 1000)$ and $(1000, 2000)$. For each configuration, 1000 simulations are repeated.

\item[Case S2.]
For the RCT data, the failure time $T$ is generated from a Cox model
\bse
&& h_{T}(t \mid X_1, X_2, X_3, X_4, A )
\\
& = & 0.845 \exp\big\{ -0.5A + (0.6A - 0.3)\exp(1.5X_1)
\\
&& + (0.3 - 0.6A)\exp(1.5X_2) + 0.5X_3 - 0.5X_4 \big\}.
\ese
In contrast, for the RWD, $X_u$ is generated from a standard normal distribution and $T$ is generated from a Cox model
\bse
&& h_{T}(t \mid X_1, X_2, X_3, X_4, X_u, A )
\\
& = & 0.845 \exp\big\{ -0.5A + (0.6A - 0.3)\exp(1.5X_1)
\\
&& + (0.3 - 0.6A)\exp(1.5X_2) + 0.5X_3 - 0.5X_4 + X_u \big\}.
\ese
We set $ h_{0C}(t) = 0.055$ for the RCT data and $ h_{0C}(t) = 3.587 $ for the RWD.
The remaining setups are the same as in Case S1.

\item[Case S3.]
For both the RCT data and the RWD, $T$ is generated from a Cox model
\bse
&& h_{T}(t \mid X_1, X_2, X_3, X_4, X_u, A )
\\
& = & 0.845 \exp\big\{ -0.5A + (0.6A - 0.3)\exp(1.5X_1)
\\
&& + (0.3 - 0.6A)\exp(1.5X_2) + 0.5X_3 - 0.5X_4 + X_u \big\}.
\ese
We set $ h_{0C}(t) = 0.02 $ with a study duration of $5.5$ for the RCT data.
The remaining setups are the same as in Case S2.
\end{description}

Case S1 corresponds to the situation where the failure time follows Cox type (non-standard Cox proportional hazards model), and $ E(T_L \mid \bs X, A, S = 1) = E(T_L \mid \bs X, A, S = 0) $.
Case S2 corresponds to the situation where the failure time follows Cox type (non-standard Cox proportional hazards model), but $ E(T_L \mid \bs X, A, S = 1) \neq E(T_L \mid \bs X, A, S = 0) $.
Case S3 addresses the situation where the failure time does not follow Cox type, and $ E(T_L \mid \bs X, A, S = 1) = E(T_L \mid \bs X, A, S = 0) $.
The simulation results are summarized in Figures \ref{fig-5}--\ref{fig-28}, , which allow us to draw conclusions similar to those in Section 5.1 of the main paper.
However, the coverage performance in this additional simulation is not as favorable as that in the main paper. This may be attributed to two main reasons.
Firstly, the HTE is identified based on the RCT data, which plays a crucial role in the estimation process. In this additional simulation, we stratify the RCT data into four groups and apply the estimation method to each subgroup separately. As a result, the sample size for each subgroup is significantly reduced, which directly impacts the coverage probability.
Secondly, in this additional simulation study, the standard Cox proportional hazards model is used to fit the survival data. Thus,
the failure time model is always misspecified, which impacts the coverage performance.
Even though, the coverage probability still reaches above $90\%$ at almost all evaluation points. This is a commendable result for a fully nonparametric method dealing with four-dimensional covariates, including two discrete variables, as well as a very complex true function form. It demonstrates the robustness and effectiveness of the proposed integrative method even under less favorable conditions.

\begin{figure}
\centering
\includegraphics[width = \textwidth]{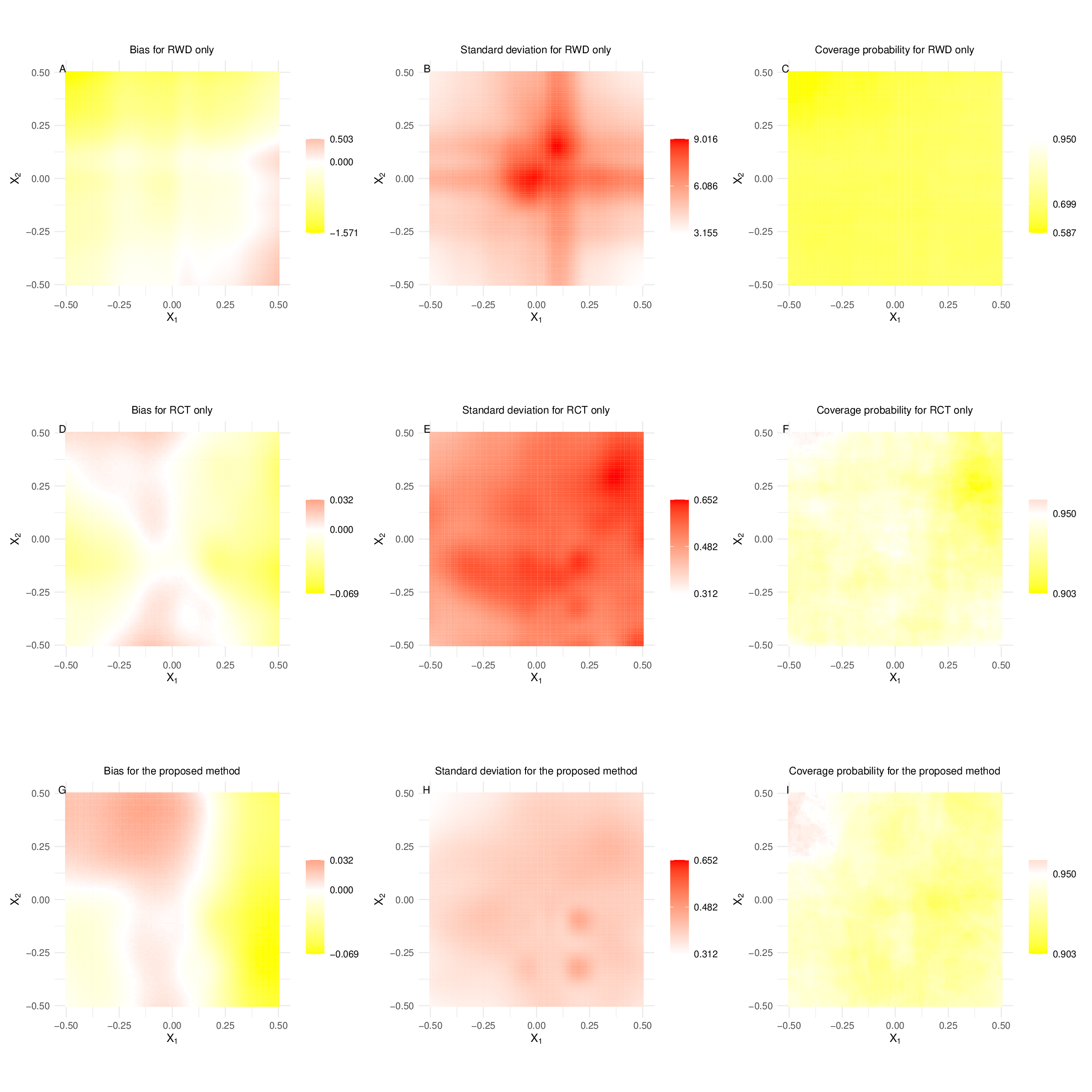}
\caption{The simulation results of Case S1 with $(n_1, n_0) = (500, 1000)$, $X_3 = 0 $, and $X_4 = 0$.}
\label{fig-5}
\end{figure}
\begin{figure}
\centering
\includegraphics[width = \textwidth]{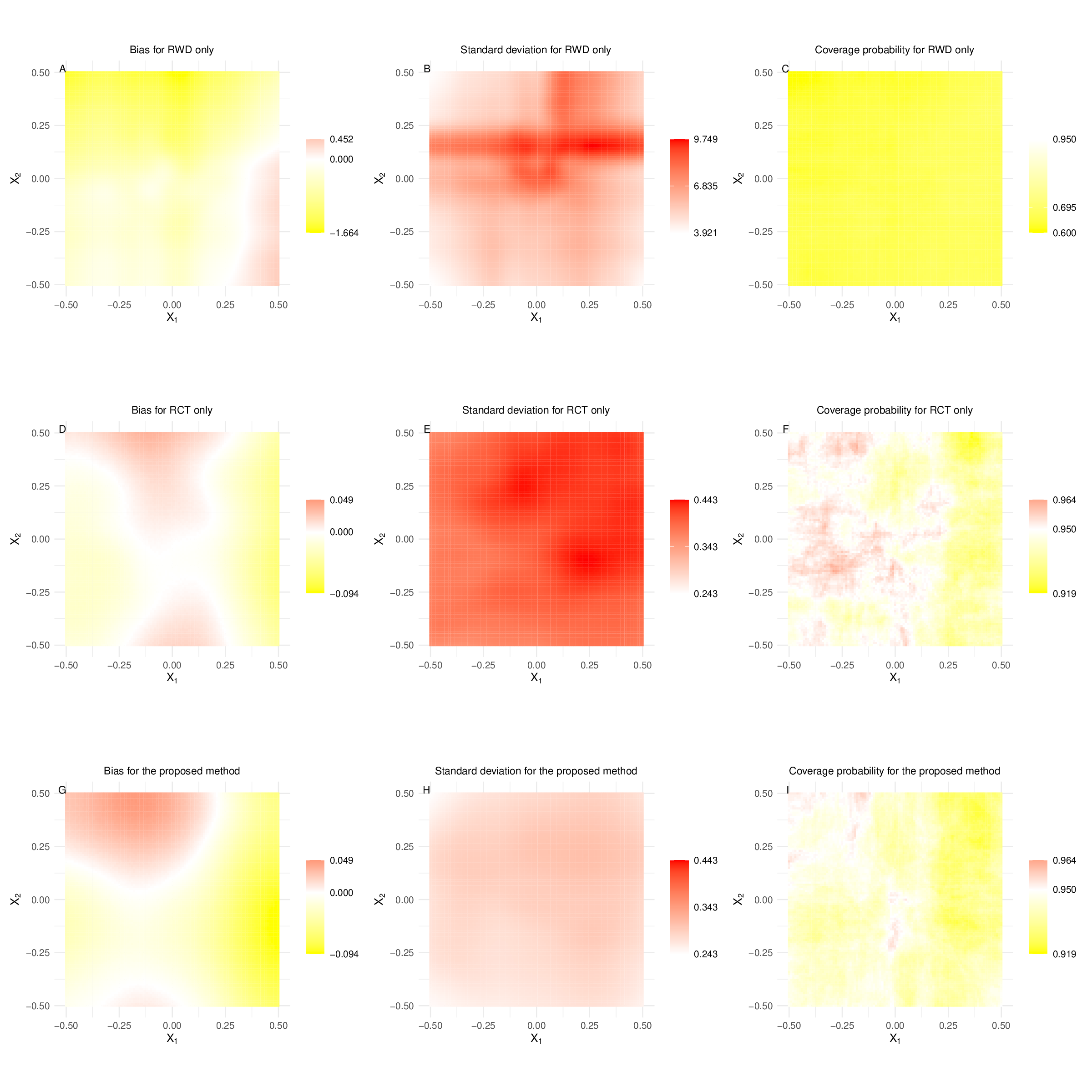}
\caption{The simulation results of Case S1 with $(n_1, n_0) = (1000, 2000)$, $X_3 = 0 $, and $X_4 = 0$.}
\label{fig-6}
\end{figure}
\begin{figure}
\centering
\includegraphics[width = \textwidth]{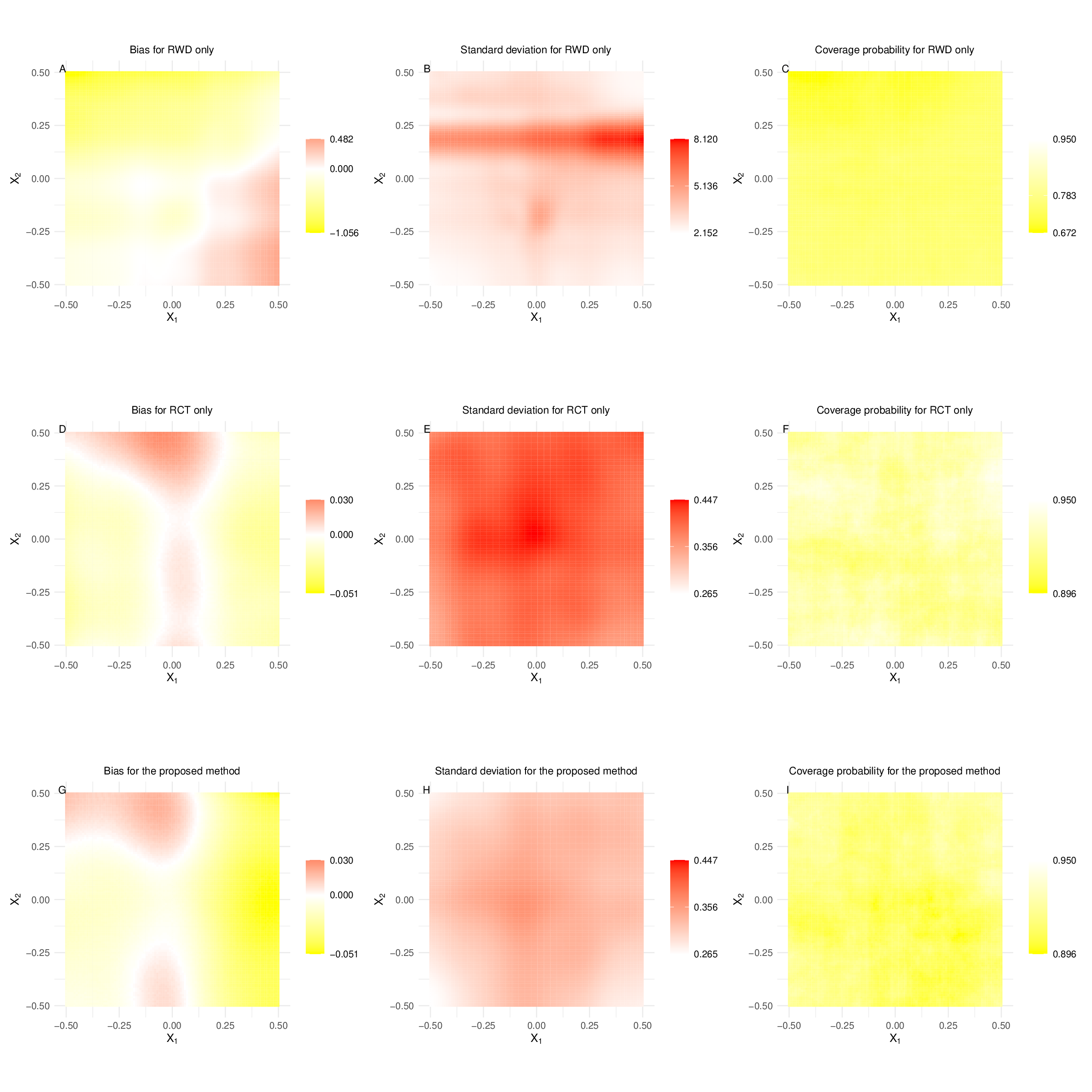}
\caption{The simulation results of Case S1 with $(n_1, n_0) = (500, 1000)$, $X_3 = 1 $, and $X_4 = 0$.}
\label{fig-7}
\end{figure}
\begin{figure}
\centering
\includegraphics[width = \textwidth]{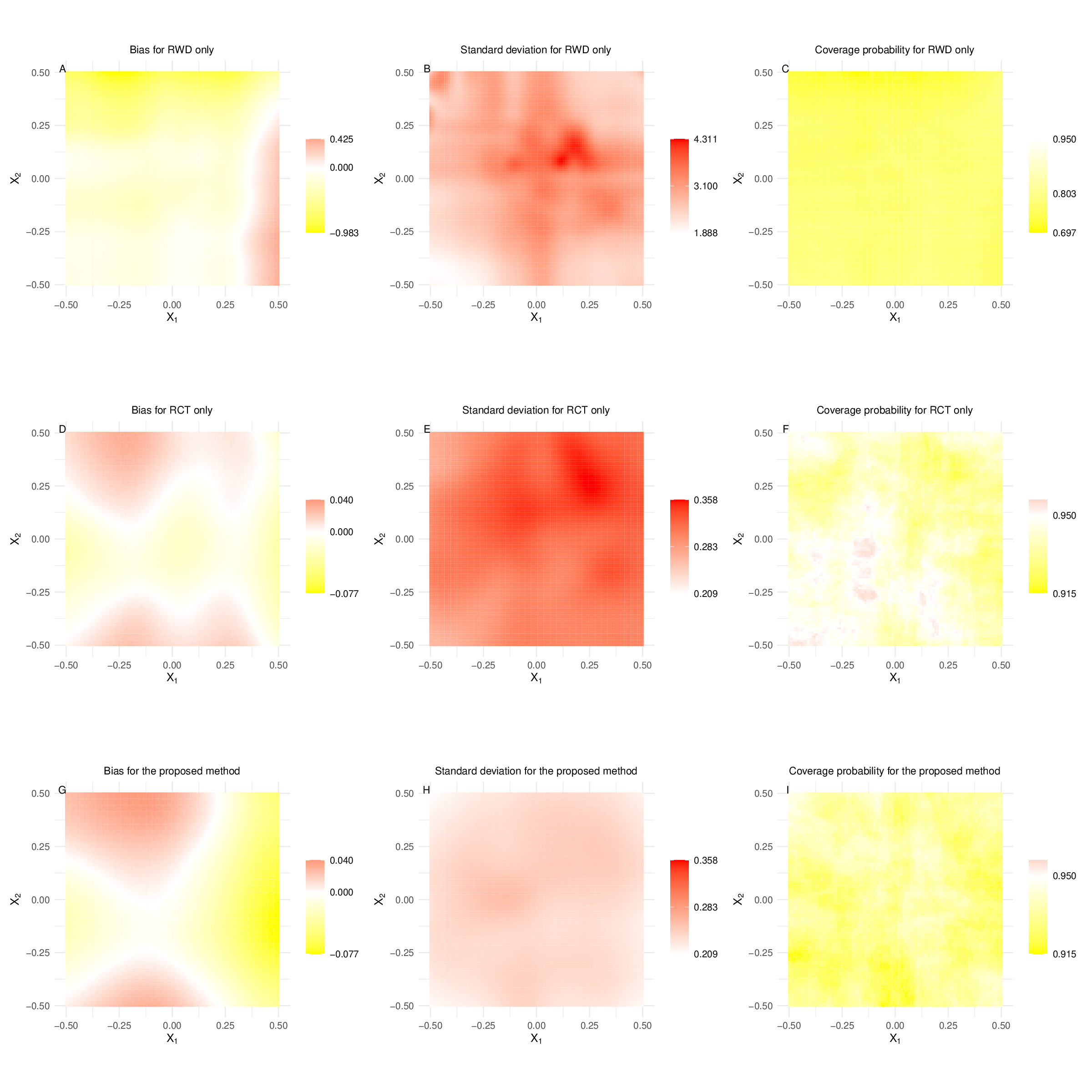}
\caption{The simulation results of Case S1 with $(n_1, n_0) = (1000, 2000)$, $X_3 = 1 $, and $X_4 = 0$.}
\label{fig-8}
\end{figure}
\begin{figure}
\centering
\includegraphics[width = \textwidth]{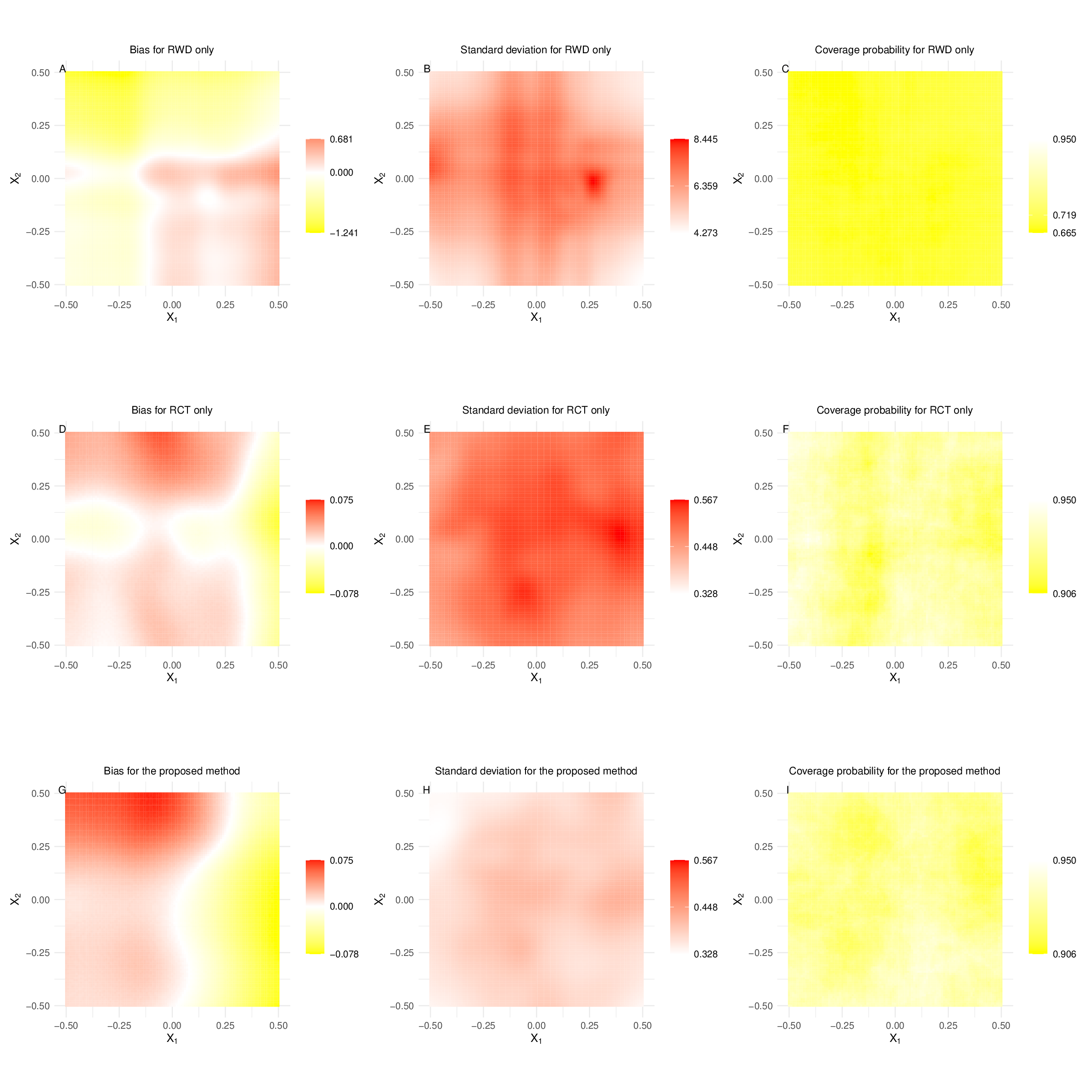}
\caption{The simulation results of Case S1 with $(n_1, n_0) = (500, 1000)$, $X_3 = 0 $, and $X_4 = 1$.}
\label{fig-9}
\end{figure}
\begin{figure}
\centering
\includegraphics[width = \textwidth]{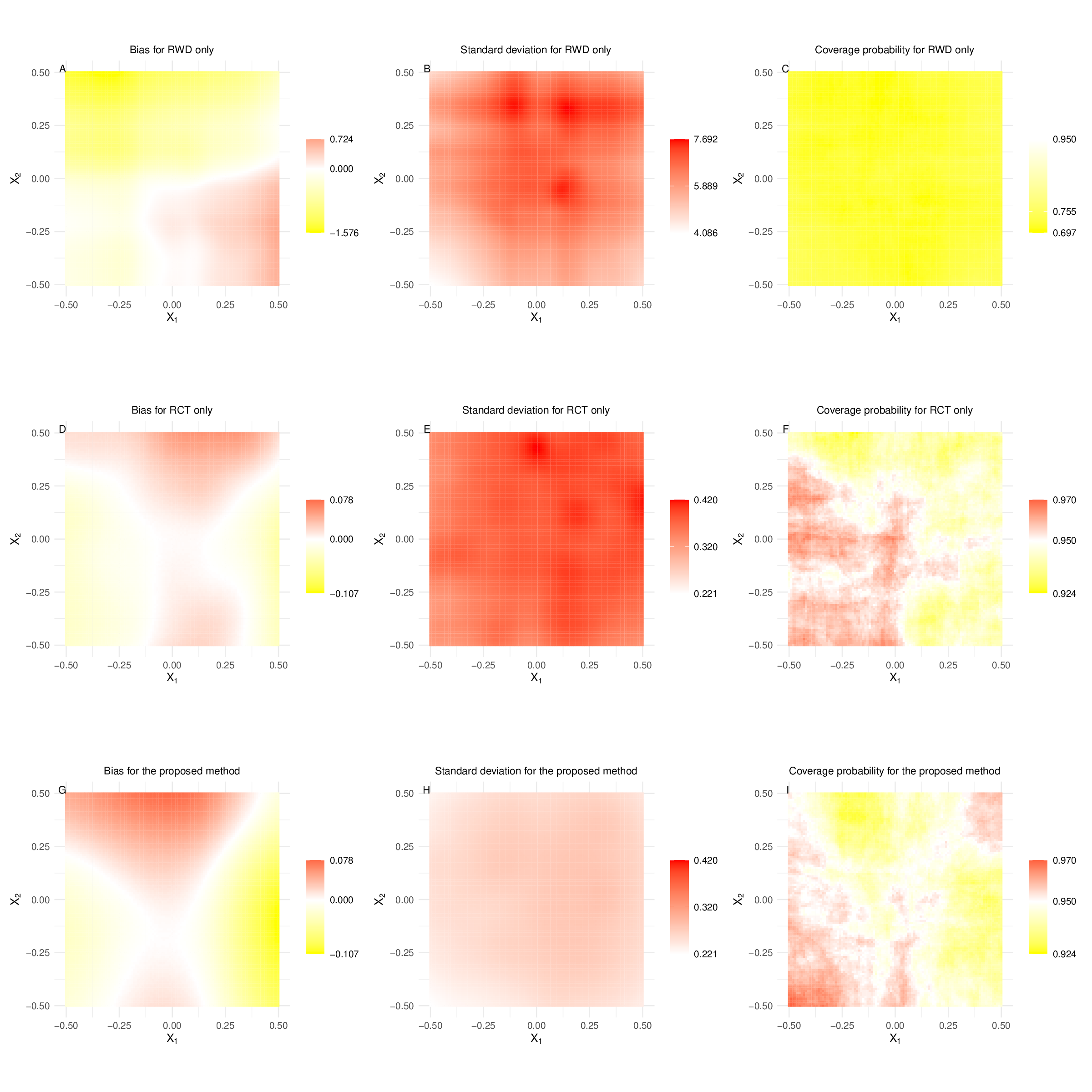}
\caption{The simulation results of Case S1 with $(n_1, n_0) = (1000, 2000)$, $X_3 = 0 $, and $X_4 = 1$.}
\label{fig-10}
\end{figure}
\begin{figure}
\centering
\includegraphics[width = \textwidth]{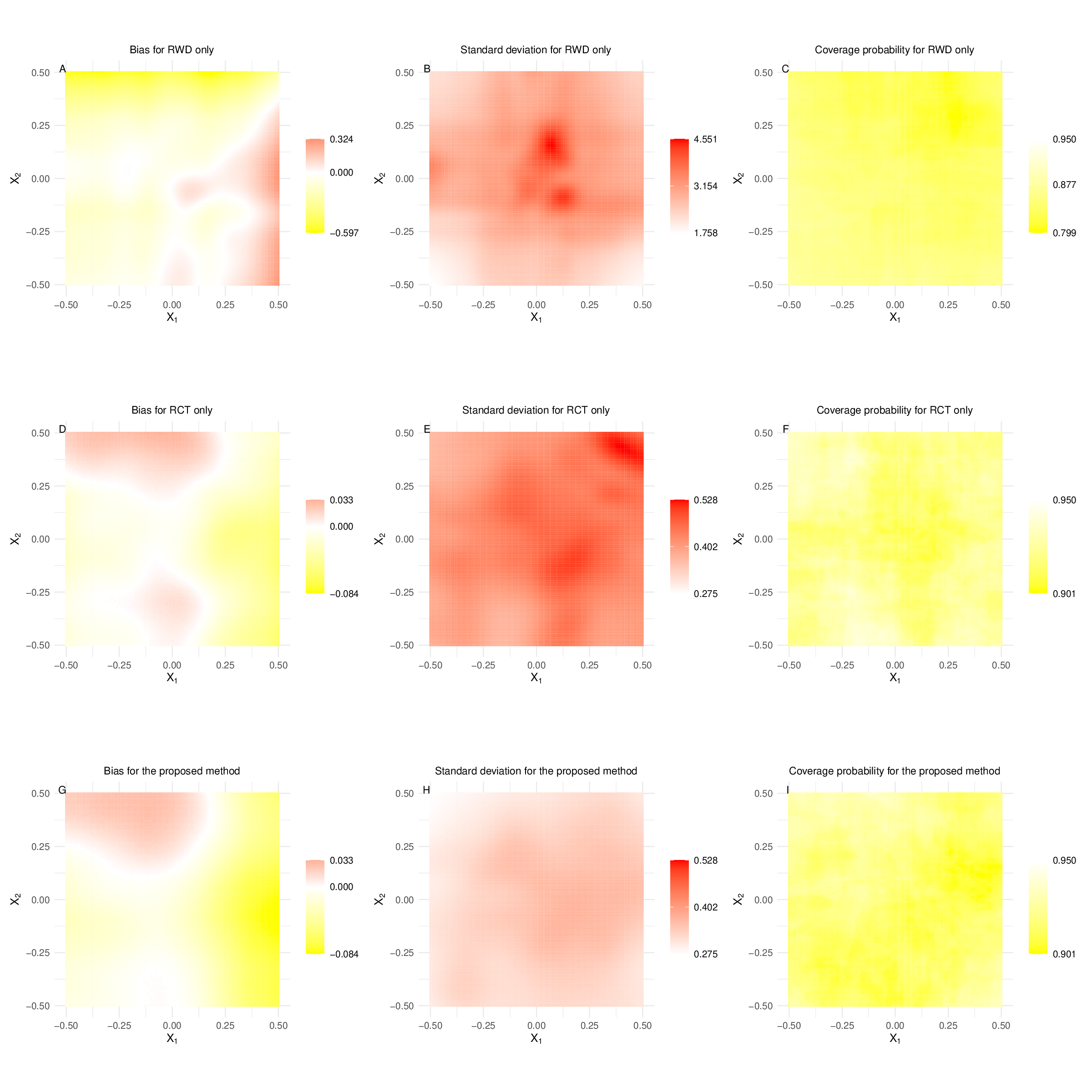}
\caption{The simulation results of Case S1 with $(n_1, n_0) = (500, 1000)$, $X_3 = 1 $, and $X_4 = 1$.}
\label{fig-11}
\end{figure}
\begin{figure}
\centering
\includegraphics[width = \textwidth]{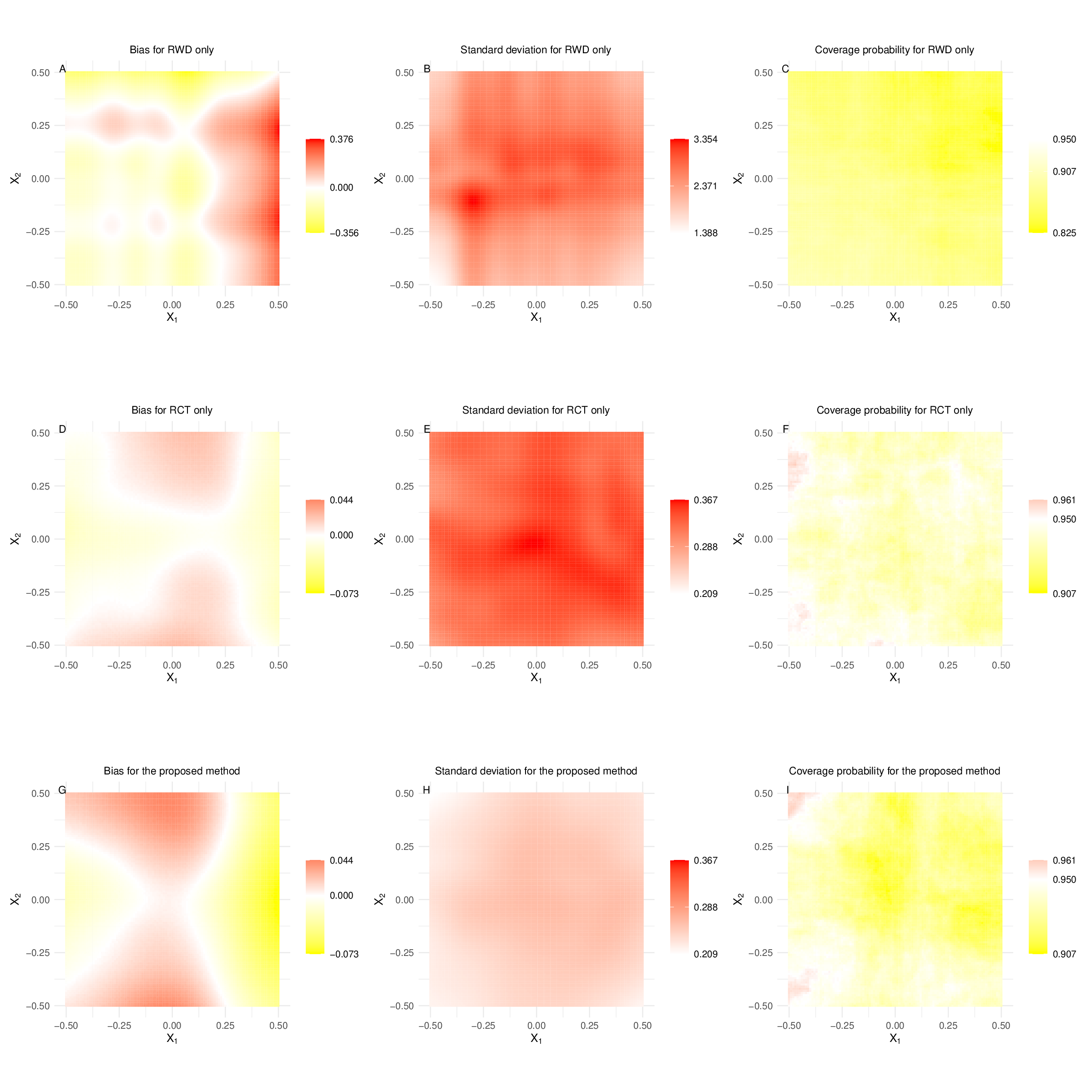}
\caption{The simulation results of Case S1 with $(n_1, n_0) = (1000, 2000)$, $X_3 = 1 $, and $X_4 = 1$.}
\label{fig-12}
\end{figure}

\begin{figure}
\centering
\includegraphics[width = \textwidth]{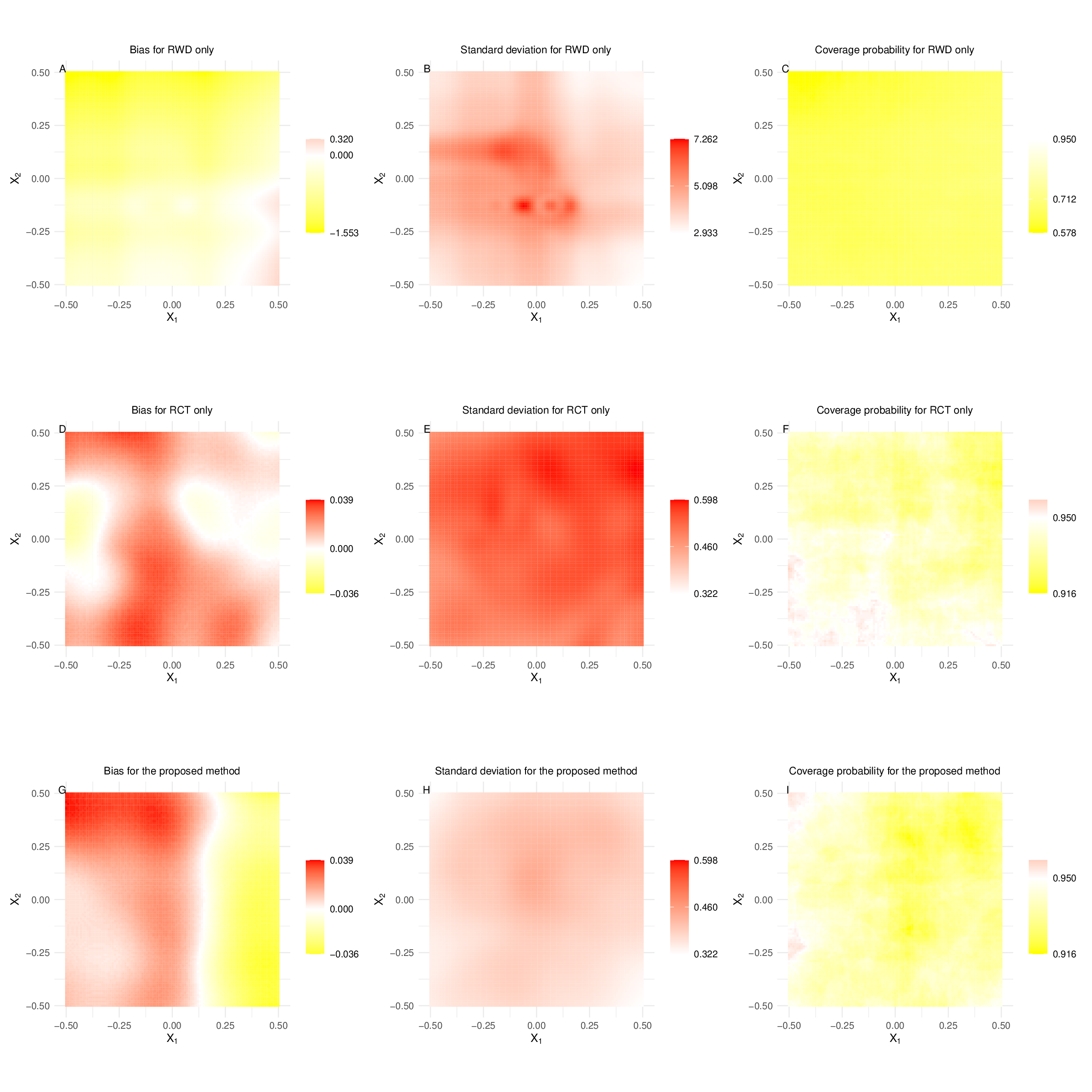}
\caption{The simulation results of Case S2 with $(n_1, n_0) = (500, 1000)$, $X_3 = 0 $, and $X_4 = 0$.}
\label{fig-13}
\end{figure}
\begin{figure}
\centering
\includegraphics[width = \textwidth]{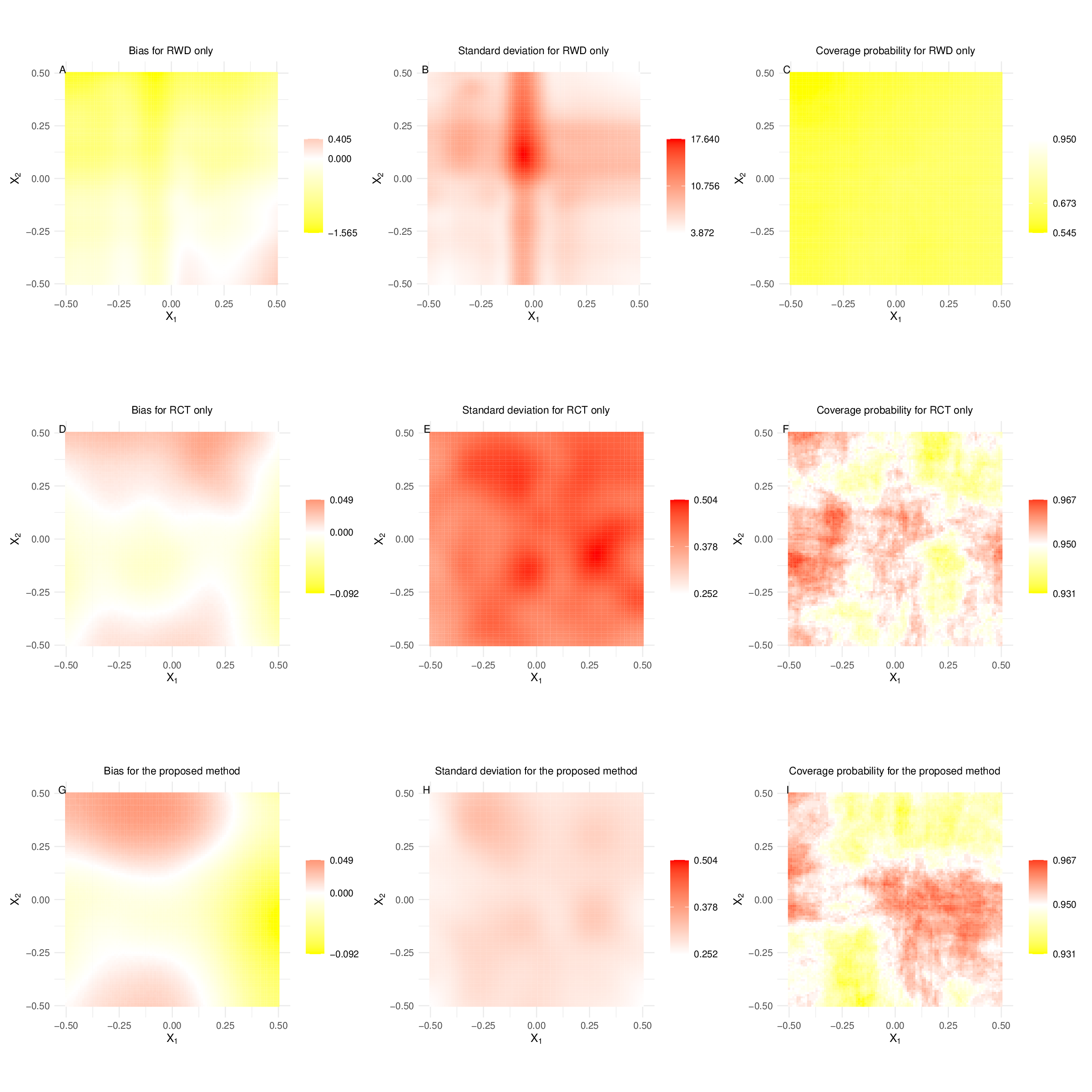}
\caption{The simulation results of Case S2 with $(n_1, n_0) = (1000, 2000)$, $X_3 = 0 $, and $X_4 = 0$.}
\label{fig-14}
\end{figure}
\begin{figure}
\centering
\includegraphics[width = \textwidth]{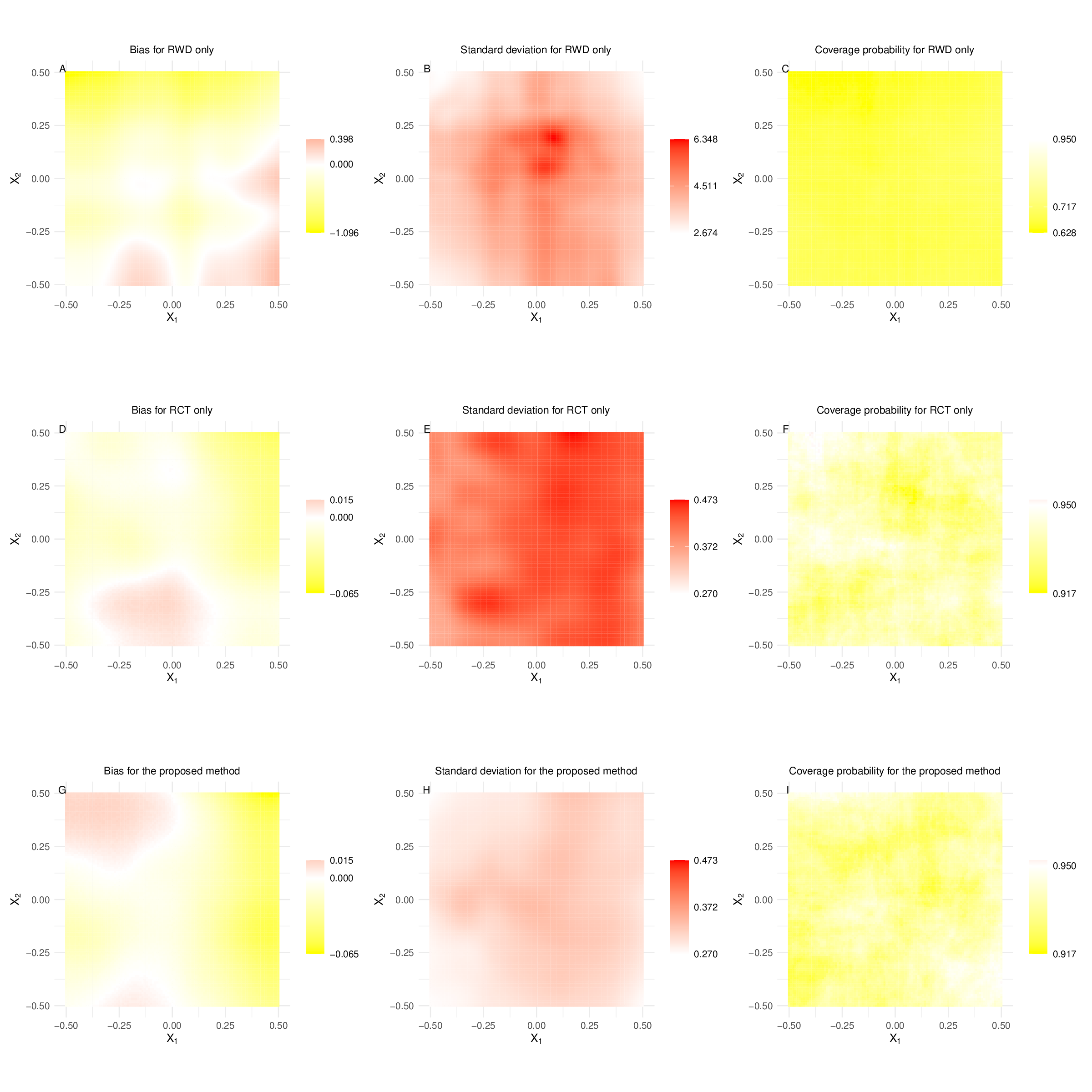}
\caption{The simulation results of Case S2 with $(n_1, n_0) = (500, 1000)$, $X_3 = 1 $, and $X_4 = 0$.}
\label{fig-15}
\end{figure}
\begin{figure}
\centering
\includegraphics[width = \textwidth]{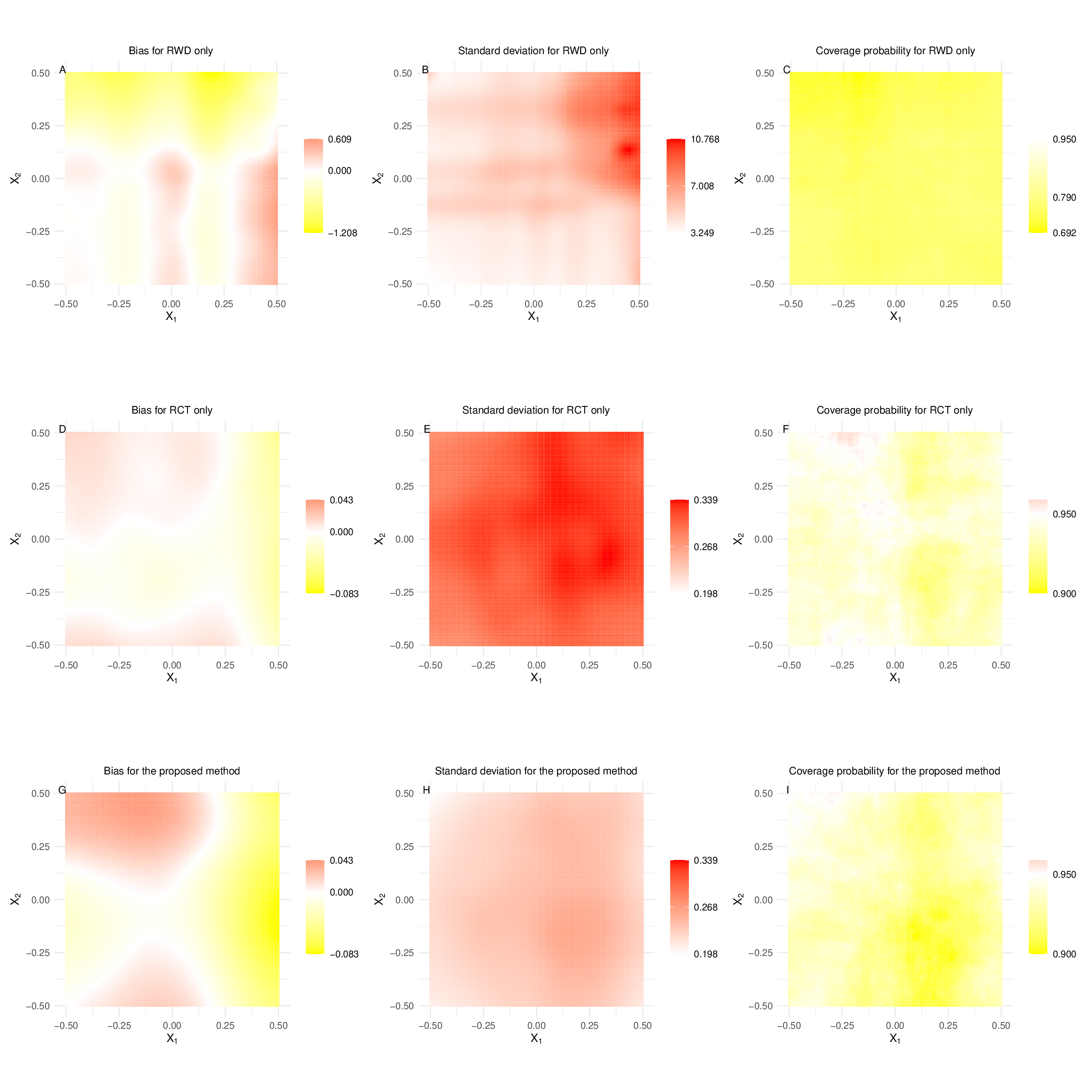}
\caption{The simulation results of Case S2 with $(n_1, n_0) = (1000, 2000)$, $X_3 = 1 $, and $X_4 = 0$.}
\label{fig-16}
\end{figure}
\begin{figure}
\centering
\includegraphics[width = \textwidth]{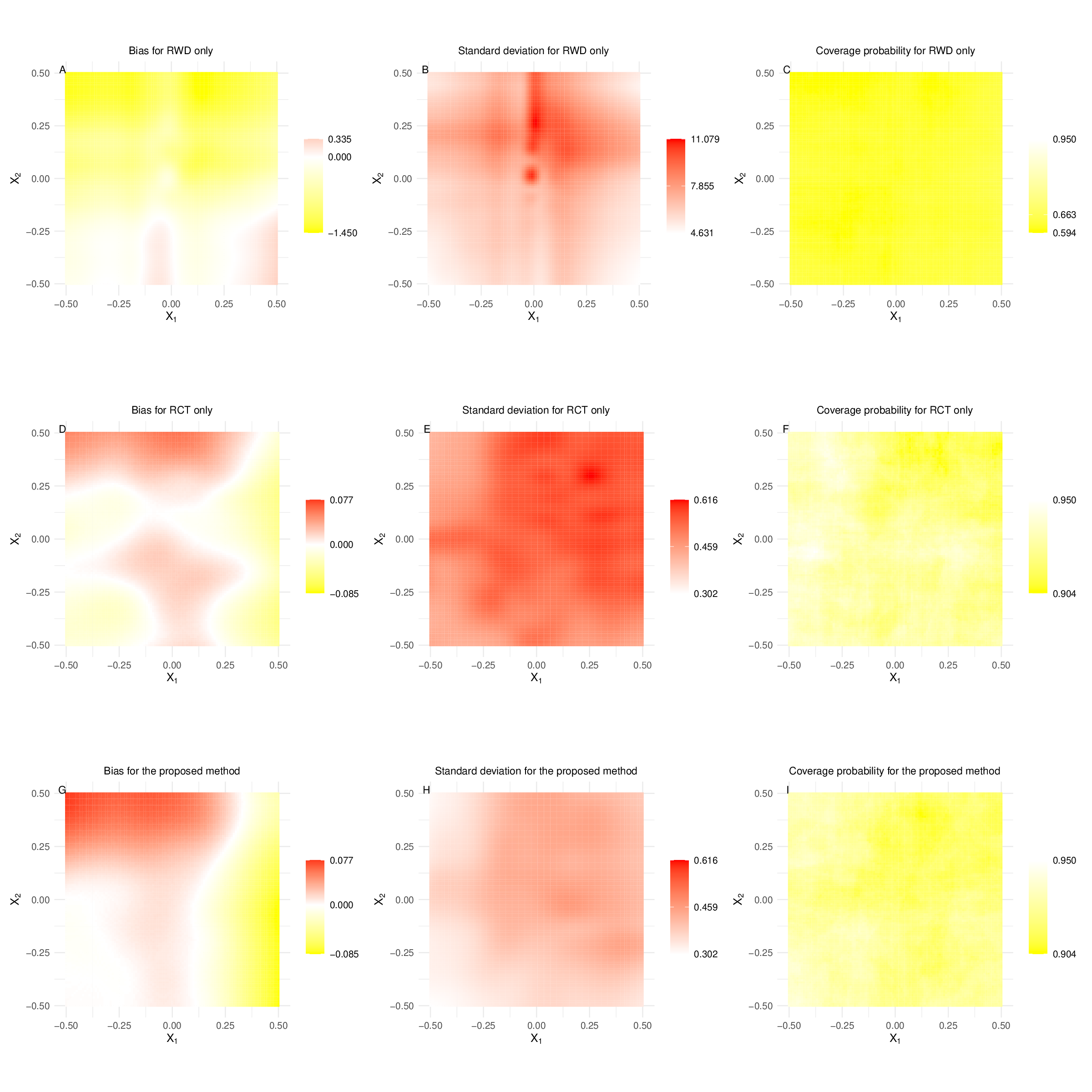}
\caption{The simulation results of Case S2 with $(n_1, n_0) = (500, 1000)$, $X_3 = 0 $, and $X_4 = 1$.}
\label{fig-17}
\end{figure}
\begin{figure}
\centering
\includegraphics[width = \textwidth]{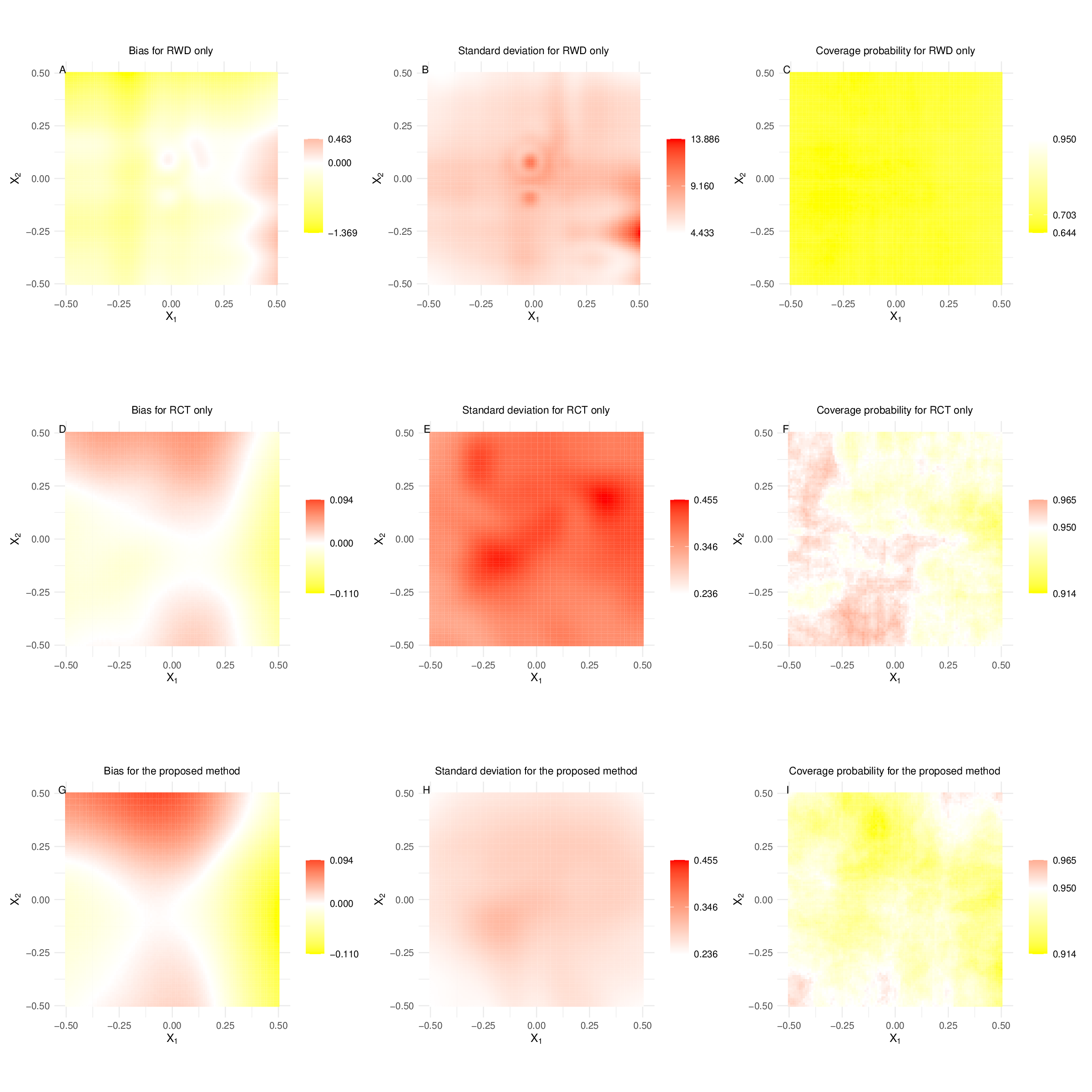}
\caption{The simulation results of Case S2 with $(n_1, n_0) = (1000, 2000)$, $X_3 = 0 $, and $X_4 = 1$.}
\label{fig-18}
\end{figure}
\begin{figure}
\centering
\includegraphics[width = \textwidth]{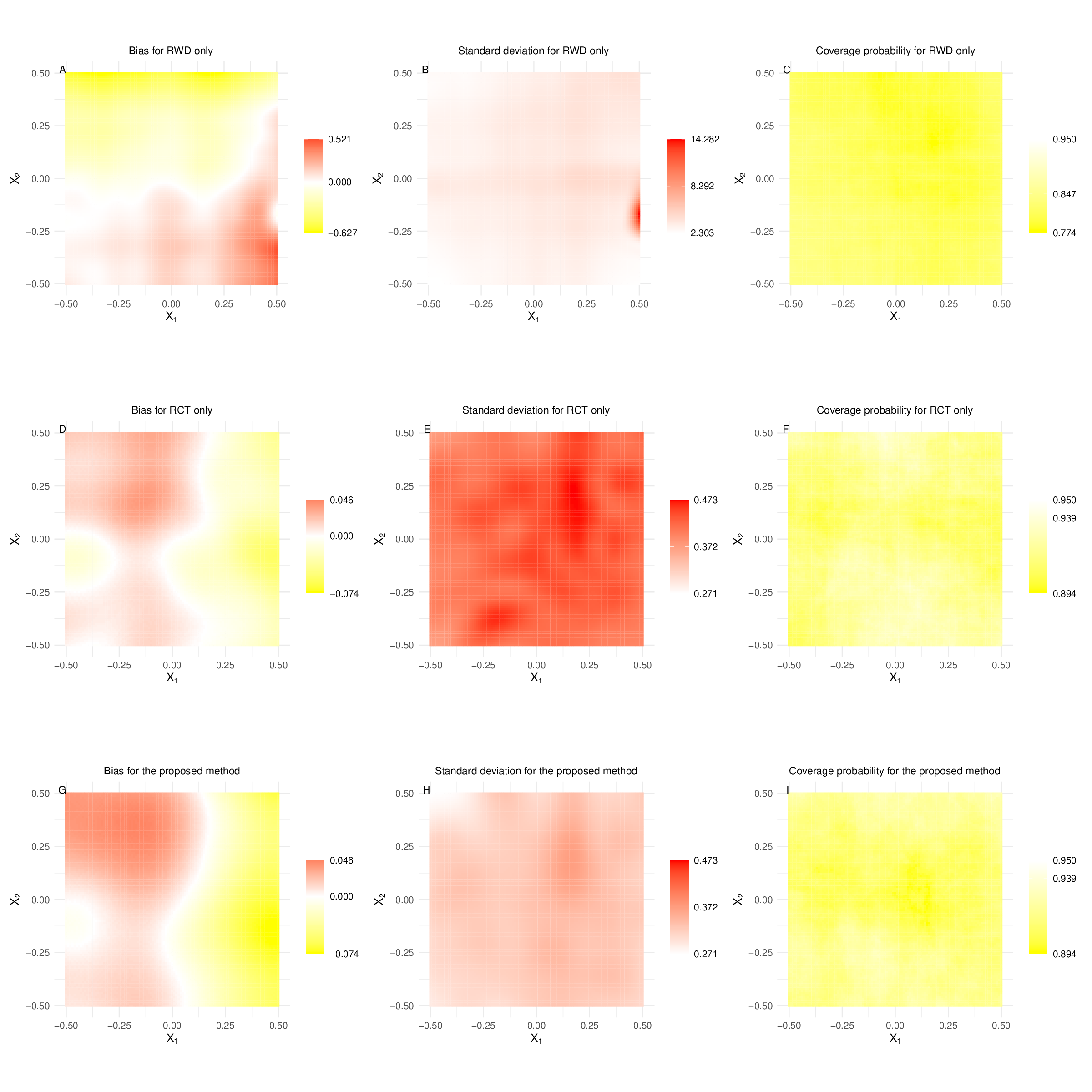}
\caption{The simulation results of Case S2 with $(n_1, n_0) = (500, 1000)$, $X_3 = 1 $, and $X_4 = 1$.}
\label{fig-19}
\end{figure}
\begin{figure}
\centering
\includegraphics[width = \textwidth]{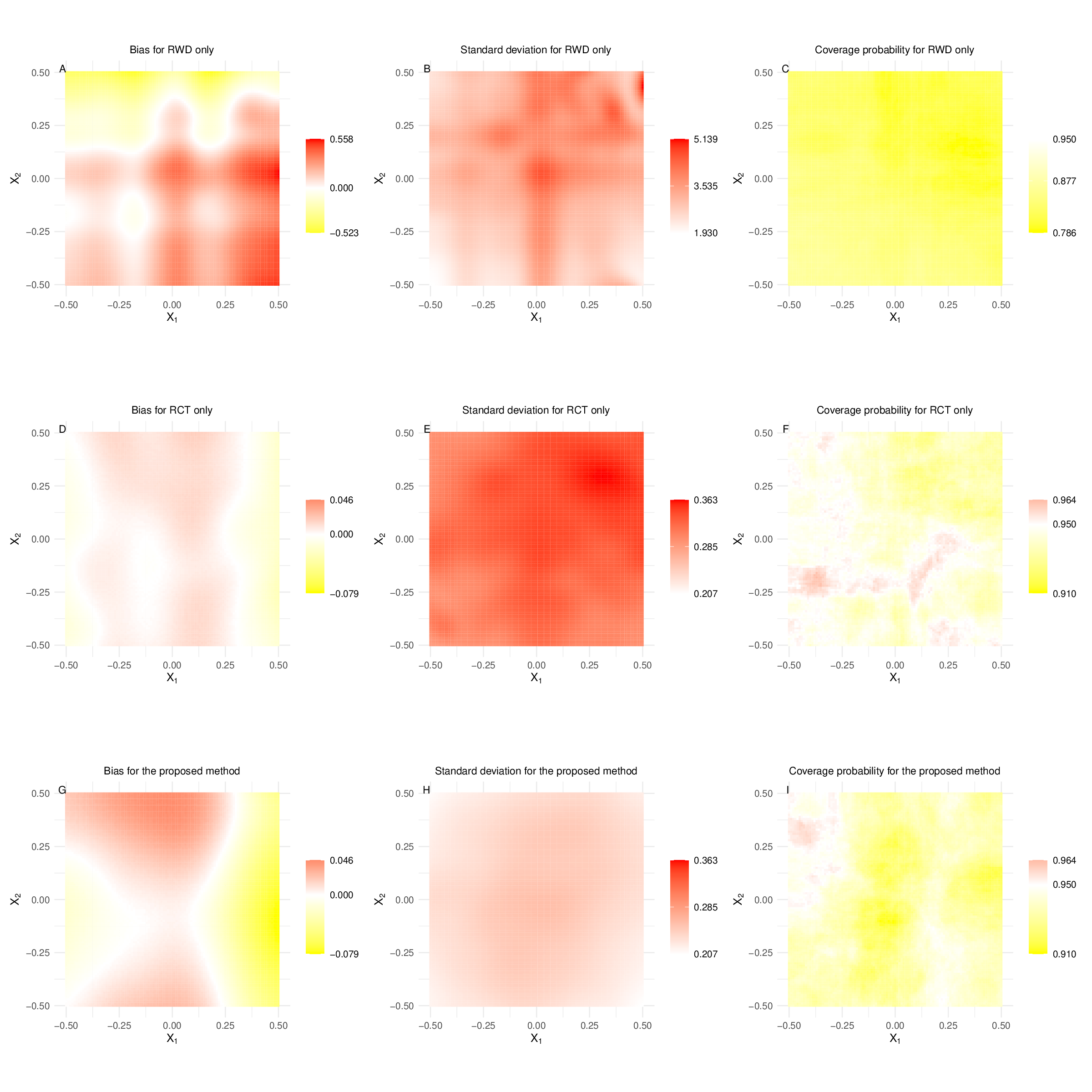}
\caption{The simulation results of Case S2 with $(n_1, n_0) = (1000, 2000)$, $X_3 = 1 $, and $X_4 = 1$.}
\label{fig-20}
\end{figure}
\begin{figure}
\centering
\includegraphics[width = \textwidth]{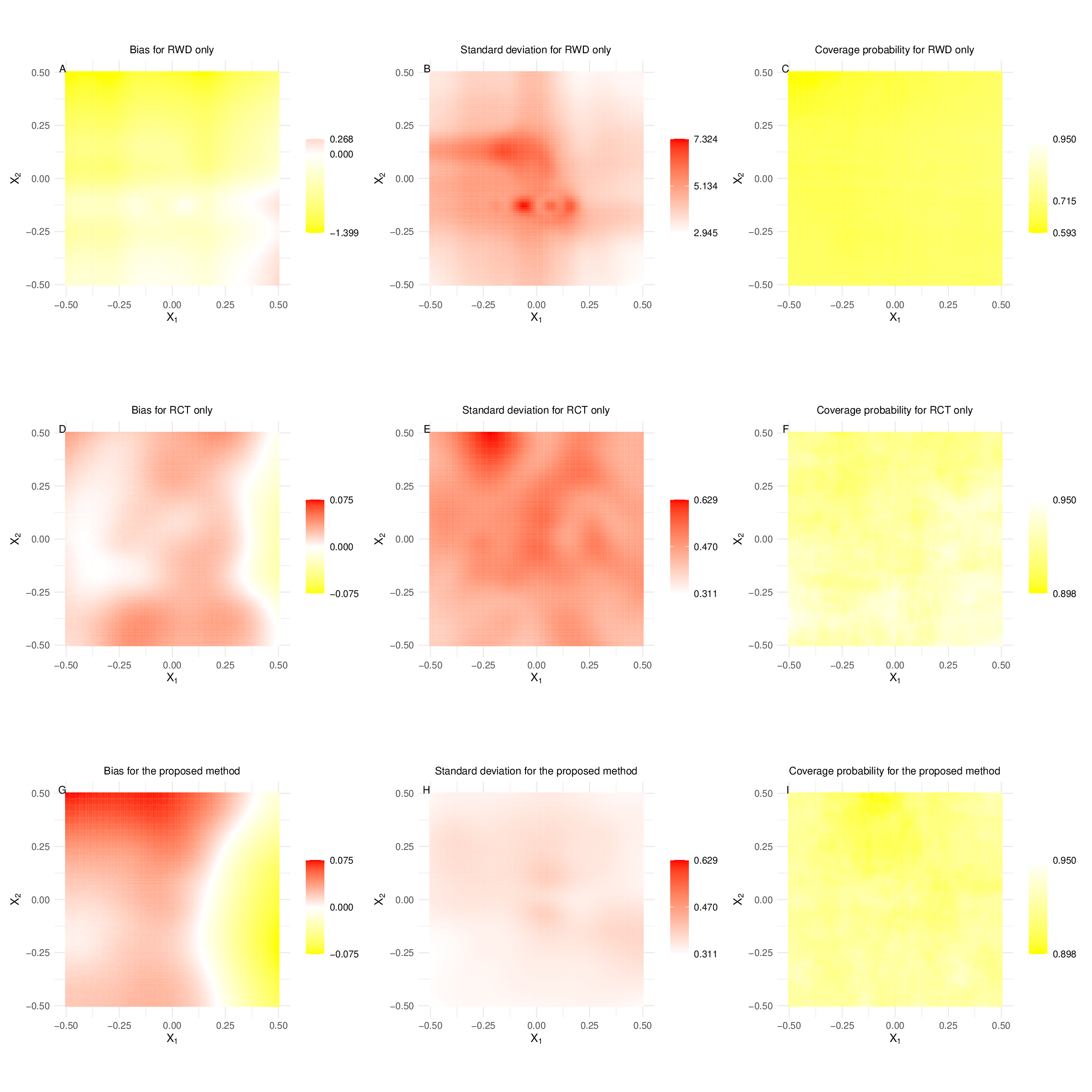}
\caption{The simulation results of Case S3 with $(n_1, n_0) = (500, 1000)$, $X_3 = 0 $, and $X_4 = 0$.}
\label{fig-21}
\end{figure}
\begin{figure}
\centering
\includegraphics[width = \textwidth]{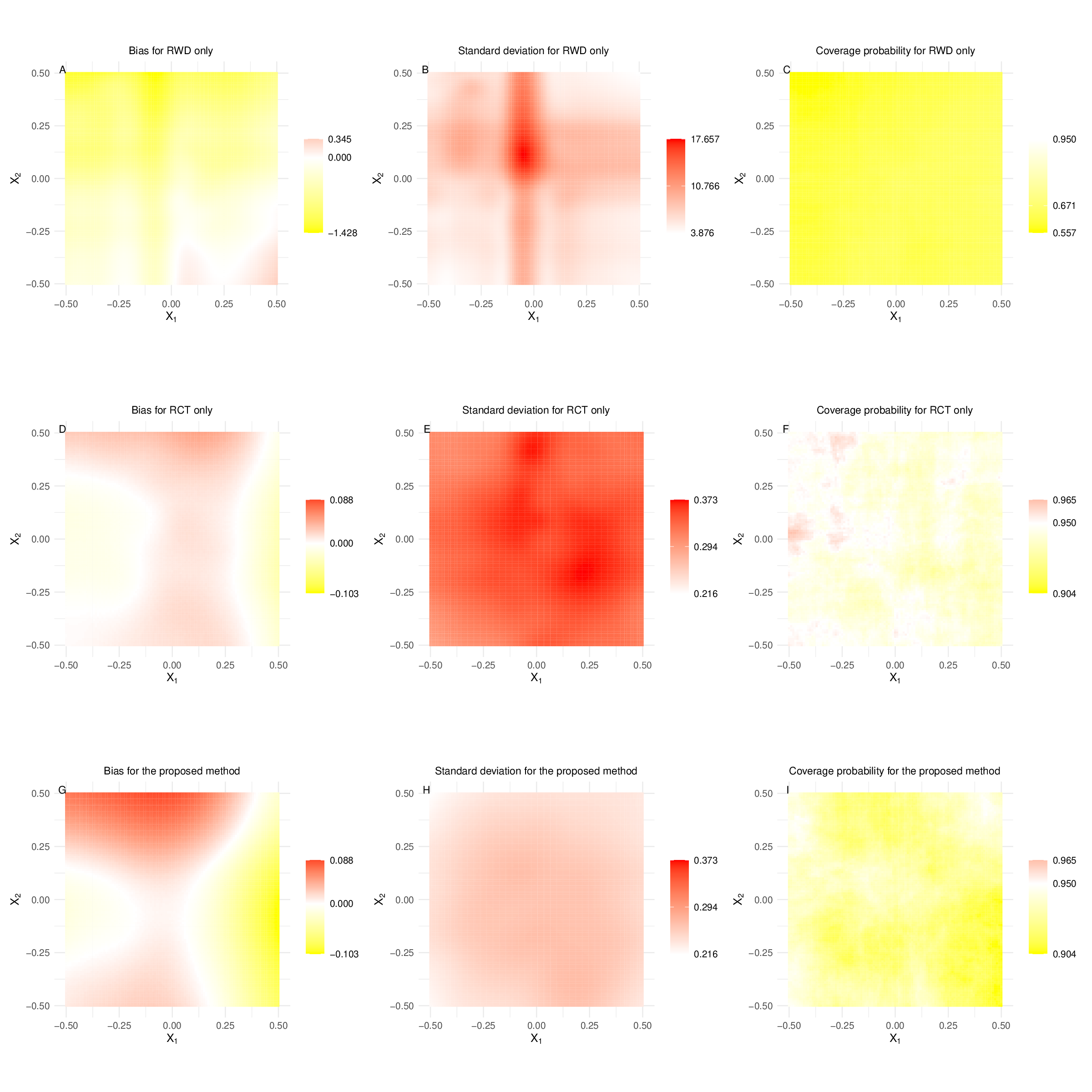}
\caption{The simulation results of Case S3 with $(n_1, n_0) = (1000, 2000)$, $X_3 = 0 $, and $X_4 = 0$.}
\label{fig-22}
\end{figure}
\begin{figure}
\centering
\includegraphics[width = \textwidth]{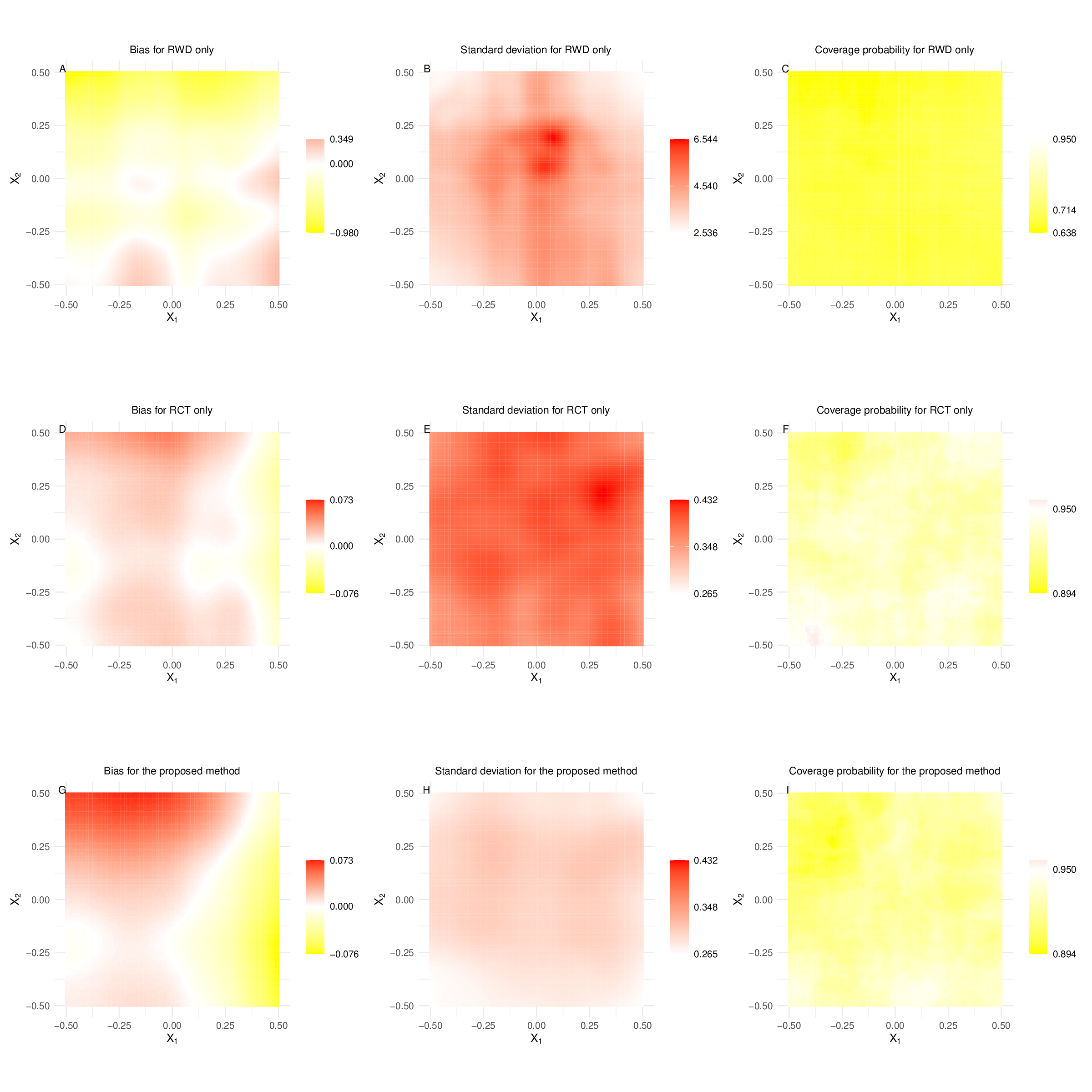}
\caption{The simulation results of Case S3 with $(n_1, n_0) = (500, 1000)$, $X_3 = 1 $, and $X_4 = 0$.}
\label{fig-23}
\end{figure}
\begin{figure}
\centering
\includegraphics[width = \textwidth]{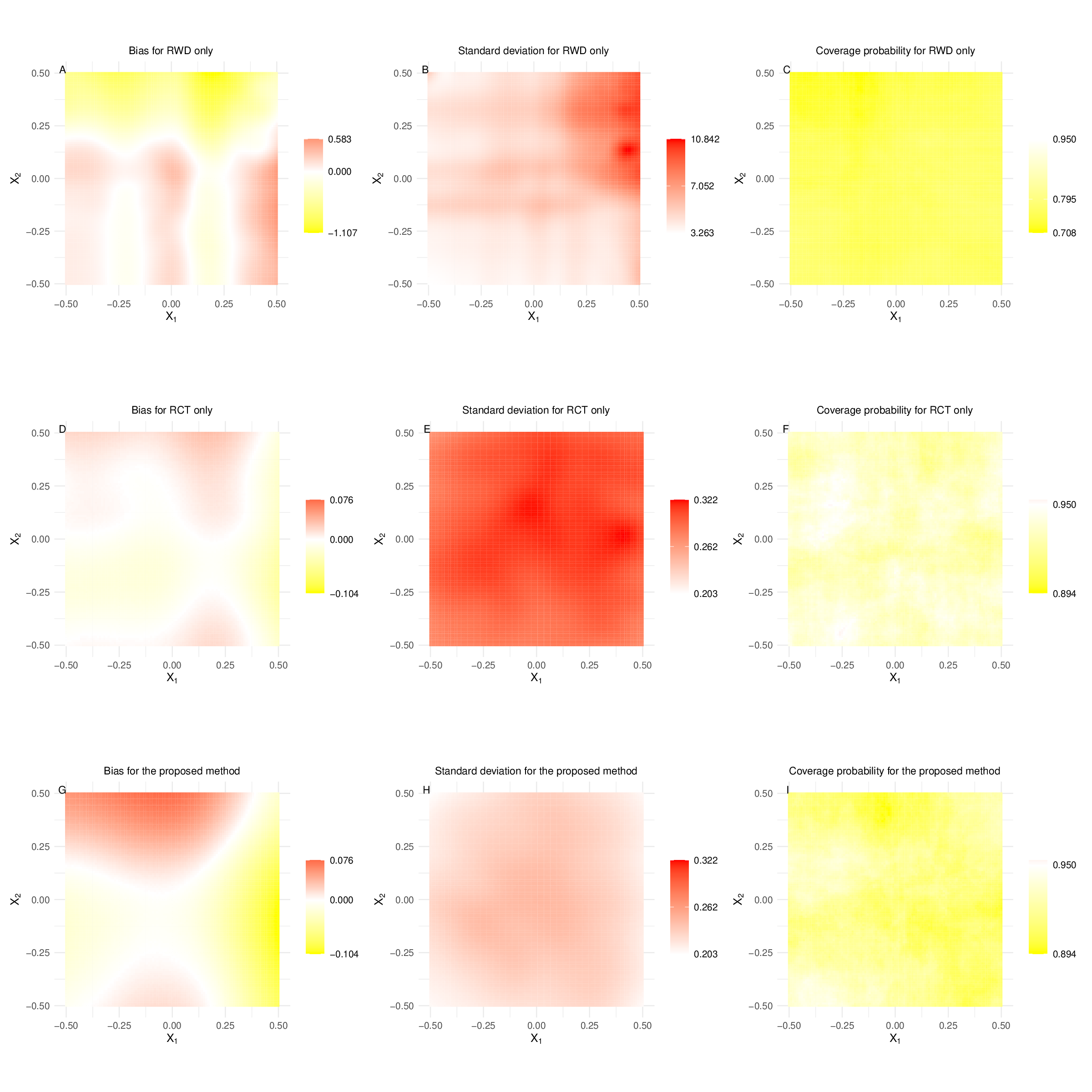}
\caption{The simulation results of Case S3 with $(n_1, n_0) = (1000, 2000)$, $X_3 = 1 $, and $X_4 = 0$.}
\label{fig-24}
\end{figure}
\begin{figure}
\centering
\includegraphics[width = \textwidth]{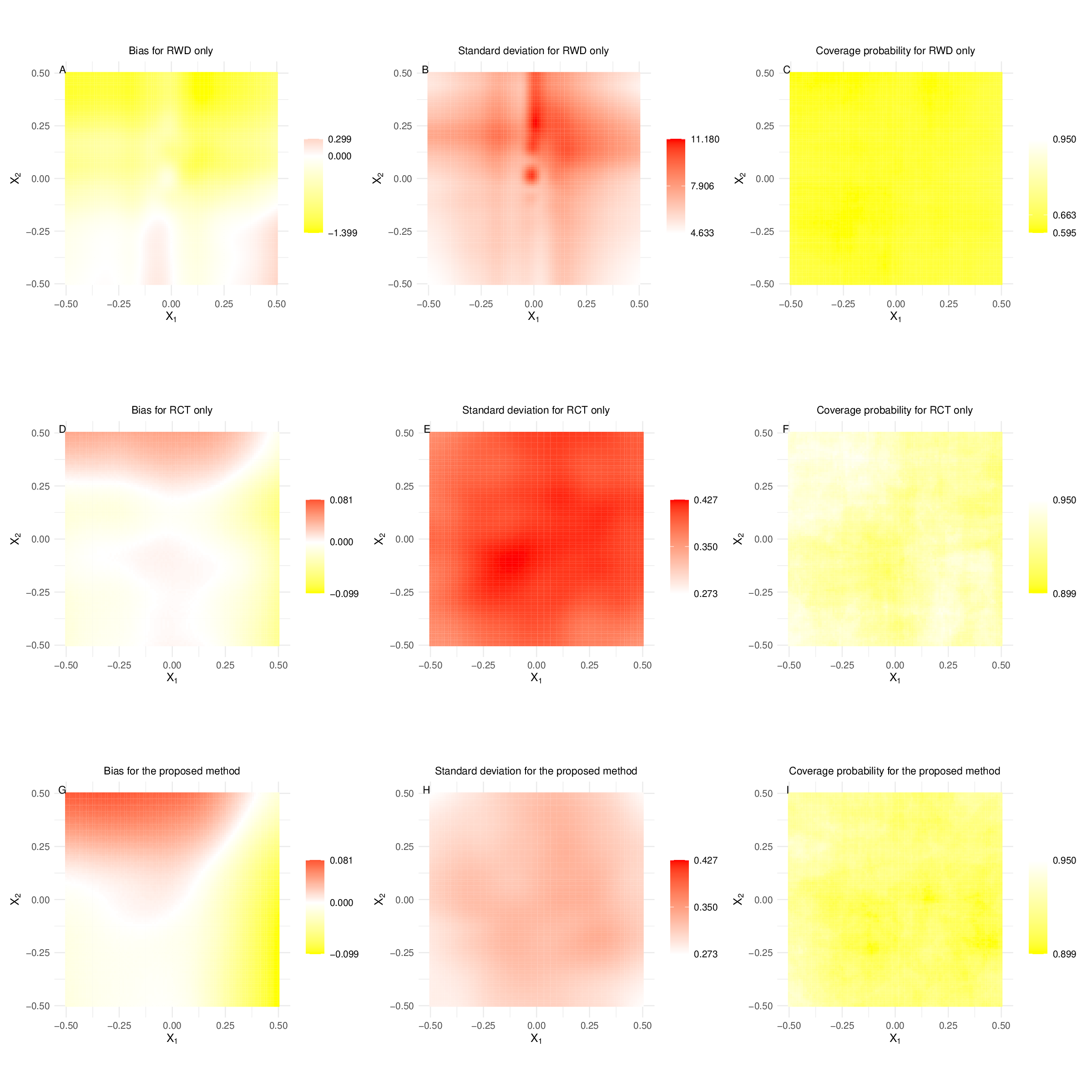}
\caption{The simulation results of Case S3 with $(n_1, n_0) = (500, 1000)$, $X_3 = 0 $, and $X_4 = 1$.}
\label{fig-25}
\end{figure}
\begin{figure}
\centering
\includegraphics[width = \textwidth]{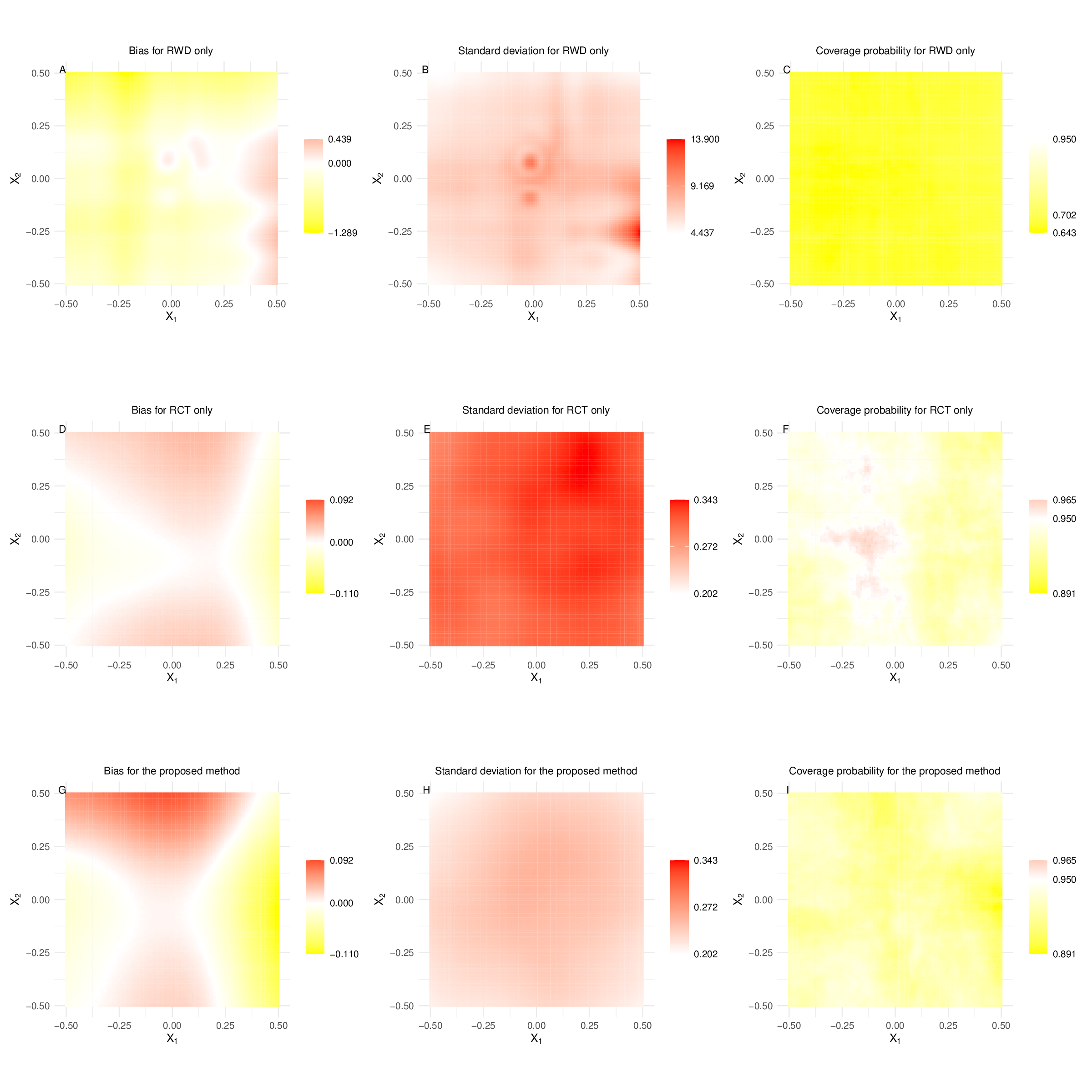}
\caption{The simulation results of Case S3 with $(n_1, n_0) = (1000, 2000)$, $X_3 = 0 $, and $X_4 = 1$.}
\label{fig-26}
\end{figure}
\begin{figure}
\centering
\includegraphics[width = \textwidth]{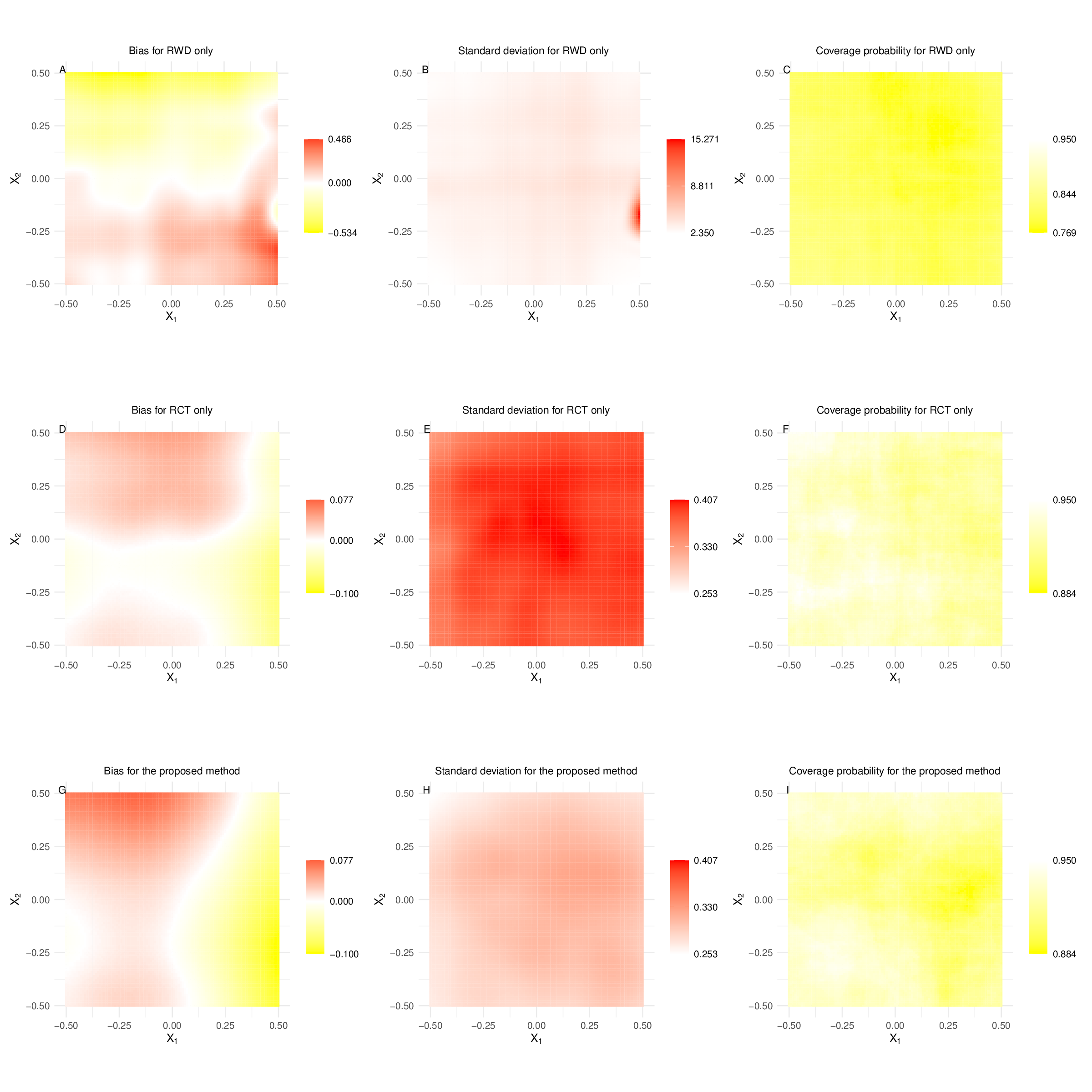}
\caption{The simulation results of Case S3 with $(n_1, n_0) = (500, 1000)$, $X_3 = 1 $, and $X_4 = 1$.}
\label{fig-27}
\end{figure}
\begin{figure}
\centering
\includegraphics[width = \textwidth]{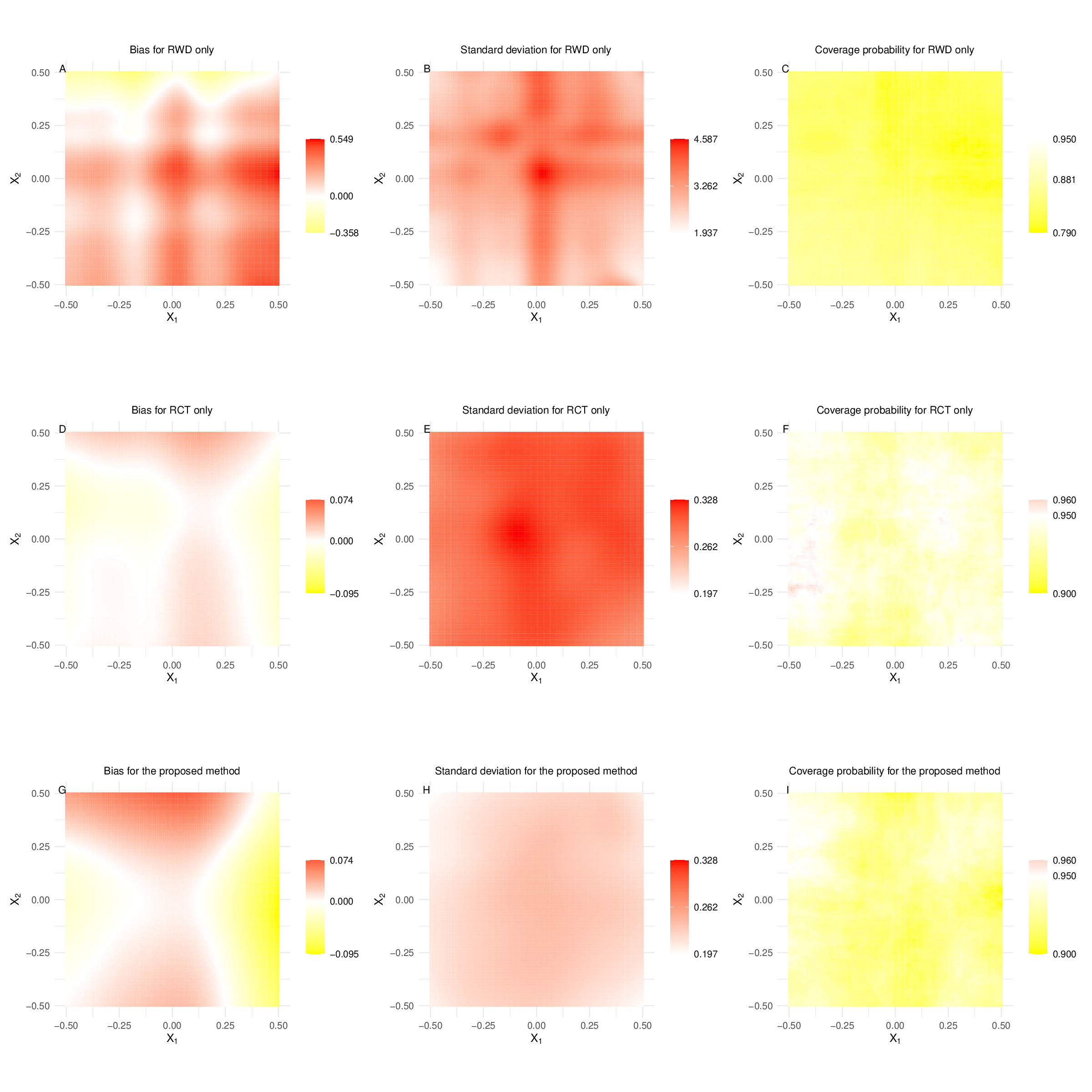}
\caption{The simulation results of Case S3 with $(n_1, n_0) = (1000, 2000)$, $X_3 = 1 $, and $X_4 = 1$.}
\label{fig-28}
\end{figure}

\end{document}